\newtheorem{theorem}{Theorem}[section]
\newtheorem{proposition}[theorem]{Proposition}
\newtheorem{lemma}[theorem]{Lemma}
\newtheorem{corollary}[theorem]{Corollary}
\newtheorem{assumption}[theorem]{Assumption}
\newtheorem{definition}[theorem]{Definition}
\newtheorem{remark}[theorem]{Remark}
\numberwithin{equation}{section}
\numberwithin{figure}{section}
\numberwithin{table}{section}
\newcommand\beq{\begin{equation}}
\newcommand{\bea}{\begin{eqnarray}}
\newcommand{\eea}{\end{eqnarray}}
\newcommand{\beas}{\begin{eqnarray*}}
\newcommand{\eeas}{\end{eqnarray*}}
\newcommand{\beql}{\begin{equation} \label}
\newcommand{\eeq}{\end{equation}}
\newcommand{\R}{\mathbb R}
\newcommand{\N}{\mathbb N}
\newcommand{\C}{\mathbb C}                           
\newcommand{\Z}{\mathbb Z}
\newcommand{\s}[1]{\CMcal{#1}}
\newcommand{\f}[1]{\mathcal{#1}}                  
\newcommand{\bb}[1]{\mathscr{#1}}
\newcommand{\rr}[1]{\mathfrak{#1}}
\newcommand{\n}[1]{\mathds {#1}}
\newcommand{\expo}[1]{{\rm e}^{#1}}                 
\newcommand{\dd}{\,{\rm d}}
\newcommand{\ii}{\,{\rm i}\,}
\newcommand{\ncint}{\mathrel{{\ooalign{$\int$\cr\kern+.07em\raise.15ex\hbox{$\pmb{\scriptstyle-}$}\cr}}}}           \newcommand{\ncpartial}{\mathrel{{\ooalign{$\partial$\cr\kern+.29em\raise.79ex\hbox{$\pmb{\scriptstyle-}$}\cr}}}}
\newcommand{\virg}[1]{\lq\lq#1\rq\rq}                
\newcommand{\ie}{{\sl i.\,e.\,}}
\newcommand{\eg}{{\sl e.\,g.\,}}
\newcommand{\cf}{{\sl cf.\,}}
\begin{document}

\title[On the K-theoretic classification of dynamically stable systems]{
On the K-theoretic classification of dynamically stable systems}


\author[G. De~Nittis]{Giuseppe De Nittis}
\address[De~Nittis]{
Pontificia Universidad Cat\'olica de Chile,
Facultad de Matem\'aticas \& Instituto de F\'{\i}sica,
Santiago, Chile}
\email{gidenittis@mat.uc.cl}

\author[K. Gomi]{Kiyonori Gomi}
\address[Gomi]{Department of Mathematical Sciences, Shinshu University,  Nagano, Japan}
\email{kgomi@math.shinshu-u.ac.jp}

\thanks{{\bf MSC2010}
Primary: 16E20; 
Secondary:  	81T99, 46L80, 	14D21.
}

\thanks{{\bf Keywords.}
$K$-theory, topological phases, Krein spaces,
$\s{PT}$ and $\s{C}$ symmetries.
\vspace{2mm}}


\begin{abstract}
\vspace{-4mm}
This paper deals with the construction of a suitable topological $K$-theory capable of classifying topological phases of dynamically stable systems described by gapped  
$\eta$-self-adjoint operators on a Krein space with indefinite metric $\eta$.
\end{abstract}


\maketitle

\vspace{-5mm}
\tableofcontents

\section{Introduction}
\label{sec:introduction}

This paper deals with the construction of a suitable $K$-theory capable of classifying topological phases of dynamically stable systems described by  
$\eta$-self-adjoint operators on a Krein space with indefinite metric (or fundamental symmetry) $\eta$. The motivations, the  methodology and the main results are described below.

\subsection{Motivations}
\label{sec:introduction_motivation}
In the standard formulation  of  Quantum Mechanics (QM) the physics of a (closed) system is described by 
a separable Hilbert space $\s{H}$ endowed with a \emph{scalar product}\footnote{\label{note:intro_01}In this paper the name  \emph{scalar product} is used as synonyms of \emph{positive definite sesquilinear form}. This is  the same thing as requiring that the associated quadratic form $\f{Q}(\varphi):=\langle\varphi;\varphi\rangle>0$ is strictly positive for non-zero vectors $\varphi\neq 0$.} $\langle\;,\;\rangle$ and by a (time-independent)
self-adjoint operator $H$, called  Hamiltonian. The vectors in $\s{H}$ describe the possible \emph{states} of the system, the scalar product $\langle\;,\;\rangle$ fixes the \emph{probabilistic interpretation} of the theory and the Hamiltonian $H$  generates  the dynamics of the system through
 the Schr\"odinger equation.
The role of the self-adjoint condition $H=H^*$ is twofold; (i) it assures the reality of the spectrum of $H$ which is interpreted as the set of possible energies of the system; (ii) it assures that 
the dynamics
generated by $H$ is implemented by a unitary group  $t\mapsto V_t:=\expo{-\ii t H}$, called the  \emph{time propagator},  which  preserves the geometry of $\s{H}$  and therefore the
probabilistic interpretations.
Bender and coworkers, in a long series of papers (see \eg \cite{bender-boettcher-98,bender-brody-jones-03,bender-07}),   argued that conditions (i) and (ii) are the only crucial  properties which characterize physical (quantum) systems which are  \emph{dynamically stable}.
With this locution one refers to (somehow unphysical) systems which are not subjected to dissipations or losses (of matter, or energy, or information) and   that are not destined to disappear in a remote future or past (\cf \cite{peano-schulz-baldes-17}).   Although the self-adjointness  of the Hamiltonian is a sufficient condition for (i) and (ii)  it is not strictly necessary (a fact already recognized at the early days of QM \cite{dirac-42,pauli-43}) and can be relaxed somehow. A possibility  to formulate rigorously the notion of dynamical stability is as follows:
\begin{definition}[Dynamical stability]\label{def:dyn_stab}
A Hamiltonian $H$ acting on the Hilbert space $\s{H}$ describes a
  \emph{dynamically stable} system  if  there exists a bounded operator $G$, with bounded inverse $G^{-1}$, such that $\tilde{H}:=GHG^{-1}$ is self-adjoint  on $\s{H}$.
      \end{definition}
 
 \medskip
 
 \noindent
Said differently, $H$ is  dynamically stable whenever it is
\emph{similar} to a self-adjoint operator on $\s{H}$.
If this is the case condition (i) about the reality of the spectrum follows by the fact that similar operators have the same spectrum and so
$\sigma(H)=\sigma(\tilde{H})\subseteq\R$. The dynamical condition (ii) follows by the observation that 
the group $t\mapsto V_t:=G^{-1}\expo{-\ii t \tilde{H}}G$ integrates the Schr\"odinger equation
$$
\ii\frac{\partial}{\partial t}\psi\;=\;H\psi\;,\qquad\quad\psi\in\s{D}({H})\;.
$$
and is unitary with respect to the modified scalar product $
\langle \phi, \varphi \rangle_G: = \langle \phi, G^*G \varphi \rangle$ for all $\phi, \varphi\in\s{H}
$. In particular, $H$ turns out to be self-adjoint with respect to the  
scalar product $
\langle \;, \; \rangle_G$. For more details we refer to
Corollary 
\ref{col:func-calc}.

\medskip

Definition \ref{def:dyn_stab} is quite strong and in principle it could be possible to set up weaker and workable reformulations of the notion of dynamical stability. However, Definition 1.1 has been inspired by  physical and mathematical motivations that we will try to retrace below.

\medskip

Bender and coworkers at the end of the 90's observed that 
 certain non-self-adjoint Hamiltonians subjected to a fundamental space-time reflection symmetry (\emph{$\s{PT}$-symmetry})  seem to obey  the two crucial requirements (i) and (ii) that characterize systems with a stable dynamics.
  This observation   opened a new and fruitful line of research on \emph{$\s{PT}$-symmetric Quantum Mechanics} ($\s{PT}$-QM). As the literature on $\s{PT}$-QM is extremely large, and this is not the central focus of the present work, we refer to the recent review papers \cite{mostafazadeh-10,bender-07,bender-15,albeverio-kushel-15}, and references therein, for a complete overview on the subject. Nevertheless, it is worth to mention that intriguing
 toy examples of (one-dimensional) $\s{PT}$-symmetric systems are provided by the family of operators
\begin{equation}\label{eq:mod1}
H_\epsilon\;:=\;-\frac{\dd^2}{\dd x^2}\;-\; \big(\ii f(x)\big)^\epsilon\;,\qquad\quad \epsilon\in\R
\end{equation}
where $f(-x)=-f(x)$ is an odd real-valued function. The special case $f(x)=x$ has been
extensively studied in \cite{bender-07} from a classical and a quantum point of view. A second family of examples is given by the Hamiltonians
\begin{equation}\label{eq:mod2}
\rr{m}_w\;:=\;w(x)\;\frac{\dd^2}{\dd x^2}\;,
\end{equation}
with $w$ a real-valued \emph{weight} function such that $w(x)^2=1$. The special case $w(x)={\rm sgn}(x):=x/|x|$
has been 
studied for the first time in \cite{curgus-najman-95} and recently in \cite{kuzhel-sudilovskaya-17}. Moreover, models  of $\s{PT}$-QM  have recently found application in a wide variety of areas like in \emph{optics} \cite{el-ganainy-makris-christodoulides-musslimani-07} (see also \cite{bender-16} and references therein), in the study of the \emph{atomic diffusion} in inhomogeneous magnetic fields \cite{zhao-schaden-wu-10}, in the theory of \emph{superconducting wires} \cite{rubinstein-sternberg-ma-07,chtchelkatchev-golubov-baturina-vinokur-12} 
and \emph{electronic circuits} \cite{schindler-li-zheng-ellis-kottos-11}, in \emph{quantum information} \cite{croke-15},
in the study of \emph{nonlinear waves} \cite{konotop-yang-zezyulin-16}, \emph{magnon} systems \cite{lee-kottos-shapiro-15,galda-vinokur-16} and \emph{metamaterials} \cite{mostafazadeh-16} ... and probably others.

\medskip

Just after the first works on $\s{PT}$-QM  various authors (like
Mostafazadeh
\cite{mostafazadeh-02,mostafazadeh-03,mostafazadeh-03-2}
or Albeverio and Kuzhel \cite{albeverio-kushel-04,albeverio-kushel-15} just to mention few of them)
recognized that  the concept of $\s{PT}$-QM can be placed in a more general and natural mathematical framework by using the theory of
 \emph{Krein spaces} \cite{azizov-iokhvidov-67,bognar-74} (see also Section \ref{sec:indefinite}). 
In its simplest incarnation
a Krein space can be thought as a separable Hilbert space $\s{H}$ along with a self-adjoint involution $\eta=\eta^*=\eta^{-1}$, called \emph{fundamental symmetry} (or \emph{metric} operator), such that the dimensions of the eigenspaces related to the eigenvalues $\pm1$ are equal. The fundamental symmetry $\eta$ endows 
$\s{H}$ with the non-degenerate,  indefinite inner product $
\langle\langle \phi, \varphi \rangle\rangle_\eta: = \langle \phi, \eta \varphi \rangle$ for all $\phi, \varphi\in\s{H}
$. The \emph{$\eta$-adjoint} \cite{tomita-80,gheondea-88} of a linear operator $H$ 
acting on $\s{H}$ is then defined by 
$$
H^\sharp\;:=\;\eta H^*\eta\;.
$$
Operators which fulfill the conditions $H^\sharp=H$ are called \emph{pseudo-Hermitian} 
\cite{mostafazadeh-02,mostafazadeh-03,mostafazadeh-06,mostafazadeh-10,jones-05,albeverio-kushel-04,sato-hasebe-esaki-kohmoto-12,ghosh-12}, and  sometimes \emph{quasi-Hermitia}  \cite{scholtz-geyer-hahne-92}, in the physical jargon. We prefer to refer to them as
\emph{$\eta$-self-adjoint} in agreement with a more mathematical tradition. It is not hard to see that both operators $H_\epsilon$ in \eqref{eq:mod1} and $\rr{m}_w$ in \eqref{eq:mod2} are $\eta$-self-adjoint with respect to a suitable choice of the fundamental symmetry $\eta$.
For the operators $H_\epsilon$ the appropriate fundamental symmetry is given by $\eta\equiv \s{P}$
where $\s{P}$ is the \emph{parity} operator $(\s{P}\phi)(x):=\phi(-x)$.
In the case of the operator $\rr{m}_w$ the right fundamental symmetry is provided by  $\eta\equiv w(x)$. Further examples of $\eta$-self-adjoint operators, adapted for certain physical applications, will be described in Appendix \ref{sec:maxwell-meta}.

\medskip

In general,  $\eta$-self-adjointness does not suffice to assure the dynamical stability of Definition 
\ref{def:dyn_stab}. For that, something more is required.
One says that an  $\eta$-self-adjoint operator $H$
admits a \emph{$\s{C}$-symmetry} if there is an $\eta$-self-adjoint involution $\Xi$ such that the product $\eta \Xi>0$ is strictly positive and $H\Xi=\Xi H$ (see \eg \cite{kuzhel-09,albeverio-kushel-15,kuzhel-sudilovskaya-17}
 and references therein\footnote{In the standard literature a $\s{C}$-symmetry is usually denoted with the letter $\s{C}$. We prefer to use the greek letter $\Xi$ to  avoid confusion with the complex conjugation denote here with a $C$.}). The theory of $\s{C}$-symmetric $\eta$-self-adjoint operators will be briefly reviewed in Section \ref{sec:C-symm}. Here, we need only to recall a fundamental fact \cite[Theorem 6.3.4]{albeverio-kushel-15}:
 The existence of a $\s{C}$-symmetry for the $\eta$-self-adjoint operator $H$ is equivalent to the dynamical stability according to Definition 
\ref{def:dyn_stab} 
 (see Section \ref{sec:op-C-sym} for  more details). 
Consequently, on the basis of  the  considerations above, we feel justified to make the following:

\medskip

\noindent 
{\bf Working hypothesis.}
\emph{$\eta$-self-adjoint operators that possess a $\s{C}$-symmetry are good (and interesting) candidates for the description of the dynamics of certain {physical} (quantum) systems.}

\medskip

Before continuing, it is fare to mention that  to prove that a given $\eta$-self-adjoint operator admits a $\s{C}$-symmetry is   not a trivial fact in general. Various conditions for the existence and the construction of a $\s{C}$-symmetry have been studied in \cite{kuzhel-09,albeverio-kushel-15,kuzhel-sudilovskaya-17}. 
For instance, it turns out that the operator 
$\rr{m}_w$ of \eqref{eq:mod2} for the case $w(x)=x/|x|$ has  a $\s{C}$-symmetry described
in \cite[Proposition 3.5]{kuzhel-sudilovskaya-17}. On the other hand
the construction of a $\s{C}$-symmetry for the family of operators $H_\epsilon$ in \eqref{eq:mod1}
 is a complicated (still open) problem. For instance, for the special case $f(x)=x$, it has been shown that $\sigma(H_\epsilon)\subseteq [0,+\infty)$ for $2\leqslant \epsilon<4$ \cite[Appendix B]{dorey-dunning-tateo-01} but 
  the form of the corresponding operator implementing  the $\s{C}$-symmetry (if any) is not known.
Because of the complexity of the problem the majority of the available techniques for the construction of $\s{C}$-symmetries are  based on perturbative approximations 
\cite{bender-07}.

\medskip

In the last decade since the groundbreaking discovery of topologically protected states induced by strong spin-orbit interactions, tremendous progress has been made in the study and the understanding
of topological properties of (quantum) matter.
The investigation  of  \emph{topological insulators} (we refer to  \cite{hasan-kane-10,ando-fu-15,chiu-teo-schnyder-ryu-16} for a modern overview and an updated bibliography of the subject) is doubtless
 a current
 \virg{hot topic} in mathematical physics and condensed matter. This evidence justifies the following natural question: 
\emph{Is it possible to generalize 
 the set of results and theories available for the study and the classification of \virg{standard} topological insulators, described by self-adjoint operators,  to more general systems which are dynamically stable in the sense of Definition \ref{def:dyn_stab}?} Of course the idea of investigating the topological properties of specific non-self-adjoint systems is not new in literature. 
The first result in this direction established the  absence of topological phases for certain $\s{PT}$-symmetric Dirac Hamiltonians \cite{hu-hughes-11}. Soon after, the existence of   non-trivial topological phases for non-self-adjoint generalizations of the Kane-Mele model has been proved in \cite{esaki-sato-hasebe-kohmoto-11}. However, these phases possess the unpleasant feature of having complex energies and this is an evidence of the fact that the corresponding Hamiltonian cannot be dynamically stable.
Other examples of non-self-adjoint models with \emph{unstable} (\ie complex energy) topological states have been proposed in \cite{schomerus-13} as a model for one-dimensional photonic crystals and in
\cite{zhu-lu-chen-14} via a non-self-adjoint modification of the Su-Schrieffer-Heeger model.
The first examples of non-self-adjoint models with \emph{stable} (\ie real energy) topological states
have been builded in  \cite{ghosh-12} by using  Dirac-type Hamiltonians and in \cite{yuce-15}
by a non-Hermitian generalization of the Aubry-Andre model. A new non-self-adjoint modification of the 
Su-Schrieffer-Heeger model which shows \emph{stable}  topological states has been recently proposed in \cite{lieu-18}. Finally, the study of \emph{topological edge modes} in $\s{PT}$-symmetric models has gained interest in various physical contexts \cite{leykam-bliokh-huang-chong-nori-17,ke-wang-long-wang-lu-17,peano-schulz-baldes-17}, just to mention some of the most recent contributions.

\medskip

An aspect common to all the works mentioned above  is that they concern with very specific models. In contrast, one of the bigger success, perhaps the greatest, in the study of \virg{standard} topological insulators was the discovery of a unified framework where to organize and classify all the possible topological phases according to the spatial dimensions. This is the so-called \emph{periodic table} for topological insulators. After some initial partial results toward a unified classification of topological insulators 
\cite{altland-zirnbauer-97,schnyder-ryu-furusaki-ludwig-08,qi-hughes-zhang-08}, the mechanism underlying the periodic organization of the topological phases was finally revealed by Kitaev in the seminal paper \cite{kitaev-09}. Here,  Kitaev showed that topological insulators can be organized and classified by using the \emph{$K$-theory} \cite{atiyah-67,karoubi-78} and the  periodic recurrence of the topological phases with respect to the spatial dimension was interpreted as a manifestation of the \emph{Bott periodicity}. Starting from this moment, $K$-Theory became the principal tool in the classification of topological insulators described by self-adjoint models and the initial Kitaev's idea has been generalized and extended in several directions \cite{freed-moore-13,thiang-16,kellendonk-15,kubota-17} producing a great advancement in the understanding of   topological phases of matter.
With these premises
it seems natural to believe that  
there must be 
a more general classification scheme capable of including also non-self-adjoint dynamically stable systems. 
The present work has been  inspired by this idea and its main achievement is:

\medskip

\noindent 
{\bf Main goal.}
\emph{To construct a suitable $K$-theory capable of classifying topological phases of systems described by  
$\eta$-self-adjoint operators that possess a $\s{C}$-symmetry.}

\medskip

At a first glance, based on \cite[Theorem 6.3.4]{albeverio-kushel-15} which proves the equivalence between $\eta$-self-adjoint operators with a $\s{C}$-symmetry and
dynamically stable systems in the sense of of Definition 
\ref{def:dyn_stab},
one can naively think that the usual $K$-theory used for the classification of self-adjoint systems should be 
enough also to classify the type of systems described by the working hypothesis above. Indeed, this is not the case! In fact, as discussed below,  
systems described by $\s{C}$-symmetric
$\eta$-self-adjoint operators posses more structure than simple self-adjoint systems. A very  elementary example that shows this enrichment  is discussed in Remark \ref{rk:more_structure}.

\subsection
{Methodological extension of the CAZ classification scheme}
\label{subsec:metod_CAZ}

In the classification scheme of \virg{standard} topological insulators \cite{altland-zirnbauer-97,schnyder-ryu-furusaki-ludwig-08,qi-hughes-zhang-08} enter two crucial ingredients: The spatial dimension of the system and  a bunch of fundamental \emph{involutive (pseudo-)symmetries}. Let us focus on the role  of the latter. 

\medskip

The nature of the \emph{quantum symmetries} has been described by  Wigner in the seminal paper \cite{wigner-59} (see Section \ref{sect:hilb_wigner} for more details). The \emph{Wigner's theorem} states that the  physical symmetries 
of a quantum systems represented in a Hilbert space $\s{H}$ can be implemented only by
 {unitary} or {anti-unitary} operators\footnote{\label{footnote:A}An operator $A$ on the Hilbert space $\s{H}$ is {anti}-linear when $A(a\phi+b\varphi)=\overline{a}(A\phi)+\overline{b}(A\varphi)$ for all $a,b\in\C$ and $\phi, \varphi\in\s{D}(A)$.
For an {anti}-linear operator the definition of the adjoint needs to be adjusted in order to compensate for the complex conjugation. The adjoint operator of the anti-linear operator $A$  is the anti-linear operator $A^*$ which fulfills $\langle\phi,A\varphi\rangle=\langle\varphi,A^*\phi\rangle$ for all $\varphi\in \s{D}(A)$ and $\phi\in \s{D}(A^*)$.}. The set of all quantum symmetries on $\s{H}$ has a group 
structure and is denoted with
${\tt QS}(\s{H})$. 
The following notion is central in the classification of topological insulators.
\begin{definition}[Involutive quantum symmetry]\label{def:int-quan-inv}
An \emph{involutive quantum symmetry} on the Hilbert space $\s{H}$ is a bounded $\R$-linear map $U:\s{H}\to\s{H}$  which meets:
\begin{itemize}
\item[(i)] $U^{-1}=U^*$,
\vspace{1mm}
\item[(ii)] $U(\ii\n{1})= \varpi \ii\; U$,
\vspace{1mm}
\item[(iii)] $U=\varepsilon^{\frac{1-\varpi}{2}}U^{-1}$,
\end{itemize}
with $\varpi,\varepsilon\in\{-1,+1\}$.
\end{definition}

\medskip

\noindent
Properties (i) and (ii) in Definition \ref{def:int-quan-inv} are simply a way of expressing the fact that $U\in {\tt QS}(\s{H})$, namely that $U$ is a unitary ($\varpi=+ 1$) or anti-unitary  ($\varpi=- 1$)  operator.  
It is fair to observe that the specification about the boundedness of $U$ at the beginning of
Definition \ref{def:int-quan-inv} is  unnecessarily since it is implied by property (i). However, this is a point of distinction with respect to the Krein space case and for this reason we prefer to insist on it.  
Item (iii)  deserves some clarifications.
Usually a symmetry $U$ is  called \emph{involutive} if $U^2=\varepsilon\n{1}$ where $\varepsilon = \pm 1$ and is a \emph{proper} involution if $\varepsilon=+1$ and an \emph{anti}-involution when $\varepsilon=-1$. A simple calculation shows that $(\ii U)^2=\ii^{1+\varpi} \varepsilon\n{1}$. Therefore, the transformation
$U\mapsto \ii U$ induces the changes  $\varepsilon\mapsto -\varepsilon$ in the linear case while it leaves unchanged
the values of $\varepsilon$  in the anti-linear case.  This observation suggests that  in the liner case an 
{anti}-involution can be  always turned into a {proper} involution while the difference between  
{anti}-involution and {proper} involution is persistent in the anti-linear case. As a consequence one can 
use the  convention $U^2=\varepsilon^{\frac{1-\varpi}{2}}\n{1}$ for the signs of the involutive quantum symmetry. The latter relation  justifies property (iii). Summing up,  in the linear case ($\varpi=+1$) an {involutive} quantum symmetry is always a self-adjoint unitary
$U^{-1}=U^*=U$, independently of the  second parameter  $\varepsilon$. Conversely, in the anti-linear case ($\varpi=-1$) one has  \emph{even} symmetries ($\varepsilon=+1$) which are still specified by $U^{-1}=U^*=U$
and \emph{odd} symmetries ($\varepsilon=-1$) which are 
skew-adjoint unitaries $U^{-1}=U^*=-U$.

 \begin{table}[h]
 \centering
 \begin{tabular}{|c|c||c|c|c|c|}
 \hline
{\bf Symmetry} & {\bf Symbol}   & $\quad \boldsymbol{\varpi}\quad $ & $\quad \boldsymbol{\varepsilon} \quad$ & $\quad \boldsymbol{c}\quad$ \\
\hline
 \hline
 \rule[-3mm]{0mm}{9mm}
Proper Linear &  & $+1$   & irr. & $+1$  \\
\hline
\hline
 \rule[-3mm]{0mm}{9mm}
Chiral  & $\chi$  & $+1$ & irr.
& $-1$\\
\hline
 \rule[-3mm]{0mm}{9mm}
Even Time Reversal & $T\;(+)$  & $-1$ & $+1$  
& $+1$\\
\hline
 \rule[-3mm]{0mm}{9mm}
Odd Time Reversal & $T\;(-)$  & $-1$ & $-1$  
& $+1$\\
\hline
 \rule[-3mm]{0mm}{9mm}
Even Particle-Hole & $P\;(+)$  & $-1$ & $+1$  
& $-1$\\
\hline
 \rule[-3mm]{0mm}{9mm}
Odd Particle-Hole & $P\;(-)$  & $-1$ & $-1$  
& $-1$\\
\hline
\end{tabular}\vspace{2mm}
 \caption{\footnotesize The six types of involutive quantum symmetries for a given self-adjoint operator $H=H^*$. In the case of the first two linear symmetries the sign $\varepsilon$ is irrelevant (irr.) as suggested by property (iii) in Definition \eqref{def:int-quan-inv}. The various names come from canonical examples of fundamental symmetries for electronic systems \cite{altland-zirnbauer-97,schnyder-ryu-furusaki-ludwig-08,qi-hughes-zhang-08}.
 }
 \label{tab:int-1}
 \end{table}

The interplay between quantum symmetries and a self-adjoint Hamiltonian $H=H^*$ can be expressed by the 
relation
$$
UH\;=\;c\;HU \;,\qquad c=\pm 1\;
\qquad \text{(commuting vs. anti-commuting dichotomy)}\;.
$$
We say that a $U\in {\tt QS}(\s{H})$ is a \emph{proper} ($c=+1$) quantum symmetry for $H$ if $UH=HU$ and is a \emph{improper} ($c=-1$) quantum symmetry if $UH=-HU$.

\medskip

According to the different signs of the parameters $\varpi,\varepsilon$ and $c$ the involutive quantum symmetries are classified in six families as listed in Table \ref{tab:int-1}. The latter are  the building blocks for the classification of \virg{standard} topological insulators. In the \emph{Cartan-Altland-Zirnbauer (CAZ) scheme} \cite{altland-zirnbauer-97,schnyder-ryu-furusaki-ludwig-08,qi-hughes-zhang-08} gapped electronic 
systems are arranged in ten classes according to the fact that the (self-adjoint) Hamiltonian $H$ that describes the system possesses or violates the \emph{chiral} symmetry $\chi$, the \emph{time reversal} symmetry $T$ or the  \emph{particle-hole} symmetry $P$. Linear proper symmetries do not enter in the topological classification since the commutation relation $[H,U]=0$ assures that $H$ can be decomposed along the eigenspaces of $U$ providing reduced systems which can be independently classified. The ten classes of topological insulators in the CAZ classification are described  in Table \ref{tab:int-2}.
%
 \begin{table}[h]
 \centering
 \begin{tabular}{|c||c|c||c|c|c|c||c|c|c|c|}
 \hline
 & {\bf A}   & {\bf AIII} & {\bf AI} & {\bf AII}& {\bf D} & {\bf C}& {\bf BDI} &{\bf CI}& {\bf DIII} &{\bf CII} \\
\hline
 \hline
 \rule[-3mm]{0mm}{9mm}
$\boldsymbol{\chi}$   & \phantom{0}0\phantom{0} & $1$ & \phantom{0}0\phantom{0}
& \phantom{0}0\phantom{0}& \phantom{0}0\phantom{0}
& \phantom{0}0\phantom{0}& \phantom{0}(1)\phantom{0}
& \phantom{0}(1)\phantom{0}& \phantom{0}(1)\phantom{0}
& \phantom{0}(1)\phantom{0}\\
\hline
 \rule[-3mm]{0mm}{9mm}
$\boldsymbol{T}$ & 0  & \phantom{0}0\phantom{0} & $+1$  
& $-1$& \phantom{0}0\phantom{0}
& \phantom{0}0\phantom{0}&$+1$&$+1$&$-1$&$-1$\\
\hline
 \rule[-3mm]{0mm}{9mm}
$\boldsymbol{P}$ & 0  & 0 & \phantom{0}0\phantom{0}  
& \phantom{0}0\phantom{0}& $+1 $
& $-1$ &$+1$&$-1$&$+1$&$-1$\\
\hline
 \end{tabular}\vspace{2mm}
 \caption{\footnotesize The 
 ten classes of the \emph{Cartan-Altland-Zirnbauer (CAZ) scheme} \cite{altland-zirnbauer-97,schnyder-ryu-furusaki-ludwig-08,qi-hughes-zhang-08} for the topological classification of self-adjoint Hamiltonians. For the case of the chiral symmetry $\chi$ the $1$ means that the symmetry is present and the $0$ that the symmetry is broken. A similar convention is used for the time reversal symmetry $T$ and the  particle-hole symmetry $P$ with the only difference that $+1$ or $-1$ means that the related symmetry is  even or odd, respectively.  
 The first two classes A and AIII share the characteristic of not having anti-linear symmetries and form the subfamily of the \emph{complex} classes. The last eight classes have at least one anti-linear symmetry and provide the subfamily of the \emph{real} classes. The last four classes  possess simultaneously the two anti-linear symmetries $T$ and $P$. In this case the commutation condition $PT=TP$  is tacitly assumed. In particular, the product $PT$  gives rise to a linear symmetry of type $\chi$ and this fact is annotated in the table by the symbol $(1)$.
}
\label{tab:int-2}
 \end{table}

The natural question we want to raise and analyze in this work is  \emph{how the CAZ classification scheme change in the case of 
$\eta$-self-adjoint operators that possess a $\s{C}$-symmetry.} The starting point of the analysis concerns the fate of the {Wigner's theorem} in Krein spaces (see Section \ref{sec:krein_space-Qsim}). 
This question has been investigated by 
Bracci,  Morchio and  Strocchi in \cite{bracci-morchio-strocchi}  where they proved an extended version of the Wigner's theorem adapted to spaces with an indefinite metric. The first important difference with respect to the usual Hilbert space case  is that  quantum symmetries in a Krein space can be implemented also by unbounded operators   (see Remark \ref{rk:unbound_eta-unit_math}). Moreover, this eventuality survives even if one focuses the attention to involutive symmetries (see Remark \ref{rk:gen-C-sym}). For this reason, if one decides to work only with bounded quantum symmetries, such a   choice has to be specified.
Let ${\tt QS}_\eta(\s{H})$ be the set of the bounded \emph{$\eta$-quantum symmetries} acting in the Krein space $\s{H}$ with metric $\eta$.
The second difference, is that $\eta$-quantum symmetries are subjected to two types of 
dichotomies: The usual linear vs. anti-lienar dichotomy expressed by  property (ii) in Definition \ref{def:int-quan-inv} and a new dichotomy, absent in the Hilbert space case, described by the equation
 $U^{-1}=\wp\;U^\sharp=\wp\;\eta U^*\eta$ where $\wp$ is a sign
which distinguishes between \emph{$\eta$-isometries} ($\wp=+1$) and \emph{$\eta$-pseudo-isometries} ($\wp=-1$). 
 The combinations of the different values of $\varpi$ and  $\wp$ lead to four types of
$\eta$-quantum symmetries as listed in Table \ref{tab:B1}.

 \begin{table}[h]
 \centering
 \begin{tabular}{|c||c|c|c|c|c|}
 \hline
$U\in {\tt QS}_\eta(\s{H})$  & $\boldsymbol{\eta}${\bf-unitary}   & $\boldsymbol{\eta}${\bf-anti-unitary} & $\boldsymbol{\eta}${\bf-pseudo-unitary} & $\boldsymbol{\eta}${\bf-pseudo-anti-unitary} \\
\hline
 \hline
 \rule[-3mm]{0mm}{9mm}
$\boldsymbol{\wp}$ & $+1$  & $+1$   & $-1$ & $-1$  \\
\hline
 \rule[-3mm]{0mm}{9mm}
$\boldsymbol{\varpi}$ & $+1$  & $-1$ & $+1$  
& $-1$\\
\hline
\end{tabular}\vspace{2mm}
\caption{\footnotesize The four types of $\eta$-quantum symmetries according to the values of $\varpi$ and $\wp$.
}
\label{tab:B1}
 \end{table}

\medskip

By combining the $\eta$-isometric condition  with the involutive requirement  $U=\varepsilon U^{-1}$ one obtains $U=\wp\varepsilon U^\sharp$.
Again the change $U\mapsto \ii U$
can be used to change the sign of $\varepsilon$ in the linear case but it is important to notice that it 
preserves the sign of $\wp$ independently 
of the linearity or anti-linearity of $U$. The last observations justify the following:
\begin{definition}[Involutive $\eta$-quantum symmetry]\label{def:int-eta-quan-inv}
An  \emph{involutive $\eta$-quantum symmetry} on the Krein space $\s{H}$ with metric $\eta$ is a bounded $\R$-linear map $U:\s{H}\to\s{H}$  which meets:
\begin{itemize}
\item[(i)] $U^{-1}=\wp\; U^\sharp$,
\vspace{1mm}
\item[(ii)] $U(\ii\n{1})= \varpi \ii\; U$,
\vspace{1mm}
\item[(iii)] $U=\varepsilon^{\frac{1-\varpi}{2}}U^{-1}$,
\end{itemize}
with $\wp,\varpi,\varepsilon\in\{-1,+1\}$.
\end{definition}

\medskip

\noindent
A comparison between Definition \ref{def:int-quan-inv} and Definition \ref{def:int-eta-quan-inv} shows that the main  difference is  in the  condition  (i) where the extra sign $\wp$ appears. This sign  is responsible for enriching the zoology of the possible  symmetries. Again, the dichotomy even vs. odd involution is relevant only in the anti-linear case while in the linear case one conventionally assumes that involutions are always of even type. 

\medskip

In Krein spaces the \virg{physical} Hamiltonians are $\eta$-self adjoint $H=H^\sharp$. 
The interplay 
between Hamiltonians and $\eta$-quantum symmetries is again expressed by $UH=c\;HU$
where $c=\pm 1$ discriminates between 
proper (commuting) and improper (anti-commuting) symmetries.

\medskip

According to the different signs of the parameters $\varpi,\wp,\varepsilon,c$ the involutive $\eta$-quantum symmetries are grouped in 12 families which are described  in  Table \ref{tab:int-4}.
A comparison between Table \ref{tab:int-1} and Table \ref{tab:int-4}
shows that in the Krein space case the number of possible symmetries doubles.
Along with the classical symmetries $\chi,T,P$ one has the new \emph{$\eta$-reflecting symmetry} $R$
and the reflecting versions $\chi_R,T_R,P_R$ of the classical symmetries. The set of these seven symmetries, along with the dichotomy even vs. odd  in the anti-linear cases, are the
 the building blocks for the construction of the extended classification scheme for $\eta$-self-adjoint operators.  Again linear proper symmetries do not enter in the  classification scheme since the commutation relations $[H,U]=0$
 and $[\eta,U]=0$ assure that $H$ and $\eta$ can be decomposed along the eigenspaces of $U$ providing reduced systems which can be independently classified. 
 
 \medskip
 
More symmetries imply more symmetry classes. In the Krein space case the  number of possible CAZ classes increases considerably. For instance the two complex classes A and AII described in the first two columns of Table \ref{tab:int-2} correspond to five classes in the Krein space case as showed in Table \ref{tab:int-5}.
The two time-reversal classes AI and AII described in the second and third columns of Table \ref{tab:int-2} split in six new classes in the Krein space case as displayed in Table \ref{tab:int-6}. A similar counting exercise can also be performed for the remaining classes. As a result the total number of (possible) symmetry classesfor $\eta$-self-adjoint operators is much bigger than ten (a complete counting is beyond the scope of this work). Therefore, the so called \emph{ten-fold way} of Table \ref{tab:int-2} which organizes the symmetry classes of standard self-adjoint systems appears to be insufficient in the case of dynamically stable systems described by $\eta$-self-adjoint operators with a $\s{C}$-symmetry.
%
 \begin{table}[h]
 \centering
 \begin{tabular}{|c|c||c|c|c|c|c|}
 \hline
{\bf Symmetry} & {\bf Symbol}   & $\quad \boldsymbol{\varpi}\quad $ & $\quad \boldsymbol{\wp}\quad $ &  $\quad \boldsymbol{\varepsilon} \quad$ & $\quad \boldsymbol{c}\quad$ \\
\hline
 \hline
 \rule[-3mm]{0mm}{9mm}
Proper Linear &  & $+1$ & $+1$ & irr. & $+1$  \\
\hline
\hline
 \rule[-3mm]{0mm}{9mm}
Pure Reflecting & $R$ & $+1$ & $-1$ & irr. & $+1$  \\
\hline
 \rule[-3mm]{0mm}{9mm}
Chiral  & $\chi$  & $+1$ & $+1$ &irr.
& $-1$\\
\hline
 \rule[-3mm]{0mm}{9mm}
Reflecting Chiral  & $\chi_R$  & $+1$ & $-1$ &irr.
& $-1$\\
\hline
 \rule[-3mm]{0mm}{9mm}
Even Time Reversal & $T\;(+)$  & $-1$ & $+1$ & $+1$
& $+1$\\
\hline
\rule[-3mm]{0mm}{9mm}
Even Reflecting Time Reversal & $T_R\;(+)$  & $-1$ & $-1$ & $+1$ 
& $+1$\\
\hline
 \rule[-3mm]{0mm}{9mm}
Odd Time Reversal & $T\;(-)$  & $-1$ & $+1$ & $-1$ 
& $+1$\\
\hline
  \rule[-3mm]{0mm}{9mm}
Odd Reflecting Time Reversal & $T_R\;(-)$  & $-1$ & $-1$ & $-1$
& $+1$\\
\hline
\rule[-3mm]{0mm}{9mm}
Even Particle-Hole & $P\;(+)$  & $-1$ & $+1$ & $+1$  
& $-1$\\
\hline
\rule[-3mm]{0mm}{9mm}
Even Reflecting Particle-Hole & $P_R\;(+)$  & $-1$ & $-1$ & $+1$  
& $-1$\\
\hline
 \rule[-3mm]{0mm}{9mm}
Odd Particle-Hole & $P\;(-)$  & $-1$ &$+1$& $-1$  
& $-1$\\
\hline
\rule[-3mm]{0mm}{9mm}
Odd Reflecting Particle-Hole & $P_R\;(-)$  & $-1$ & $-1$ & $-1$  
& $-1$\\
\hline
\end{tabular}\vspace{2mm}
 \caption{\footnotesize The 12 types of strong involutive $\eta$-quantum symmetries for a given $\eta$-self-adjoint operator $H=H^\sharp$. In the case of the first four linear symmetries the sign $\varepsilon$ is irrelevant (irr.) as suggested by property (iii) in Definition \ref{def:int-eta-quan-inv}. The various names 
 are chosen  in such a way to keep the connection with the standard nomenclature of Table \ref{tab:int-1}.}
 \label{tab:int-4}
 \end{table}
%

 \begin{table}[h]\label{tab:int-5-A}
 \centering
 \begin{tabular}{|c||c|c|c|c|c|c|c|c|c|c|}
 \hline
 & $\boldsymbol{R}$    & $\boldsymbol{\chi}$ & $\boldsymbol{\chi_R}$ & $\boldsymbol{T}$ & $\boldsymbol{T_R}$ & $\boldsymbol{P}$& $\boldsymbol{P}_R$ \\
\hline
 \hline
 \rule[-3mm]{0mm}{9mm}
{\bf A}  & \phantom{0}0\phantom{0} & $0$ & \phantom{0}0\phantom{0}
& \phantom{0}0\phantom{0}& \phantom{0}0\phantom{0}
& \phantom{0}0\phantom{0}& \phantom{0}0\phantom{0}
\\
\hline
\rule[-3mm]{0mm}{9mm}
{\bf R}   & \phantom{0}1\phantom{0} & $0$ & \phantom{0}0\phantom{0}
& \phantom{0}0\phantom{0}& \phantom{0}0\phantom{0}
& \phantom{0}0\phantom{0}& \phantom{0}0\phantom{0}
 \\
\hline
 \rule[-3mm]{0mm}{9mm}
{\bf AIII}   & \phantom{0}0\phantom{0} & $1$ & \phantom{0}0\phantom{0}
& \phantom{0}0\phantom{0}& \phantom{0}0\phantom{0}
& \phantom{0}0\phantom{0}& \phantom{0}0\phantom{0}
\\
\hline
 \rule[-3mm]{0mm}{9mm}
{\bf AIII$_R$} & 0  & \phantom{0}0\phantom{0} & $1$  
& $0$& \phantom{0}0\phantom{0}
& \phantom{0}0\phantom{0}&$0$\\
\hline
 \rule[-3mm]{0mm}{9mm}
{\bf RIII} & 1  & \phantom{0}1\phantom{0} & $(1)$  
& $0$& \phantom{0}0\phantom{0}
& \phantom{0}0\phantom{0}&$0$\\
\hline
 \end{tabular}\vspace{2mm}
 \caption{\footnotesize The five classes which extend the two complex classes (A and AIII) of the standard CAZ scheme of Table 1.2 in the case of $\eta$-self-adjoint Hamiltonians. These are the classes defined by \emph{linear}   
  involutive $\eta$-quantum symmetries. As usual $1$ means that the symmetry is present and  $0$ that the symmetry is broken. In the case of the class RIII the commutation condition $R\chi=\chi R$  is tacitly assumed. 
}
\label{tab:int-5}
 \end{table}

 \begin{table}[h]\label{tab:int-5-C}
 \centering
 \begin{tabular}{|c||c|c|c|c|c|c|c|c|c|c|}
 \hline
 & $\boldsymbol{R}$    & $\boldsymbol{\chi}$ & $\boldsymbol{\chi_R}$ & $\boldsymbol{T}$ & $\boldsymbol{T_R}$ & $\boldsymbol{P}$& $\boldsymbol{P}_R$ \\
\hline
 \hline
\rule[-3mm]{0mm}{9mm}
{\bf AI} & \phantom{0}0\phantom{0}  & \phantom{0}0\phantom{0} & $0$  
& $+1$& \phantom{0}0\phantom{0}
& \phantom{0}0\phantom{0}&$0$\\
\hline
 \rule[-3mm]{0mm}{9mm}
{\bf AI$_R$} & 0  & 0 & \phantom{0}0\phantom{0}  
&$0$& $+1$&0&0\\
\hline
 \rule[-3mm]{0mm}{9mm}
{\bf RI} & 1  & 0 & \phantom{0}0\phantom{0}  
&$+1$& $(+1)$&0&0\\
\hline
 \rule[-3mm]{0mm}{9mm}
{\bf AII} & 0  & 0 & \phantom{0}0\phantom{0}  
& $-1$& $0$
& $0$ &$0$\\
\hline
\rule[-3mm]{0mm}{9mm}
{\bf AII$_R$} & 0  & 0 & \phantom{0}0\phantom{0}  
& $0$& $-1$
& $0$ &$0$\\
\hline
\rule[-3mm]{0mm}{9mm}
{\bf RII} & $1$  & $0$ & \phantom{0}0\phantom{0}  
& $-1$& $(-1) $
& $0$ &$0$\\
\hline
 \end{tabular}\vspace{2mm}
 \caption{\footnotesize The six classes which extend the two time reversal classes (AI and AII) of the standard CAZ scheme of Table 1.2 for the case of $\eta$-self-adjoint Hamiltonians. These are the classes defined by  
  involutive $\eta$-quantum symmetries with $c=+1$. As usual $1$ means that the symmetry is present and  $0$ that the symmetry is broken. In the case of the anti-linear symmetries $T$ and $T_R$ the signs
 $+1$ or $-1$ means that the related symmetry is  even or odd, respectively. 
  In the case of the classes RI and RII the commutation condition $RT=T R$  is tacitly assumed.}
  \label{tab:int-6}
 \end{table}

\subsection
{Methodological extension of the $K$-theory classification scheme}
Each symmetry class in the CAZ scheme of Table \ref{tab:int-2} supports a certain number of possible topological phases. The enumeration of these phases can be performed by different incarnations of the $K$-theory, according to the specificities and the generality of the situation under consideration   \cite{freed-moore-13,thiang-16,kellendonk-15,kubota-17}. In the case of the description of systems described by self-adjoint gapped Hamiltonians the relevant $K$-theory has been defined and studied by  Freed and Moore \cite{freed-moore-13}.

\medskip

There are two key points in the construction of the $K$-theory classification proposed by Freed and Moore:
 One key point is that the self-adjoint Hamiltonian, describing a gapped quantum systems, plays the role of a 
 \emph{gradation}  (namely a self-adjoint involutions)  in the sense of the  Karoubi's construction of $K$-theory \cite{karoubi-78}. The second key point is that one can incorporate various  quantum symmetries of the systems into a modified version of the Karoubi's $K$-theory. For example, the incorporation of the time reversal symmetry $T$ leads to the Atiyah's $KR$-theory \cite{atiyah-66} in the even case and to the Dupont's $KQ$-theory \cite{dupont-69}
 in the odd case. The \emph{Freed-Moore $K$-theory}   \cite{freed-moore-13}  unifies and generalizes  the $KR$-theory and the $KQ$-theory to an arbitrary number of quantum symmetries.

\medskip

Let us be a bit more precise on the description of the crucial ideas for the construction of  the $K$-theory of Freed-Moore. First of all the gradation. Given a self-adjoint Hamiltonian $H$ with a \emph{gap} around the energy $\lambda\notin\sigma(H)$ one can construct via spectral calculus the operator
\begin{equation}\label{eq:intro_grad}
\Gamma_H\;:=\;\rm{sgn}(H)\;=\;\frac{H-\lambda\n{1}}{|H-\lambda\n{1}|}\;=\;\frac{H-\lambda\n{1}}{\sqrt{(H-\lambda\n{1})^2}}\;.
\end{equation}
By construction $\Gamma_H$ is a gradation of $\s{H}$ since $\Gamma_H=\Gamma_H^*=\Gamma_H^{-1}$ (\cf Remark \ref{rk:grad-gap}). Now the symmetries. Let $T,P\in{\tt QS}(\s{H})$ be respectively a time reversal symmetry and a  particle-hole symmetry for $H$, and hence for $\Gamma_H$, in the sense of Table 1.1. The operator $T$ and $P$ are anti-linear and 
by general arguments it is not necessary to consider other anti-linear symmetries \cite[Lemma 6.1]{thiang-16}. 
Moreover, one can also assume the commutativity of $T$ and $P$ \cite[Proposition 6.2]{thiang-16}.  This fact allows to define also the chiral symmetry $\chi:=TP=PT$. By following \cite[Section 6.1]{thiang-16} let us call  \emph{$PT$-group} the subgroup of ${\tt QS}(\s{H})$ given by $\{\n{1},T,P,\chi\}\simeq \Z_2\times\Z_2$ where the isomorphism is generated by the signs $\varpi$ and $c$.  
This subgroup implements different representations of  appropriate Clifford algebras $C\ell^{r,s}$ in function of the signs  of $T^2$ and $P^2$. Moreover, these representations can be graded or ungraded with respect to the  gradation $\Gamma_H$. The possible Clifford actions representable by the $PT$-group 
are ten and are in one-to-one correspondence with the  ten classes of the CAZ  scheme described in Table 1.2 (for the details see \cite[Section 6.1]{thiang-16}). A $K$-theory capable of classifying at the same time all the ten CAZ classes must be endowed with a graded structure coming from the gapped Hamiltonian $H$ and must be compatible with (graded or ungraded) Clifford actions implemented by unitary or anti-unitary operators associated with the $PT$-group (the so called \emph{PUA-representatios} \cite{parthasarathy-69,thiang-16}). This is exactly what the $K$-theory of Freed and Moore does  \cite{freed-moore-13}. A more  precise presentation of the Freed-Moore $K$-theory is postponed in Section \ref{sec:karubi_vs_freed}.

\medskip

How the picture sketched above changes in presence of a Krein space structure given by an indefinite metric $\eta$? In this case the role of the gapped self-adjoint operators is played by
 gapped $\eta$-self-adjoint operators with a  
$\s{C}$-symmetry $\Xi$. However, in order to extend in a straightforward way the $K$-theory picture sketched above we will make use of the following representation:
\begin{theorem}\label{theo_fund_rep_res}
Let  $(\s{H},\langle\langle\;,\;\rangle\rangle_\eta)$ be the Krein space associated to a fundamental symmetry  $\eta^*=\eta=\eta^{-1}$. Let $H$ be an $\eta$-self-adjoint operator  which admits a $\s{C}$-symmetry $\Xi$ and $U\in {\tt QS}_\eta(\s{H})$  a bounded $\eta$-quantum symmetry of type $(\varpi,\wp)$. Then:
\begin{itemize}
\item[(1)] There is a  bounded $\eta$-unitary operator $G_\Xi\in {\tt QS}_\eta(\s{H})$ such that  the transformed operator
$$
\tilde{H}\;:=\;G_\Xi H G_\Xi^{-1}
$$
is self-adjoint and $\eta$-self adjoint, \ie $\tilde{H}^*=\tilde{H}=\tilde{H}^\sharp$.
\vspace{1mm}
\item[(2)] The transformed  symmetry
$$
\tilde{U}\;:=\;G_\Xi U G_\Xi^{-1}\;\in\;{\tt QS}_\eta(\s{H})
$$
is still of type $(\varpi,\wp)$. Moreover
\begin{equation}\label{eq:triv_one}
UH\;=\;cHU\qquad\Leftrightarrow\qquad\tilde{U}\tilde{H}\;=\;c\tilde{H}\tilde{U}\;.
\end{equation}
\item[(3)] 
Assume in addition  that $U$ is 
an involutive $\eta$-quantum symmetry (in the sense of Definition \ref{def:int-eta-quan-inv})  characterized by the signs $(\varpi,\wp,\varepsilon)$ which is \emph{$\Xi$-compatible} in the sense that
\begin{equation}\label{eq:triv_one_two}
U\Xi\;=\;\wp\; \Xi U\;.
\end{equation}
Then  $\tilde{U}$ is a
an involutive $\eta$-quantum symmetry of  type $(\varpi,\wp,\varepsilon)$
which fulfills
$$
\tilde{U}^{-1}\;=\;\tilde{U}^*\;,\qquad\quad \tilde{U}\eta=\wp\; \eta \tilde{U}\;.
$$
In particular  $\tilde{U}$ is a unitary ($\wp=+1$) or an anti-unitary ($\wp=-1$) operator on $\s{H}$.
\end{itemize}
\end{theorem}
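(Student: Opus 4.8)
The plan is to manufacture $G_\Xi$ directly from the $\s{C}$-symmetry as the square root of the positive metric it induces. I would set $\rho:=\eta\Xi$. The relation $\Xi=\Xi^\sharp$ gives $\Xi^*\eta=\eta\Xi$, so $\rho^*=\rho$; moreover $\eta\Xi>0$ by definition of a $\s{C}$-symmetry and $\rho^{-1}=\Xi\eta$ is bounded, so $\rho$ is a bounded, strictly positive operator with bounded inverse. Hence $G_\Xi:=\rho^{1/2}$ is well defined by continuous functional calculus and is a positive self-adjoint operator with bounded inverse. The linchpin is the identity
$$\eta\rho\eta=\eta(\eta\Xi)\eta=\Xi\eta=\rho^{-1},$$
which, because $\eta(\cdot)\eta$ is a $*$-automorphism, upgrades to $\eta G_\Xi\eta=G_\Xi^{-1}$; as $G_\Xi^*=G_\Xi$ this says $G_\Xi^\sharp=G_\Xi^{-1}$, i.e. $G_\Xi\in{\tt QS}_\eta(\s{H})$ is a linear $\eta$-unitary. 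For part (1), the $\eta$-self-adjointness of $\tilde H$ is then automatic: since $\sharp$ is an anti-homomorphism and $G_\Xi^\sharp=G_\Xi^{-1}$, one gets $\tilde H^\sharp=G_\Xi H^\sharp G_\Xi^{-1}=\tilde H$. Ordinary self-adjointness $\tilde H^*=\tilde H$ is, after conjugating by $G_\Xi$, equivalent to $H^*\rho=\rho H$, and this holds because $H=H^\sharp$ gives $H^*=\eta H\eta$, whence $H^*\rho=\eta H\eta\cdot\eta\Xi=\eta H\Xi=\eta\Xi H=\rho H$ using $[H,\Xi]=0$.

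For part (2), conjugation by the linear $G_\Xi$ preserves the linear/anti-linear character, so $\tilde U(\ii\n{1})=\varpi\ii\tilde U$ with the same $\varpi$. Since $\sharp$ is an anti-homomorphism, $\tilde U^\sharp=G_\Xi U^\sharp G_\Xi^{-1}$, and feeding in $U^\sharp=\wp U^{-1}$ (the type-$\wp$ condition) yields $\tilde U^{-1}=\wp\tilde U^\sharp$; thus $\tilde U$ retains the type $(\varpi,\wp)$. The equivalence \eqref{eq:triv_one} is immediate, because $\tilde U\tilde H=G_\Xi(UH)G_\Xi^{-1}$ and $\tilde H\tilde U=G_\Xi(HU)G_\Xi^{-1}$ and $G_\Xi$ is invertible.

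For part (3), the involution relation (iii) passes to $\tilde U$ with the same $\varepsilon$ by the same scalar-pull-through, since $\varepsilon^{(1-\varpi)/2}$ is real: $\tilde U=\varepsilon^{(1-\varpi)/2}\tilde U^{-1}$. The substantive claim is $\tilde U^*=\tilde U^{-1}$, which after conjugation by $G_\Xi$ is equivalent to $U^*\rho=\rho U^{-1}$. To get the latter I would rewrite condition (i) as $U^*=\wp\eta U^{-1}\eta$ and the compatibility \eqref{eq:triv_one_two} as $U^{-1}\Xi=\wp\Xi U^{-1}$, and compute
$$U^*\rho=\wp\eta U^{-1}\eta\cdot\eta\Xi=\wp\eta U^{-1}\Xi=\wp\eta\cdot\wp\Xi U^{-1}=\eta\Xi U^{-1}=\rho U^{-1}.$$
Once $\tilde U^*=\tilde U^{-1}$ is established, the remaining identity $\tilde U\eta=\wp\eta\tilde U$ drops out of the type-$(\varpi,\wp)$ relation $\tilde U^{-1}=\wp\eta\tilde U^*\eta$ from part (2): substituting $\tilde U^*=\tilde U^{-1}$ gives $\eta\tilde U^{-1}\eta=\wp\tilde U^{-1}$, and taking inverses yields $\eta\tilde U\eta=\wp\tilde U$. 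That $\tilde U$ is genuinely unitary ($\wp=+1$) or anti-unitary ($\wp=-1$) is then just the reading of $\tilde U^*=\tilde U^{-1}$ together with the value of $\varpi$.

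I expect the main obstacle to be the careful bookkeeping of anti-linearity rather than any deep difficulty: every adjoint manipulation above must be re-checked under the convention $\langle\phi,A\varphi\rangle=\langle\varphi,A^*\phi\rangle$ for anti-linear $A$, in particular that $(ABC)^*=C^*B^*A^*$ survives and that the real signs $\wp,\varepsilon$ pull through anti-linear factors unconjugated (a complex scalar would not). The one analytic point that needs justification is the functional-calculus step $\eta\rho\eta=\rho^{-1}\Rightarrow\eta\rho^{1/2}\eta=\rho^{-1/2}$, which uses that $\rho$ is bounded and bounded below, so that $x\mapsto x^{1/2}$ is continuous on $\sigma(\rho)\subset(0,\infty)$ and commutes with the automorphism $\eta(\cdot)\eta$.
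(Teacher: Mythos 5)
Your proposal is correct and follows essentially the same route as the paper: your $G_\Xi:=(\eta\Xi)^{1/2}$ with the key identity $\eta G_\Xi\eta=G_\Xi^{-1}$ is exactly the operator of the paper's Lemma \ref{lemma_reduc_Xi}, and your reduction of $\tilde U^*=\tilde U^{-1}$ to $U^*G_\Xi^2=G_\Xi^2U^{-1}$, settled via $U^*=\wp\,\eta U^{-1}\eta$ and the compatibility $U\Xi=\wp\,\Xi U$, is the same computation the paper performs in item (3), as is the final derivation of $\tilde U\eta=\wp\,\eta\tilde U$. The only cosmetic difference is in item (1), where you verify $\tilde H^\sharp=\tilde H$ and $\tilde H^*=\tilde H$ directly (via $H^*\rho=\rho H$) instead of first proving $[\tilde H,\eta]=0$ and invoking Corollary \ref{cor1_inv_eta_self} as in Proposition \ref{prop:H-Xi}; if $H$ is unbounded you should also record the domain bookkeeping ($\Xi[\s{D}(H)]=\s{D}(H)$, $\s{D}(H^*)=\eta[\s{D}(H)]$) that makes that operator identity legitimate, as the paper does.
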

\proof[{Proof} (sketch of)]
Item (1)  is proved in  Proposition \ref{prop:H-Xi} and implies that $\tilde{H}$ is a self-adjoint operator such that $\tilde{H}\eta=\eta\tilde{H}$. Item (2) is quite immediate. In fact $U$ and $G_\Xi$ are both elements of the group 
${\tt QS}_\eta(\s{H})$ and so also their composition $\tilde{U}$ belongs to ${\tt QS}_\eta(\s{H})$. Since $G_\Xi$ is linear, it follows that the linear or anti-linear nature  of $\tilde{U}$  depends only on 
${U}$. Similar argument for the sign $\wp$. A direct computation provides $\tilde{U}^\sharp=G_\Xi U^\sharp G_\Xi^{-1}$ and so $\tilde{U}^\sharp=\wp\tilde{U}^{-1}$ if and only if ${U}^\sharp=\wp{U}^{-1}$. Finally equation \eqref{eq:triv_one} is a direct consequence of the definition of $\tilde{U}$ and $\tilde{H}$. 
A general version of  item (3) is proved in Proposition \ref{prop:reduction_symmetry}. Anyway, in view of item (2) one has that $\tilde{U}$ is an $\eta$-quantum symmetry of type $(\varpi,\wp)$. Moreover a direct computation $\tilde{U}^2=G_\Xi U^2 G_\Xi^{-1}=\varepsilon\n{1}$ shows that $\tilde{U}$ is an involution of parity $\varepsilon$. Finally
 let us observe that   
 $\tilde{U}^{-1}=G_\Xi U^{-1} G_\Xi^{-1}$ and $\tilde{U}^{*}=G_\Xi^{-1} U^* G_\Xi$ where the condition $G_\Xi=G_\Xi^*$ has been used. Then the condition $\tilde{U}^{-1}=\tilde{U}^{*}$ is equivalent to $U^*G_\Xi^2=G_\Xi^2U^{-1}$. Since $G_\Xi^2=\eta\Xi$ (\cf Lemma \ref{lemma_reduc_Xi}) the last equality is guaranteed by $G_\Xi^2U^{-1}=\wp \eta U^{-1}\Xi$
which follows from \eqref{eq:triv_one_two} along with 
$U^*G_\Xi^2= \eta U^\sharp\Xi=\wp \eta U^{-1}\Xi$
which is ensured by Definition \ref{def:int-eta-quan-inv}. The condition $\tilde{U}^{-1}=\wp \eta \tilde{U}^*\eta$ along with the unitarity $\tilde{U}^{-1}=\tilde{U}^*$ implies the last relation $\tilde{U}\eta=\wp \eta \tilde{U}$.
\qed

\medskip

The representation described in Theorem \ref{theo_fund_rep_res} allows to reconsider the problem of the construction of a $K$-theory for $\eta$-self-adjoint  operators with a  
$\s{C}$-symmetry in the following terms: First of all, instead of  \emph{single} gapped Hamiltonians, we need to consider \emph{pairs} $(\Gamma_H,\eta)$ given by a  gradation $\Gamma_H$ (associated with a gapped Hamiltonian $H$ according to \eqref{eq:intro_grad}) and   an indefinite metric $\eta$ constrained by the condition $\Gamma_H\eta=\eta\Gamma_H$. Secondly, the fundamental symmetries are described by the \emph{$PTR$-group} which is the subgroup of ${\tt QS}_\eta(\s{H})\cap {\tt QS}(\s{H})$ given by $\{\n{1},T,P,\chi,R,T_R,P_R,\chi_R\}\simeq \Z_2\times\Z_2\times\Z_2$ where the isomorphism is generated by the signs $\varpi,\wp$ and $c$. The \emph{new} symmetries specified by $\wp=-1$ can be defined in terms of  the fundamental  {reflecting symmetry} $R$, the  time reversal symmetry $T$ and a particle-hole symmetry $P$ according to $T_R:=RT=TR$, $P_R:=PR=RP$ and $\chi_R=R\chi=\chi R$.
In this framework the extension of the  Freed-Moore $K$-theory is straightforward, but not trivial. The extended $K$-theory must be capable of classifying pairs $(\Gamma_H,\eta)$ and must be compatible with (graded or ungraded) Clifford actions implemented by unitary or anti-unitary operators associated with the $PTR$-group.
The details of this construction, which is the main goal of this work, are described in Section \ref{sec:eta-K-theory}.

\subsection
{Some open question}
This work contains
a preliminary and foundational investigation around the problem of the $K$-theory classification of topological phases for $\eta$-self-adjoint operators which are  dynamically stable. In the formulation developed here some strong assumptions have been formulated and there are several open questions which deserve future investigation. Among the latter, let us point out on the following questions: \emph{What is the fate of the {ten-fold way} in the extended framework presented in this work?
It is possible to consider the problem of the classification of topological phases induced by involutive $\eta$-quantum symmetries which are not necessarily bounded?} One  last open question is presented at the end of Appendix \ref{sec:maxwell-meta} and concerns with the proof that certain $\eta$-self-adjoint operators admit a $\s{C}$-symmetry.

\subsection
{Structure of the paper}
The paper is organized as follows: In {\bf Section \ref{sec:indefinite}} the basic facts of the theory of Krein spaces are reviewed. In particular the notion of $\eta$-self-adjointness and the different types of $\eta$-isometries are presented.
{\bf Section \ref{sec:Q-sym}} is devoted to the presentation of the extended version of the Wigner's theorem on Krein spaces. In {\bf Section \ref{sec:C-symm}} the notion of  $\s{C}$-symmetry is introduced and the most relevant properties of $\s{C}$-symmetric $\eta$-self-adjoint operators are proved.
In {\bf Section \ref{sec:freed_moore_K}} the Freed-Moore $K$-theory is firstly reformulated  in terms of Karoubi's gradations and  then adapted for the classification of $\s{C}$-symmetric $\eta$-self-adjoint operators. In {\bf Appendix \ref{sec:maxwell-meta}} 
a class of $\s{C}$-symmetric $\eta$-self-adjoint operators, relevant for physical applications, is described.
{\bf Appendix \ref{app:explicit-2}} is devoted to the explicit classification of  $\s{C}$-symmetric $\eta$-self-adjoint matrix on $\C^2$.

\medskip
\noindent
{\bf Acknowledgements.} 
GD's research is supported
 by
the  grant \emph{Iniciaci\'{o}n en Investigaci\'{o}n 2015} - $\text{N}^{\text{o}}$ 11150143 funded  by FONDECYT.	 KG's research is supported by 
the JSPS KAKENHI Grant Number 15K04871.
\medskip


\section{Elements of the Krein  space theory}
\label{sec:indefinite}

From a mathematical point of view the study of indefinite metric spaces in finite dimension started by G. Frobenius while the
  interest for the infinite dimension case 
has been pioneered by 
the russian mathematicians \cite{pontryagin-44,iokhvidov-krein-56,iokhvidov-krein-59,ginzburg-iokhvidov-62,krein-langer-63,krein-65}.
We refer to their papers (cited before), as well as to the two general monographs \cite{azizov-iokhvidov-67,bognar-74},
 for a general  exposition of the results obtained in that field.


\subsection{Indefinite metric spaces}
In this paper, the pair $(\s{H},\langle\;,\;\rangle)$
 is always assumed to describe a \emph{separable complex} Hilbert space $\s{H}$ with a 
 \emph{scalar product}
 $\langle\;,\;\rangle$ (\cf Note \ref{note:intro_01}). We will use the symbol $\bb{B}(\s{H})$ to denote the set of bounded operators on $\s{H}$, namely 
 the set of the linear or \emph{anti}-linear maps $A:\s{H}\to\s{H}$ (\cf Note \ref{footnote:A}) such that
\begin{equation}\label{eq:op_norm}
\|A\|\;:=\;\sup_{\psi\in\s{H}\setminus\{0\}}\left(\frac{\langle A\psi,A\psi\rangle}{\langle \psi,\psi\rangle}\right)^{\frac{1}{2}}\;<\;\infty\;.
\end{equation}
 We point out that the definition of boundedness is not affected by the dichotomy between linearity and anti-linearity. 
  The subset of the \emph{linear} bounded operators will be denoted by the symbol $\bb{B}_{\rm lin}(\s{H})$.
 
 \medskip

The notion of  indefinite metric is introduced in the following way:

\begin{definition}[Indefinite metric space]\label{def:001}
An indefinite metric space $(\s{H},\langle\langle\;,\;\rangle\rangle_\eta)$ is the datum of:
\begin{itemize}
\item[(i)] An  underlying  Hilbert space $(\s{H},\langle\;,\;\rangle)$.\vspace{1mm}

\item[(ii)]
A non-degenerate,  possibly indefinite, inner product $\langle\langle\;,\;\rangle\rangle_\eta$ associated by the prescription
$$
\langle\langle \phi, \varphi \rangle\rangle_\eta\;: =\; \langle \phi, \eta \varphi \rangle\;,\qquad\quad \forall \phi, \varphi\in\s{H}
$$
to a bounded linear operator $\eta\in\bb{B}_{\rm lin}(\s{H})$ which is self-adjoint $\eta=\eta^*$, and which admits   a bounded inverse $\eta^{-1}\in \bb{B}_{\rm lin}(\s{H})$.
\end{itemize}
\end{definition}

\medskip

\noindent
The self-adjoint condition for $\eta$ assures that the quadratic form 
$$
\f{Q}_\eta(\varphi)\;:=\;\langle\langle \varphi, \varphi \rangle\rangle_\eta\;=\;\langle\varphi,\eta\,\varphi\rangle\;,\qquad\quad   \varphi\in\s{H}
$$ 
is real-valued. The invertibility of $\eta$, which is equivalent to the fact that $\eta\varphi=0$ if and only if $\varphi=0$, assures that $\langle\langle\;,\;\rangle\rangle_\eta$ is non-degenerate. 

\medskip

\begin{remark}[Fundamental symmetry]\label{rk:fund_sym}{\upshape
Without loss of generality an indefinite metric space can be assumed to have a \emph{metric operator} which meets the conditions 
\begin{equation}\label{eq:fund_sym}
\eta^*\;=\;\eta\;=\;\eta^{-1}\;.
\end{equation}
To see this, let $\eta'$ be a generic metric operator and  build via functional calculus the operators $w:=\sqrt{|\eta'|}$ and $\eta:=\frac{\eta'}{|\eta'|}$.
The map $U_w:\varphi\mapsto w\varphi$ is a unitary equivalence of the
 Hilbert spaces $(\s{H},\langle\;,\;\rangle)$  and $(\s{H},\langle\;,\;\rangle_{w^{-2}})$. The metric operator $\eta'$ on  $(\s{H},\langle\;,\;\rangle_{w^{-2}})$ gives rise to an isometry
$$
\langle\langle\phi,\varphi\rangle\rangle_{\eta'}\;=\;\langle\phi,\eta'\varphi\rangle_{w^{-2}}
\;=\;\langle \varphi,\eta\,\varphi\rangle\;=\;\langle\langle\phi,\varphi\rangle\rangle_{\eta}\;,\qquad\quad \forall \phi, \varphi\in\s{H}
$$
and by construction $\eta$ fulfills the conditions \eqref{eq:fund_sym}.
Operators which verify \eqref{eq:fund_sym} are usually called \emph{fundamental symmetries}. 
}\hfill $\blacktriangleleft$
\end{remark}

\medskip 

In  view of Remark \ref{rk:fund_sym} we will always assume henceforth that the indefinite metric is generated by a  fundamental symmetry $\eta^*=\eta=\eta^{-1}$. This fact allows us to introduce the corresponding self-adjoint projections
$$
P_+\;:=\;\frac{\n{1}+\eta}{2}\;,\qquad\quad P_-\;:=\;\frac{\n{1}-\eta}{2}
$$
that determine the \emph{fundamental} orthogonal decomposition (with respect to the scalar product) of $\s{H}$
$$
\s{H}\;=\;\s{H}_+\;\oplus\;\s{H}_-\;,\qquad\quad \s{H}_\pm\;:=\;{\rm Ran}\;P_\pm\;=\;P_\pm\s{H}\;.
$$
It is not difficult to prove  that the subspaces $\s{H}_+$ and $\s{H}_-$ are also $\eta$-orthogonal. To see this, let $\phi_\pm\in\s{H}_\pm$ and compute
$$
\langle\langle\phi_+,\phi_-\rangle\rangle_{\eta}\;=\;\langle P_+\phi_+,\eta\, P_-\phi_-\rangle\;=\;\langle \phi_+,P_+\eta P_-\,\phi_-\rangle\;=\;0
$$ 
where we used the fact that $P_\pm$ are self-adjoint projections, $P_+\eta P_-=\pm P_+P_-=0$. Then we can also write the fundamental  decomposition as
$$
\s{H}\;=\;\s{H}_+\;[\oplus]_\eta\;\s{H}_-
$$
where the symbol $[\oplus]_\eta$ denotes the orthogonality with respect to the inner product $\langle\langle\;,\;\rangle\rangle_\eta$.

\medskip

Let $\kappa_\pm:={\rm dim} \s{H}_\pm$. The  \emph{rank of indefiniteness} is defined by 
$$
\kappa\;:=\;\min\;\{\kappa_+,\kappa_-\}\;\in\; \N\;\cup\;\{\infty\}\;.
$$
The case $\kappa=0$ coincides with $\eta=\n{1}$ or $\eta=-\n{1}$. This case is trivial since the resulting metric $\langle\langle\;,\;\rangle\rangle_\eta$ coincides, at worst
up to a sign,  with the initial scalar product $\langle\;,\;\rangle$. For this reason  we will always assume henceforth that $\kappa>0$.
\begin{definition}[Krein space]\label{def:krein}
The indefinite metric space $(\s{H},\langle\langle\;,\;\rangle\rangle_\eta)$ associated to a fundamental symmetry $\eta^*=\eta=\eta^{-1}$ is called:
\begin{itemize}
\item  \emph{Pontryagin space} if $0<\kappa<\infty$;
\vspace{1mm}
\item  \emph{Krein space} if $\kappa=\infty$.
\end{itemize}
\end{definition}
\noindent 
The possibility to define Krein spaces as in Definition \ref{def:krein}, namely as  indefinite metric spaces with rank $\kappa=\infty$ is based on a crucial result by  
H. Langer \cite{langer-62}.  In this work we will use the name Krein space also for  spaces of \emph{finite} even dimension such that $\kappa_+=\kappa_-$.

\medskip

The notion of self-adjointness  has a natural adaptation in the case of indefinite metric spaces. First of all
let us recall that  according to the existing literature (see \eg \cite{tomita-80,gheondea-88}) the \emph{$\eta$-adjoint} of a densely defined (possible unbounded) operator $A$ is by definition
$$
A^\sharp\;:=\;\eta A^* \eta\;.
$$

\begin{definition}[$\eta$-self-adjoint operators] \label{dfn:eta_self_adjoint}
Let  $(\s{H},\langle\langle\;,\;\rangle\rangle_\eta)$ be the indefinite metric space associated to a fundamental symmetry $\eta^*=\eta=\eta^{-1}$. A densely defined (possible unbounded) linear operator $H:\s{H}\to\s{H}$ is called 
$\eta$-symmetric if 
\begin{equation}\label{eq:eta-symm_00-f01}
 H\; \subseteq\; \;H^\sharp
\end{equation}
and
\emph{$\eta$-self-adjoint} if 
\begin{equation}\label{eq:eta-self01}
 H\; =\; \;H^\sharp\;.
\end{equation}
 The notation
$$
{\tt H}_\eta(\s{H})\;:=\;\left\{H\in\bb{B}_{\rm lin}(\s{H})\ |\  H = H^\sharp\right\}
$$
 is used to denote the set of bounded $\eta$-self-adjoint  operators on 
$\s{H}$.
\end{definition}

\medskip
From its very definition it follows that $H$ is $\eta$-self-adjoint if and only if
$$
\langle\langle\phi,H\varphi\rangle\rangle_\eta\;=\;
\langle\phi,\eta H\varphi\rangle\;=\;\langle H\phi, \eta\varphi\rangle\;=\;
\langle\langle H\phi,\varphi\rangle\rangle_\eta,\qquad\forall\; \phi,\varphi\in\s{D}(H)\;
$$
which shows that $H$ is \virg{self-adjoint} with respect to the inner product $\langle\langle\;,\;\rangle\rangle_\eta$. Moreover, the relation $H^*=\eta H \eta$ says that  $H$ and  $H^*$ are unitarily equivalent and, in turn, that $\eta$ is a bijection between the dense domains  $\s{D}(H)$ and $\s{D}(H^*)$. Finally, from the equality 
$$
\langle\phi,\eta H\varphi\rangle\;=\;\langle H\phi, \eta\varphi\rangle\;=\;\langle \eta H\phi, \varphi\rangle,\qquad\forall\; \phi,\varphi\in\s{D}(H)
$$
one also infers the relation
\begin{equation}\label{eq:eta-self02}
\eta H\; =\; (\eta H)^*\;.
\end{equation}
%


\subsection{Unitray vs. $\eta$-unitary operators}
\label{sect:anti-lin}
A densely defined linear operator $U$ acting on the Hilbert $(\s{H},\langle\;,\;\rangle)$ is unitary if
$$
\langle\phi,\varphi\rangle\;=\; \langle U\phi,U\varphi\rangle\;,\qquad\quad \forall \phi, \varphi\in\s{D}(U)\;.
$$
This condition immediately allows to conclude that $U$ is bounded in norm by 1 
in its domain $\s{D}(U)$ and so can be extended to a bounded operator on $\s{H}$. Moreover, $U$ turns out to be 
 invertible with inverse $U^{-1}=U^*$. An operator $U$ on  $(\s{H},\langle\;,\;\rangle)$ is \emph{anti}-unitary if it is \emph{anti}-linear
 and 
$$
\langle\phi,\varphi\rangle\;=\; \langle U\varphi,U\phi\rangle\;,\qquad\quad \forall \phi, \varphi\in\s{H}\;.
$$
Every anti-linear operator $U$ is necessarily bounded and invertible. In fact, given    a \emph{complex conjugation} $C$ on $\s{H}$, one has that the anti-linear operator $U$ can be represented as the product  $U=CW_U$ of the complex conjugation $C$  times the  unitary $W_U:=CU$. It is worth to recall that a complex conjugation
is any \emph{anti}-linear map which admits a \emph{real} (orthonormal) basis of vectors in the sense that there exists a basis $\{\psi_j\}_{j\in\N}\subset\s{H}$ such that $C\psi_j=\psi_j$. As a consequence one has that  $C^*=C=C^{-1}$. 

\medskip

Things become less simple in the case of indefinite metric space $(\s{H},\langle\langle\;,\;\rangle\rangle_\eta)$. First of all it should be noted that the requirement
\begin{equation}
\langle\langle\phi,\varphi\rangle\rangle_\eta\;=\;\langle\langle U\phi,U\varphi\rangle\rangle_\eta\;,\qquad\quad \forall \phi, \varphi\in\s{D}(U)\;.
\end{equation}
does \emph{not}
allow to conclude, as in the positive metric case, that $U$ is bounded  when $\s{H}$ is infinte dimensional \cite{azizov-iokhvidov-67,shmulyan-74,tomita-80,gheondea-88}.  This is a fundamental difference which makes the notion of unitarity  more difficult in the case of indefinite metric spaces. For these reasons, in the mathematical literature on indefinite metric spaces  the boundedness property is part of the definition of the unitarity of an operator (see \eg \cite[Chapter 2, Definition 5.1]{azizov-iokhvidov-67}). However, since this property is in general not justificable on the basis of purely physical considerations and it is in general not
shared by operators occurring in the physical applications (\cf Remark \ref{rk:unbound_eta-unit_math}) we prefer to omit it, at least in this general preliminary presentation. The following definition  is borrowed from \cite[Definitions 2 \& 3]{bracci-morchio-strocchi}:
\begin{definition}[$\eta$-unitary and $\eta$-anti-unitaryoperators]\label{def:eta-unit-op}
Let  $(\s{H},\langle\langle\;,\;\rangle\rangle_\eta)$ be the indefinite metric space associate to a fundamental symmetry $\eta^*=\eta=\eta^{-1}$. Let $U:\s{H}\to \s{H}$ be an operator (not linear a priori) with domain $\s{D}(U)\subseteq \s{H}$ and range $\s{R}(U)\subseteq \s{H}$. Assume that:
\begin{itemize}
\item[(i)] Domain and range are dense sets, \ie $\overline{\s{D}(U)}= \s{H}=\overline{\s{R}(U)}$;
\end{itemize}
The operator $U$ is called \emph{$\eta$-unitary} if:
\begin{itemize}
\item[(ii)]
The condition 
\begin{equation}\label{eq:eta-unit-001}
\langle\langle\phi,\varphi\rangle\rangle_\eta\;=\;\langle\langle U\phi, U\varphi,\rangle\rangle_\eta\;,\qquad\quad \forall \phi, \varphi\in\s{D}(U)\;
\end{equation}
holds true.
\end{itemize}
It is called \emph{$\eta$-anti-unitary} if:
\begin{itemize}
\item[(ii')]
The condition 
\begin{equation}
\langle\langle\phi,\varphi\rangle\rangle_\eta\;=\;\langle\langle U\varphi, U\phi,\rangle\rangle_\eta\;,\qquad\quad \forall \phi, \varphi\in\s{D}(U)\;
\end{equation}
holds true.
\end{itemize}
\end{definition}

\begin{remark}[Comparison with the mathematical nomenclature]\label{rk:nomenc_eta-unit_math}{\upshape
Possible unbounded operators which meet condition (ii) of Definition \ref{def:eta-unit-op} are usually called \emph{$\eta$-isometric} (see \eg \cite[Chapter 2, Definition 5.1]{azizov-iokhvidov-67}). However, the notion of $\eta$-unitarity is slightly stronger in view of the condition (i) which assures the density of the domain  and  the range of $U$. Let us observe that the density of $\s{D}(U)$ and condition (ii) are enough to prove that $U^{-1}\subseteq\eta U^*\eta=U^\sharp$ \cite[Proposition 2]{bracci-morchio-strocchi}. Sometime, the stronger condition 
$U^{-1}=U^\sharp$ is used as a definition of $\eta$-unitary operator (see \eg \cite{tomita-80,gheondea-88}). Clearly, whenever $U$ is a bounded operator all these definitions coincide.
}\hfill $\blacktriangleleft$
\end{remark}

\begin{remark}[Unbounded $\eta$-unitarity operators]\label{rk:unbound_eta-unit_math}{\upshape
The construction of explicit examples of unbounded $\eta$-unitarity operators is in general  a non trivial task. For instance the examples presented in \cite[Chapter 2, Example 5.5]{azizov-iokhvidov-67} or \cite[Remark 5.8 (c)]{gheondea-88} are quite implicit. On the other hand is not difficult to show that $\eta$-unitarity operators can have arbitrarily large norm (\cf Lemma \ref{lwmma:appB-iso}). 
A relevant physical examples of unbounded $\eta$-unitarity operators
is provided by the  \emph{Gupta formulation} of quantum electrodynamics. In this theory the Lorentz transformations are described by unbounded and densely defined operators (see \eg the discussion in \cite[Section 7.4]{prugovecki-95} and references therein). 
}\hfill $\blacktriangleleft$
\end{remark}

\noindent 
Let us observe that, as for the usual Hilbert space case,  one can express every $\eta$-anti-unitary operator $U$  as the product $U=C W_U$
where $C$ is a complex conjugation which meets 
\begin{equation}\label{eq:eta-compl_cpnj}
C\eta\;=\;\eta C
\end{equation}
and $W_U:=CU$ is an  $\eta$-unitary operator\footnote{\label{note:eta-compl}
A complex conjugation $C$ which  meets condition \eqref{eq:eta-compl_cpnj}
is sometimes called an \emph{$\eta$-complex conjugation}. It is not difficult to construct  $\eta$-complex conjugations $C$. It is sufficient to select a basis of
$\{\psi_j\}_{j\in\N}\subset\s{H}$ which diagonalizes $\eta$ and to define $C$ as the anti-lienar operator fixing the basis, namely $C\psi_j=\psi_j$.
}.
\medskip

Since we allow $\eta$-unitary and $\eta$-anti-unitary operators to be unbounded, it becomes important to investigate the closure properties of these operators. The following result holds true.
\begin{proposition}[{\cite[Propositions 2 \&  3]{bracci-morchio-strocchi}}]\label{prob:lin-eta-uni}
Every $\eta$-unitary (resp.  $\eta$-anti-unitary) operator is a linear (resp. anti-linear) and closable operator. Moreover, it 
 has an inverse which
is in turn
an $\eta$-unitary (resp. $\eta$-anti-unitary) operator.
\end{proposition}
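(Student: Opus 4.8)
The plan is to derive all three assertions from a single structural fact: with the abbreviation $[\phi,\varphi]:=\langle\langle\phi,\varphi\rangle\rangle_\eta=\langle\phi,\eta\varphi\rangle$, the $\eta$-inner product is a non-degenerate sesquilinear form (non-degeneracy being precisely the invertibility of $\eta$), conjugate-linear in its first and linear in its second entry, and it is jointly continuous because $\eta$ is bounded. The only hypotheses I would use are the density of the (dense, linear) domain $\s{D}(U)$ and of the range $\s{R}(U)$, together with the isometry relation \eqref{eq:eta-unit-001}. The recurring mechanism is this: to show that a vector $w$ vanishes it suffices to verify $[w,U\chi]=0$ for every $\chi\in\s{D}(U)$, since $\s{R}(U)$ is dense, the map $v\mapsto[w,v]=\langle w,\eta v\rangle$ is linear and continuous, and $\eta\s{H}=\s{H}$.

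First I would establish linearity. Fix $\phi,\psi\in\s{D}(U)$ and consider the defect $w:=U(\phi+\psi)-U\phi-U\psi$. For every $\chi\in\s{D}(U)$, expanding $[w,U\chi]$ and applying \eqref{eq:eta-unit-001} to each summand replaces $[U\phi',U\chi]$ by $[\phi',\chi]$; additivity of the form in its first slot then forces $[w,U\chi]=0$, whence $w=0$ by the mechanism above. For homogeneity the defect $U(\lambda\phi)-\lambda U\phi$ is annihilated the same way, the decisive identity being $[\lambda\phi,\chi]=\bar\lambda[\phi,\chi]$ matched against $[\lambda U\phi,U\chi]=\bar\lambda[U\phi,U\chi]$. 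In the $\eta$-anti-unitary case the isometry relation has its two entries interchanged, so the identical computation yields $U(\lambda\phi)=\bar\lambda U\phi$; this interchange is exactly what decides linear versus anti-linear, and producing it cleanly is the one genuinely delicate point—everything else is bookkeeping against a non-degenerate form.

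Next, injectivity: if $U\phi=0$ then $[\phi,\chi]=[U\phi,U\chi]=0$ for all $\chi\in\s{D}(U)$, so $\phi=0$ by non-degeneracy. Hence $U^{-1}$ is a well-defined operator with domain $\s{R}(U)$ and range $\s{D}(U)$, both dense, so that condition (i) of Definition \ref{def:eta-unit-op} holds for $U^{-1}$. Writing arbitrary elements of the domain of $U^{-1}$ as $U\phi,U\varphi$ and reading \eqref{eq:eta-unit-001} backwards gives $[U^{-1}(U\phi),U^{-1}(U\varphi)]=[\phi,\varphi]=[U\phi,U\varphi]$, so $U^{-1}$ is again $\eta$-unitary; the interchanged identity handles the $\eta$-anti-unitary case verbatim.

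Finally, closability I would prove through the sequential criterion for a linear operator. Suppose $\phi_n\in\s{D}(U)$ with $\phi_n\to0$ and $U\phi_n\to\psi$; I must show $\psi=0$. For each $\chi\in\s{D}(U)$, continuity of the form in its first entry gives $[\psi,U\chi]=\lim_n[U\phi_n,U\chi]=\lim_n[\phi_n,\chi]=0$, the middle equality being \eqref{eq:eta-unit-001} and the last one following from $\phi_n\to0$ and continuity. Since $\s{R}(U)$ is dense this forces $\psi=0$, so $U$ admits a closed extension, i.e. it is closable. I expect the real work to lie entirely in the linearity step; once that is in place the inverse and closability arguments are short variations on the same "test against the dense range and invoke non-degeneracy" theme, and all anti-unitary statements follow by interchanging the two entries of the form throughout.
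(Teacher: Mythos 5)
Your proof is correct. A point of comparison is somewhat moot here: the paper does not prove this proposition at all, but quotes it from Bracci--Morchio--Strocchi (their Propositions 2 and 3), so your argument is supplying details that the paper leaves to the literature; what you wrote is essentially the classical argument from that reference, namely reducing everything (additivity, homogeneity vs.\ conjugate-homogeneity, injectivity, the isometry property of $U^{-1}$, closability) to the single mechanism of testing a candidate vector against the dense range $\s{R}(U)$ and invoking non-degeneracy of $\langle\langle\;,\;\rangle\rangle_\eta$, which holds because $\eta$ is bounded, self-adjoint and invertible. Two remarks. First, you correctly strengthened the hypotheses of Definition \ref{def:eta-unit-op}, which literally only requires $\s{D}(U)$ to be a dense \emph{set}: for the defects $U(\phi+\psi)-U\phi-U\psi$ and $U(\lambda\phi)-\lambda U\phi$ to be defined, $\s{D}(U)$ must be a linear manifold. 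This is the standing assumption in the quoted reference and is implicit in the paper (cf.\ the discussion of dense linear manifolds in Section \ref{sec:krein_space-Qsim}), but the linearity claim would be vacuous without it, so flagging it is the right call. Second, there is an alternative, slightly quicker route to closability, the one implicit in Remark \ref{rk:nomenc_eta-unit_math}: once linearity and invertibility are in hand, the same weak computation proves the inclusion $U^{-1}\subseteq \eta U^{*}\eta=U^{\sharp}$, equivalently $\eta U^{-1}\eta\subseteq U^{*}$; since $\s{D}(U^{-1})=\s{R}(U)$ is dense and $\eta$ is a bijection, $U^{*}$ is densely defined, which for a (anti-)linear operator is equivalent to closability of $U$, and it yields the closure as $U^{**}$ for free. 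Your sequential-criterion argument is an equally valid substitute and keeps the proof self-contained.
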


\medskip

According to the content of Remark \ref{rk:nomenc_eta-unit_math} one can characterize the \emph{group} of bounded $\eta$-unitary operators by
\begin{equation}\label{eq:rep-group_eta-unit}
{\tt U}_\eta(\s{H})\;:=\;\left\{U\in {\tt GL}(\s{H})\ |\ U^\sharp=U^{-1}\right\}
\end{equation}
where ${\tt GL}(\s{H})\subset\bb{B}_{\rm lin}(\s{H})$ is
the group of invertible bounded linear operators on $\s{H}$ and $U^\sharp$ denotes the $\eta$-adjoint of $U$. 
We will use the notation ${\tt AU}_\eta(\s{H})$ for the set of $\eta$-anti-unitary \emph{bounded} operators. This set can be characterized in terms of the group ${\tt U}_\eta(\s{H})$ once a complex conjugation $C$ which meets \eqref{eq:eta-compl_cpnj} has been chosen. In this case one has
\begin{equation}\label{eq:rep-group_eta-ant-unit}
{\tt AU}_\eta(\s{H})\;:=\;\left\{U=CW\ |\ W\in {\tt U}_\eta(\s{H})\right\}\;.
\end{equation}
The following result will be used several times.
\begin{lemma}\label{lemma:usef_ident_01}
Each operator  $U\in {\tt U}_\eta(\s{H})\sqcup {\tt AU}_\eta(\s{H})$ meets  $
\eta U^*=U^{-1}\eta$.
\end{lemma}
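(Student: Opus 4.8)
The plan is to split the statement along the disjoint union appearing in its hypothesis, treating first the linear ($\eta$-unitary) case and then reducing the anti-linear ($\eta$-anti-unitary) case to it.

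For $U\in {\tt U}_\eta(\s{H})$ the identity is essentially a rewriting of the defining relation. By \eqref{eq:rep-group_eta-unit} one has $U^\sharp=U^{-1}$, that is $\eta U^*\eta=U^{-1}$. Right-multiplying both sides by $\eta$ and using $\eta^2=\n{1}$ (which holds because $\eta=\eta^{-1}$) immediately gives $\eta U^*=U^{-1}\eta$. This settles the linear case, and no subtlety arises since here $U^*$ is the ordinary Hilbert-space adjoint.

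For $U\in {\tt AU}_\eta(\s{H})$ I would use the representation $U=CW$ coming from \eqref{eq:rep-group_eta-ant-unit}, where $W\in {\tt U}_\eta(\s{H})$ and $C$ is a complex conjugation satisfying $C\eta=\eta C$ together with $C^*=C=C^{-1}$. The key preliminary computation is the adjoint of the anti-linear operator $U$. Using the definition of the adjoint of an anti-linear map (Note \ref{footnote:A}), for all $\phi,\varphi\in\s{H}$ one finds $\langle\phi,CW\varphi\rangle=\langle W\varphi,C\phi\rangle=\langle\varphi,W^*C\phi\rangle$, so that $U^*=W^*C$. With this in hand, $\eta U^*=\eta W^*C$, while $U^{-1}=W^{-1}C^{-1}=W^{-1}C$ gives $U^{-1}\eta=W^{-1}C\eta=W^{-1}\eta C$, where the last step uses $C\eta=\eta C$. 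Cancelling the invertible factor $C$ on the right, the desired identity $\eta U^*=U^{-1}\eta$ becomes equivalent to $\eta W^*=W^{-1}\eta$, which is exactly the already-established linear case applied to $W$.

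The only genuinely delicate point is the computation $U^*=W^*C$: for an anti-linear operator the adjoint interchanges the two entries of the scalar product, so one must track the order of the factors carefully and invoke $C^*=C$ rather than treating $C$ as if it were a linear isometry. Once this is correctly carried out, the anti-linear case collapses to the linear one and no further obstacle remains.
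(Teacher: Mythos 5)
Your proposal is correct and follows essentially the same route as the paper: the linear case by the trivial rewriting of $\eta U^*\eta=U^{-1}$, and the anti-linear case by writing $U=CW$ with $W\in{\tt U}_\eta(\s{H})$ and reducing to the linear case via $C^*=C=C^{-1}$ and $C\eta=\eta C$. The only cosmetic difference is that you verify $(CW)^*=W^*C$ directly from the definition of the anti-linear adjoint, whereas the paper invokes the general composition rule $(AB)^*=B^*A^*$ for bounded linear or anti-linear operators.
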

\proof
For $\eta$-unitary  operators the relation follows by multiplying on the left the equality $U^\sharp=U^{-1}$ with $\eta$. For the case of $\eta$-anti-unitary operators it is first necessary to observe that the relation $(AB)^*=B^*A^*$ holds true for any pair of bounded  operators, regardless whether they are linear or anti-linear, 
provided that the adjoint for anti-linear operators is defined as in the Note \ref{footnote:A}. Then, for any $U=CW\in {\tt AU}_\eta(\s{H})$ one has
$$
\eta U^*\;=\;\eta (CW)^*\;=\;\eta W^*C\;=\;W^{-1}\eta C\;=\;(W^{-1}C)\eta\;=\;U^{-1}\eta
$$
where the equality $C^*=C=C^{-1}$ and the commutativity $\eta C=C\eta$ have been used.
\qed

\medskip

\noindent
For sake of clarity let us emphasize that  the symbol  $\sqcup$ used in the claim of Lemma \ref{lemma:usef_ident_01}  stands for the \emph{disjoint union} and is justified by the evidence ${\tt U}_\eta(\s{H})\cap {\tt AU}_\eta(\s{H})=\emptyset$.


\subsection{Pseudo-unitarity in indefinite metric spaces}

Just as the concept of anti-unitary operators arise when one uses the  complex structure of the Hilbert  
spaces, a new class of operators  naturally occurs when the inner product is allowed to be indefinite.

\begin{definition}[$\eta$-pseudo-unitary and $\eta$-pseudo-anti-unitary  operators]\label{def:eta-pseudo-anti-unit-op}
Let  $(\s{H},\langle\langle\;,\;\rangle\rangle_\eta)$ be the indefinite metric space associate to a fundamental symmetry $\eta^*=\eta=\eta^{-1}$. Let $U:\s{H}\to \s{H}$ be an operator (not linear a priori) with domain $\s{D}(U)\subseteq \s{H}$ and range $\s{R}(U)\subseteq \s{H}$. Assume that:
\begin{itemize}
\item[(0)] The rank of indefiniteness meets $\kappa=\kappa_+=\kappa_-$;\vspace{1mm}
\item[(i)] Domain and range are dense sets, \ie $\overline{\s{D}(U)}= \s{H}=\overline{\s{R}(U)}$.
\end{itemize}
The operator $U$ is called \emph{$\eta$-pseudo-unitary} if:
\begin{itemize}
\item[(iii)]
The condition 
\begin{equation}\langle\langle\phi,\varphi\rangle\rangle_\eta\;=\;-\;\langle\langle U\phi, U\varphi\rangle\rangle_\eta\;,\qquad\quad \forall \phi, \varphi\in\s{D}(U)\;
\end{equation}
holds true.
\end{itemize}
It is called \emph{$\eta$-pseudo-anti-unitary} if:
\begin{itemize}
\item[(iii')]
The condition 
\begin{equation}\langle\langle\phi,\varphi\rangle\rangle_\eta\;=\;-\;\langle\langle U\varphi, U\phi\rangle\rangle_\eta\;,\qquad\quad \forall \phi, \varphi\in\s{D}(U)\;
\end{equation}
holds true.
\end{itemize}
\end{definition}
\noindent
First of all, let us remark that the occurrence of  pseudo-unitary operators 
in Definition \ref{def:eta-pseudo-anti-unit-op}
is tightly bound to the indefinite
metric. In fact in a Hilbert space the condition $\langle\phi,\phi\rangle =- \langle U\phi, U\phi,\rangle$ is untenable when $\phi\neq 0$. 
Secondly, let us observe that  
an $\eta$-pseudo-unitary or $\eta$-pseudo-anti-unitary operator may exist only in indefinite metric spaces in which the the eigenvalues $+1$ and $-1$ of the metric operator $\eta$ have the same multiplicity and this is the reason for the condition (0) of Definition \ref{def:eta-pseudo-anti-unit-op}. This is related to the fact that 
an $\eta$-pseudo-unitary operator is necessarily invertible and its inverse is still  an $\eta$-pseudo-unitary operator which satisfies $U^{-1}\subseteq -\eta U^*\eta$
which can be proved exactly as in the proof of \cite[Proposition 2]{bracci-morchio-strocchi}. The last inclusion is the key ingredient to prove the \emph{Cartan decomposition} which states that $U$ can be expressed as the product $U=RV_U$ of a (positive) $\eta$-unitary operator  $V_U$ and  a unitary and $\eta$-pseudo-unitary operator $R^*=R^{-1}=-R^\sharp$ (the proof of this fact is  identical to that of \cite[Theorem 2.1.1]{tomita-80}). The last equality implies that 
\begin{equation}\label{eq:eta-rev01}
R \eta\; =\; -\;\eta  R\,,
\end{equation}
 which can be read as the fact that $\eta$ and $-\eta$ are unitarily equivalent. 
However this is possible only if the the condition (0) of Definition \ref{def:eta-pseudo-anti-unit-op} is verified.

\begin{remark}{\upshape
One of the major interests of this work is to study the consequences of the existence of pseudo-unitary symmetries on the classification of topological phases. For this reason henceforth we will  assume 
the validity of condition (0) in Definition \ref{def:eta-pseudo-anti-unit-op}. Said differently, we will always work with finite or infinite dimensional  Krein spaces (\cf Definition \ref{def:krein}).}\hfill $\blacktriangleleft$
\end{remark}

 \begin{remark}[Comparison with the mathematical nomenclature]\label{rk:nomenc_eta-unit_math-beta-vers}{\upshape
In the mathematical literature the notion of $\eta$-pseudo-unitary operators appears in different forms and with different names. For instance in \cite{tomita-80,kissin-shulman-97}   this type  of operators are defined through
 the stronger condition $U^{-1}= -  U^\sharp$ and are called \emph{$\sharp$-immaginary}
in \cite{tomita-80} and \emph{J-skew-unitary} in \cite{kissin-shulman-97}.
In this work we use definitions and terminology introduced in
\cite{bracci-morchio-strocchi}.}\hfill $\blacktriangleleft$
\end{remark}

 A special subclass of unitary operator which meet \eqref{eq:eta-rev01}
 are those  which in addition are self-adjoint,  $R^*=R=R^{-1}$. This operators are called \emph{$\eta$-reflecting} operators\footnote{\label{note:rqref-reflex}
 It is clear from the definition that the role of an $\eta$-reflecting operator $R$ is to interchange an orthonormal basis of $\s{H}_+$ with an orthonormal basis of $\s{H}_-$ and vice versa. Moreover, if one defines an $\eta$-complex structure $C$ which preserves the bases  of  $\s{H}_+$ and $\s{H}_-$ then $C$ and $R$  automatically commute.}. 
Once an  $\eta$-reflecting operator $R$ is given one can represent
any $\eta$-pseudo-unitary operator $U$
according to the Cartan representation $U = RV_U$ where 
$V_U:=RU$ is automatically an $\eta$-unitary operator. A similar conclusion holds true for $\eta$-pseudo-anti-unitary operators. In fact, given an $\eta$-complex conjugation $C$ (\cf Note \ref{note:eta-compl}) and an $\eta$-reflecting operators $R$ such that $CR=RC$ one has that any $\eta$-pseudo-anti-unitary operator $U$ can be represented as $U=CRT_U$ where $T_U:=RCU$ is  $\eta$-unitary.
By combining this last observation with the content of Proposition \ref{prob:lin-eta-uni} and Proposition \ref{prob:lin-eta-anti-uni} one obtains the following description for $\eta$-pseudo-unitary and $\eta$-pseudo-anti-unitary operators:
\begin{proposition}[{\cite[Proposition 4]{bracci-morchio-strocchi}}]\label{prob:lin-eta-anti-uni} 
Every $\eta$-pseudo-unitary (resp. $\eta$-pseudo-anti-unitary)  operator is a linear and closable operator. Moreover, it 
 has an inverse which
is in turn
an $\eta$-pseudo-unitary (resp. $\eta$-pseudo-anti-unitary)  operator.
\end{proposition}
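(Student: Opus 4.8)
The plan is to reduce the claims about $\eta$-pseudo-unitary and $\eta$-pseudo-anti-unitary operators to the already-established results for genuine $\eta$-unitary and $\eta$-anti-unitary operators via the Cartan-type decomposition discussed just above the statement. Concretely, fix an $\eta$-reflecting operator $R$ with $R^*=R=R^{-1}$ and $R\eta=-\eta R$, whose existence is guaranteed by condition (0), \ie $\kappa_+=\kappa_-$. Given an $\eta$-pseudo-unitary $U$, I would set $V_U:=RU$ and verify directly from the defining identity (iii) that $V_U$ satisfies the $\eta$-isometry condition (ii) of Definition \ref{def:eta-unit-op}. Indeed, for $\phi,\varphi\in\s{D}(U)$ one computes
\begin{equation*}
\langle\langle V_U\phi,V_U\varphi\rangle\rangle_\eta
\;=\;\langle RU\phi,\eta RU\varphi\rangle
\;=\;-\,\langle RU\phi,R\eta U\varphi\rangle
\;=\;-\,\langle U\phi,\eta U\varphi\rangle
\;=\;-\,\langle\langle U\phi,U\varphi\rangle\rangle_\eta
\;=\;\langle\langle\phi,\varphi\rangle\rangle_\eta\;,
\end{equation*}
where I used $R\eta=-\eta R$ together with $R^*=R=R^{-1}$ in the middle steps and condition (iii) in the last step. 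Since $R$ is a bounded bijection, $V_U$ shares the density-of-domain-and-range hypothesis (i) with $U$, so $V_U$ is a bona fide $\eta$-unitary operator in the sense of Definition \ref{def:eta-unit-op}.

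Once this reduction is in place, Proposition \ref{prob:lin-eta-uni} applies to $V_U$ and transfers its conclusions back to $U=RV_U$. The linearity of $U$ follows because $V_U$ is linear by Proposition \ref{prob:lin-eta-uni} and $R$ is linear. Closability is preserved under composition with the bounded bijection $R$: if $V_U$ is closable then so is $RV_U=U$, since multiplication on the left by a bounded everywhere-defined operator with bounded inverse maps a closable operator to a closable operator (the closure of $U$ is precisely $R\,\overline{V_U}$). For invertibility of $U$ and the pseudo-unitarity of its inverse, I would note that $U=RV_U$ is a composition of two invertible maps, hence invertible, with $U^{-1}=V_U^{-1}R^{-1}=V_U^{-1}R$; the fact that $V_U^{-1}$ is again $\eta$-unitary (Proposition \ref{prob:lin-eta-uni}) combined with the reflection identity for $R$ lets me check directly that $U^{-1}$ satisfies (iii), \ie is $\eta$-pseudo-unitary. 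Alternatively, since the defining condition (iii) is manifestly symmetric under the interchange of the roles of $U$ and $U^{-1}$ on the equal subspaces $\s{D}(U)=\s{R}(U^{-1})$, the pseudo-unitarity of $U^{-1}$ can also be read off from the symmetry of the inner-product identity once invertibility is known.

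For the $\eta$-pseudo-anti-unitary case I would argue completely analogously, now using the representation $U=CRT_U$ with $T_U:=RCU$, where $C$ is an $\eta$-complex conjugation commuting with $R$ (as constructed in Note \ref{note:eta-compl} and the footnote on $\eta$-reflecting operators). The same one-line computation, tracking the additional complex conjugation so that the order of the two arguments swaps, shows that $T_U$ is $\eta$-unitary; then Proposition \ref{prob:lin-eta-uni} gives that $T_U$ is linear and closable, and hence $U=CRT_U$ is anti-linear (because $C$ is anti-linear while $R,T_U$ are linear) and closable, and its inverse is $\eta$-pseudo-anti-unitary.

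The step I expect to require the most care is not the algebra but the functional-analytic bookkeeping for unbounded operators, namely verifying that closability and the domain/range density pass cleanly through the substitutions $U\mapsto RU$ and $U\mapsto RCU$. Because $R$ (and $C$) are everywhere-defined bounded bijections, these passages are in fact routine: composition on the left by such an operator neither enlarges nor shrinks the domain and carries the range onto a dense set if and only if the original range was dense, and it commutes with the closure operation in the stated sense. The genuinely nontrivial content — that an $\eta$-isometric densely-defined operator is automatically linear and closable with $\eta$-unitary inverse — has already been extracted in Proposition \ref{prob:lin-eta-uni}, so the whole proposition reduces to these elementary transport arguments together with the existence of $R$ guaranteed by the standing assumption $\kappa_+=\kappa_-$.
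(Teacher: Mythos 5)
Your proof is correct and follows essentially the same route the paper takes: the paper reduces the pseudo-(anti-)unitary case to Proposition \ref{prob:lin-eta-uni} via exactly the factorizations $U=RV_U$ with $V_U=RU$ and $U=CRT_U$ with $T_U=RCU$ through a fixed $\eta$-reflecting operator $R$ (and commuting $\eta$-complex conjugation $C$), whose existence rests on condition (0), and then cites \cite[Proposition 4]{bracci-morchio-strocchi}. Your verification that $V_U$ and $T_U$ are $\eta$-unitary and your transport of linearity, closability, and invertibility through the bounded bijections $R$ and $CR$ supply precisely the details the paper leaves to the reference.
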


 \medskip

The sets of $\eta$-pseudo-unitary bounded operators and $\eta$-pseudo-anti-unitary bounded operators can be characterized in terms of the group ${\tt U}_\eta(\s{H})$ by
\begin{equation}\label{eq:rep-group_pseud-eta-unit}
{\tt PU}_\eta(\s{H})\;:=\;\left\{U=RV\ |\ V\in {\tt U}_\eta(\s{H})\right\}
\end{equation}
and 
\begin{equation}\label{eq:rep-group_pseud-eta-ant-unit}
{\tt PAU}_\eta(\s{H})\;:=\;\left\{U=CRT\ |\ T\in {\tt U}_\eta(\s{H})\right\}
\end{equation}
once a commuting pair $C$ and $R$ has been chosen.
The following result extends Lemma \ref{lemma:usef_ident_01}.
\begin{lemma}\label{lemma:usef_ident_00}
Each operator  $U\in {\tt PU}_\eta(\s{H})\sqcup {\tt PAU}_\eta(\s{H})$ meets  $
\eta U^*=-U^{-1}\eta$.
\end{lemma}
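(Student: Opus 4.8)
The plan is to reproduce the two-case strategy used for Lemma \ref{lemma:usef_ident_01}, now feeding in the Cartan-type factorizations \eqref{eq:rep-group_pseud-eta-unit} and \eqref{eq:rep-group_pseud-eta-ant-unit} and exploiting the anti-commutation \eqref{eq:eta-rev01}, which is precisely what produces the extra minus sign. Recall that an $\eta$-reflecting operator $R$ satisfies $R^*=R=R^{-1}$ together with $R\eta=-\eta R$, and hence also $\eta R=-R\eta$; this relation carries the sign through every computation below, while the $\eta$-unitary factor contributes, via Lemma \ref{lemma:usef_ident_01}, the clean identity $\eta W^*=W^{-1}\eta$. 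The whole lemma is therefore just the \emph{signed} analogue of Lemma \ref{lemma:usef_ident_01}.

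For the pseudo-unitary case I would take $U=RV$ with $V\in{\tt U}_\eta(\s{H})$ as in \eqref{eq:rep-group_pseud-eta-unit}. Using $R^*=R$ one writes $\eta U^*=\eta V^* R$; applying Lemma \ref{lemma:usef_ident_01} to the $\eta$-unitary factor $V$ gives $\eta V^*=V^{-1}\eta$, whence $\eta U^*=V^{-1}\eta R=-V^{-1}R\eta$ by \eqref{eq:eta-rev01}. Since $U^{-1}=(RV)^{-1}=V^{-1}R$, the right-hand side is exactly $-U^{-1}\eta$, which is the claim.

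For the pseudo-anti-unitary case I would take $U=CRT$ with $T\in{\tt U}_\eta(\s{H})$, $C$ an $\eta$-complex conjugation and $R$ an $\eta$-reflecting operator commuting with $C$ (as in \eqref{eq:rep-group_pseud-eta-ant-unit} and Note \ref{note:rqref-reflex}). Here I must use, exactly as in the proof of Lemma \ref{lemma:usef_ident_01}, that $(AB)^*=B^*A^*$ holds for bounded operators regardless of linearity (Note \ref{footnote:A}), so that $U^*=T^*R^*C^*=T^*RC$. Then $\eta U^*=\eta T^* R C=T^{-1}\eta R C$ by Lemma \ref{lemma:usef_ident_01}, and pushing $\eta$ through $R$ and $C$ via $\eta R=-R\eta$ and the commutation $\eta C=C\eta$ of \eqref{eq:eta-compl_cpnj} yields $\eta U^*=-T^{-1}RC\eta$. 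Recognising $U^{-1}=T^{-1}RC$ completes the identity $\eta U^*=-U^{-1}\eta$.

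I expect no serious obstacle: the only points requiring care are the bookkeeping of anti-linearity in the product rule for adjoints and the consistent use of the anti-commutation \eqref{eq:eta-rev01}, both already isolated in the earlier results. The disjointness ${\tt PU}_\eta(\s{H})\cap{\tt PAU}_\eta(\s{H})=\emptyset$ justifying the symbol $\sqcup$ is inherited from the linear-versus-anti-linear dichotomy, exactly as remarked after Lemma \ref{lemma:usef_ident_01}.
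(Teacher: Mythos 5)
Your proposal is correct and follows essentially the same route as the paper: both cases use the Cartan-type factorizations $U=RV$ and $U=CRT$, the identity $\eta W^*=W^{-1}\eta$ from Lemma \ref{lemma:usef_ident_01}, the anti-commutation \eqref{eq:eta-rev01}, and the adjoint product rule for anti-linear operators. The only (immaterial) difference is that in the pseudo-anti-unitary case the paper groups $W=RT$ and applies the already-proved pseudo-unitary case to $W$, whereas you push $\eta$ through the three factors $T$, $R$, $C$ individually.
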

\proof
Let $U=RV$ be an $\eta$-pseudo-unitary operator. It follows that
$$
\eta U^*\;=\;\eta (R V)^*\;=\;(\eta V^*)R\;=\;(V^{-1}\eta)R\;=\;-(V^{-1}R) \eta\;=\;-U^{-1}\eta\;
$$
where Lemma \ref{lemma:usef_ident_01} has been used. The last relation also shows that $U\in {\tt PU}_\eta(\s{H})$ if and only if $U$ is a linear operator which fulfills 
$U^\sharp=-U^{-1}$. In the case $U=CRT=CW$ is $\eta$-pseudo-anti-unitary one has that $W=RT$ is $\eta$-pseudo-unitary, therefore
$$
\eta U^*\;=\;\eta (CW)^*\;=\;\eta W^*C\;=\;-W^{-1}\eta C\;=\;-(W^{-1}C)\eta\;=\;-U^{-1}\eta
$$
just as in the final part of the proof of Lemma \ref{lemma:usef_ident_01}.
\qed


\section{Quantum symmetries: The Wigner's theorem}
\label{sec:Q-sym}

In this Section the classical Wigner's theorem \cite{wigner-59} is revisited in a geometric language  according to the presentation provided in \cite[Section 1.1]{freed-moore-13}. The same description is then adapted for the extended version of the Wigner's theorem in Krein spaces proved in \cite{bracci-morchio-strocchi}.


\subsection{The Hilbert space case}
\label{sect:hilb_wigner}
According to  the ordinary formulation of  Quantum Mechanics 
the \emph{state}   of a quantum system is described by an element of the  projective space $\n{P}\s{H}$ of a complex separable Hilbert space $(\s{H},\langle\;,\;\rangle)$. In other words, a state   is a line, or \emph{ray}, of vectors. The transition probability between two states $[\phi],[\varphi]\in \n{P}\s{H}$ is given by 
\begin{equation}\label{eq:prob_hilb}
{\rm Prob}\big([\phi],[\varphi]\big)\;:=\;\frac{\big|\langle \phi,\varphi\rangle\big|^2}{\langle \phi,\phi\rangle\; \langle \varphi,\varphi\rangle}\;,
\end{equation}
where $\phi\in[\phi]$ and $\varphi\in[\varphi]$ are suitable representatives. Equation \eqref{eq:prob_hilb}  defines a symmetric function
$$
{\rm Prob}\;:\;\n{P}\s{H}\;\times\; \n{P}\s{H}\;\longrightarrow\; [0,1]\;.
$$
Since the transition probabilities  are interpreted as the possible outcomes of physical measurements,  one is led to call  \emph{quantum symmetry}  any invertible transformation (an automorphism) of the 
state space $S:\n{P}\s{H}\to \n{P}\s{H}$ preserving the symmetric function ${\rm Prob}$. 
The set of these transformations will be denoted by
${\tt QS}
(\n{P}\s{H})$.
A priori a quantum symmetry $S$ is not required to have other additional structural properties. However a fundamental theorem by Wigner \cite{wigner-59}
 states that every  {quantum symmetry} is implemented by means of a linear map $U_S:\s{H}\to \s{H}$
 which can be either  \emph{unitary} or \emph{anti-unitary}.
 
 \medskip
 
 Following \cite[Section 1.1]{freed-moore-13} (see also \cite{thiang-16,kubota-17})
 one can  recast the Wigner's theorem in the following terms. Let 
 $$
 {\tt QS}(\s{H})\;:=\;{\tt U}(\s{H})\;\sqcup\; {\tt AU}(\s{H})
 $$
 be the set of all unitary and
anti-unitary transformations of $\s{H}$.
 It is a group since the composition of two anti-unitary transformations is unitary and we refer to 
${\tt QS}(\s{H})$ as the group of  
 \emph{linear} quantum symmetries. This group fits in a group extension
\begin{equation}\label{eq:ext01}
1\;\longrightarrow\;{\tt U}(\s{H})\;\longrightarrow\;{\tt QS}(\s{H})\;\stackrel{\varpi}{\longrightarrow}\;\Z_2\;\longrightarrow\;1
\end{equation}
where $\Z_2:=\{\pm 1\}$ is the cyclic group of order 2 and the  kernel of the homomorphism $\varpi$ is the group of unitary operators. Said differently, 
one has that $\varpi(U)=+1$ if  $U$ is a unitary operator and $\varpi(U)=-1$ if  $U$ is anti-unitary. The
Wigner  theorem asserts that there is a group extension\footnote{For basic facts about the notion of group extension we refer to \cite[Chapter 4]{brown-82} or \cite{tuynman-wiegerinck-87} or \cite[Appendix A]{freed-moore-13}.}
\begin{equation}\label{eq:ext02}
1\;\longrightarrow\;\n{S}^1\;\stackrel{\imath}{\longrightarrow}\;{\tt QS}(\s{H})\;\stackrel{\pi}{\longrightarrow}\;{\tt QS}(\n{P}\s{H})\;\longrightarrow\;1
\end{equation}
where $\n{S}^1:=\{z\in\C\;|\;|z|=1\}$ denotes the group of complex numbers of modulus one and the map $\imath$ associates to  each $\lambda\in \n{S}^1$ the \emph{scaling} operator $\imath(\lambda):=\lambda\n{1}$. The projection map $\pi$ sends the  {linear} quantum symmetry $U$  into the quantum symmetry $\pi(U)$ which acts on the rays according to  $\pi(U)[\psi]:=[U\psi]$. 
Clearly $\imath(\n{S}^1)\subseteq{\tt QS}(\s{H})$  agrees with the kernel of $\pi$ and every quantum symmetry  $S\in{\tt QS}(\n{P}\s{H})$ lifts to a linear quantum symmetry $U_S:=\pi^{-1}(S)\in {\tt QS}(\s{H})$ 
which is unique up to the composition with 
a scaling operator $\lambda\n{1}\in \imath(\n{S}^1)$. Even though $\n{S}^1$ is commutative,
 the extension \eqref{eq:ext02}
is \emph{not} central since  $\imath(\n{S}^1)$ is not in the center of ${\tt QS}(\s{H})$. In fact, one has that
\begin{equation}\label{eq:ext03}
U\;(\lambda\n{1})\;U^{-1}\;=\;\lambda^{\varpi(U)}\;\n{1}\;=\;
\left\{
\begin{aligned}
&\lambda\;\n{1}&\qquad&\text{if}\quad \varpi(U)=+1\\
&{\lambda}^{-1}\;\n{1}&\qquad&\text{if}\quad \varpi(U)=-1\;.\\
\end{aligned}
\right.
\end{equation}
Hence \eqref{eq:ext02} is not a central extension but rather
 a \emph{twisted} (central) extension 
according to \cite[Definition 1.7]{freed-moore-13} where the twisting is expressed by the map $\varpi:{\tt QS}(\s{H})\to\Z_2$ according to  \eqref{eq:ext03}.

\begin{remark}[Choice of the topology]\label{rk:top_strong}{\upshape
The group ${\tt QS}(\s{H})$ and its subgroup ${\tt U}(\s{H})$ can be endowed with different topologies. In the uniform  topology given by the operator norm \eqref{eq:op_norm}
${\tt U}(\s{H})$ is a closed and connected space with the structure of a topological \emph{Banach-Lie} group. Moreover,  if $\s{H}$ is infinite dimensional and separable  then ${\tt U}(\s{H})$ is contractible to the identity \cite{kuiper-65}. The bijection $C: 
{\tt U}(\s{H})\to {\tt AU}(\s{H})$, induced by any complex conjugation, implies that also ${\tt AU}(\s{H})$ is  connected (to $C$) and contractible in the infinite dimensional case. In summary ${\tt QS}(\s{H})$  becomes
 a topological group with two disconnected  components when endowed with the uniform  topology. 
 Unfortunately the uniform topology is too strong for many purposes. For instance the one-parameter \emph{time evolution} group  $\R\ni t\mapsto \expo{\ii t H}\in {\tt U}(\s{H})$
fails to be uniformly continuous when the self-adjoint operator $H$ is unbounded. To solve this problem one needs to weaken the topology on ${\tt QS}(\s{H})$ without compromising the structure of topologic group. 
One possible candidate which makes the time evolution group continuous is the \emph{strong topology} generated by the semi-norms $s_\psi(A):=\|A\psi\|$ as $\psi$ varies in $\s{H}$.
Even though the operator product is not continuous in general with respect the strong topology, it is continuous when restricted to set of equi-bounded operators. This is the basic observation that allows to show that ${\tt U}(\s{H})$ is a topological \emph{Polish group}\footnote{A Polish group
is a completely metrizable topological group.
To this respect, it is interesting to observe that when $\s{H}$ is infinite dimensional then  ${\tt U}(\s{H})$ is \emph{not} closed with respect to the strong topology (as erroneously claimed below  \cite[Proposition 3]{schottenloher-13}) even though  ${\tt U}(\s{H})$ is closed with respect to the metric which generates the topology equivalent to the strong topology. This  fact, although  looking surprising at first sight, is not absurd. For instance, as proved in \cite[Chapter II, Proposition 4.9]{takesaki-02},  ${\tt U}(\s{H})$ is in fact
closed with respect to the $\ast$-strong topology 
even though the strong and the $\ast$-strong topology coincide. As commented in \cite[Chapter II, Remark 4.10]{takesaki-02} this phenomenon
 is a manifestation of the fact that the uniforme structures induced by the strong and the $\ast$-strong topologies are distinct.} when endowed with the strong topology 
\cite[Proposition II.1]{neeb-97} or
\cite[Theorem 1.2]{espinoza-uribe-14} (see also
 \cite[Proposition 1]{schottenloher-13}).
In one of the foundational papers for twisted equivariant K-theory \cite{atiyah-segal-04}, Atiyah and Segal proposed a different way to topologize  
${\tt U}(\s{H})$. In order to deal  with equivariant Hilbert bundles and its relation with its associated unitary principal equivariant bundles, they claimed that one is obliged to consider the
\emph{compact-open topology}\footnote{For more details about the compact-open topology on ${\tt U}(\s{H})$ we refer to \cite[Appendix 1]{atiyah-segal-04} and \cite[Appendix D]{freed-moore-13}.} on the structural group ${\tt U}(\s{H})$.
A fact that seems to be remarkable is that the compact-open topology and the strong topology agree on ${\tt U}(\s{H})$ (see \cite[Lemma 1.8.]{espinoza-uribe-14}
or \cite[Proposition 2]{schottenloher-13}). This immediately implies that ${\tt U}(\s{H})$ is a topological group also with respect to the compact-open topology. This last  fact has been erroneously denied
in  \cite[pg. 321]{atiyah-segal-04}. The equality between compact-open and strong topology on ${\tt U}(\s{H})$ along with \cite[Proposition A2.1.]{atiyah-segal-04} implies that ${\tt U}(\s{H})$ is a contractible space also with respect to the strong topology if $\s{H}$ is infinite dimensional and separable. Clearly, the strong topology can be 
defined also on  ${\tt AU}(\s{H})$ and the bijection implemented by $C$ becomes a homeomorphism. In summary ${\tt QS}(\s{H})$ endowed with the strong topology turns out to be
 a topological Polish group with two disconnected  components which are contractible when $\s{H}$ is infinite dimensional and separable.
The topology on  the projective space ${\tt U}(\n{P}\s{H})$ is induced as the quotient topology
 associated with the short exact sequence
\begin{equation}\label{eq:ext02-solo_U}
1\;\longrightarrow\;\n{S}^1\;\stackrel{\imath}{\longrightarrow}\;{\tt U}(\s{H})\;\stackrel{\pi}{\longrightarrow}\;{\tt U}(\n{P}\s{H})\;\longrightarrow\;1\;.
\end{equation}
Interestingly, the strong topology on ${\tt U}(\s{H})$ descends to the strong  topology on ${\tt U}(\n{P}\s{H})$
and  ${\tt U}(\s{H})$ can be seen as a principal bundle over ${\tt U}(\n{P}\s{H})$ with typical fiber $\n{S}^1$ \cite[Theorem 1]{simms-70}. The homotopy exact sequence induced by \eqref{eq:ext02-solo_U} provides that $\pi_k({\tt U}(\n{P}\s{H}))=\pi_{k-1}(\n{S}^1)$ showing that ${\tt U}(\n{P}\s{H})$ is a model for the Eilenberg-MacLane space $K(\Z, 2)$. In much the same way the space
${\tt QS}(\n{P}\s{H})$ can be topologized with the (strong) quotient topology
inherited by ${\tt QS}(\s{H})$. In this way also ${\tt QS}(\s{H})$ can be seen as a principal bundle over ${\tt QS}(\n{P}\s{H})$ with typical fiber $\n{S}^1$.
As a final remark let us observe that the map  $\varpi:{\tt QS}(\s{H})\to\Z_2$ which enters in the group extension
\eqref{eq:ext01} is a continuous homomorphism with respect to the natural discrete topology on $\Z_2$ and the strong topology on ${\tt QS}(\s{H})$.
}\hfill $\blacktriangleleft$
\end{remark}

\begin{remark}[Twisted extensions and cohomology]{\upshape
It is a well known result that the central extensions of a group $\n{G}$ by an abelian group $\n{A}$ are classified by classes in the group-cohomology $H^2(\n{G}, \n{A})$
(see \cite[Chapter 4, Theorem 3.12]{brown-82} or \cite[Proposition 3.4]{tuynman-wiegerinck-87}). 
Also twisted extensions of the type of 
\eqref{eq:ext02} - \eqref{eq:ext03} can be classified by means of a cohomology theory.
Let $\varpi:\n{G}\to\Z_2$ be the homomorphism that defines the twisted extension and consider the (left) group action of $\n{G}$ over $\n{A}$ given by $g\cdot a:=a^{\varpi(g)}$. The group $\n{A}$ endowed with such a $\n{G}$-action is denoted by $\n{A}_\varpi$ and provides a system of twisted coefficients for the group cohomology of $\n{G}$.
Then, the $\varpi$-twisted central extensions of 
$\n{G}$ by $\n{A}$ are classified by the elements of $H^2(\n{G}, \n{A}_\varpi)$. Therefore, the  twisted central extension \eqref{eq:ext02}
corresponds to an element of the cohomology group
$H^2({\tt QS}(\n{P}\s{H}), \n{S}^1_\varpi)$ where the homomorphism $\varpi$ distinguishes between unitary and anti-unitary symmetries.
}\hfill $\blacktriangleleft$
\end{remark}


\subsection{The  Krein space case}
\label{sec:krein_space-Qsim}
The discussion of the Wigner's theorem in Krein spaces presents some technical difficulties with respect to the Hilbert space case discussed in 
Section \ref{sect:hilb_wigner}.
The transition probability between two rays $[\phi],[\varphi]\in \n{P}\s{H}$ 
of a Krein space $(\s{H},\langle\langle\;,\;\rangle\rangle_\eta)$ is naturally defined by mimicking \eqref{eq:prob_hilb}:
\begin{equation}\label{eq:prob_krein}
{\rm Prob}_\eta\big([\phi],[\varphi]\big)\;:=\;\frac{\big|\langle\langle \phi,\varphi\rangle\rangle_\eta\big|^2}{\langle \phi,\phi\rangle\; \langle \varphi,\varphi\rangle}
\;=\;\frac{\big|\langle \phi,\eta\varphi\rangle\big|^2}{\langle \phi,\phi\rangle\; \langle \varphi,\varphi\rangle}\;,
\end{equation}
Let us point out that in both  \eqref{eq:prob_hilb}
and \eqref{eq:prob_krein}
the normalization of the vectors is defined with respect to the Hilbert structure. In fact one
cannot normalize the rays with respect to the indefinite inner product due to the presence of non-zero vectors of zero $\eta$-length. 
Also equation \eqref{eq:prob_krein}  defines a symmetric function
$$
{\rm Prob}_\eta\;:\;\n{P}\s{H}\;\times\; \n{P}\s{H}\;\longrightarrow\; [0,1]\;
$$
and, in analogy with the Hilbert space case, we are tempted to  call  \emph{$\eta$-quantum symmetry}  any invertible transformation of the 
state space $S:\n{P}\s{H}\to \n{P}\s{H}$ preserving the symmetric function ${\rm Prob}_\eta$. However the situation in the framework of Krein spaces  is slightly more complicated than the Hilbert space case due to the fact that  maps $S$ which preserve ${\rm Prob}_\eta$ in general are not defined for all the rays in  $\n{P}\s{H}$. However, it is
reasonable to require that a \virg{good} quantum symmetry $S$, in addition to being invertible, has to be defined  on a set of rays associated to a dense linear manifold $\s{D}\subseteq\s{H}$, and has to map the rays associated to $\s{D}$ onto a set of rays associated to a dense linear manifold $\s{D}'\subseteq\s{H}$. 
The requirement  of working with dense manifolds is suggested by physical interesting examples (\cf Remark \ref{rk:unbound_eta-unit_math}). If one accepts the density condition above as part of the definition of an {$\eta$-quantum symmetry} $S$ then the extended version of the  
Wigner's theorem due to  Bracci,  Morchio and  Strocchi \cite{bracci-morchio-strocchi} states that 
there exists a (possible unbounded) operator $U_S:\s{H}\to\s{H}$ which  implements $S$ and  $U_S$ is either $\eta$-unitary or $\eta$-anti-unitary or $\eta$-pseudo-unitary or $\eta$-pseudo-anti-unitary
(the last two cases can occur only when condition (0) in Definition \ref{def:eta-pseudo-anti-unit-op} holds).

\medskip

However for the aims of this work we need  to consider only the \emph{bounded} version of the  extended  
Wigner's theorem. Let us introduce the group
$$
 {\tt QS}_\eta(\s{H})\;:=\;{\tt U}_\eta(\s{H})\;\sqcup\; {\tt AU}_\eta(\s{H})\sqcup\; {\tt PU}_\eta(\s{H})\sqcup\; {\tt PAU}_\eta(\s{H})
 $$
 of the \emph{(bounded)  linear  $\eta$-quantum symmetries}. This group fits in a group extension
\begin{equation}\label{eq:ext01-eta}
1\;\longrightarrow\;{\tt U}_\eta(\s{H})\;\longrightarrow\;{\tt QS}_\eta(\s{H})\;\stackrel{(\wp,\varpi)}{\longrightarrow}\;\Z_2\times\Z_2\;\longrightarrow\;1\; 
\end{equation}
and the  kernel of the homomorphism $(\wp,\varpi)$ is the group of the $\eta$-unitary operators. More specifically, the maps  $\wp$ and $\varpi$ determine 
the nature of $U\in {\tt QS}_\eta(\s{H})$ according to the following relation:

\begin{equation}\label{eq:eta-general}
\langle\langle\phi,\varphi\rangle\rangle_\eta\;=\;\wp(U)\;\big|\langle\langle U\phi, U\varphi,\rangle\rangle_\eta\big|^{1-\varpi(U)}\;\langle\langle U\phi, U\varphi,\rangle\rangle_\eta^{\varpi(U)}\;,\qquad\quad \forall \phi, \varphi\in\s{H}\;.
\end{equation}
The convention on the signs, which can be extended straightforwardly  to the case of unbounded symmetries, is conveniently 
 summarized in Table 1.3 of Section \ref{sec:introduction}.

\medskip

The generalized Wigner's  theorem proved in \cite{bracci-morchio-strocchi} asserts that 
the set of \emph{(bounded) $\eta$-quantum symmetries} ${\tt QS}_\eta(\n{P}\s{H})$  fits in the  group extension
\begin{equation}\label{eq:ext02-eta}
1\;\longrightarrow\;\n{S}^1\;\stackrel{\imath}{\longrightarrow}\;{\tt QS}_\eta(\s{H})\;\stackrel{\pi}{\longrightarrow}\;{\tt QS}_\eta(\n{P}\s{H})\;\longrightarrow\;1
\end{equation}
where the maps $\imath$ and $\pi$ have the same meaning as in \eqref{eq:ext02}. Again this extension
is \emph{not} central but \emph{twisted} by the homomorphism 
$(\wp,\varpi):{\tt QS}_\eta(\s{H})\to\Z_2\times\Z_2$ according to 
\begin{equation}\label{eq:ext03-eta}
U\;(\lambda\n{1})\;U^{-1}\;=\;\wp(U)\;\lambda^{\varpi(U)}\;\n{1}\;
\end{equation}

\begin{remark}[Choice of the topology]\label{rk:top_strong-eta}{\upshape
By making a parallel with Remark \ref{rk:top_strong-eta} it is natural to ask in which way one can topologize 
${\tt QS}_\eta(\s{H})$ and ${\tt QS}_\eta(\n{P}\s{H})$ in order to
 make the exact sequence \eqref{eq:ext02-eta} topological.
First of all one can notice that the group 
${\tt QS}_\eta(\s{H})$ is made  by four disconnected components (we are tacitly assuming the validity of condition (0) in Definition \ref{def:eta-pseudo-anti-unit-op}) and the three components
${\tt AU}_\eta(\s{H})$,  ${\tt PU}_\eta(\s{H})$, and  ${\tt PAU}_\eta(\s{H})$
are all related to the sub-group $
{\tt U}_\eta(\s{H})
$ by the election of a $\eta$-complex conjugation $C$ as in \eqref{eq:eta-compl_cpnj}
and a $\eta$-reflecting operators $R$ as in \eqref{eq:eta-rev01} such that
 $CR=RC$. This implies that it is enough to topologize the subgroup $
{\tt U}_\eta(\s{H})$ in order to endow the full group 
${\tt AU}_\eta(\s{H})$ with a topology. Once a topology on ${\tt AU}_\eta(\s{H})$ is defined one can induce the quotient topology on ${\tt QS}_\eta(\n{P}\s{H})$ by means of the projection $\pi$. The characterization \eqref{eq:rep-group_eta-unit} implies that  
${\tt U}_\eta(\s{H})$ is an  algebraic subgroup of ${\tt GL}(\s{H})$ 
which is a topological group when endowed with the 
uniform  topology given by the operator norm. As a consequence  ${\tt U}_\eta(\s{H})$ turns out to be 
a  (closed)   Banach-Lie group when
endowed with the topology induced by the operator norm  \cite[Section 23]{upmeier-85}.
Unfortunately, as mentioned in Remark \ref{rk:top_strong-eta}, the uniform topology is generally too strong for many purposes.
For this reason one could be tempted to endow 
${\tt U}_\eta(\s{H})$ with the strong topology. However, the  set ${\tt U}_\eta(\s{H})$ contains elements or arbitrarily large norm (\cf Lemma \ref{lwmma:appB-iso}) and the product of operators is generally not continuous with respect to the strong topology. Thus ${\tt U}_\eta(\s{H})$ fails to be a topological group with respect to the strong topology.
A possible strategy to circumvent these problems is to use the
{compact-open topology} trying to adapt the results from 
\cite[Appendix 1]{atiyah-segal-04} and \cite[Appendix D]{freed-moore-13}.
}\hfill $\blacktriangleleft$
\end{remark}


\subsection{Symmetries of $\eta$-self-adjoint operators}
This Section is devoted to the study of the transformation property of $\eta$-self-adjoint operators under linear $\eta$-quantum symmetries. It is worth remembering that a  densely defined (possible unbounded) linear operator $H$ is $\eta$-self-adjoint if and only if $H=H^\sharp =\eta H^*\eta$. This condition implies  the relation $\s{D}(H)=\eta[\s{D}(H^*)]$ among the domains of $H$ and $H^*$ and assures  that $H$ is a closed operator ($H^*$ is closed by definition of adjoint and the same holds for $H^\sharp$ due to the boundedness and the invertibility of $\eta$).

\medskip

A (possible unbounded) densely defined  linear symmetry $U$ is said to be of type $(\wp,\varpi)$ if it meets equation \eqref{eq:eta-general} for all $\phi, \varphi\in\s{D}(U)$. It follows from Proposition \ref{prob:lin-eta-uni} and Proposition  \ref{prob:lin-eta-anti-uni} that any symmetry $U$ of type $(\wp,\varpi)$ is automatically closable and invertible. Therefore, without loss of generality, one can always consider  $U$ and $U^{-1}$ as closed symmetries of the same type.

\medskip

Although $H$ and $U$ are closed and densely defined operators the product $UHU^{-1}$  may be not well defined in general. As a matter of fact  the product has to be initially defined on the initial domain
$$
\s{D}_0\;:=\;\left\{\varphi\in \s{D}(U^{-1})\ |\ U^{-1}\varphi\in\s{D}(H)\;\;\text{and}\;\; H(U^{-1}\varphi)\in\s{D}(U)\right\}\;.
$$
This domain might be non-dense or even empty. 
Therefore, in order to have a good definition of $UHU^{-1}$ one must assume the density of $\s{D}_0$.

\begin{theorem}\label{theo_inv_eta_self}
Let $H$ be an $\eta$-self-adjoint operator with dense domain $\s{D}(H)$. Let $U$ be  
a linear symmetry $U$ of type $(\wp,\varpi)$ defined on the dense domain $\s{D}(U)$. Assume that the product $UHU^{-1}$ is initially  defined on the dense  domain $\s{D}_0$. Then the operator $UHU^{-1}$ is closable on $\s{D}_0$ and \emph{$\eta$-symmetric}, meaning that the inclusion 
$$
\eta  UHU^{-1}\subseteq (UHU^{-1})^*  \eta\;
$$
holds true.
\end{theorem}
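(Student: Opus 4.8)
The plan is to reduce the operator inclusion to a single sesquilinear identity and then to verify that identity, case by case, using the defining relation \eqref{eq:eta-general} of $U$ together with the $\eta$-self-adjointness of $H$. Write $A:=UHU^{-1}$ with domain $\s{D}_0$. By the very definition of the adjoint, the asserted $\eta$-symmetry $\eta A\subseteq A^*\eta$ holds if and only if, for every $\phi\in\s{D}_0$, the vector $\eta\phi$ lies in $\s{D}(A^*)$ with $A^*(\eta\phi)=\eta A\phi$; exhibiting the concrete candidate $w:=\eta A\phi$ shows that this is in turn equivalent to
\begin{equation*}
\langle A\chi,\eta\phi\rangle\;=\;\langle \chi,\eta A\phi\rangle\;,\qquad \forall\,\chi,\phi\in\s{D}_0\;.
\tag{$\star$}
\end{equation*}
The virtue of this formulation is that it involves only the form $\langle\langle\;,\;\rangle\rangle_\eta$ and the given operators, so I never manipulate $U^*$ or $U^\sharp$ as unbounded-operator identities (which for unbounded $U$ hold only as inclusions, \cf Remark \ref{rk:nomenc_eta-unit_math}).

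Next I would substitute $f:=U^{-1}\chi$ and $g:=U^{-1}\phi$. Membership in $\s{D}_0$ guarantees exactly $f,g\in\s{D}(H)$ and $Hf,Hg\in\s{D}(U)$, while $f,g\in\s{D}(U)$ automatically, so all the vectors that will be fed into $U$ are legitimate. Since $A\chi=UHf$, $\phi=Ug$, $\chi=Uf$ and $A\phi=UHg$, the identity $(\star)$ rewrites as an equality of $\eta$-products,
$$
\langle\langle UHf,\,Ug\rangle\rangle_\eta\;=\;\langle\langle Uf,\,UHg\rangle\rangle_\eta\;.
$$
I would then invoke the defining relation \eqref{eq:eta-general} of a symmetry of type $(\wp,\varpi)$, which in $\eta$-product form reads $\langle\langle U\xi,U\zeta\rangle\rangle_\eta=\wp\,\langle\langle\xi,\zeta\rangle\rangle_\eta$ when $\varpi=+1$ and $\langle\langle U\xi,U\zeta\rangle\rangle_\eta=\wp\,\langle\langle\zeta,\xi\rangle\rangle_\eta$ when $\varpi=-1$.

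In the linear case ($\varpi=+1$) the two sides become $\wp\,\langle\langle Hf,g\rangle\rangle_\eta$ and $\wp\,\langle\langle f,Hg\rangle\rangle_\eta$, which coincide by the $\eta$-self-adjointness of $H$ (the identity $\langle\langle H\phi,\varphi\rangle\rangle_\eta=\langle\langle\phi,H\varphi\rangle\rangle_\eta$ recorded right after Definition \ref{dfn:eta_self_adjoint}). In the anti-linear case ($\varpi=-1$) the entries of the $\eta$-product get swapped, so the two sides become $\wp\,\langle\langle g,Hf\rangle\rangle_\eta$ and $\wp\,\langle\langle Hg,f\rangle\rangle_\eta$, which again agree by the same $\eta$-self-adjointness relation applied to $g,f\in\s{D}(H)$. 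The common factor $\wp$ cancels in either case, so $(\star)$ holds uniformly for all four types $(\wp,\varpi)$, independently of whether $U$ is $\eta$-unitary, $\eta$-anti-unitary, $\eta$-pseudo-unitary, or $\eta$-pseudo-anti-unitary.

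Closability then comes for free: the verified identity $(\star)$ shows $\eta\,\s{D}_0\subseteq\s{D}(A^*)$, and since $\s{D}_0$ is dense by hypothesis and $\eta$ is a bounded bijection with bounded inverse, $\eta\,\s{D}_0$ is dense; hence $A^*=(UHU^{-1})^*$ is densely defined, which is precisely the criterion for $A$ to be closable on $\s{D}_0$. I expect the main obstacle to be bookkeeping rather than conceptual difficulty: one must keep careful track of the sign $\wp$ and of the swap of entries in $\langle\langle\;,\;\rangle\rangle_\eta$ in the anti-linear cases, and, at each application of \eqref{eq:eta-general}, confirm that the specific vectors inserted — $Hf$, $g$, $f$, $Hg$ — actually lie in $\s{D}(U)$, which is guaranteed only by the definition of $\s{D}_0$. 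The cancellation of $\wp$ and the symmetry encoded in $\eta$-self-adjointness are what make the argument collapse to a single computation across all twelve symmetry classes of Table \ref{tab:int-4}.
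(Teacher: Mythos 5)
Your proof is correct, and the delicate points are all handled properly: the equivalence of the operator inclusion $\eta\, UHU^{-1}\subseteq (UHU^{-1})^*\eta$ with your sesquilinear identity $(\star)$, the domain checks ($f,g\in\s{D}(U)$ because they lie in the range of $U^{-1}$, and $Hf,Hg\in\s{D}(U)$ precisely by the definition of $\s{D}_0$), and the density of $\eta\,\s{D}_0$ for the closability claim. Your route, however, differs from the paper's in a way worth recording. The paper establishes the same identity (its equation \eqref{eq:compl_04}) by chaining two applications of \eqref{eq:eta-general}: first for $U^{-1}$ in \eqref{eq:compl_01} and then for $U$ in \eqref{eq:compl_02}. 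This obliges it to know that $U^{-1}$ is a symmetry of the same type $(\wp,\varpi)$ — a fact resting on Propositions \ref{prob:lin-eta-uni} and \ref{prob:lin-eta-anti-uni} — and to cancel the signs $\wp(U)\wp(U^{-1})=1$, $\varpi(U)\varpi(U^{-1})=1$ together with the ratio of absolute values appearing in \eqref{eq:compl_03}. Your substitution $f=U^{-1}\chi$, $g=U^{-1}\phi$ arranges that \eqref{eq:eta-general} is only ever invoked for $U$ itself, once on each side of $(\star)$, so the inverse enters purely set-theoretically (through $\s{D}_0$) and no property of $U^{-1}$ as a symmetry is needed; the price is an explicit case split on $\varpi=\pm1$, with the Hermitian symmetry of $\langle\langle\;,\;\rangle\rangle_\eta$ absorbing the complex conjugation in the anti-linear case, whereas the paper's exponent formula treats all four types in a single computation and never branches. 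The closability conclusions are reached by two equivalent standard criteria: you observe that $(\star)$ makes $\s{D}\left((UHU^{-1})^*\right)\supseteq\eta\,\s{D}_0$ dense, while the paper notes that $\eta\, UHU^{-1}$ is symmetric, hence closable, and then strips off the bounded invertible factor $\eta$. In short, your argument is logically lighter (fewer external inputs) and has less bookkeeping, while the paper's is more uniform across the symmetry types; both are sound.
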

\proof  Since $UHU^{-1}$ is densely defined it is adjointable, namely $(UHU^{-1})^*$
exists and is closed. Moreover one has  $(\eta UHU^{-1})^*=(UHU^{-1})^*\eta$ since $\eta$ is bounded and self-adjoint.
Let  $\phi, \varphi\in \s{D}_0$ and observe that $UHU^{-1}\varphi\in {\s{R}(U)}\subseteq \s{D}(U^{-1})$. A straightforward computation shows that
\begin{equation}\label{eq:compl_01}
\begin{aligned}
\langle\phi,\eta UHU^{-1}\varphi\rangle\;&=\;\langle\langle \phi, UHU^{-1}\varphi\rangle\rangle_\eta\\
&=\;
\wp(U^{-1})\;\big|\langle\langle U^{-1}\phi, HU^{-1}\varphi\rangle\rangle_\eta\big|^{1-\varpi(U^{-1})}\;\langle\langle U^{-1}\phi, HU^{-1}\varphi\rangle\rangle_\eta^{\varpi(U^{-1})}\\
&=\;
\wp(U^{-1})\;\big|\langle\langle HU^{-1}\phi, U^{-1}\varphi\rangle\rangle_\eta\big|^{1-\varpi(U^{-1})}\;\langle\langle HU^{-1}\phi, U^{-1}\varphi\rangle\rangle_\eta^{\varpi(U^{-1})}\\
\end{aligned}
\end{equation}
where  the second equality follows from \eqref{eq:eta-general} and  the third by
the $\eta$-self-adjointness of $H$. By exploiting again \eqref{eq:eta-general} and using  that $\varpi(U^{-1})\varpi(U)=1$  in view of the fact that $U$ and $U^{-1}$ are of the same type one obtains
\begin{equation}\label{eq:compl_02}
\langle\langle HU^{-1}\phi, U^{-1}\varphi\rangle\rangle_\eta^{\varpi(U^{-1})}\;=\;
\wp(U)\;\big|\langle\langle UHU^{-1}\phi, \varphi\rangle\rangle_\eta\big|^{^{\varpi(U^{-1})}-1}\;\langle\langle UHU^{-1}\phi, \varphi\rangle\rangle_\eta\;.
\end{equation}
By inserting \eqref{eq:compl_02} in \eqref{eq:compl_01} one gets
\begin{equation}\label{eq:compl_03}
\begin{aligned}
\langle\langle \phi, UHU^{-1}\varphi\rangle\rangle_\eta\;&=\;
\wp(U^{-1})\;\wp(U)\;\left(\frac{\big|\langle\langle HU^{-1}\phi, U^{-1}\varphi\rangle\rangle_\eta\big|}{\big|\langle\langle UHU^{-1}\phi, \varphi\rangle\rangle_\eta\big|}\right)^{^{1-\varpi(U^{-1})}}\;\langle\langle UHU^{-1}\phi, \varphi\rangle\rangle_\eta\;.
\end{aligned}
\end{equation}
Since $|\langle\langle HU^{-1}\phi, U^{-1}\varphi\rangle\rangle_\eta|=|\langle\langle UHU^{-1}\phi, \varphi\rangle\rangle_\eta|$ and $\wp(U^{-1})\wp(U)=1$ one concludes that
\begin{equation}\label{eq:compl_04}
\begin{aligned}
\langle\langle \phi, UHU^{-1}\varphi\rangle\rangle_\eta\;&=\;
\langle\langle UHU^{-1}\phi, \varphi\rangle\rangle_\eta\;,\qquad\quad \forall\, \phi, \varphi\in \s{D}_0
\end{aligned}
\end{equation}
or equivalently
\begin{equation}\label{eq:compl_05}
\begin{aligned}
\langle\phi,\eta UHU^{-1}\varphi\rangle\;&=\;\langle\eta UHU^{-1}\phi,\varphi\rangle\;,\qquad\quad \forall\, \phi, \varphi\in \s{D}_0
\end{aligned}
\end{equation}
where we used again that $\eta$ is self-adjoint. The last equation implies that 
$$
\eta UHU^{-1}\;\subseteq\; (\eta UHU^{-1})^*\;=\; (UHU^{-1})^*\;\eta
$$
proving that $UHU^{-1}$ is $\eta$-symmetric. Moreover, the first inequality
implies that  $\eta UHU^{-1}$ is symmetric and therefore closable. Since $\eta$ is bounded and invertible one immediately concludes that also $UHU^{-1}$ is closable.\qed

\medskip

The last result deserves a comment. In general the product of two closed operators, even when it is densely defined, can fail to be closable (see \eg \cite[Exercise 2.8.43]{kadison-ringrose-83}). However, this is not the case for $UHU^{-1}$ under the assumption of Theorem \ref{theo_inv_eta_self}. Unfortunately, the conditions of 
Theorem \ref{theo_inv_eta_self} does not seem to be sufficient to prove that the closure $\overline{UHU^{-1}}$ is $\eta$-self-adjoint. For that one needs conditions which assure 
$$
\eta\;\overline{UHU^{-1}}\;=\;\overline{\eta\; UHU^{-1}}\;=\;(UHU^{-1})^*\;\eta
$$
where the first equality is for free due to the boundedness and invertibility of $\eta$, while the second is the difficult part to be proven. A sufficient condition is to prove that
$$
\eta\left[\s{D}\left((UHU^{-1})^*\right)\right]\;\subseteq\;\s{D}\left(\overline{UHU^{-1}}\right)\;
$$
but in general this is difficult to check. A big simplification comes from assuming the boundedness of the linear symmetry $U$.
\begin{corollary}\label{cor1_inv_eta_self}
Let $H$ be an $\eta$-self-adjoint operator with dense domain $\s{D}(A)$. Let $U\in {\tt QS}_\eta(\s{H})$ be  a bounded
linear symmetry $U$ of type $(\wp,\varpi)$. Then the operator $UHU^{-1}$ is $\eta$-self-adjoint.
\end{corollary}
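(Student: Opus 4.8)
The plan is to upgrade the $\eta$-\emph{symmetry} already obtained in Theorem~\ref{theo_inv_eta_self} to full $\eta$-\emph{self-adjointness}, the point being that boundedness of $U$ dissolves the domain pathologies flagged in the discussion after that theorem. First I would record that, since $U\in{\tt QS}_\eta(\s{H})$ lies in a group, both $U$ and $U^{-1}$ are bounded; hence the initial domain collapses to $\s{D}_0=U[\s{D}(H)]$, which is dense, and $UHU^{-1}$ is in fact \emph{closed}. Indeed, if $\varphi_n\to\varphi$ and $UHU^{-1}\varphi_n\to\psi$, then $U^{-1}\varphi_n\to U^{-1}\varphi$ and $HU^{-1}\varphi_n=U^{-1}(UHU^{-1}\varphi_n)\to U^{-1}\psi$, so closedness of $H$ (which holds because $H=H^\sharp$, cf.\ \eqref{eq:eta-self01}) forces $U^{-1}\varphi\in\s{D}(H)$ with $HU^{-1}\varphi=U^{-1}\psi$, i.e.\ $\varphi\in\s{D}_0$ and $UHU^{-1}\varphi=\psi$. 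Theorem~\ref{theo_inv_eta_self} then gives the inclusion $UHU^{-1}\subseteq(UHU^{-1})^\sharp$, and it remains to establish the reverse inclusion, which I would obtain by computing the $\eta$-adjoint outright rather than chasing domains.

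The second step is an explicit evaluation of $(UHU^{-1})^\sharp=\eta\,(UHU^{-1})^*\,\eta$. Because $U$ and $U^{-1}$ are bounded (and invertible), the adjoint of the triple product factorizes as $(UHU^{-1})^*=(U^{-1})^*\,H^*\,U^*$, with matching domains; here the rule $(AB)^*=B^*A^*$ is valid for bounded factors irrespective of linearity or anti-linearity, exactly as recalled in Note~\ref{footnote:A} and used in the proof of Lemma~\ref{lemma:usef_ident_01}. Inserting $\eta^2=\n{1}$ between the factors I would write
\[
(UHU^{-1})^\sharp \;=\; \bigl[\eta\,(U^{-1})^*\,\eta\bigr]\,\bigl[\eta\,H^*\,\eta\bigr]\,\bigl[\eta\,U^*\,\eta\bigr].
\]
Each bracket is already controlled: $\eta H^*\eta=H$ by the $\eta$-self-adjointness of $H$; and, since $U$ and $U^{-1}$ share the same type $(\wp,\varpi)$ by Propositions~\ref{prob:lin-eta-uni} and \ref{prob:lin-eta-anti-uni}, the combined content of Lemma~\ref{lemma:usef_ident_01} ($\wp=+1$) and Lemma~\ref{lemma:usef_ident_00} ($\wp=-1$) is the uniform identity $\eta W^*=\wp\,W^{-1}\eta$, whence $\eta U^*\eta=\wp\,U^{-1}$ and $\eta(U^{-1})^*\eta=\wp\,U$. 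Substituting,
\[
(UHU^{-1})^\sharp\;=\;(\wp\,U)\,H\,(\wp\,U^{-1})\;=\;\wp^{2}\,UHU^{-1}\;=\;UHU^{-1},
\]
since $\wp^{2}=1$. This is precisely $UHU^{-1}=(UHU^{-1})^\sharp$, which is the asserted $\eta$-self-adjointness.

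The only genuinely delicate point is the adjoint factorization $(UHU^{-1})^*=(U^{-1})^*H^*U^*$ as an \emph{equality of operators}, i.e.\ with no spurious enlargement of the adjoint's domain for the unbounded $H$. This is where boundedness of both $U$ and $U^{-1}$ is essential: the reverse domain inclusion is checked by writing a generic $\varphi\in\s{D}_0$ as $\varphi=U\xi$ with $\xi\in\s{D}(H)$ and transporting the defining identity of the adjoint back to $H$, after which the bounded invertible factors can be cancelled freely. I expect this bookkeeping — rather than any analytic subtlety — to be the main (and essentially the only) obstacle; in the anti-linear cases ($\varpi=-1$) one must additionally use the convention of Note~\ref{footnote:A} so that the signs $\wp$ track correctly through the conjugations, but $\wp^{2}=1$ still closes the argument and no new difficulty arises.
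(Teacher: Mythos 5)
Your proof is correct and follows essentially the same route as the paper's: both factor the adjoint as $(UHU^{-1})^*=(U^{-1})^*H^*U^*$, use the $\eta$-self-adjointness $\eta H^*\eta=H$, and close the computation with the sign identity $\eta U^*=\wp(U)\,U^{-1}\eta$ supplied by Lemmas \ref{lemma:usef_ident_01} and \ref{lemma:usef_ident_00}. The additional care you give to closedness of $UHU^{-1}$ and to the domain-exact validity of the adjoint factorization (which the paper invokes silently) is sound but does not alter the argument.
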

\proof
First of all it is worth noticing that $UHU^{-1}$ is automatically closed and defined on the dense domain $U[\s{D}(H)]$ due to the boundedness and invertibility of $U$.
Moreover, one has
\begin{equation}\label{eq:eta-se-f-cor1}
(UHU^{-1})^*\;=\;(U^{-1})^* H^* U^*\;=\;(U^{-1})^* \eta H \eta U^*
\end{equation}
where  the last equality follows from the  $\eta$-self-adjointness of $H$. By summarizing the results of Lemma \ref{lemma:usef_ident_01} and Lemma \ref{lemma:usef_ident_00} one obtains that each member of $U\in {\tt QS}_\eta(\s{H})$ meets
\begin{equation}\label{eq:eta-se-f-cor2}
\eta U^*\;=\;\wp(U)U^{-1}\eta\;.
\end{equation}
It is interesting to notice that the sign $\varpi(U)$ does not enter in the \eqref{eq:eta-se-f-cor2}. Plugging in the \eqref{eq:eta-se-f-cor2} and the related relation
$$
 (U^{-1})^*\eta\;=\;\wp(U^{-1}) \eta U
$$
into \eqref{eq:eta-se-f-cor1} and exploiting that $\wp(U)\wp(U^{-1})=1$ one finally gets
$$
(UHU^{-1})^*\;=\;(\eta U) H(U^{-1}\eta)\;=\; \eta (UHU^{-1}) \eta\;.
$$
The last equation is equivalent to 
 $UHU^{-1}=(UHU^{-1})^\sharp$ and this proves the claim.
\qed

\medskip

Corollary \ref{cor1_inv_eta_self} holds true, in particular, for bounded $\eta$-self-adjoint operators. The space ${\tt H}_\eta(\s{H})$ of  bounded $\eta$-self-adjoint operators has the structure of a vector space over the field $\R$ of the real number.
Moreover, as consequence of the continuity of the adjoint, the space  ${\tt H}_\eta(\s{H})$ is closed with respect to the operator norm. Said differently,  ${\tt H}_\eta(\s{H})$ is an $\R$-Banach space.
\begin{corollary}\label{cor2_inv_eta_self}
The group ${\tt QS}_\eta(\s{H})$ acts by automorphisms on the $\R$-Banach space ${\tt H}_\eta(\s{H})$
through the \emph{adjoint action} 
$$
{\tt QS}_\eta(\s{H})\;\ni\; U\;\mapsto\;{\rm Ad}_U\;\in\;{\rm Aut}({\tt H}_\eta(\s{H}))$$
given by 
$$
{\rm Ad}_U(H)\;:=\;UHU^{-1}$$ for any $H\in {\tt H}_\eta(\s{H})$.
\end{corollary}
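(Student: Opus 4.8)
The plan is to deduce the statement from Corollary \ref{cor1_inv_eta_self} together with elementary algebra, checking three things in turn: that each ${\rm Ad}_U$ is a well-defined self-map of ${\tt H}_\eta(\s{H})$, that it is an automorphism of the $\R$-Banach space, and that $U\mapsto{\rm Ad}_U$ respects the group laws.

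First I would observe that well-definedness is already in hand: for $U\in{\tt QS}_\eta(\s{H})$ and $H\in{\tt H}_\eta(\s{H})$, Corollary \ref{cor1_inv_eta_self} guarantees that $UHU^{-1}$ is again bounded and $\eta$-self-adjoint, so ${\rm Ad}_U$ maps ${\tt H}_\eta(\s{H})$ into itself.

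Next I would verify that ${\rm Ad}_U$ is a bounded $\R$-linear bijection. Continuity follows from the submultiplicative estimate $\|UHU^{-1}\|\leq\|U\|\,\|H\|\,\|U^{-1}\|$, and bijectivity from the fact that ${\tt QS}_\eta(\s{H})$ is a group: since $U^{-1}\in{\tt QS}_\eta(\s{H})$, the map ${\rm Ad}_{U^{-1}}$ is itself a bounded endomorphism of ${\tt H}_\eta(\s{H})$, and a direct computation gives ${\rm Ad}_{U^{-1}}\circ{\rm Ad}_U={\rm id}={\rm Ad}_U\circ{\rm Ad}_{U^{-1}}$. The one point that genuinely requires care — and the only obstacle I foresee — is $\R$-linearity when $U$ is anti-linear (type $\varpi=-1$). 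Here I would use decisively that the coefficient field of ${\tt H}_\eta(\s{H})$ is $\R$: for $H_1,H_2\in{\tt H}_\eta(\s{H})$ and $a,b\in\R$, evaluating $U(aH_1+bH_2)$ on an arbitrary vector lets the real scalars $a,b$ pass through the anti-linear $U$ unchanged (as $\overline a=a$ and $\overline b=b$), whence $U(aH_1+bH_2)=aUH_1+bUH_2$ and therefore ${\rm Ad}_U(aH_1+bH_2)=a\,{\rm Ad}_U(H_1)+b\,{\rm Ad}_U(H_2)$. Thus each ${\rm Ad}_U$ is a topological $\R$-linear isomorphism, i.e. ${\rm Ad}_U\in{\rm Aut}({\tt H}_\eta(\s{H}))$.

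Finally I would check the homomorphism property by the direct computation
\[
{\rm Ad}_{UV}(H)\;=\;(UV)H(UV)^{-1}\;=\;U\big(VHV^{-1}\big)U^{-1}\;=\;{\rm Ad}_U\big({\rm Ad}_V(H)\big),
\]
valid for all $U,V\in{\tt QS}_\eta(\s{H})$ and $H\in{\tt H}_\eta(\s{H})$, together with the evident identity ${\rm Ad}_{\n{1}}={\rm id}$. This exhibits $U\mapsto{\rm Ad}_U$ as a group homomorphism ${\tt QS}_\eta(\s{H})\to{\rm Aut}({\tt H}_\eta(\s{H}))$ and completes the argument. Since every step reduces to Corollary \ref{cor1_inv_eta_self} plus elementary manipulations, the content of the corollary is essentially the bookkeeping that turns the already-established invariance of ${\tt H}_\eta(\s{H})$ under conjugation into a genuine group action.
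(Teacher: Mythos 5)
Your proposal is correct and follows exactly the route the paper intends: the paper states this corollary without proof, as an immediate consequence of Corollary \ref{cor1_inv_eta_self}, and your write-up simply makes explicit the routine verifications (stability of ${\tt H}_\eta(\s{H})$ under conjugation, $\R$-linearity even for anti-linear $U$, boundedness, invertibility via ${\rm Ad}_{U^{-1}}$, and the homomorphism property). No gaps.
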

%


\section{$\s{C}$-symmetry for $\eta$-self-adjoint operators}
\label{sec:C-symm}

The notion of $\s{C}$-symmetry for $\eta$-self-adjoint operators dates back to the pioneering works \cite{bender-brody-jones-02,bender-brody-jones-03} in the context of the $\s{PT}$-QM. For a recent review on the subject, see  \cite{bender-07}. From a more mathematical treatment of the notion of $\s{C}$-symmetry  we refer to \cite{kuzhel-09,kuzhel-sudilovskaya-17}
 and references therein.

\subsection{Definition and main properties}

A $\s{C}$-symmetry in a  space with  indefinite metric $\eta$ is an $\eta$-self-adjoint involution $\Xi$ such that the product $\eta \Xi$ is strictly positive. More precisely, following \eg \cite[Definition 3.1]{kuzhel-09} one has:
\begin{definition}[$\s{C}$-symmetry] \label{dfn:C_symmetry}
Let  $(\s{H},\langle\langle\;,\;\rangle\rangle_\eta)$ be the indefinite metric space associated to a fundamental symmetry $\eta^*=\eta=\eta^{-1}$. A linear bounded operator $\Xi:\s{H}\to \s{H}$ which fulfills
\begin{align}\label{eq:C-sym1}
\Xi^2 \;=\; \n{1}\;, \qquad\quad
\eta \Xi \;>\; 0
\end{align}
is called a $\s{C}$-symmetry. The symbol
$$
{\tt C}_\eta(\s{H})\;: =\; \left\{ \Xi \in\bb{B}_{\rm lin}(\s{H})\; \big|\; \Xi\; \text{\upshape meets}\; \eqref{eq:C-sym1} \right\}
$$
denotes the space of the $\s{C}$-symmetries for the space $(\s{H},\langle\langle\;,\;\rangle\rangle_\eta)$.
\end{definition}

\medskip

\noindent 
Note that the positivity condition automatically implies that $\eta \Xi=(\eta \Xi)^*=\Xi^*\eta$
namely 
\begin{equation}\label{eq:Xi_eta-self}
\Xi\;=\;\eta\Xi^*\eta\;=\;\Xi^\sharp
\end{equation}
which means that $\Xi$ is $\eta$-self-adjoint.
Clearly $\Xi_0\:= \eta$ is a (\emph{trivial}) 
$\s{C}$-symmetry showing  that ${\tt C}_\eta(\s{H})$ is not empty.
Let us introduce the space
$$
{{\tt R}}_\eta(\s{H})\;: =\; \left\{ Q \in\bb{B}_{\rm lin}(\s{H})\; \big|\; Q=Q^*\;,\ \ Q\eta+\eta Q=0 \right\}
$$
of bounded self-adjoint operators  anti-commuting with the metric $\eta$. Observe that
$Q\in {{\tt R}}_\eta(\s{H})$ implies 
$$
Q\;=\;-\eta Q^\ast\eta\;=\;-Q^\sharp\;,
$$
namely $Q$ is anti-$\eta$-self-adjoint.
Both the spaces ${{\tt C}}_\eta(\s{H})$ and ${{\tt R}}_\eta(\s{H})$ can be topologized with the uniform topology inherited from the operator norm on $\bb{B}_{\rm lin}(\s{H})$. 
The next result describes the relation between these two spaces.

\begin{lemma}[\cite{kuzhel-sudilovskaya-17, kuzhel-09}] \label{lem:space_of_C_symmetries}
Let  $(\s{H},\langle\langle\;,\;\rangle\rangle_\eta)$ be the indefinite metric space associated to a fundamental symmetry $\eta^*=\eta=\eta^{-1}$. The spaces ${{\tt C}}_\eta(\s{H})$ and ${{\tt R}}_\eta(\s{H})$ are closed with respect to the operator norm topology. Moreover, the maps
\begin{equation}\label{eq:naps_C-Q}
\begin{aligned}
{{\tt C}}_\eta(\s{H})\;\ni\; \Xi\;&\stackrel{Q}{\longmapsto}\;Q(\Xi)\;:=\;\log(\eta\Xi)\;\in\;{{\tt R}}_\eta(\s{H})\\
{{\tt R}}_\eta(\s{H})\;\ni\; Q\;&\stackrel{\Xi}{\longmapsto}\;\Xi(Q)\;:=\;\eta\;\expo{Q}\;\in\;{{\tt C}}_\eta(\s{H})\\
\end{aligned}
\end{equation}
provide a   homeomorphism  ${{\tt C}}_\eta(\s{H})\simeq{{\tt R}}_\eta(\s{H})$.
\end{lemma}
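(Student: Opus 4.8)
The plan is to dispatch the two closedness claims first and then build the homeomorphism around a single covariance identity for the functional calculus.

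For the closedness of ${\tt R}_\eta(\s{H})$ I would simply observe that it is the intersection of the norm-closed self-adjoint part $\{Q = Q^*\}$ with the kernel of the norm-continuous linear map $Q \mapsto Q\eta + \eta Q$, hence a norm-closed real subspace. For ${\tt C}_\eta(\s{H})$ the argument is marginally less trivial: given $\Xi_n \to \Xi$ with $\Xi_n \in {\tt C}_\eta(\s{H})$, passing to the limit in $\Xi_n^2 = \n{1}$ gives $\Xi^2 = \n{1}$, which forces $\eta\Xi$ to be invertible with inverse $\Xi\eta$; passing to the limit in $\eta\Xi_n \geq 0$ gives $\eta\Xi \geq 0$, since the positive cone is norm-closed. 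The point is then that a \emph{positive and invertible} bounded operator is automatically bounded below (its spectrum is a compact subset of $(0,\infty)$), so $\eta\Xi > 0$ and $\Xi \in {\tt C}_\eta(\s{H})$.

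Next I would check that the two maps are well defined. For $\Xi \in {\tt C}_\eta(\s{H})$ the operator $G := \eta\Xi$ is positive, self-adjoint, and bounded below, so $Q := \log G$ is a well-defined bounded self-adjoint operator; the only thing to verify is the anticommutation $Q\eta + \eta Q = 0$. Here I would use that $\eta$ is unitary together with the covariance of the functional calculus: since $\eta G\eta^{-1} = \Xi\eta = G^{-1}$, one gets $\eta Q\eta^{-1} = \log(\eta G\eta^{-1}) = \log(G^{-1}) = -Q$, which is exactly $Q \in {\tt R}_\eta(\s{H})$. Conversely, for $Q \in {\tt R}_\eta(\s{H})$ put $\Xi := \eta\,\expo{Q}$; then $\eta\Xi = \expo{Q} > 0$, and the same relation $\eta Q\eta^{-1} = -Q$ yields $\eta\,\expo{Q}\eta = \expo{-Q}$, so that $\Xi^2 = \eta\,\expo{Q}\eta\,\expo{Q} = \expo{-Q}\expo{Q} = \n{1}$ and $\Xi \in {\tt C}_\eta(\s{H})$.

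That the two maps are mutually inverse is then immediate from $\expo{\log G} = G$ and $\log\expo{Q} = Q$: indeed $\eta\,\expo{\log(\eta\Xi)} = \eta(\eta\Xi) = \Xi$ and $\log(\eta\,\eta\,\expo{Q}) = \log\expo{Q} = Q$. The map $Q \mapsto \eta\,\expo{Q}$ is norm-continuous everywhere, the exponential being a norm-convergent power series. \textbf{The main obstacle is the continuity of $\Xi \mapsto \log(\eta\Xi)$}, since operator-norm continuity of a nonlinear functional calculus is not automatic. I would prove it locally: near a fixed $\Xi_0$ with $\eta\Xi_0 \geq c\,\n{1}$, every nearby $\eta\Xi$ satisfies $\eta\Xi \geq \tfrac{c}{2}\,\n{1}$ and is uniformly bounded above, so all the relevant spectra lie in one fixed compact interval $[a,b] \subset (0,\infty)$; approximating $\log$ uniformly on $[a,b]$ by polynomials (Weierstrass) and using that polynomial functional calculus is norm-continuous uniformly over operators with spectrum in $[a,b]$ delivers the continuity. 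This makes both maps homeomorphisms and finishes the proof.
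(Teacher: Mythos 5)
Your proof is correct, and while its core — the mutually inverse correspondences $\Xi \mapsto \log(\eta\Xi)$ and $Q \mapsto \eta\,\expo{Q}$, with the anticommutation $\eta Q \eta = -Q$ extracted from the identity $\eta(\eta\Xi)\eta = (\eta\Xi)^{-1}$ — coincides with the paper's argument, the two technical halves are handled by genuinely different means. For continuity, the paper localizes the spectrum via the symmetry $\lambda \in \sigma(\eta\Xi) \Leftrightarrow \lambda^{-1} \in \sigma(\eta\Xi)$ (which gives $\sigma(\eta\Xi) \subseteq [\|\Xi\|^{-1}, \|\Xi\|]$) and then runs holomorphic functional calculus: norm-convergence of resolvents along a contour avoiding the origin yields norm-convergence of $\expo{Q_n}$ and of $\log(\eta\Xi_n)$. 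Your route — a norm-convergent power series for the exponential, and for the logarithm a local uniform spectral window $[a,b] \subset (0,\infty)$ combined with Weierstrass approximation and the identity $\|f(A)\| = \sup_{\sigma(A)}|f|$ for self-adjoint $A$ — is more elementary and equally rigorous; what the paper's contour-integral machinery buys is uniformity of tools (the same resolvent estimate serves both directions), whereas yours buys transparency. For closedness the comparison actually favors you: the paper obtains closedness of ${\tt R}_\eta(\s{H})$ from continuity of the algebraic operations and then asserts that ${\tt C}_\eta(\s{H})$ is closed because it is homeomorphic to ${\tt R}_\eta(\s{H})$; being homeomorphic to a closed set is not, by itself, a valid reason for a subset of $\bb{B}_{\rm lin}(\s{H})$ to be closed, since closedness of a subset is not a homeomorphism invariant. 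Your direct argument — $\Xi_n^2 = \n{1}$ passes to the norm limit, the positive cone is norm-closed, and invertibility of $\eta\Xi$ (with inverse $\Xi\eta$) upgrades $\eta\Xi \geqslant 0$ to strict positivity — is self-contained and repairs precisely this weak point.
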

\begin{proof}
Let $\Xi\in{{\tt C}}_\eta(\s{H})$. Since $\eta\Xi$ is strictly positive and bounded the functional calculus 
allows to define the bounded self-adjoint operator 
$Q\equiv Q (\Xi)=\log(\eta\Xi)$. Then
$\expo{Q}=\eta\Xi$ and $\expo{-Q}=(\eta\Xi)^{-1}=\Xi\eta$. Therefore $\expo{-Q}=\eta\expo{Q}\eta=\expo{\eta Q\eta}$ where the last equality is easily justified by the unitarity of $\eta$ and the  boundedness
of $Q$. This proves that $Q\eta+\eta Q=0$, namely $Q\in {{\tt R}}_\eta(\s{H})$. On the other hand let $Q\in {{\tt R}}_\eta(\s{H})$ and define $\Xi\equiv\Xi(Q)=\eta\expo{Q}$. It follows that $\Xi\eta=\eta\expo{Q}\eta=\expo{-Q}$ and in turn $\Xi^2=\n{1}$. Moreover,  $\eta\Xi=\expo{Q}$ is strictly positive by construction, hence $\Xi\in {{\tt C}}_\eta(\s{H})$ . By observing that
$Q(\Xi(Q))=\log(\eta^2\expo{Q})=Q$  and $\Xi(Q(\Xi))=\eta\expo{\log(\eta \Xi)}=\Xi$ it follows that 
${{\tt C}}_\eta(\s{H})$ and ${{\tt R}}_\eta(\s{H})$ are in one-to-one
correspondence. 

Now the topology. Let us start proving that the map   $Q\mapsto \Xi(Q)$ is continuous. As for this, let $\{Q_n\}\subset {{\tt R}}_\eta(\s{H})$ be a sequence converging to $Q$. Given a $\delta>0$ there exists an $N\in\N$ such that
$$
\big|\|Q_n\|\;-\;\|Q\|\big|\;\leqslant\;\|Q_n\;-\;Q\|\;<\;\delta\;,\qquad\quad\forall n> N\;.
$$
Let $r_\delta:=\max\{\|Q_1\|,\ldots,\|Q_N\|, \|Q\|+\delta\}$. It follows that $\|Q\|< r_\delta$ and $\|Q_n\|< r_\delta$ for all $n\in\N$. If $\lambda\in\C$ such that $|\lambda|=r_\delta$ the resolvents $R_n(\lambda):=(Q_n-\lambda\n{1})^{-1}$ and $R(\lambda):=(Q-\lambda\n{1})^{-1}$ are well defined and $\|R_n(\lambda)-R(\lambda)\|\to0$ in view of the identity
$$
R_n(\lambda)\;=\;R(\lambda)\;\big(\n{1}\;-\;(Q_n-Q)\;R(\lambda)\big)^{-1}\;.
$$
Let $C_\delta:=\{\lambda\in\C\;|\; |\lambda|=r_\delta\}$. With the help of the holomorphic functional calculus one can write
$$
\eta\expo{Q_n}\;-\;\eta\expo{Q}\;=\;\eta\;\frac{\ii}{2\pi}\oint_{C_\delta}\expo{\lambda}\big(R_n(\lambda)\;-\;R(\lambda)\big)\;\dd\lambda
$$
which, in view  of the unitarity of $\eta$, leads to the norm estimate
$$
\left\|\Xi(Q_n)\;-\;\Xi(Q)\right\|\;\leqslant\;K_{r_\delta}\;\|R_n(\lambda)-R(\lambda)\|
$$
where the constant $K_{r_\delta}>0$ depends on $r_\delta$. This proves that the sequence $\Xi(Q_n)$ converges in norm towards $\Xi(Q)$, namely the maps $Q\mapsto \Xi(Q)$ is continuous.
In order to prove the continuity of the 
 inverse map $\Xi\mapsto Q(\Xi)$ let us observe first that
$(\eta\Xi)^{-1}=\Xi\eta=\eta(\eta\Xi)\eta$. Since $\eta$ is unitary it follows that
$ \sigma(\eta\Xi)=\sigma((\eta\Xi)^{-1})$ and in view of the \emph{spectral mapping theorem} one has that $\lambda\in \sigma(\eta\Xi)$ if and only if $\lambda^{-1}\in \sigma(\eta\Xi)$. This 
implies that $\|\Xi\|=\|\eta\Xi\|=\max\sigma(\eta\Xi)\geqslant 1$ and $\sigma(\eta\Xi)\subseteq [\|\Xi\|^{-1},\|\Xi\|]$.
Let $\{\Xi_n\}\subset {{\tt C}}_\eta(\s{H})$ be a sequence converging to $\Xi$. Given a $\delta>0$ there exists an $N\in\N$ such that
$$
\big|\|\Xi_n\|\;-\;\|\Xi\|\big|\;\leqslant\;\|\Xi_n\;-\;\Xi\|\;<\;\delta\;,\qquad\quad\forall n> N\;.
$$
Let  $b_\delta:=\max\{\|\Xi_1\|,\ldots,\|\Xi_N\|, \|\Xi\|+\delta\}$ and $I_\delta:=(b_\delta^{-1},b_\delta)$. It follows that $\|\Xi\|< b_\delta$ and $\|\Xi_n\|< b_\delta$ for all $n\in\N$ and in turn $\sigma(\eta\Xi)\subset I_\delta$ and $\sigma(\eta\Xi_n)\subset I_\delta$
for all $n\in\N$. Let $\tilde{C}_\delta\subset \C$ be any closed Jordan curve that crosses the real axis only in 
$b_\delta^{-1}$ and $b_\delta$.
 If $\lambda\in\tilde{C}_\delta$  the resolvents $\tilde{R}_n(\lambda):=(\eta\Xi_n-\lambda\n{1})^{-1}$ and $\tilde{R}(\lambda):=(\eta\Xi-\lambda\n{1})^{-1}$ are well defined and $\|\tilde{R}_n(\lambda)-\tilde{R}(\lambda)\|\to0$ in view of the identity
$$
\tilde{R}_n(\lambda)\;=\;\tilde{R}(\lambda)\;\big(\n{1}\;-\;\eta(\Xi_n-\Xi)\;\tilde{R}(\lambda)\big)^{-1}\;.
$$
With the help of the holomorphic functional calculus one can write
$$
\log(\eta\Xi_n)\;-\;\log(\eta\Xi)\;=\; \frac{\ii}{2\pi}\oint_{\tilde{C}_\delta}\log({\lambda})\big(\tilde{R}_n(\lambda)\;-\;\tilde{R}(\lambda)\big)\;\dd\lambda
$$
where the logarithm is unambiguously defined since the curve $\tilde{C}_\delta$ does not enclose the origin of $\C$. The last equality provides
$$
\left\|Q(\Xi_n)\;-\;Q(\Xi)\right\|\;\leqslant\;\tilde{K}_{b_\delta}\;\|\tilde{R}_n(\lambda)-\tilde{R}(\lambda)\|
$$
where the constant $\tilde{K}_{b_\delta}>0$ depends on $b_\delta$. This proves that the sequence $Q(\Xi_n)$ converges in norm towards $Q(\Xi)$, namely also the maps $\Xi\mapsto Q(\Xi)$ is continuous.
In summary we proved that the spaces ${{\tt C}}_\eta(\s{H})$ and ${{\tt R}}_\eta(\s{H})$ are homeomorphic.

Since the multiplication and the adjoint are continuous operations with respect to the operator norm  it follows that ${{\tt R}}_\eta(\s{H})$ is a \emph{closed} subset of $\bb{B}_{\rm lin}(\s{H})$.
Since ${{\tt C}}_\eta(\s{H})$ is  homeomorphic to ${{\tt R}}_\eta(\s{H})$ it follows that also 
${{\tt C}}_\eta(\s{H})$ is closed.
\end{proof}

\medskip

\noindent
 Lemma \ref{lem:space_of_C_symmetries}  states that any $\s{C}$-symmetry $\Xi$ can be uniquely represented
by  a self-adjoint and  anti-$\eta$-self-adjoint  operator $Q$. Also the set ${{\tt R}}_\eta(\s{H})$ has a \emph{trivial} element $Q_0:=0$. In fact this is the  element associated to $\Xi_0=\eta$ through the homeomorphism
described in  Lemma \ref{lem:space_of_C_symmetries}.

\begin{remark}[Generalized $\s{C}$-symmetries]\label{rk:gen-C-sym}{\upshape
It is possible to generalize Definition \ref{dfn:C_symmetry} by relaxing the boundedness condition for $\Xi$, see \eg \cite[Section 4]{kuzhel-09}. Indeed, it is possible to construct unbounded involutions in infinite dimensional normed linear spaces. For instance given an orthonormal basis $\{\psi_n\}$ of the Hilbert space $\s{H}$ the operator $\Xi$ defined by $\Xi\psi_n:=2n\psi_1-\psi_n$ is unbounded and verifies $\Xi^2=\n{1}$ (\cf  \cite{buckholtz-00}). Moreover, also in a finite dimensional space there are $\s{C}$-symmetries of arbitrarily large norm (\cf Lemma \ref{lwmma:appB-Csymm-anticomm}).
Anyway, also in the unbounded case the representation described in Lemma \ref{lem:space_of_C_symmetries} continues to be valid  \cite[Theorem 6.2.3]{albeverio-kushel-15}.
However, in this work we will not be interested in this kind of generalization.
In fact \cite[Proposition 3.2]{kuzhel-sudilovskaya-17} shows that in the context of gapped $\eta$-self-adjoint operator only the notion of bounded $\s{C}$-symmetry is relevant.
For this reason we will focus our attention on Definition \ref{dfn:C_symmetry}.
}\hfill $\blacktriangleleft$
\end{remark}

The following result will be relevant in the next Section.
\begin{lemma}\label{lemma_reduc_Xi}
Let  $(\s{H},\langle\langle\;,\;\rangle\rangle_\eta)$ be the indefinite metric space associated to a fundamental symmetry $\eta^*=\eta=\eta^{-1}$. Let $\Xi\in {{\tt C}}_\eta(\s{H})$ be a $\s{C}$-symmetry and $Q=Q(\Xi)\in {{\tt R}}_\eta(\s{H})$ the related self-adjoint and  anti-$\eta$-self-adjoint  operator associated to $\Xi$ through the bijection 
described in  Lemma \ref{lem:space_of_C_symmetries}. The operator 
$$
G_\Xi\;:=\;\expo{\frac{1}{2}Q}\;=\;\expo{\frac{1}{2}\log(\eta\Xi)}\;=\;\sqrt{\eta\Xi}
$$
is $\eta$-unitary, self-adjoint and $G_\Xi \Xi G_\Xi^{-1}=\eta$.
\end{lemma}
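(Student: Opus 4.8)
The plan is to deduce all three properties of $G_\Xi=\expo{\tfrac{1}{2}Q}$ from the two features of $Q=Q(\Xi)\in{{\tt R}}_\eta(\s{H})$ recorded in Lemma \ref{lem:space_of_C_symmetries}, namely $Q=Q^*$ and $\eta Q+Q\eta=0$, together with $\eta=\eta^{-1}$ and $\expo{Q}=\eta\Xi$. I would begin with the \emph{self-adjointness}: since $Q$ is bounded and self-adjoint, so is $\tfrac{1}{2}Q$, and the functional calculus at once gives that $G_\Xi=\expo{\tfrac{1}{2}Q}$ is a (strictly positive) self-adjoint operator, i.e. $G_\Xi^*=G_\Xi$.

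The pivotal step is the conjugation identity $\eta\,G_\Xi\,\eta=G_\Xi^{-1}$. The anticommutation $\eta Q+Q\eta=0$ combined with $\eta=\eta^{-1}$ yields $\eta Q\eta=-Q$. Since $\eta$ is a bounded unitary, conjugation $A\mapsto\eta A\eta$ is a norm-continuous $*$-automorphism and therefore commutes with the (holomorphic) functional calculus—precisely the manipulation already used in the proof of Lemma \ref{lem:space_of_C_symmetries} to write $\expo{-Q}=\eta\,\expo{Q}\,\eta$. Applying it to $\tfrac{1}{2}Q$ gives $\eta\,G_\Xi\,\eta=\expo{\tfrac{1}{2}\eta Q\eta}=\expo{-\tfrac{1}{2}Q}=G_\Xi^{-1}$. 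The \emph{$\eta$-unitarity} then follows instantly: using $G_\Xi^*=G_\Xi$ one has $G_\Xi^\sharp=\eta G_\Xi^*\eta=\eta G_\Xi\eta=G_\Xi^{-1}$, which is exactly the defining condition \eqref{eq:rep-group_eta-unit} for membership in ${\tt U}_\eta(\s{H})$.

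For the \emph{intertwining relation} $G_\Xi\Xi G_\Xi^{-1}=\eta$ I would use $G_\Xi^2=\expo{Q}=\eta\Xi$, equivalently $\Xi=\eta G_\Xi^2$ (here $\eta^{-1}=\eta$). Rewriting the conjugation identity as $G_\Xi\eta=\eta G_\Xi^{-1}$, a direct substitution gives $G_\Xi\Xi G_\Xi^{-1}=G_\Xi(\eta G_\Xi^2)G_\Xi^{-1}=G_\Xi\eta G_\Xi=(\eta G_\Xi^{-1})G_\Xi=\eta$, as required.

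All the computations are short, so there is no serious obstacle; the one point deserving care—though it is entirely standard—is the justification that $\eta$-conjugation passes through the exponential, that is $\eta\,\expo{\tfrac{1}{2}Q}\,\eta=\expo{\tfrac{1}{2}\eta Q\eta}$. This rests only on $\eta$ being a bounded unitary and $Q$ bounded, so that ${\rm Ad}_\eta$ is a continuous algebra automorphism, and it is the same device already invoked in Lemma \ref{lem:space_of_C_symmetries}.
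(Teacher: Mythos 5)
Your proof is correct and follows essentially the same route as the paper: self-adjointness from the functional calculus applied to the bounded self-adjoint $Q$, the conjugation identity $\eta\,\expo{\frac{1}{2}Q}\,\eta=\expo{-\frac{1}{2}Q}$ from $\eta Q\eta=-Q$ (which yields $G_\Xi^\sharp=G_\Xi^{-1}$), and then a one-line algebraic verification of the intertwining relation using $\expo{Q}=\eta\Xi$. The only cosmetic difference is that the paper verifies the equivalent identity $G_\Xi^{-1}\eta G_\Xi=\Xi$ directly, while you rewrite $\Xi=\eta G_\Xi^2$ first; the underlying facts are identical.
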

\proof
The operator $G_\Xi$ is defined via spectral calculus. The representation $G_\Xi=\sqrt{\eta\Xi}$  makes evident that $G_\Xi=G_\Xi^*$
 is a bounded  self-adjoint operator. Moreover, $G_\Xi$ is invertible with inverse $G_\Xi^{-1}=\expo{-\frac{1}{2}Q}=\eta G_\Xi\eta=G_\Xi^\sharp$. The last expression shows that $G_\Xi\in {\tt U}_\eta(\s{H})$ is a bounded
$\eta$-unitary operator. Finally the computation
$$
G_\Xi^{-1} \eta G_\Xi\;=\;\expo{-\frac{1}{2}Q}\eta\expo{\frac{1}{2}Q}\;=\;\eta\expo{Q}\;=\;\Xi
$$
completes the proof.
\qed

\medskip

\noindent
Lemma \ref{lemma_reduc_Xi} states that any $\s{C}$-symmetry $\Xi$ can be reduced to the trivial $\s{C}$-symmetry $\Xi_0=\eta$ by means of a transformation $G_\Xi$ which preserves the geometry of the  indefinite metric space $(\s{H},\langle\langle\;,\;\rangle\rangle_\eta)$.

\begin{remark}{\upshape If one 
renounces to the condition that $\eta$ is an involution as in the original Definition \ref{def:001}, then the second condition in \eqref{eq:C-sym1}
for the definition of a $\s{C}$-symmetry must be replaced by $\frac{\eta}{|\eta|}\Xi>0$. 
With this modification also the results proved in Lemma \ref{lem:space_of_C_symmetries} and Lemma 
\ref{lemma_reduc_Xi} still work.
}\hfill $\blacktriangleleft$
\end{remark}


\subsection{Operators with  $\s{C}$-symmetry}
\label{sec:op-C-sym}
Let us focus on the interplay between the notion of $\eta$-self-adjointness and the existence of $\s{C}$-symmetries. The following definition is borrowed from \cite[Definition 6.3.3]{albeverio-kushel-15}:
\begin{definition}[Operators with  $\s{C}$-symmetry]
Let  $(\s{H},\langle\langle\;,\;\rangle\rangle_\eta)$ be the indefinite metric space associated to a fundamental symmetry $\eta^*=\eta=\eta^{-1}$. A densely defined operator $A$ possesses a  $\s{C}$-symmetry if there exists an element $\Xi\in {{\tt C}}_\eta(\s{H})$ such that
$
\Xi  A=A  \Xi
$.
\end{definition}

\medskip

In this work we are mainly interested in $\eta$-self-adjoint operators with $\s{C}$-symmetry. The following result is known in the literature \cite[Theorem 6.3.4]{albeverio-kushel-15}:
\begin{proposition}\label{prop:H-Xi}
Let  $(\s{H},\langle\langle\;,\;\rangle\rangle_\eta)$ be the indefinite metric space associated to a fundamental symmetry $\eta^*=\eta=\eta^{-1}$. Let $H$ be an $\eta$-self adjoint operator with dense domain $\s{D}(H)$ which admits a $\s{C}$-symmetry $\Xi\in {{\tt C}}_\eta(\s{H})$. Let $G_\Xi$ be the associated $\eta$-unitary operator described in Lemma \ref{lemma_reduc_Xi}. Then the operator
$$
\tilde{H}\;:=\;G_\Xi H G_\Xi^{-1}
$$
is self-adjoint with domain $\s{D}(\tilde{H}):=G_\Xi[\s{D}({H})]$ and commutes  with the fundamental symmetry $\eta$.
\end{proposition}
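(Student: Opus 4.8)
The plan is to exploit the properties of $G_\Xi$ recorded in Lemma \ref{lemma_reduc_Xi}: it is bounded, self-adjoint ($G_\Xi^*=G_\Xi$), $\eta$-unitary, invertible with bounded inverse, and it satisfies $G_\Xi^{-1}\eta G_\Xi=\Xi$. Since $G_\Xi$ and $G_\Xi^{-1}$ are everywhere defined and bounded, the operator $\tilde H=G_\Xi H G_\Xi^{-1}$ is automatically closed and densely defined on $\s{D}(\tilde H)=G_\Xi[\s{D}(H)]$ (indeed $\tilde H\psi$ is meaningful precisely when $G_\Xi^{-1}\psi\in\s{D}(H)$), which settles the domain claim and the closedness. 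The statement then reduces to two algebraic facts: that $\tilde H$ is self-adjoint and that it commutes with $\eta$.

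First I would record that $\tilde H$ is $\eta$-self-adjoint. As $G_\Xi\in{\tt U}_\eta(\s{H})$ is a bounded $\eta$-unitary, it is a bounded $\eta$-quantum symmetry of type $(\wp,\varpi)=(+1,+1)$, so Corollary \ref{cor1_inv_eta_self} applies directly and gives $\tilde H=\tilde H^\sharp=\eta\tilde H^*\eta$, equivalently
\[
\tilde H^*\;=\;\eta\,\tilde H\,\eta\;.
\]
The same identity can be produced by hand from $\tilde H^*=(G_\Xi^{-1})^*H^*G_\Xi^*=G_\Xi^{-1}H^*G_\Xi$ (using $G_\Xi^*=G_\Xi$), then $H^*=\eta H\eta$ (from $H=\eta H^*\eta$ and $\eta^2=\n{1}$), and finally the relation $\eta G_\Xi=G_\Xi^{-1}\eta$ of Lemma \ref{lemma:usef_ident_01}. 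Observe that this step uses nothing about the $\s{C}$-symmetry beyond the $\eta$-unitarity of $G_\Xi$.

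The decisive step, where the $\s{C}$-symmetry enters, is to show that $\tilde H$ commutes with $\eta$, and here I would transport the commutation relation through the conjugation by $G_\Xi$. Since $G_\Xi$ is a bounded bijection with bounded inverse, the map $X\mapsto G_\Xi^{-1}XG_\Xi$ is a bijection of operators carrying domains to domains; applied to the pair $(\eta,\tilde H)$ it produces $\big(G_\Xi^{-1}\eta G_\Xi,\,G_\Xi^{-1}\tilde H G_\Xi\big)=(\Xi,H)$, where $G_\Xi^{-1}\eta G_\Xi=\Xi$ is exactly the content of Lemma \ref{lemma_reduc_Xi}. Concretely, $\eta\,\tilde H=\tilde H\,\eta$ holds, with matching domains (i.e. $\eta$ preserving $\s{D}(\tilde H)=G_\Xi[\s{D}(H)]$), if and only if $\Xi H=H\Xi$ holds with $\Xi$ preserving $\s{D}(H)$; the domain equivalence is just $\eta\,G_\Xi[\s{D}(H)]=G_\Xi[\s{D}(H)]\iff\Xi[\s{D}(H)]=\s{D}(H)$. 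Since by hypothesis $\Xi$ is a $\s{C}$-symmetry of $H$, precisely $\Xi H=H\Xi$, the commutation $\eta\,\tilde H=\tilde H\,\eta$ follows.

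Finally I would combine the two facts: from $\tilde H^*=\eta\,\tilde H\,\eta$, the commutation $\eta\,\tilde H=\tilde H\,\eta$, and $\eta^2=\n{1}$ one obtains $\tilde H^*=\tilde H\,\eta^2=\tilde H$, so $\tilde H$ is self-adjoint, while $\tilde H\eta=\eta\tilde H$ is the second assertion. I expect the only genuinely delicate point to be the domain analysis in the unbounded case — ensuring that $\eta$-self-adjointness and the commutation relations are equalities of operators (domains included) rather than statements on a core — but because $G_\Xi^{\pm1}$ and $\Xi$ are bounded and everywhere defined, every domain manipulation collapses to the single identity $\Xi[\s{D}(H)]=\s{D}(H)$ built into the definition of a $\s{C}$-symmetry of $H$. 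For bounded $H$ the entire argument is a short computation.
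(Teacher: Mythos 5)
Your proof is correct and follows essentially the same route as the paper's: both obtain $\eta$-self-adjointness of $\tilde H$ from Corollary \ref{cor1_inv_eta_self}, derive the commutation $\eta\tilde H=\tilde H\eta$ by conjugating $\Xi H=H\Xi$ through the identity $G_\Xi^{-1}\eta G_\Xi=\Xi$ of Lemma \ref{lemma_reduc_Xi}, and conclude $\tilde H^*=\eta\tilde H\eta=\tilde H$, with the domain bookkeeping resting on $\Xi[\s{D}(H)]=\s{D}(H)$ exactly as in the paper.
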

\proof
Let us start by observing that in the case of an unbounded operator $H$ the condition $\Xi  H=H  \Xi$
and the invertibility of $\Xi$ imply that $\Xi[\s{D}(H)]= \s{D}(H)$. The operator ${H}_\Xi$ is well defined on the dense domain $\s{D}(\tilde{H})$. 
From the the identity  
$\Xi=
G_\Xi^{-1} \eta G_\Xi$ derived in the proof of Lemma \ref{lemma_reduc_Xi} it follows that
$$
G_\Xi^{-1} \eta G_\Xi H\;=\; HG_\Xi^{-1} \eta G_\Xi\;\quad\;\Leftrightarrow\;\quad\;\eta\tilde{H}\;=\;\tilde{H}\eta
$$
namely $\tilde{H}$ commutes with $\eta$. On the other hand 
 $G_\Xi$ is $\eta$-unitary and therefore Corollary \ref{cor1_inv_eta_self} ensures that $\tilde{H}$ is $\eta$-self-adjoint. At level of domains one has that
$$
\s{D}(\tilde{H}^*)\;=\;\eta[\s{D}(\tilde{H})]\;=\;(\eta G_\Xi)[\s{D}({H})]\;=\;(\eta G_\Xi\Xi)[\s{D}({H})]\;=\;G_\Xi[\s{D}({H})]\;=\;\s{D}(\tilde{H})\;.
$$
Moreover, since $\tilde{H}$ is  $\eta$-self-adjoint and commutes with $\eta$ it follows that 
$$
\tilde{H}^*\;=\;\eta\tilde{H}\eta\;=\;\tilde{H}\;,
$$
namely $\tilde{H}$ is self-adjoint. 
\qed

\medskip

Proposition \ref{prop:H-Xi} states that an $\eta$-self-adjoint operator with a $\s{C}$-symmetry is similar to a self-adjoint operator, or  \emph{dynamically stable} by
 using the jargon of Definition \ref{def:dyn_stab}. Also the converse is true:
\begin{theorem}[{\cite[Theorem 6.3.4]{albeverio-kushel-15}}]
\label{theo:albeverio-kushel}
An $\eta$-self-adjoint operator is dynamically stable if and only if it has a $\s{C}$-symmetry.
\end{theorem}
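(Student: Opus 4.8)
The statement splits into two implications, of very unequal difficulty. \emph{Sufficiency} (existence of a $\s{C}$-symmetry $\Rightarrow$ dynamical stability) is essentially already in hand: if the $\eta$-self-adjoint operator $H=H^\sharp$ admits a $\Xi\in{\tt C}_\eta(\s{H})$ with $\Xi H=H\Xi$, then the operator $G_\Xi=\sqrt{\eta\Xi}$ of Lemma \ref{lemma_reduc_Xi} is bounded with bounded inverse, and Proposition \ref{prop:H-Xi} says that $\tilde H:=G_\Xi H G_\Xi^{-1}$ is self-adjoint. Taking $G=G_\Xi$ in Definition \ref{def:dyn_stab} exhibits $H$ as dynamically stable, so this direction is immediate.

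The content lies in \emph{necessity}. Assume $H=H^\sharp$ is dynamically stable, so there is a bounded $G$ with bounded inverse such that $\tilde H:=GHG^{-1}$ is self-adjoint, and set $S:=G^*G$, a bounded strictly positive operator with bounded inverse. The plan is to produce a positive ``metric operator'' $P$ for $H$ that is in addition an $\eta$-unitary, and then to take $\Xi:=\eta P$. First I would record the two intertwining relations available: $\eta$-self-adjointness gives $H^*=\eta H\eta$, while self-adjointness of $\tilde H$ gives $SH=H^*S$. The latter is seen cleanly from the polar decomposition $G=VS^{1/2}$ ($V$ unitary, since $G$ is invertible): then $S^{1/2}HS^{-1/2}=V^{-1}\tilde H V$ is self-adjoint, which is equivalent to $SH=H^*S$. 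The point is that $S$ by itself already makes $H$ self-adjoint for the inner product $\langle\,\cdot,S\,\cdot\rangle$, but $\eta S$ need not square to $\n{1}$, so $S$ is not yet a $\s{C}$-symmetry.

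To repair this I would use the involution $\sigma(X):=\eta X^{-1}\eta$ on the set of positive metric operators for $H$. A short computation shows that $S'':=\sigma(S)=\eta S^{-1}\eta$ is again strictly positive and again satisfies $S''H=H^*S''$ (use $H^*=SHS^{-1}$ and $\eta H=H^*\eta$). A $\sigma$-fixed point $P$ is exactly what we want, since $\sigma(P)=P$ reads $\eta P\eta=P^{-1}$, i.e. $(\eta P)^2=\n{1}$. The natural fixed point is the operator geometric mean $P:=S\,\#\,S''$: from the standard covariance $\eta(A\,\#\,B)\eta=(\eta A\eta)\,\#\,(\eta B\eta)$ and $(A\,\#\,B)^{-1}=A^{-1}\,\#\,B^{-1}$ one gets $\sigma(P)=\sigma(S)\,\#\,\sigma(S'')=S''\,\#\,S=P$. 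Moreover $P$ remains a metric operator: writing $Z:=S^{-1}S''$ one checks $[Z,H]=0$, and since $P=S^{1/2}\bigl(S^{-1/2}S''S^{-1/2}\bigr)^{1/2}S^{1/2}$ is built from $S$ and $Z$, the commutation $[Z,H]=0$ propagates to $PH=H^*P$. Hence $\Xi:=\eta P$ obeys $\Xi^2=\n{1}$, $\eta\Xi=P>0$, and $\Xi H=\eta PH=\eta H^*P=(H\eta)P=H\Xi$, so $\Xi\in{\tt C}_\eta(\s{H})$ is a $\s{C}$-symmetry commuting with $H$, as required.

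I expect the only genuine obstacle to be the \emph{unbounded} case. All of $S,S'',P$ are bounded with bounded inverses, but the identities $SH=H^*S$, $[Z,H]=0$ and $PH=H^*P$ must be read as inclusions of densely defined operators and verified on the correct domains; here the fact (noted after Definition \ref{dfn:eta_self_adjoint}) that any bounded invertible intertwiner carries $\s{D}(H)$ bijectively onto $\s{D}(H^*)$ is what keeps the bookkeeping consistent, so these are checks rather than difficulties. Should the geometric-mean covariance feel too heavy, an alternative is to transport the spectral measure $\tilde E(\cdot)$ of the self-adjoint $\tilde H$ to the $\eta$-orthogonal resolution $E(\cdot)=G^{-1}\tilde E(\cdot)G$ of $H$ (each $E(\cdot)$ being $\eta$-self-adjoint), and to assemble $\Xi$ blockwise as ${\rm sgn}$ of the indefinite form represented on each spectral block; this avoids the geometric mean at the cost of heavier spectral-integral bookkeeping.
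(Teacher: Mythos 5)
Your proposal is correct, but note first what the paper actually does with this statement: it does \emph{not} prove it, it imports it wholesale from \cite[Theorem 6.3.4]{albeverio-kushel-15}, and only the sufficiency direction ($\s{C}$-symmetry $\Rightarrow$ dynamical stability) is reproved internally, namely as Proposition \ref{prop:H-Xi} via Lemma \ref{lemma_reduc_Xi} — exactly your first paragraph. Your necessity argument is therefore genuine added content, and it checks out. Both $S=G^*G$ and $S'':=\sigma(S)=\eta S^{-1}\eta$ are positive metric operators for $H$ (as equalities of densely defined operators, anchored on $\eta[\s{D}(H)]=\s{D}(H^*)$ and $S[\s{D}(H)]=\s{D}(H^*)$); the $\sigma$-fixed-point property of $P=S\,\#\,S''$ gives $(\eta P)^2=\n{1}$ immediately; and the one step you wave at (``the commutation $[Z,H]=0$ propagates to $PH=H^*P$'') can be made precise as follows: $PH=H^*P$ is equivalent to $[T^{1/2},K]=0$ with $K:=S^{1/2}HS^{-1/2}=V^*\tilde{H}V$ self-adjoint and $T:=S^{-1/2}S''S^{-1/2}$, while $[T,K]=0$ is literally a rewriting of $S''H=H^*S''$; commutation of the bounded positive $T$ with the self-adjoint $K$ passes to $T^{1/2}$ through the resolvents of $K$, so this is a routine check, including in the unbounded case. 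It is also worth knowing that your construction produces \emph{the same operator} as the standard proof in Albeverio--Kuzhel, in different packaging: there one observes that $A:=\eta S$ commutes with $H$ and is similar to the invertible self-adjoint operator $S^{1/2}\eta S^{1/2}$, and one sets $\Xi:={\rm sgn}(A)=A(A^2)^{-1/2}$; unwinding the functional calculus gives
\[
\eta\,{\rm sgn}(\eta S)\;=\;S^{1/2}\bigl(S^{1/2}\eta S\eta S^{1/2}\bigr)^{-1/2}S^{1/2}\;=\;S\,\#\,\bigl(\eta S^{-1}\eta\bigr)\;=\;P\;.
\]
The trade-off between the two packagings: the sign-function route makes commutation with $H$ immediate (every function of $\eta S$ commutes with $H$) but leaves positivity of $\eta\Xi$ as the computation; your geometric-mean route makes involutivity and positivity immediate but shifts the work onto $PH=H^*P$. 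Either way the theorem is fully proved, and your assessment of the unbounded case as bookkeeping rather than obstacle is accurate.
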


\medskip

\noindent 
The next result shows that dynamically stable $\eta$-self-adjoint operators possess the characteristics for the definition of a quantum dynamics as discussed in Section \ref{sec:introduction_motivation}.
\begin{corollary}[Induced functional calculus]
\label{col:func-calc}
Let  $(\s{H},\langle\langle\;,\;\rangle\rangle_\eta)$ be the indefinite metric space associated to a fundamental symmetry $\eta^*=\eta=\eta^{-1}$. Let $H$ be an $\eta$-self adjoint operator with dense domain $\s{D}(H)$ which admits a $\s{C}$-symmetry $\Xi\in {{\tt C}}_\eta(\s{H})$. 
\begin{itemize}
\item[(1)] The spectrum of $H$ is real, $\sigma(H)\subseteq\R$.
\vspace{1mm}
\item[(2)] There is a $*$-algebra homomorphism $\varphi:L^\infty(\R)\to \bb{B}_{\rm lin}(\s{H})$ defined by
$$
\varphi(f)\;\longmapsto\;f(H)\;:=\;G_\Xi^{-1}f(\tilde{H})G_\Xi\;.
$$
\item[(3)] The map $\R\ni t\mapsto V_t:=\expo{-\ii tH}$
provides a group of $\eta$-unitary operators that solves the Schr\"odinger equation
$$
\ii\frac{\partial}{\partial t}\psi\;=\;H\psi\;,\qquad\quad\psi\in\s{D}({H})\;.
$$
\end{itemize}
\end{corollary}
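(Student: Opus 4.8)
The plan is to reduce all three statements to the genuinely self-adjoint operator $\tilde{H}=G_\Xi H G_\Xi^{-1}$ supplied by Proposition \ref{prop:H-Xi}, and then to transport its spectral calculus back through the bounded, boundedly-invertible intertwiner $G_\Xi$ of Lemma \ref{lemma_reduc_Xi}. For item (1) I would simply note that $\tilde{H}-\lambda\n{1}=G_\Xi(H-\lambda\n{1})G_\Xi^{-1}$ for every $\lambda\in\C$, so that $H-\lambda\n{1}$ is boundedly invertible if and only if $\tilde{H}-\lambda\n{1}$ is; this gives $\sigma(H)=\sigma(\tilde{H})$, and since $\tilde{H}$ is self-adjoint one has $\sigma(\tilde{H})\subseteq\R$.

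For item (2) I would start from the Borel functional calculus of the self-adjoint operator $\tilde{H}$, which is a bona fide $*$-homomorphism $f\mapsto f(\tilde{H})$ of $L^\infty(\R)$ into $\bb{B}_{\rm lin}(\s{H})$, and set $f(H):=G_\Xi^{-1}f(\tilde{H})G_\Xi$. Linearity and multiplicativity are immediate, because $A\mapsto G_\Xi^{-1}AG_\Xi$ is an algebra automorphism of $\bb{B}_{\rm lin}(\s{H})$. The delicate point, and the one I expect to be the main obstacle, is compatibility with the involution: since $G_\Xi$ is self-adjoint but \emph{not} unitary (it is only $\eta$-unitary), conjugation by $G_\Xi$ does not preserve the ordinary Hilbert adjoint, so the involution that $\varphi$ actually intertwines with complex conjugation is the $\eta$-adjoint $\sharp$. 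The verification rests on the two ingredients already at hand: $G_\Xi$ self-adjoint and $\eta$-unitary, i.e. $\eta G_\Xi\eta=G_\Xi^{-1}$ (Lemma \ref{lemma_reduc_Xi}), and $\tilde{H}$ commuting with $\eta$ (Proposition \ref{prop:H-Xi}), whence $f(\tilde{H})$ commutes with $\eta$. Combining these yields $f(H)^\sharp=\eta\, G_\Xi\,\overline{f}(\tilde{H})\,G_\Xi^{-1}\eta=G_\Xi^{-1}\overline{f}(\tilde{H})G_\Xi=\overline{f}(H)$, which is exactly the $*$-homomorphism property (so that real bounded functions of $H$ are $\eta$-self-adjoint).

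For item (3) I would specialize the calculus to $f_t(x):=\expo{-\ii t x}$, giving $V_t=f_t(H)=G_\Xi^{-1}\tilde{V}_tG_\Xi$ with $\tilde{V}_t:=\expo{-\ii t\tilde{H}}$ the strongly continuous unitary group of Stone's theorem. The homomorphism property of item (2) immediately delivers $V_{t+s}=V_tV_s$, $V_0=\n{1}$, and strong continuity (apply the bounded operators $G_\Xi^{\pm1}$), while $V_t^\sharp=\overline{f_t}(H)=f_{-t}(H)=V_{-t}=V_t^{-1}$ shows that each $V_t$ is $\eta$-unitary in the sense of \eqref{eq:rep-group_eta-unit}. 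Finally, for $\psi\in\s{D}(H)$ one has $G_\Xi\psi\in\s{D}(\tilde{H})=G_\Xi[\s{D}(H)]$, so Stone's theorem gives $\frac{\dd}{\dd t}\tilde{V}_tG_\Xi\psi=-\ii\,\tilde{H}\tilde{V}_tG_\Xi\psi$; applying the bounded operator $G_\Xi^{-1}$ and using $G_\Xi^{-1}\tilde{H}G_\Xi=H$ together with $\tilde{V}_tG_\Xi=G_\Xi V_t$ yields $\ii\frac{\dd}{\dd t}V_t\psi=HV_t\psi$, which is the asserted Schr\"odinger equation.
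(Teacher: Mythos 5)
Your proof is correct and follows essentially the same route as the paper's own proof sketch: reduce to the self-adjoint operator $\tilde{H}=G_\Xi H G_\Xi^{-1}$ from Proposition \ref{prop:H-Xi}, invoke its spectral/Borel functional calculus and Stone's theorem, and transport everything back through the bounded, boundedly-invertible $G_\Xi$ of Lemma \ref{lemma_reduc_Xi}. Your one genuine addition is the explicit verification in item (2) that, since $G_\Xi$ is self-adjoint but only $\eta$-unitary, the involution intertwined with complex conjugation is the $\eta$-adjoint $\sharp$ rather than the Hilbert adjoint, \ie $f(H)^\sharp=\overline{f}(H)$ --- a point the paper's sketch passes over in silence, which makes precise the sense in which $\varphi$ is a $*$-homomorphism and which also yields the $\eta$-unitarity $V_t^\sharp=V_t^{-1}$ in item (3) as an immediate by-product.
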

\proof[{Proof} (sketch of).]
Item (1) follows from the fact that $H$ and $\tilde{H}$ are intertwined by the invertible operator $G_\Xi$. This implies that $\sigma(H)=\sigma(\tilde{H})\subseteq\R$ where the last inclusion follows since $\tilde{H}$ is self-adjoint. Item (2) follows from the functional calculus for the self-adjoint operator $\tilde{H}$. As for item (3) one has that $ V_t=G_\Xi^{-1}\expo{-\ii t\tilde{H}}G_\Xi$ and  $\expo{-\ii t \tilde{H}}$ solves the
Schr\"odinger equation for $\tilde{H}$, namely
$$
\ii\frac{\partial}{\partial t}\left(\expo{-\ii t \tilde{H}}G_\Xi\psi\right)\;=\;\tilde{H}\left(\expo{-\ii t \tilde{H}}G_\Xi\psi\right)\;,\qquad\quad\psi\in\s{D}({H})\;.
$$
A multiplication on the left by $G_\Xi^{-1}$ provides
$$
\ii\frac{\partial}{\partial t}\left(V_t\psi\right)\;=\;H\left(V_t\psi\right)\;,\qquad\quad\psi\in\s{D}({H})\;.
$$
The $\eta$-unitarity of  $V_t$ follows since $G_\Xi$ is $\eta$-unitarity and $\expo{-\ii tH_\Xi}$ commutes with $\eta$.
\qed

\medskip

Let us focus on the  bounded case. 
Given a $\s{C}$-symmetry $\Xi\in {{\tt C}}_\eta(\s{H})$
let 
$$
{\tt H}_{\eta,\Xi}(\s{H})\;:=\;\left\{H\in\ {\tt H}_{\eta}(\s{H})|\  \Xi H = H\Xi\right\}
$$
be the set of bounded $\eta$-self-adjoint  operators that admit $\Xi$ as $\s{C}$-symmetry. More in general 
$$
{\tt CH}_{\eta}(\s{H})\;:=\;\bigsqcup_{\Xi\in {{\tt C}}_\eta(\s{H})}{\tt H}_{\eta,\Xi}(\s{H})
$$
is the set of all $\eta$-self-adjoint  operators which admit a $\s{C}$-symmetry. The case related to the trivial $\s{C}$-symmetry $\Xi_0:=\eta$ is of particular relevance. One has that
$$
{\tt H}_{\eta,0}(\s{H})\;:=\;{\tt H}_{\eta,\Xi_0}(\s{H})\;=\;\left\{H\in\ {\tt H}_{\eta}(\s{H})|\   H = H^*\right\}
$$
is the space of self-adjoint and $\eta$-self-adjoint bounded operators. As immediate consequence of Lemma \ref{lemma_reduc_Xi} and Proposition \ref{prop:H-Xi} one has that:
\begin{corollary}\label{coro_reduc_eta_self_comm}
For any $\Xi\in {{\tt C}}_\eta(\s{H})$ there is an  $\eta$-unitary equivalence
$$
{\tt H}_{\eta,\Xi}(\s{H})\;\stackrel{}{\simeq}\;{\tt H}_{\eta,0}(\s{H})
$$
induced by the invertible map ${\rm Ad}_\Xi:{\tt H}_{\eta,\Xi}(\s{H})\to {\tt H}_{\eta,0}(\s{H})$ given by
$$
{\rm Ad}_\Xi(H)\;:=\;G_\Xi H G_\Xi^{-1}\;=\;\tilde{H}\;.
$$
\end{corollary}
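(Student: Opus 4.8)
The plan is to realize $\mathrm{Ad}_\Xi$ as conjugation by the single fixed operator $G_\Xi$ supplied by Lemma \ref{lemma_reduc_Xi}, and to exhibit its inverse as conjugation by $G_\Xi^{-1}$. Since $G_\Xi\in {\tt U}_\eta(\s{H})$ and ${\tt U}_\eta(\s{H})$ is a group, $G_\Xi^{-1}$ is again $\eta$-unitary, and both conjugations are $\R$-linear homeomorphisms of $\bb{B}_{\rm lin}(\s{H})$; it therefore suffices to check that each conjugation lands in the correct space and that the two are mutually inverse. I would first record the convenient description that $\tilde H\in {\tt H}_{\eta,0}(\s{H})$ iff $\tilde H=\tilde H^*$ and $\eta\tilde H=\tilde H\eta$, since for a self-adjoint operator the condition $\tilde H=\tilde H^\sharp=\eta \tilde H^*\eta$ is equivalent to commutation with $\eta$.

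For the forward direction, I would take $H\in {\tt H}_{\eta,\Xi}(\s{H})$, so $H=H^\sharp$ and $\Xi H=H\Xi$. This is exactly the hypothesis of Proposition \ref{prop:H-Xi} in the bounded case, which yields that $\tilde H:=G_\Xi H G_\Xi^{-1}$ is self-adjoint and commutes with $\eta$; by the description above, $\tilde H\in {\tt H}_{\eta,0}(\s{H})$. Hence $\mathrm{Ad}_\Xi$ is well defined as a map ${\tt H}_{\eta,\Xi}(\s{H})\to {\tt H}_{\eta,0}(\s{H})$.

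For the backward direction, I would take $\tilde H\in {\tt H}_{\eta,0}(\s{H})$, so $\tilde H=\tilde H^*=\tilde H^\sharp$ and $\eta\tilde H=\tilde H\eta$, and set $H:=G_\Xi^{-1}\tilde H G_\Xi$. Because $G_\Xi^{-1}\in {\tt U}_\eta(\s{H})\subseteq {\tt QS}_\eta(\s{H})$ and $\tilde H$ is $\eta$-self-adjoint, Corollary \ref{cor1_inv_eta_self} (applied with $U=G_\Xi^{-1}$, so $U^{-1}=G_\Xi$) immediately gives that $H$ is $\eta$-self-adjoint. To verify the remaining condition $\Xi H=H\Xi$, I would use the reduction identity $\Xi=G_\Xi^{-1}\eta G_\Xi$ from the proof of Lemma \ref{lemma_reduc_Xi}, equivalently $\eta G_\Xi=G_\Xi\Xi$, together with $\eta\tilde H=\tilde H\eta$, and compute
\[
\Xi H\;=\;\bigl(G_\Xi^{-1}\eta G_\Xi\bigr)\bigl(G_\Xi^{-1}\tilde H G_\Xi\bigr)\;=\;G_\Xi^{-1}\eta\tilde H G_\Xi\;=\;G_\Xi^{-1}\tilde H\eta G_\Xi\;=\;G_\Xi^{-1}\tilde H G_\Xi\,\Xi\;=\;H\Xi\;,
\]
so that $H\in {\tt H}_{\eta,\Xi}(\s{H})$.

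Finally, since $G_\Xi G_\Xi^{-1}=\n{1}$, the assignments $H\mapsto G_\Xi H G_\Xi^{-1}$ and $\tilde H\mapsto G_\Xi^{-1}\tilde H G_\Xi$ are manifestly mutually inverse, which exhibits $\mathrm{Ad}_\Xi$ as a bijection implemented by the $\eta$-unitary operator $G_\Xi$, i.e.\ the desired $\eta$-unitary equivalence. I do not expect a genuine obstacle here, as the statement is flagged as an immediate consequence of Lemma \ref{lemma_reduc_Xi} and Proposition \ref{prop:H-Xi}; the only point requiring care is the bookkeeping in the backward direction, where one must confirm that conjugation by $G_\Xi^{-1}$ preserves \emph{both} the $\eta$-self-adjointness (via Corollary \ref{cor1_inv_eta_self}) \emph{and} the commutation with $\Xi$ (via the displayed computation, where the reduction identity $\Xi=G_\Xi^{-1}\eta G_\Xi$ is what makes the two cancellations work).
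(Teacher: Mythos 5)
Your proof is correct and follows essentially the same route as the paper, which simply declares the corollary an immediate consequence of Lemma \ref{lemma_reduc_Xi} and Proposition \ref{prop:H-Xi}: the forward direction is exactly Proposition \ref{prop:H-Xi}, and your backward direction (conjugation by $G_\Xi^{-1}$, using Corollary \ref{cor1_inv_eta_self} together with the identity $\Xi=G_\Xi^{-1}\eta G_\Xi$) is the bookkeeping the paper leaves implicit. Nothing is missing.
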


\medskip

The interplay between the $\eta$-quantum symmetries ${\tt QS}_\eta(\s{H})$ described in Section \ref {sec:krein_space-Qsim}  and the  $\s{C}$-symmetries is provided by the following result.

\begin{proposition} \label{prop:reduction_symmetry}
Let  $(\s{H},\langle\langle\;,\;\rangle\rangle_\eta)$ be the indefinite metric space associated to a fundamental symmetry $\eta^*=\eta=\eta^{-1}$, and $U\in {\tt QS}_\eta(\s{H})$ an $\eta$-quantum symmetry of type $(\wp, \varpi)$. 
\begin{itemize}
\item[(1)] The map $U$ induces an invertible transformation ${\rm Ad}_U^\wp: {{\tt C}}_\eta(\s{H})\to {{\tt C}}_\eta(\s{H})$ of the space of $\s{C}$-symmetries in itself given by
$$
{\rm Ad}_U^\wp(\Xi)\;:=\;\wp(U)\; U\Xi U^{-1}\;.
$$
Moreover, if $H\in {\tt H}_{\eta,\Xi}(\s{H})$ then ${\rm Ad}_U(H)\in {\tt H}_{\eta,{\rm Ad}_U^\wp(\Xi)}(\s{H})$.
 \vspace{1mm}
\item[(2)] If ${\rm Ad}_U^\wp(\Xi)=\Xi$ then the  diagram 
$$
\begin{diagram}
{\tt H}_{\eta,\Xi}(\s{H})&\rTo^{{\rm Ad}_\Xi}&       {\tt H}_{\eta,0}(\s{H})\\
\dTo^{{\rm Ad}_U}&&\dTo_{{\rm Ad}_{\tilde{U}}}\\
{\tt H}_{\eta,\Xi}(\s{H})&\rTo_{{\rm Ad}_\Xi}&       {\tt H}_{\eta,0}(\s{H})\\
\end{diagram} 
$$
is commutative, with $\tilde{U}:=G_\Xi UG_\Xi^{-1}$. Moreover,  $\tilde{U}$ is of type $(\wp, \varpi)$.
\end{itemize}
\end{proposition}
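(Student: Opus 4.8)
The plan is to treat the two items separately, basing everything on the single algebraic identity $\eta U^{*}=\wp(U)\,U^{-1}\eta$ recorded in \eqref{eq:eta-se-f-cor2} (valid for every $U\in{\tt QS}_\eta(\s{H})$, irrespective of $\varpi$) together with the concrete form $G_\Xi=\sqrt{\eta\Xi}$, $G_\Xi^{2}=\eta\Xi$ supplied by Lemma \ref{lemma_reduc_Xi}.

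For item (1), set $\Xi':={\rm Ad}_U^\wp(\Xi)=\wp(U)\,U\Xi U^{-1}$. First I would check the involutivity $(\Xi')^{2}=\n{1}$, which is immediate since $\wp(U)^{2}=1$ and $\Xi^{2}=\n{1}$. The heart of the matter is the positivity $\eta\Xi'>0$, which I would obtain from the congruence identity $\eta\,\Xi'=(U^{-1})^{*}(\eta\Xi)\,U^{-1}$. This follows by applying \eqref{eq:eta-se-f-cor2} to $U^{-1}$ (which is of the same type $(\wp,\varpi)$ by Proposition \ref{prob:lin-eta-uni} and Proposition \ref{prob:lin-eta-anti-uni}), giving $(U^{-1})^{*}=\wp(U)\,\eta U\eta$, and then inserting $\eta^{2}=\n{1}$. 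Since $\eta\Xi>0$ and $U^{-1}$ is invertible, evaluating $\langle\psi,\eta\Xi'\psi\rangle=\langle U^{-1}\psi,(\eta\Xi)U^{-1}\psi\rangle$ (and its complex conjugate in the anti-linear case, according to the adjoint convention of Note \ref{footnote:A}) yields strict positivity on $\psi\neq0$, so that $\Xi'\in{\tt C}_\eta(\s{H})$. Invertibility of ${\rm Ad}_U^\wp$ then follows because ${\rm Ad}_{U^{-1}}^\wp$ is a two-sided inverse, again using $\wp(U)\wp(U^{-1})=1$. Finally, for $H\in{\tt H}_{\eta,\Xi}(\s{H})$, Corollary \ref{cor1_inv_eta_self} already gives ${\rm Ad}_U(H)=UHU^{-1}\in{\tt H}_\eta(\s{H})$, and the one-line computation $\Xi'\,(UHU^{-1})=\wp(U)\,U\Xi HU^{-1}=\wp(U)\,UH\Xi U^{-1}=(UHU^{-1})\,\Xi'$, using $\Xi H=H\Xi$, places it in ${\tt H}_{\eta,{\rm Ad}_U^\wp(\Xi)}(\s{H})$.

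For item (2), I would first observe that the fixed-point hypothesis ${\rm Ad}_U^\wp(\Xi)=\Xi$ is exactly the $\Xi$-compatibility $U\Xi=\wp(U)\,\Xi U$, so that by item (1) the left vertical map ${\rm Ad}_U$ sends ${\tt H}_{\eta,\Xi}(\s{H})$ to itself. Since $G_\Xi\in{\tt U}_\eta(\s{H})$ is bounded, linear and $\eta$-unitary, the conjugate $\tilde U=G_\Xi UG_\Xi^{-1}$ lies in the group ${\tt QS}_\eta(\s{H})$ and inherits the type $(\wp,\varpi)$ of $U$, exactly as argued for Theorem \ref{theo_fund_rep_res}(2). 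To see that the right vertical map is well defined I would verify $\tilde U^{-1}=\tilde U^{*}$ and $\tilde U\eta=\wp(U)\,\eta\tilde U$: using $G_\Xi=G_\Xi^{*}$, the relation $\tilde U^{-1}=\tilde U^{*}$ reduces to $U^{*}G_\Xi^{2}=G_\Xi^{2}U^{-1}$, i.e. $U^{*}\eta\Xi=\eta\Xi U^{-1}$, which after substituting $U^{*}\eta=\wp(U)\,\eta U^{-1}$ from \eqref{eq:eta-se-f-cor2} and $\Xi U^{-1}=\wp(U)\,U^{-1}\Xi$ from the compatibility collapses to a trivial identity; combining $\tilde U^{-1}=\tilde U^{*}$ with $\tilde U^{\sharp}=\wp(U)\,\tilde U^{-1}$ (which holds because $\tilde U$ is of type $(\wp,\varpi)$) then yields $\tilde U\eta=\wp(U)\,\eta\tilde U$. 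These two relations guarantee that ${\rm Ad}_{\tilde U}$ preserves both self-adjointness and commutation with $\eta$, hence maps ${\tt H}_{\eta,0}(\s{H})$ into itself. Commutativity of the square is then a direct telescoping: for $H\in{\tt H}_{\eta,\Xi}(\s{H})$ one has ${\rm Ad}_{\tilde U}\circ{\rm Ad}_\Xi(H)=\tilde U G_\Xi HG_\Xi^{-1}\tilde U^{-1}=G_\Xi UHU^{-1}G_\Xi^{-1}={\rm Ad}_\Xi\circ{\rm Ad}_U(H)$, both composites equalling $G_\Xi\,UHU^{-1}\,G_\Xi^{-1}$.

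The routine steps are the two purely algebraic identities $(\Xi')^{2}=\n{1}$ and the final telescoping. The main obstacle I expect is twofold: keeping the adjoint bookkeeping correct in the anti-linear case when deducing positivity from the congruence $\eta\Xi'=(U^{-1})^{*}(\eta\Xi)U^{-1}$, where the inner product acquires a complex conjugation that must be checked not to spoil positivity, and confirming that the fixed-point condition is precisely strong enough to upgrade $\tilde U$ to a genuine unitary or anti-unitary on $\s{H}$, so that the right-hand vertical arrow of the square actually lands in ${\tt H}_{\eta,0}(\s{H})$. This last point is what makes the reduction to self-adjoint operators compatible with the symmetry, and it is exactly where the interplay between $G_\Xi^{2}=\eta\Xi$ and $U\Xi=\wp(U)\,\Xi U$ is essential.
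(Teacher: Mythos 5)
Your proof is correct and follows essentially the same route as the paper's: item (1) is proved by transporting the positivity of $\eta\Xi$ through $U^{-1}$ (the paper does this directly with the form identity \eqref{eq:eta-general}, you with the equivalent operator identity \eqref{eq:eta-se-f-cor2}, handling the complex conjugation in the anti-linear case correctly), and item (2) by the telescoping computation together with the verification $U^*G_\Xi^2=G_\Xi^2U^{-1}$ via $G_\Xi^2=\eta\Xi$ and the compatibility $U\Xi=\wp(U)\,\Xi U$, which is exactly the argument the paper gives in the proof sketch of Theorem \ref{theo_fund_rep_res}(3).
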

\begin{proof}
(1) 
The operator $\Xi' := {\rm Ad}_U^\wp(\Xi)$ is clearly involutive. 
Moreover, for every $\varphi\neq 0$ one has 
$$
\begin{aligned}
\langle\varphi,\eta\Xi'\varphi\rangle\;&=\;\wp(U)\;\langle\langle\varphi, U\Xi U^{-1}\varphi\rangle\rangle_\eta\\
&=\;\wp(U)\wp(U^{-1})\;\left|\langle\langle U^{-1}\varphi, \Xi U^{-1}\varphi\rangle\rangle_\eta\right|^{1-\varpi(U^{-1})}\;  \langle\langle U^{-1}\varphi, \Xi U^{-1}\varphi\rangle\rangle_\eta^{\varpi(U^{-1})}\\
&=\;\langle U^{-1}\varphi, \eta \Xi U^{-1}\varphi\rangle\;>\;0
\end{aligned}
$$
since  $\wp(U)\wp(U^{-1})=1$ and $\eta \Xi>0$. This implies that also $\eta\Xi'>0$ is strictly positive.
Finally, a straightforward computation shows that $[H,\Xi]=0$ if and only if $[{\rm Ad}_U(H),\Xi']=0$ and 
${\rm Ad}_U(H)$ is $\eta$-self-adjoint in view of Corollary \ref{cor1_inv_eta_self}.
(2) The commutativity of the diagram follows from a direct computation. Since $G_\Xi$ is $\eta$-unitary as proved in Lemma \ref{lemma_reduc_Xi}, a direct computation shows that 
the type of 
$\tilde{U}$ only depends on the type  of $U$.
\end{proof}


\subsection{The Hilbert bundle picture}
\label{sec:hilb_pict}
The various notions introduced in Sections \ref{sec:indefinite}, \ref{sec:Q-sym} and \ref{sec:C-symm} can be extended to the case of  \emph{Hilbert bundles}  and complex vector bundles. The motivations of this generalization will be sketched  in Remark \ref{rk:motiv_hilb_bund} below. For the general theory of Hilbert bundles we refer to \cite{dixmier-douady-63,dupre-74} and references therein. For the general of vector bundles we refer  to the two classic monographs
\cite{atiyah-67,husemoller-94}.

\medskip

Let us recall that a Hilbert bundle $\pi:\bb{E}\to X$ is a locally trivial bundle over $X$ such that each fiber $\s{H}_x:=\pi^{-1}(x)$ is a separable Hilbert space with scalar product $\langle\;,\;\rangle_x$ and the projection $\pi$ is an open surjective map. 
When the fibers of the Hilbert bundle $\pi:\bb{E}\to X$ are infinite dimensional then the transition functions
of 
$\bb{E}$ take value on the
structure group   which is the  infinite dimensional unitary group ${\tt U}(\s{H})$ topologized with the strong topology\footnote{In \cite{freed-moore-13}  the structure group is topologized by  the compact-open topology in the sense of Atiyah and Segal. However,  as discussed in Remark \ref{rk:top_strong}, the strong topology should be sufficient for all needs. The  strong topology has been used also in \cite[Chapitre II.10]{dixmier-douady-63,dupre-74}}. For the aims of this work it is sufficient to assume that: 
\begin{assumption}
\label{Ass:reg_X}
The base space $X$ is Hausdorff and compact and is endowed with a regular Borel measure $\mu$. 
\end{assumption}
Under Assumtion \ref{Ass:reg_X}, if  $\s{H}_x\simeq\C^m$ for all $x\in X$ one has that $\pi:\bb{E}\to X$ is (equivalent to) a rank $m$ complex  vector bundle endowed with a Hermitian metric. In this case the structure group is the unitary group ${\tt U}(\C^m)$ with its natural topology. 
\medskip

A morphism between two Hilbert bundles  $\pi:\bb{E}\to X$ and $\pi':\bb{E}'\to X$ is a  continuous map $f:\bb{E}\to \bb{E}'$ which preserves the fiber structure in the sense that $\pi=\pi'\circ f$ and the linear structure of each fiber in the sense that $f|_{\s{H}_x}:\s{H}_x\to\s{H}'_x$ is linear or anti-linear.
Let ${\rm End}(\bb{E})$ be the space of the \emph{endomorphisms} of the Hilbert bundle $\pi:\bb{E}\to X$. This means that   elements $A\in {\rm End}(\bb{E})$ are continuous maps of the total space $\bb{E}$ such that  $\pi=\pi\circ A$  and 
that $A_x:=A|_{\s{H}_x}\in \bb{B}(\s{H}_x)$ is fiberwise a linear or anti-linear bounded  operator. Diagrammatically one has that elements 
$A\in {\rm End}(\bb{E})$ make the  diagrams
$$
\begin{diagram}
\bb{E}&& \rTo^{A} &&\bb{E}\\
&\rdTo_\pi&& \ldTo_{\pi}& \\
&&X &&  \\
\end{diagram}
$$ 
commutative.
The subset of the \emph{fiber-linear} endomorphisms will be denoted with  ${\rm End}_{\rm lin}(\bb{E})$. The adjoint of $A\in {\rm End}(\bb{E})$ is the endomorphism $A^*\in {\rm End}(\bb{E})$ defined fiberwise by $A^*|_{\s{H}_x}=A_x^*$. It is self-adjoint if $A=A^*$ and it is an involution if $A=A^{-1}$. 
\begin{definition}[Krein bundle]\label{def:krein_bundle}
An \emph{indefinite metric structure} on the Hilbert bundle $\pi:\bb{E}\to X$ is given by a self-adjoint involution
$\eta\in{\rm End}_{\rm lin}(\bb{E})$,  
 such that each restriction $\eta_x:=\eta|_{\s{H}_x}$  is   a fundamental symmetry on the fiber  $\s{H}_x$,  \emph{\ie} $\eta^*_x=\eta_x=\eta^{-1}_x$. If for each $x\in X$ the metric $\eta_x$ provides a Krein space structure in the fiber $\s{H}_x$ (in the sense of Definition \ref{def:krein})  then  the pair $(\bb{E},\eta)$ is called a \emph{Krein bundle}. In the  case the fibers of $\bb{E}$ are of finite even dimension then  $(\bb{E},\eta)$ is called a \emph{Krein vector bundle}.
 \end{definition}

\medskip
A morphism between two Krein bundles  $(\bb{E},\eta)$ and $(\bb{E}',\eta')$ is a  Hilbert bundle morphism $f:\bb{E}\to \bb{E}'$ such that $f\circ \eta=\eta' \circ f$.
The notion of $\eta$-self-adjointness (\cf Definition \ref{dfn:eta_self_adjoint}) and $\s{C}$-symmetry (\cf Definition \ref{dfn:C_symmetry}) have a natural extension to the case of Krein bundles.
Let $(\bb{E},\eta)$ be a Krein bundle. An element $H\in{\rm End}_{\rm lin}(\bb{E})$ is $\eta$-self-adjoint if $H=H^\sharp:=\eta\circ H^*\circ \eta$. The subset of the $\eta$-self-adjoint endomorphisms   will be denoted by ${\tt H}_{\eta}(\bb{E})\subset {\rm End}_{\rm lin}(\bb{E})$.
A $\s{C}$-symmetry is an element $\Xi\in{\rm End}_{\rm lin}(\bb{E})$ such that $\Xi=\Xi^{-1}$ and the composition $\eta\circ \Xi$ is strictly positive when restricted to each fiber, \ie
$(\eta\circ \Xi) |_{\s{H}_x}=\eta_x \Xi_x>0$.  The subset of the $\s{C}$-symmetries will be denoted by
${\tt C}_{\eta}(\bb{E})\subset {\rm End}_{\rm lin}(\bb{E})$. In accordance with Definition \ref{dfn:C_symmetry} one says that $H\in{\tt H}_{\eta}(\bb{E})$ admits a $\s{C}$-symmetry if there exists an element $\Xi\in{\tt C}_{\eta}(\bb{E})$ such that $H\circ\Xi=\Xi\circ H$. The subset of the $\eta$-self-adjoint endomorphisms with  $\s{C}$-symmetry $\Xi$ will be denoted with 
${\tt H}_{\eta,\Xi}(\bb{E})\subset {\tt H}_{\eta}(\bb{E})$ while 
${\tt CH}_{\eta}(\bb{E})\subset {\tt H}_{\eta}(\bb{E})$  will denote the subset of the $\s{C}$-symmetric 
$\eta$-self-adjoint endomorphisms. Proposition \ref{prop:H-Xi} and Corollary \ref{col:func-calc} (and also the more general Theorem \eqref{theo:albeverio-kushel}) can be reproved in the setting of Krein bundles with a fiberwise version of the same proofs. In particular, let ${\tt U}_\eta(\bb{E})\subset {\rm End}_{\rm lin}(\bb{E})$ be the subset of the $\eta$-unitary (fiber-linear) endomorphisms, \ie $U\in {\tt U}_\eta(\bb{E})$ if and only if $U^{-1}=U^\sharp$. To any $\s{C}$-symmetry
$\Xi\in{\tt C}_{\eta}(\bb{E})$ it  can be associated a $G_\Xi\in {\tt U}_\eta(\bb{E})$ defined fiberwise by $G_\Xi |_{\s{H}_x}:=\sqrt{\eta_x \Xi_x}$. As in Lemma \ref{lemma_reduc_Xi} one can prove that $G_\Xi=G_\Xi^*$ and $G_\Xi \Xi G_\Xi^{-1}=\eta$. Moreover, a fiberwise adaptation of Corollary \ref{coro_reduc_eta_self_comm} shows that
for any  $H\in{\tt H}_{\eta,\Xi}(\bb{E})$ the transformed endomorphism
 $\tilde{H}:=G_\Xi\circ H\circ G_\Xi^{-1}$ is self-adjoint $\tilde{H}=\tilde{H}^*$ and commute with the metric structure, $\tilde{H}\circ \eta=\eta \circ \tilde{H}$. The set of self-adjoint endomorphisms commuting with $\eta$ will be denoted with ${\tt H}_{\eta,0}(\bb{E})$. In view of the  equivalence induced by $G_\Xi$ one concludes that the classification of the $\s{C}$-symmetric 
$\eta$-self-adjoint endomorphisms is equivalent to the classification of ${\tt H}_{\eta,0}(\bb{E})$. This will be the main task of  Section \ref{sec:freed_moore_K}. Given the  Krein bundle $(\bb{E},\eta)$ and using a local version of the  construction of Note \ref{note:eta-compl} and  Note \ref{note:rqref-reflex}
one can define an $\eta$-reflecting endomorphism $R^*=R^{-1}=-R^\sharp$ and an anti-linear complex structure $C^*=C^\sharp=C=C^{-1}$ such that $R\circ C=C\circ R$. With these endomorphisms and by adapting the prescriptions \eqref{eq:rep-group_eta-ant-unit}, 
\eqref{eq:rep-group_pseud-eta-unit}  and \eqref{eq:rep-group_pseud-eta-ant-unit} one can define the group 
$$
 {\tt QS}_\eta(\bb{E})\;:=\;{\tt U}_\eta(\bb{E})\;\sqcup\; {\tt AU}_\eta(\bb{E})\sqcup\; {\tt PU}_\eta(\bb{E})\sqcup\; {\tt PAU}_\eta(\bb{E})\;\subset\; {\rm End}(\bb{E})\,.
 $$
We refer to  ${\tt QS}_\eta(\bb{E})$ as the group of the \emph{fiber-preserving} $\eta$-quantum symmetries of the Krein bundle $(\bb{E},\eta)$. Let us remark that any $U\in  {\tt QS}_\eta(\bb{E})$ meets the strong condition $\pi=\pi\circ U$.  Indeed, by relaxing the latter condition a more general notion of symmetry can be introduced (\cf Definition \ref{def:cov_endo}).
Anyway,  $U\in  {\tt QS}_\eta(\bb{E})$ is a \emph{good} symmetry for $H\in{\tt H}_{\eta,\Xi}(\bb{E})$
if: (i) $U\circ H=c(U) H\circ U$, and (ii) $U\circ \Xi=\wp(U) \Xi\circ U$.
Here the signs $\wp(U),\varpi(U)\in\{-1,+1\}$ describe the nature of the symmetry $U$ according to the convention of Table 1.3. The sign $c(U)\in\{-1,+1\}$ says if $U$ is a proper or an improper symmetry of $H$ according to the terminology introduced in Section \ref{subsec:metod_CAZ}. 
The  condition (ii) expresses the \emph{compatibility} of the symmetry $U$ with the $\s{C}$-symmetry property of $H$. The fiberwise version of the proof of Theorem \ref{theo_fund_rep_res} shows that $\tilde{U}:=G_\Xi\circ U\circ G_\Xi^{-1}\in {\tt QS}_\eta(\bb{E})$ has the   same nature of $U$ (\ie $\wp(U)=\wp(\tilde{U})$ and $\varpi(U)=\varpi(\tilde{U})$) 
and is \emph{unitary} $\tilde{U}^{-1}=\tilde{U}^*$. In particular the later condition implies $\tilde{U}\circ \eta=\wp(\tilde{U}) \tilde{U} \circ \eta$. Moreover,  $\tilde{U}\circ \tilde{H}=c(\tilde{U}) \tilde{H}\circ \tilde{U}$ with   $c(\tilde{U})=c(U)$. In conclusion we can sum up as follows: 

\smallskip

\noindent 
{\bf Observation A.}
\emph{Once the problem of the classification of the $\s{C}$-symmetric 
$\eta$-self-adjoint endomorphisms is reduced  to the problem of the classification of ${\tt H}_{\eta,0}(\bb{E})$ it turns out that the only relevant  (fiber-preserving) $\eta$-quantum symmetries are induced by \emph{unitary} or \emph{anti-unitary} endomorphisms which \emph{commute} or \emph{anti-commute} with the metric $\eta$.}

\medskip

We will see below that the framework described in Observation A deserves a  further, non-trivial generalization (\cf Observation B). For the moment, let us introduce an extra structure that will be relevant for the purposes of Section \ref{sec:freed_moore_K}: The action of    \emph{Clifford algebras} on Hilbert and Krein bundles.
For a detailed introduction to Clifford algebras we refer to \cite[Chapter III, Section 3]{karoubi-78} or \cite{lee-48}.
Let $C\ell^{r,s}$ be the Clifford algebra
generated over $\R$ by a collection of symbols $e_1,\ldots,e_{r+s}$ subjected to the following relations 
$$
\frac{e_ie_j+e_je_i}{2}\;:=\;
\left\{
\begin{aligned}
&\phantom{++}0&\qquad&\text{if}\quad i\neq j\\
&+\n{1}&\qquad&\text{if}\quad i= j\in \{1,\ldots,r\}\\
&-\n{1}&\qquad&\text{if}\quad i= j\in \{r+1,\ldots,r+s\}\;.\\
\end{aligned}
\right.
$$
 These Clifford algebras have been completely classified.

\begin{definition}[Clifford action on Hilbert and Krein  bundles]
\label{def:cliff_act}
Let $\pi:\bb{E}\to X$ be a  Hilbert  bundle and $C\ell^{r,s}$ a Clifford algebra.
A (unitary) Clifford action on $\bb{E}$ is given by an $\R$-algebra homomorphism $\gamma: C\ell^{r,s}\to {\rm End}_{\rm lin}(\bb{E})$ such that the representatives of the generators $\gamma_i:=\gamma(e_i)$ are unitary, \emph{\ie} $\gamma_i^{-1}=\gamma_i^*$  for all $i=1,\ldots,r+s$. The pair $(\bb{E},\gamma)$ will be called a \emph{Clifford-Hilbert bundle} of type $(r,s)$. In the case of a Krein bundle $(\bb{E},\eta)$ a \emph{compatible} Clifford action
is subjected to the extra condition $\gamma(a)\circ \eta=\eta \circ \gamma(a)$ for all $a\in C\ell^{r,s}$. The triple $(\bb{E},\eta,\gamma)$ given by  the Krein bundle $(\bb{E},\eta)$ and the  compatible Clifford action $\gamma$ will be called a \emph{Clifford-Krein bundle}.
\end{definition}

\medskip

\noindent
A morphism between two Clifford-Hilbert bundles (resp. Clifford-Krein bundles) of the same type $(\bb{E},\gamma)$ and $(\bb{E}',\gamma')$ is a Hilbert bundle (resp. Krein bundles) morphism $f:\bb{E}\to \bb{E}'$ such that $f\circ \gamma(a)=\gamma'(a) \circ f$ for all $a\in C\ell^{r,s}$. 
Definition \ref{def:cliff_act} also works in the vector bundle (\ie finite rank) case.

\medskip

Up to now, all the concepts introduced above are a natural generalization of the theory and the results developed  in Sections \ref{sec:indefinite}, \ref{sec:Q-sym} and \ref{sec:C-symm}. More specifically,  the  whole content of this Section reduces to the content of the previous Sections when the base space of the bundle reduces to a point $X=\{\ast\}$. However, the bundle picture provides more freedom for the construction of \virg{interesting} symmetries. Let us recall that a \emph{section} of a Hilbert bundle $\pi:\bb{E}\to X$ is a map $s:X\to \bb{E}$ such that $(\pi\circ s)(x)=x$ for all $x\in X$. The space of sections is usually denoted with $\Gamma(\bb{E})$. With the help of the measure structure assumed in Assumption \ref{Ass:reg_X}  one can define a Hilbert structure on $\Gamma(\bb{E})$ by means of the scalar product
\begin{equation}
\label{eq:bund_hilb_struct}
\langle s_1, s_2\rangle_{\bb{E}}\;:=\;\int_X\dd\mu(x)\; \langle s_1(x), s_2(x)\rangle_x\;.
\end{equation}
The completion of $\Gamma(\bb{E})$ with respect the topology induced by the Hilbert structure \eqref{eq:bund_hilb_struct} defines the Hilbert space of the $L^2$-sections denoted with $\rr{H}_{\bb{E}}:=L^2(\bb{E},\mu)$. Let $\bb{B}(\rr{H}_{\bb{E}})$ be the space of the linear or anti-linear bounded operators acting on $\rr{H}_{\bb{E}}$  and ${\tt U}(\rr{H}_{\bb{E}})$  and  ${\tt AU}(\rr{H}_{\bb{E}})$ the subsets of the unitary and anti-unitary operators. 
Clearly the fiber-preserving endomorphisms  ${\rm End}(\bb{E})$ can be identified with operators in  $\bb{B}(\rr{H}_{\bb{E}})$. Similarly, one has (up to an identification) the inclusions  ${\tt U}(\bb{E})\subset {\tt U}(\rr{H}_{\bb{E}})$ and ${\tt AU}(\bb{E})\subset {\tt AU}(\rr{H}_{\bb{E}})$. However not all the operator in $\bb{B}(\rr{H}_{\bb{E}})$ come from fiber-preserving endomorphisms. An interesting class of operator is generated by the \emph{covariant endomorphisms} of $\bb{E}$.
\begin{definition}[Covariant endomorphism]
\label{def:cov_endo}
Let $\pi:\bb{E}\to X$ be a Hilbert bundle. A \emph{covariant endomorphism} of $\bb{E}$  is a pair $(A,g_A)$ with $A:\bb{E}\to \bb{E}$ is a continuous map on the total space, $g_A:X\to X$ is a homeomorphism on the base space, such
that the following diagram commutes
$$
\begin{diagram}
\bb{E}&\rTo^{A}&  \bb{E}\\
\dTo^{\pi}&&\dTo_{\pi}\\
X&\rTo_{g_A}&\;\,    X\;.\\
\end{diagram}
$$
The covariant endomorphism $(A,g_A)$ is linear if $A(\lambda p)=\lambda A(p)$ for all $\lambda\in \C$ and $p\in \bb{E}$. It is called anti-linear when $A(\lambda p)=\overline{\lambda} A(p)$.
The space of linear or anti-linear covariant endomorphisms will be denoted with ${\rm Cov}(\bb{E})$
while ${\rm Cov}_{\rm lin}(\bb{E})$ will be used for the subspace of the linear ones.
\end{definition}
\medskip

\noindent
Clearly from Definition \ref{def:cov_endo} one deduces the following inclusions ${\rm End}(\bb{E})\subset {\rm Cov}(\bb{E})$ and ${\rm Cov}_{\rm lin}(\bb{E})\subset {\rm Cov}_{\rm lin}(\bb{E})$ obtained when $g_A={\rm Id}_X$. An element $(A,g_A)\in {\rm Cov}(\bb{E})$ induces a map $\hat{A}:\Gamma(\bb{E})\to \Gamma(\bb{E})$ defined by
$$\hat{A}s\; :=\;A\circ s\circ g_A^{-1}\;.$$
The map $\hat{A}$ is linear or anti-linear according to the nature of $A$ and  by density it defines an element $\hat{A}\in \bb{B}(\rr{H}_{\bb{E}})$. The covariant endomorphism $(A,g_A)$ is called unitary (resp. anti-unitary) if  the associated operator $\hat{A}$ is unitary (resp. anti-unitary). It is an (even or odd) involution if $\hat{A}^2=\varepsilon \n{1}$ with $\varepsilon=\pm 1$. In this case it is necessary that $g_A^{2}={\rm Id}_X$, namely   an \emph{involution} of the base space.

\medskip

In the case of a  Krein bundle $(\bb{E},\eta)$ the map $\eta\in{\rm End}_{\rm lin}(\bb{E})$ induces a Krein structure on the Hilbert space $\rr{H}_{\bb{E}}$ by the indefinite inner product 
\begin{equation}
\langle\langle s_1, s_2\rangle\rangle_{\bb{E},\eta}\;:=\;\int_X\dd\mu(x)\; \langle s_1(x),\eta_x s_2(x)\rangle_x\;.
\end{equation}
This allows to define the full group of the (non necessarily fiber-preserving) $\eta$-quantum symmetries 
$$
 {\tt QS}_\eta(\rr{H}_{\bb{E}})\;:=\;{\tt U}_\eta(\rr{H}_{\bb{E}})\;\sqcup\; {\tt AU}_\eta(\rr{H}_{\bb{E}})\sqcup\; {\tt PU}_\eta(\rr{H}_{\bb{E}})\sqcup\; {\tt PAU}_\eta(\rr{H}_{\bb{E}})\;\subset\; \bb{B}(\rr{H}_{\bb{E}})\;.
 $$
 The main goal of this work is to provide a topological classification of the space ${\tt H}_{\eta,\Xi}(\bb{E})$ of dynamically stable (fiber-preserving) operators under the possible action of compatible $\eta$-quantum symmetries. However the group ${\tt QS}_\eta(\rr{H}_{\bb{E}})$ is too vast and there are physical reasons to restrict the analysis only to \emph{covariant} $\eta$-quantum symmetries 
  ${\tt QS}_\eta(\rr{H}_{\bb{E}})\cap {\rm Cov}(\bb{E})$. 
 Clearly ${\tt QS}_\eta(\bb{E})\subset {\tt QS}_\eta(\rr{H}_{\bb{E}})\cap {\rm Cov}(\bb{E})$ and in this sense we are generalizing the framework described before Observation A allowing also the action of covariant  $\eta$-quantum symmetries in addition to the fiber-preserving symmetry. In view of the equivalence established by  Theorem \ref{theo_fund_rep_res} we can generalize Observation A as follows:

\smallskip

\noindent 
{\bf Observation B.}
\emph{Once the problem of the classification of the $\s{C}$-symmetric 
$\eta$-self-adjoint endomorphisms is reduced  to the problem of the classification of ${\tt H}_{\eta,0}(\bb{E})$ it turns out that the only relevant   $\eta$-quantum symmetries are induced by 
element of ${\tt QS}_\eta(\rr{H}_{\bb{E}})\cap {\rm Cov}(\bb{E})$
 which \emph{commute} or \emph{anti-commute} with the metric $\eta$.}

\medskip

\begin{remark}[Motivations for the bundle point of view: Topological quantum systems]\label{rk:motiv_hilb_bund}
{\upshape 
Hilbert bundles arise in condensed matter physics when one considers electronic systems in a periodic crystal background. 
In this case the invariance under translations allows  to use the \emph{Bloch-Floquet transform} \cite{kuchment-93} which transform the \virg{physical} Hilbert space $L^2(\R^d)$ in the $L^2$-space of an Hilbert bundle $\bb{E}$ over the torus $\n{T}^d\simeq \R^d/\Z^d$ which is usually called the Brillouin zone.  
Under some regularity condition the Hamiltonian $H$ of the system is mapped into an operator $\hat{H}$ associated to a fiber-preserving endomorphism. Translation-invariant \emph{unitary} operators   are mapped into  \emph{fiber-preserving} unitary operators while translation-invariant \emph{anti-unitary} operators are mapped into  \emph{covariant} unitary operators associated with the involution 
$\tau:\n{T}^d\to\n{T}^d$ given by $\tau(k):=-k$. For more details on this construction we refer to \cite[Section 2]{denittis-gomi-14} and reference therein. However, the bundle picture is also typical of more general physical systems, not necessarily related with periodic electronic structure, called Topological Quantum Systems (see \eg \cite[Definition 1.1]{denittis-gomi-15}).
}\hfill $\blacktriangleleft$
\end{remark}


\section{$K$-theory}
\label{sec:freed_moore_K}

The main goal of this Section is to construct a $K$-theory capable of classifying gapped $\eta$-self adjoint operators with a  $\s{C}$-symmetry. In order to do that we will use a re-formulation of the Freed-Moore $K$-theory introduced in \cite{freed-moore-13} in the spirit of the Karoubi's presentation of the $K$-theory 
\cite{karoubi-78}. In the process we will also touch the Fredholm version of the $K$-theory as initially proposed in \cite{atiyah-69}.
Section \ref{sec:karubi_vs_freed} and Section \ref{sec:Fredholm-K} provide a soft overview of the results contained in \cite{gomi-17}. The rest of the material is instead original.


\subsection{Karoubi's formulation of Freed-Moore $K$-theory}\label{sec:karubi_vs_freed}
The twisted equivariant $K$-theory introduced by Freed and Moore in \cite{freed-moore-13}
 unifies the \virg{conventional} twisted equivariant complex $K$-theory \cite{donovan-karoubi-70,rosenberg-89,freed-hopkins-teleman-11}
and  Atiyah's  $KR$-theory \cite{atiyah-66}. The relevance of this  $K$-theory is that it handles complex anti-linear symmetry.
In this Section we provide a re-formulation of the Freed-Moore $K$-theory  using the Karoubi's construction.

\medskip

{A sufficiently general}
 setup for the Freed-Moore $K$-theory, {at least for  applications to condensed matter physics},
  requires the following ingredients:
\begin{assumption}[Standard framework] \label{assumption:freed_moore_K}
Let $X$, $\n{G}$, $\varpi$, $c$, $\tau$ be as follows:
\begin{itemize}
\item[(a)]
$X$ is a compact Hausdorff space;
\vspace{1mm}

\item[(b)]
$\n{G}$ is a finite group acting on $X$ from the left;
\vspace{1mm}

\item[(c)]
$\varpi : \n{G} \to \Z_2$ and $c : \n{G} \to \Z_2$ are homomorphisms;
\vspace{1mm}

\item[(d)] $\tau\in {Z}^2_{\mathrm{group}}(\n{G}; C(X, \n{U}(1))_\varpi)$.
\end{itemize}
\end{assumption}

\medskip
\noindent
Property (b) says that $X$ is a (left) $\n{G}$-space. In (c) 
the symbol  $\Z_2 = \{ \pm 1 \}$ denotes the cyclic group of order $2$. Condition (d)  needs some elucidation. Let $C(X, \n{U}(1))$ be the group of the continuous functions on $X$ with values in the unitary group $\n{U}(1):={\tt U}(\C)$. This group admits a left-action and a right-action of $\n{G}$. The 
left-action of $g \in \n{G}$ on $f \in C(X, \n{U}(1))$ is given by $f \mapsto f^{\varpi(g)}$.
The right-action is given by the pull-back $f \mapsto g^*f$ where $g^*f(x):=f(gx)$. The group $C(X, \n{U}(1))$ endowed with this $\n{G}$-bimodule structure is denoted with $C(X, \n{U}(1))_\varpi$.
 A $2$-cocycle $\tau$ is a map $\tau : \n{G} \times \n{G}  \to C(X, \n{U}(1))_\varpi$ satisfying 
\begin{equation}\label{eq:2cocy-cond}
\tau(g_2, g_3)^{\varpi(g_1)} \; 
 =\; \tau(g_1g_2, g_3) \; \tau(g_1, g_2g_3)^{-1} \; g_3^*\tau(g_1, g_2)\;,\qquad\quad\forall\; g_1,g_2,g_3\in\n{G}
\end{equation}
and the set of the $2$-cocycles is denoted with 
${Z}^2_{\mathrm{group}}(\n{G}; C(X, \n{U}(1))_\varpi)$. Then,  condition  (d) can be rephrased by saying that
$\tau$ is a group $2$-cocycle of $\n{G}$ with values in the  $\n{G}$-bimodule $C(X, \n{U}(1))_\varpi$. Without loss of generality one can assume that the $2$-cocycle $\tau$ is \emph{normalized}, \ie $\tau(e,e)=1$ where $e$ is the unit of $\n{G}$. By combining together
2-cocycle condition \eqref{eq:2cocy-cond} and the {normalization} condition one gets
\begin{equation}\label{eq:constr_cocyc}
\tau(g,e)\;=\;1\;=\;\tau(e,g)\;,\qquad\quad g\in\n{G}\;.
\end{equation}

\medskip 

The following definition is adapted from \cite[Definition 7.23]{freed-moore-13}.

\begin{definition}[Twisted equivariant Hilbert bundle] \label{dfn:twisted_bundle_ungraded}
Let $X$, $\n{G}$, $\varpi$, $c$, $\tau$ be as in Assumption \ref{assumption:freed_moore_K}.
A \emph{$(\varpi, c,\tau)$-twisted $\n{G}$-equivariant (ungraded) Hilbert bundle} on $X$ with $C\ell^{r,s}$-action (or a \emph{twisted bundle} for short) is a Hilbert bundle $\pi:\bb{E}\to X$ equipped with the following data:
\begin{itemize}
\item[(a)] A
\emph{twisted $\n{G}$-action} provided by a  representation $\rho$ of $\n{G}$ on the total space  $\bb{E}$
 which covers the left action of $\n{G}$ on $X$ according to the following diagram
$$
\begin{diagram}
\bb{E}&\rTo^{\rho(g)}&  \bb{E}\\
\dTo^{\pi}&&\dTo_{\pi}\\
X&\rTo_{g}&  X\\
\end{diagram} \qquad\quad \forall\;g\in\n{G}
$$
 and which
 satisfies
\begin{align*}
 \rho(g)^*  &\;=\;  \rho(g)^{-1}&\quad&\text{(isometry)} \\\vspace{1mm}
 \rho(g)\; \ii\n{1} &\;=\; \varpi(g)\ii\; \rho(g)&\quad&\text{(linearity vs. anti-linearity)} \\\vspace{1mm}
\rho(g_1)\rho(g_2) &\;=\; \tau(g_1, g_2) \rho(g_1g_2)&\quad&\text{(projective representation)}
\end{align*}
for all $g,g_1,g_2\in\n{G}$.\vspace{1mm}
\item[(b)] A \emph{(unitary)  action}
$\gamma$ of the Clifford algebra
$C\ell^{r,s}$ 
 (in the sense of Definition \ref{def:cliff_act})
which meets the compatibility condition
$$
\gamma(a) \rho(g) \;=\; c(g)\; \rho(g) \gamma(a)\qquad\quad\text{(Koszul sign rule)}
$$
for all $a\in C\ell^{r,s}$ and $g\in\n{G}$.
\end{itemize}
A homomorphism $f : (\bb{E}, \rho, \gamma) \to (\bb{E}', \rho', \gamma')$ of twisted bundles  is  a complex linear map $f : \bb{E} \to \bb{E}'$ which covers the identity of $X$ and satisfies
\begin{align*}
f \circ \rho(g) &\;=\; \rho'(g) \circ f \\
f \circ \gamma(a) &\;=\; \gamma'(a) \circ f
\end{align*}
for all $a\in C\ell^{r,s}$ and $g\in\n{G}$.
When $f$ is bijective $(\bb{E}, \rho, \gamma)$ and $(\bb{E}', \rho', \gamma')$ are called \emph{isomorphic}. 
\end{definition}

\medskip

\noindent
Definition \ref{dfn:twisted_bundle_ungraded} 
works simultaneously for finite  
and infinite rank bundles. In the finite rank case the underlying  Hilbert bundle reduces 
to nothing more than 
a
finite rank Hermitian vector bundle. In both cases the structures (a) and (b) are defined in the same way. The infinite rank case will be relevant in  Section \ref{subsec:infinite_dimensional_Karoubi}.

\begin{remark}[PUA-representation]\label{rk:PUA-rep}
{\upshape 
Condition (a) in Definition \ref{dfn:twisted_bundle_ungraded} can  be summarized by saying that the Hilbert bundle $\pi:\bb{E}\to X$ is endowed with the action of a (finite) group $\n{G}$ induced by the representation $\rho:\n{G}\to {\tt QS}(\rr{H}_{\bb{E}})\cap {\rm Cov}(\bb{E})$ 
which associates to any $g\in\n{G}$ the covariant quantum symmetry $\rho(g)$. More precisely, according to Definition \ref{def:cov_endo} 
the operator $\rho(g)$ is induced by a covariant endomorphism of the bundle $\bb{E}$ and it is unitary or anti-unitary. The presence of the $2$-cocycle $\tau$ implies that the representation $\rho$ is projective. In summary $\rho$ is a covarianta \emph{PUA-representation}\footnote{PUA stands for projective, unitary, anti-unitary.} in the sense of \cite{parthasarathy-69,thiang-16}.}
\hfill $\blacktriangleleft$
\end{remark}

\medskip

The separation between  PUA-representation and Clifford action in Definition \ref{dfn:twisted_bundle_ungraded} is in some sense artificial although useful for the $K$-theory construction described  below. In fact the Clifford action can always be seen as a factor of an extended PUA-representation or, on the contrary, a PUA-representation can contain a Clifford action.
\begin{proposition}[PUA-representation vs. Clifford action]\label{prop:reduc_clifford}
There is a categorical correspondence between twisted equivariant Hilbert bundles with Clifford action and twisted equivariant Hilbert bundles without Clifford action.
\end{proposition}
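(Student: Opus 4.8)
The plan is to absorb the Clifford action into an enlarged projective symmetry group, turning a bundle with $C\ell^{r,s}$-action over the datum $(X,\n{G},\varpi,c,\tau)$ of Assumption \ref{assumption:freed_moore_K} into an ordinary twisted equivariant bundle (with trivial Clifford action) over an extended datum $(X,\hat{\n{G}},\hat\varpi,\hat\tau)$. First I would take $\n{A}:=(\Z_2)^{r+s}$ and set $\hat{\n{G}}:=\n{G}\times\n{A}$ as an abstract group, with the intention that the standard generators $\epsilon_1,\dots,\epsilon_{r+s}$ of $\n{A}$ map to the Clifford generators $\gamma_i:=\gamma(e_i)$. The anticommutation relations $\gamma_i\gamma_j=-\gamma_j\gamma_i$ $(i\neq j)$, the signs of the squares $\gamma_i^2=\pm\n{1}$, and the Koszul rule $\gamma(a)\rho(g)=c(g)\rho(g)\gamma(a)$ are then recorded in the twisting cocycle rather than in the group law. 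Fixing the ordered monomials $\gamma^a:=\gamma_1^{\,n_1}\cdots\gamma_{r+s}^{\,n_{r+s}}$ for $a=(n_1,\dots,n_{r+s})\in\n{A}$, the Clifford relations give $\gamma^a\gamma^{a'}=\sigma(a,a')\,\gamma^{a+a'}$ with a $\{\pm1\}$-valued cocycle $\sigma\in Z^2_{\mathrm{group}}(\n{A};\Z_2)$ determined by $(r,s)$ and the chosen ordering.

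Next I would define the extended data by $\hat\varpi(g,a):=\varpi(g)$ (the Clifford generators are unitary, hence linear) and
\begin{equation}
\hat\tau\big((g,a),(g',a')\big)\;:=\;c(g')^{|a|}\;\sigma(a,a')\;\tau(g,g')\;,\qquad |a|\;:=\;\textstyle\sum_i n_i\ (\mathrm{mod}\ 2)\;,
\end{equation}
where $c(g')^{|a|}$ is the cumulative Koszul sign produced by commuting $\gamma^a$ past $\rho(g')$. Because $\sigma$ and the Koszul signs take values in $\{\pm1\}\subset\n{U}(1)$, which are fixed by complex conjugation, they do not interact with the $\hat\varpi$-twisting carried by $\tau$. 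The functor in one direction sends $(\bb{E},\rho,\gamma)$ to the pair $(\bb{E},\hat\rho)$ with $\hat\rho(g,a):=\rho(g)\,\gamma^a$; the identity $\gamma^a\rho(g')=c(g')^{|a|}\rho(g')\gamma^a$ together with the projective law $\rho(g)\rho(g')=\tau(g,g')\rho(gg')$ then yields $\hat\rho(g,a)\hat\rho(g',a')=\hat\tau((g,a),(g',a'))\,\hat\rho(gg',a+a')$, so $\hat\rho$ is an $(\hat\varpi,\hat\tau)$-twisted $\hat{\n{G}}$-representation.

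The inverse functor is restriction: from $(\bb{E},\hat\rho)$ I would recover $\rho(g):=\hat\rho(g,0)$ and $\gamma_i:=\hat\rho(e,\epsilon_i)$, reading the Clifford relations off $\sigma=\hat\tau|_{\n{A}\times\n{A}}$ and the Koszul rule off the cross terms of $\hat\tau$. Since a bundle morphism intertwines $\rho$ and $\gamma$ precisely when it intertwines all the combined operators $\hat\rho(g,a)$, the two constructions identify the Hom-sets, are mutually inverse on objects, and are compatible with the underlying Hilbert bundle; this gives the asserted equivalence of categories. The hard part is to check that $\hat\tau$ obeys the $2$-cocycle identity \eqref{eq:2cocy-cond} over $\hat{\n{G}}$, and here I would exploit that $\hat\tau$ factors as the product of three cocycles: the pullback of $\tau$ along $\hat{\n{G}}\to\n{G}$, the pullback of $\sigma$ along $\hat{\n{G}}\to\n{A}$, and the bi-multiplicative Koszul term $\kappa\big((g,a),(g',a')\big):=c(g')^{|a|}$. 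The first is a $\hat\varpi$-twisted cocycle by naturality of pullbacks, the second reduces to the (untwisted) cocycle identity for $\sigma$ since it is $\{\pm1\}$-valued and $X$-independent, and a short direct computation shows that $\kappa$ is itself a cocycle because $c$ is a homomorphism and $a\mapsto|a|$ is additive mod $2$; the product of cocycles being a cocycle then closes the argument.
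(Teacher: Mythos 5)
Your construction is correct and takes a genuinely different route from the paper's. The paper proves Proposition \ref{prop:reduc_clifford} by induction, absorbing \emph{one} Clifford generator at a time: it passes from $\n{G}$ to $\n{G}\times\Z_2$ via $\rho'((g,-1)):=\rho(g)\gamma_{r+s}$, extends the cocycle by the small table \ref{tab:rule_tau'}, and crucially extends the homomorphism $c$ to $c'((g,\pm 1)):=\pm c(g)$ (this extension is \emph{forced} at intermediate stages, since the remaining generators $\gamma_1,\dots,\gamma_{r+s-1}$ must still satisfy the Koszul rule with respect to the enlarged group); after $r+s$ iterations all generators are absorbed. You instead absorb all generators at once, using the standard presentation of $C\ell^{r,s}$ as a $\sigma$-twisted group algebra of $\n{A}=(\Z_2)^{r+s}$, and obtain the closed formula $\hat\tau=\kappa\cdot q^*\sigma\cdot p^*\tau$ together with a clean, factorized verification of the cocycle identity (your check that $\kappa((g,a),(g',a'))=c(g')^{|a|}$ is a cocycle is right, and works because the values $\pm1$ are fixed by conjugation and independent of $X$). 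Your formula also has the merit of carrying the signs $\gamma_i^2=\pm1$ explicitly through $\sigma(a,a')$, a bookkeeping that the paper's table glosses over; conversely the paper's step-by-step extension is lighter notationally and never needs the ordering conventions implicit in your monomials $\gamma^a$.

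There is, however, one concrete omission you should repair: your extended datum $(X,\hat{\n{G}},\hat\varpi,\hat\tau)$ has no homomorphism $\hat c$, whereas in the paper's framework (Assumption \ref{assumption:freed_moore_K}) the homomorphism $c$ is part of the data of a twisted bundle, and the paper's construction produces, after iteration, exactly $\hat c(g,a)=(-1)^{|a|}c(g)$. For the bare correspondence of bundles this does not matter, because once no Clifford action is left the Koszul rule is vacuous; but the homomorphism $c$ also governs gradations (Definition \ref{def_grad} requires $\Gamma\rho(g)=c(g)\rho(g)\Gamma$), and a gradation of $(\bb{E},\rho,\gamma)$ \emph{anti}commutes with each $\gamma_i=\hat\rho(e,\epsilon_i)$. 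Hence $\Gamma$ is a gradation of $(\bb{E},\hat\rho)$ only relative to the homomorphism $\hat c$ with $\hat c(e,\epsilon_i)=-1$; with $\hat c$ trivial (or absent) your correspondence fails to match gradations, and therefore cannot be fed into the $K$-groups of Definition \ref{def_K_group}, which is the purpose of the proposition. The fix is one line: set $\hat c(g,a):=(-1)^{|a|}\,c(g)$, which is a homomorphism $\hat{\n{G}}\to\Z_2$ since $c$ is one and $a\mapsto|a|$ is additive mod $2$; with this addition your one-shot construction reproduces the paper's result in full, gradations included.
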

\proof[{Proof} (sketch of)]
Let us start with a 
$(\varpi, c,\tau)$-twisted $\n{G}$-equivariant Hilbert bundle $\pi:\bb{E}\to X$ equipped with a
$C\ell^{r,s}$-action. Consider the extended group $\n{G}':=\n{G}\times \Z_2$ with $\Z_2=\{\pm 1\}$ and let $p:\n{G}'\to\n{G}$ be the first factor projection given by $p((g,\epsilon))=g$ for all $g\in\n{G}$ and $\epsilon\in\Z_2$. The action of $\n{G}'$ on $X$ is defined as follows: $g' x=p(g')x$
for all $g':=(g,\epsilon)\in\n{G}$ and $x\in X$. 
The homomorphisms $\varpi$ and $c$ can be extended to homomorphisms $\varpi':\n{G}'\to\Z_2$ and $c':\n{G}'\to\Z_2$ as follows: $\varpi':=\varpi\circ p$ and $c'((g,\pm1))=\pm c(g)$ for all $g\in\n{G}$. Also the 2-cocycle $\tau$ can be extended to a 2-cocycle $\tau'\in {Z}^2_{\mathrm{group}}(\n{G}'; C(X, \n{U}(1))_{\varpi'})$ by following the prescription of Table \ref{tab:rule_tau'}.
%
 \begin{table}[h]
 \centering
 \begin{tabular}{|c||c|c|c|c|c|c|c|c|c|c|}
 \hline
$\tau'$ & $(g_2,+1)$    & $(g_2,-1)$  
\\
\hline
 \hline
\rule[-3mm]{0mm}{9mm}
$(g_1,+1)$ & $\tau(g_1,g_2)$  & $\tau(g_1,g_2)$ \\
\hline
 \rule[-3mm]{0mm}{9mm}
$(g_1,-1)$ & $c(g_1)\;\tau(g_1,g_2)$  & $c(g_2)\;\tau(g_1,g_2)$ 
\\
\hline
 \end{tabular}
 \vspace{2mm}
 \caption{\footnotesize Prescription for the extended 2-cocycle $\tau'$ on $\n{G}\times \Z_2$.}
\label{tab:rule_tau'}
 \end{table}
 %
The map $\rho':\n{G}'\to {\tt QS}(\rr{H}_{\bb{E}})\cap {\rm Cov}(\bb{E})$ defined by 
$\rho'((g,+1)):=\rho(g)$ and $\rho'((g,-1)):=\rho(g)\gamma_{r+s}$ for all $g\in\n{G}$ provides a PUA-representation of 
$\n{G}'$ on the 
underlying  Hilbert bundle $\bb{E}$.
Therefore $\pi:\bb{E}\to X$ can be  equivalently seen as a $(\varpi', c',\tau')$-twisted $\n{G}'$-equivariant Hilbert bundle $\pi:\bb{E}\to X$ equipped with a
$C\ell^{r,s-1}$-action. By repeating this construction $r+s$ times one finally gets a 
twisted Hilbert bundle without an explicit Clifford action but with an extended  PUA-representation which encodes the original Clifford action.
\qed

\medskip 

As in the case of standard untwisted  bundles, one can take the direct sum of twisted bundles, and pull a $\tau$-twisted bundle on $X$ back to a $(\varphi^*\tau)$-twisted bundle on $Y$ by a $\n{G}$-equivariant map $\varphi:Y \to X$. Notice that the so-called homotopy property holds true for twisted bundles, and homotopy equivalent maps induce isomorphic twisted bundles by pullback (up to a canonical isomorphism matching the induced twists). This is because the argument to prove the homotopy property in \cite{atiyah-67} can be applied to twisted bundles, in view of the fact that for any $(\bb{E}, \rho, \gamma)$ and $(\bb{E}', \rho', \gamma')$   the complex vector bundle $\mathrm{Hom}(\bb{E}, \bb{E}') \simeq \bb{E}^* \otimes \bb{E}'$ gives rise to a usual $\n{G}$-equivariant vector bundle with commuting $C\ell^{r,s}$-action whose invariant sections are in bijective correspondence with the homomorphisms of the twisted bundles. 

\medskip

\begin{definition}[Gradation]\label{def_grad}
Let $(\bb{E}, \rho, \gamma)$ be a $(\varpi, c,\tau)$-twisted $\n{G}$-equivariant (ungraded) Hilbert bundle on $X$
 as in Definition \ref{dfn:twisted_bundle_ungraded}. A \textit{gradation} (or \textit{$\Z_2$-grading}) of $(\bb{E}, \rho, \gamma)$ is a $\Gamma\in{\rm End}_{\rm lin}(\bb{E})$
such that:
\begin{itemize}
\item[(a)] $\Gamma$ is a  self-adjoint involution, \ie $\Gamma=\Gamma^*$ and $\Gamma^2=\n{1}$;
\vspace{1mm}
\item[(b)] The relations 
\begin{align*}
\Gamma \rho(g) &\;=\; c(g)\; \rho(g) \Gamma \\
\Gamma\gamma(a) &\;=\; -\gamma(a)\Gamma
\end{align*}
hold for all $a\in C\ell^{r,s}$ and $g\in\n{G}$.
\end{itemize}
\end{definition}

\medskip

\noindent
As explained in Section \ref{sec:hilb_pict}, the condition $\Gamma\in{\rm End}_{\rm lin}(\bb{E})$ means that $\Gamma$ can be seen as a 
continuous  map that associates to each $x\in X$ a linear operator $\Gamma_x$ on the fiber $\bb{E}_x$, such that $\Gamma_x=\Gamma_x^*$ and $\Gamma_x^2=\n{1}_x$. Then a gradation is nothing but a continuous family of Hermitian operators (or Hermitian matrices) squaring to the identity. Notice that the Clifford action $\gamma$  is required to be \emph{odd} with respect to the gradation $\Gamma$.
\begin{remark}[Gapped Hamiltonians and gradations]\label{rk:grad-gap}
{\upshape 
A (non-trivial) \emph{gapped Hamiltonian} is a  self-adjoint 
element $H\in{\rm End}_{\rm lin}(\bb{E})$
which admits a continuous function $\lambda:X\to\R$ such that $H_x-\lambda(x)\;\n{1}_x$ is invertible and compact in each fiber $\s{H}_x:=\pi^{-1}(x)$ and
\begin{equation}\label{eq:gap_cond}
\min\; \sigma(H_x)\;<\;\lambda(x)\;<\;\max\; \sigma(H_x)
\end{equation}
where $\sigma(H_x)\subset\R$ is the spectrum of $H_x$. In this case the self-adjoint operator
$$
\Gamma_H|_{\s{H}_x}\;:=\;\frac{H_x-\lambda(x)\;\n{1}_x}{\big|H_x-\lambda(x)\;\n{1}_x\big|}
$$
is fiberwise well-defined, $\Gamma_H|_{\s{H}_x}\neq \n{1}_x$ in view of \eqref{eq:gap_cond} and 
$(\Gamma_H|_{\s{H}_x})^2=\n{1}_x$. The collection of the $\Gamma_H|_{\bb{E}_x}$ defines a gradation $\Gamma_H:\bb{E}\to \bb{E}$ in the sense of Definition \ref{def_grad} (a).
Moreover,
 condition (b) is satisfied if and only if
$H\rho(g)=c(g)\rho(g)H$ and $H\gamma(a)=-\gamma(a)H$ and this
  requires certain symmetries of the spectra $\sigma(H_x)$. The transformation $H\to\Gamma_H$ is induced by a homotopy as showed in Remark \ref{rk:more_structure}.
}\hfill $\blacktriangleleft$
\end{remark}

\medskip

Two gradations $\Gamma_0$ and $\Gamma_1$ on $(\bb{E}, \rho, \gamma)$ are said to be \emph{homotopic} if there is a gradation $\hat{\Gamma}$ on $p^*(\bb{E}, \rho, \gamma)$ such that $\hat{\Gamma}|_{X \times \{ j \}} = \Gamma_j$ for $j = 0, 1$.  Here 
$p : X \times [0, 1] \to X$ is the equivariant projection from the $\n{G}$-space $X \times [0, 1]$ (with trivial action on the second component) onto the $\n{G}$-space $X$. It is clear from this definition that a homotopy of gradations $\hat{\Gamma}$ amounts to a continuous  family of gradations ${\Gamma}(t)$ on $\bb{E}$ such that ${\Gamma}(0) = \Gamma_0$ and ${\Gamma}(1) = \Gamma_1$.

\medskip

Let us focus on  \textit{triples} $(\bb{E}, \Gamma_0, \Gamma_1)$ given by a  $(\varpi, c,\tau)$-twisted $\n{G}$-equivariant (finit rank) vector bundle $(\bb{E}, \rho, \gamma)$  on $X$ as in Definition \ref{dfn:twisted_bundle_ungraded} and two gradations $\Gamma_0$ and $\Gamma_1$ as in Definition \ref{def_grad}. One has the following two structure:
\begin{itemize}

\item \emph{Isomorphism:}
Two triples $(\bb{E}, \Gamma_0, \Gamma_1)$ and $(\bb{E}', \Gamma'_0, \Gamma'_1)$ are {isomorphic} if there is an isomorphism $f : \bb{E} \to \bb{E}'$ of twisted bundles such that $f \circ \Gamma_i = \Gamma'_i\circ f$ for both $i=0,1$;
\vspace{1mm}
\item \emph{Direct sum:} The triple
$$
\big(\bb{E}, \Gamma_0, \Gamma_1\big)\; \oplus\; \big(\bb{E}', \Gamma'_0, \Gamma'_1\big)\; :=\; \big(\bb{E} \oplus \bb{E}', \Gamma_0 \oplus \Gamma'_0, \Gamma_1 \oplus \Gamma'_1\big)
$$
is called the  direct sum of
 $(\bb{E}, \Gamma_0, \Gamma_1)$ and $(\bb{E}', \Gamma'_0, \Gamma'_1)$.
\end{itemize}

\noindent
Clearly, the direct sum $\bb{E} \oplus \bb{E}'$ inherits the natural twisted structure induced by the direct sum of the PUA-representations $\rho\oplus\rho'$ and by the direct sum of the Clifford actions $\gamma\oplus\gamma'$.

\begin{definition}[$K$-groups - finite rank case]\label{def_K_group}
Let $X$, $\n{G}$, $\varpi$, $c$, $\tau$ be as in Assumption \ref{assumption:freed_moore_K}. Then:

\begin{itemize}
\item
 ${}^\varpi \s{M}^{(\tau, c) + (r, s)}_{\n{G}}(X)$ is the \emph{abelian monoid} of isomorphism classes of triples $(\bb{E}, \Gamma_0, \Gamma_1)$ with addition  given by the direct sum of triples; 
\vspace{1mm}
\item
${}^\varpi \s{Z}^{(\tau, c) + (r, s)}_{\n{G}}(X)$ is the submonoid consisting of isomorphism classes of triples $(\bb{E}, \Gamma_0, \Gamma_1)$ such that $\Gamma_0$ and $\Gamma_1$ are \emph{homotopy equivalent};
\vspace{1mm}
\item
 ${}^\varpi \s{K}^{(\tau, c) + (r, s)}_\n{G}(X) := {}^\varpi \s{M}^{(\tau, c) + (r, s)}_{\n{G}}(X)/{}^\varpi \s{Z}^{(\tau, c) + (r, s)}_{\n{G}}(X)$ is  the \emph{quotient monoid}.
\end{itemize}
\end{definition}

\medskip

\noindent
The following result is crucial and  is based on \cite[Chap. III, Lemma 4.16]{karoubi-78}.

\begin{lemma}
The monoid ${}^\varpi \s{K}^{(\tau, c) + (r, s)}_\n{G}(X)$ is an abelian group in which the additive inverse of $[(\bb{E}, \Gamma_0, \Gamma_1)]$ is given by $[(\bb{E}, \Gamma_1, \Gamma_0)]$.
\end{lemma}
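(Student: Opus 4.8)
The plan is to check that the commutative monoid ${}^\varpi \s{K}^{(\tau, c) + (r, s)}_\n{G}(X)$ has additive inverses, since commutativity, associativity and the existence of a zero are already built into the quotient-monoid structure. Indeed, the direct sum of triples is commutative up to the fibrewise swap isomorphism $(\bb{E} \oplus \bb{E}',\ldots) \cong (\bb{E}' \oplus \bb{E},\ldots)$, which manifestly intertwines $\rho \oplus \rho'$, $\gamma \oplus \gamma'$ and the gradations; and by construction the zero element of the quotient is the image of the submonoid ${}^\varpi \s{Z}^{(\tau, c) + (r, s)}_{\n{G}}(X)$, i.e. the class of any triple whose two gradations are homotopic. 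So the heart of the matter is to show that
$$
\big[(\bb{E}, \Gamma_0, \Gamma_1)\big]\; +\; \big[(\bb{E}, \Gamma_1, \Gamma_0)\big]\;=\;\big[(\bb{E} \oplus \bb{E},\; \Gamma_0 \oplus \Gamma_1,\; \Gamma_1 \oplus \Gamma_0)\big]
$$
is the zero element, i.e. lies in ${}^\varpi \s{Z}^{(\tau, c) + (r, s)}_{\n{G}}(X)$. By the definition of the submonoid this reduces to exhibiting a homotopy of gradations on $\bb{E} \oplus \bb{E}$ connecting $\Gamma_0 \oplus \Gamma_1$ to $\Gamma_1 \oplus \Gamma_0$.

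Following Karoubi [Chap. III, Lemma 4.16], I would build this homotopy from the explicit rotation $R(t) \in {\rm End}_{\rm lin}(\bb{E} \oplus \bb{E})$ defined, for $t \in [0, 1]$, by the real block matrix
$$
R(t)\;:=\;\begin{pmatrix} \cos\!\big(\tfrac{\pi}{2}t\big)\,\n{1} & \sin\!\big(\tfrac{\pi}{2}t\big)\,\n{1} \\ -\sin\!\big(\tfrac{\pi}{2}t\big)\,\n{1} & \cos\!\big(\tfrac{\pi}{2}t\big)\,\n{1} \end{pmatrix}
$$
and by setting $\Gamma(t) := R(t)\,(\Gamma_0 \oplus \Gamma_1)\,R(t)^{-1}$. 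A direct computation gives $\Gamma(0) = \Gamma_0 \oplus \Gamma_1$ and, since conjugation by the rotation through the angle $\pi/2$ interchanges the two summands, $\Gamma(1) = \Gamma_1 \oplus \Gamma_0$. It then remains to verify that $\Gamma(t)$ is a gradation for every $t$ in the sense of Definition \ref{def_grad}. Property (a) is immediate: $R(t)$ is a fibrewise unitary, so conjugation by it preserves self-adjoint involutions. For property (b) the decisive observation is that the entries of $R(t)$ are \emph{real} scalars times $\n{1}$, acting only on the two-copy index; hence $R(t)$ commutes with the diagonally acting operators $\rho(g) \oplus \rho(g)$ and $\gamma(a) \oplus \gamma(a)$, and — crucially — this persists when $\rho(g)$ is anti-linear ($\varpi(g) = -1$), precisely because real coefficients are unaffected by the complex conjugation. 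Conjugating the relations $(\Gamma_0 \oplus \Gamma_1)(\rho(g) \oplus \rho(g)) = c(g)(\rho(g) \oplus \rho(g))(\Gamma_0 \oplus \Gamma_1)$ and the analogous anticommutation with $\gamma(a) \oplus \gamma(a)$ by $R(t)$, and using $R(t)^{-1}\rho(g) = \rho(g)R(t)^{-1}$ together with the reality of the sign $c(g)$, one obtains $\Gamma(t)\rho(g) = c(g)\rho(g)\Gamma(t)$ and $\Gamma(t)\gamma(a) = -\gamma(a)\Gamma(t)$. Thus $t \mapsto \Gamma(t)$ is a continuous family of gradations, i.e. the required homotopy, and consequently $[(\bb{E}, \Gamma_1, \Gamma_0)]$ is the additive inverse of $[(\bb{E}, \Gamma_0, \Gamma_1)]$.

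The step I expect to be the main obstacle is exactly this last verification that the single homotopy $\Gamma(t)$ respects \emph{all} of the twisted-equivariant data at once — in particular the compatibility with the anti-linear part of the PUA-representation $\rho$ and the oddness against the Clifford action $\gamma$ throughout the path. What makes it go through is the reality of the rotation coefficients: a homotopy built from genuinely complex scalars (of $\expo{\ii \theta}$ type) would fail to commute with the anti-linear symmetries and would not produce gradations along the path, so one must insist on the orthogonal rotation above rather than a complex interpolation.
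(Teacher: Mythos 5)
Your proposal is correct and takes essentially the same route as the paper: the paper's proof also reduces the claim to a homotopy between $\Gamma_0 \oplus \Gamma_1$ and $\Gamma_1 \oplus \Gamma_0$ on $\bb{E}\oplus\bb{E}$, realized by conjugation with the real matrices $T_\theta$ of \eqref{eq:T-theta} (self-adjoint reflections rather than your rotations $R(t)$, a purely cosmetic difference). Your explicit check that the real coefficients make the conjugation commute with the anti-linear part of $\rho$ and preserve the Clifford anticommutation is precisely the point the paper's one-line proof leaves implicit, so the verification is sound.
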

\begin{proof}
It suffices to prove that the gradations $\Gamma_0 \oplus \Gamma_1$ and $\Gamma_1 \oplus \Gamma_0$ on $\bb{E} \oplus \bb{E}$ are homotopy equivalent. This fact follows by observing that 
such a homotopy is realized by the  map 
\begin{align*}
[0,\pi/2]\;\ni\;\theta\;\mapsto\; {\Gamma}(\theta)
\;:=\;
T_\theta\;\left(
\begin{array}{cc}
\Gamma_0 & 0 \\
0 & \Gamma_1
\end{array}
\right)
\;T_\theta \\
\end{align*}
where the matrix $T_\theta$ is defined in
\eqref{eq:T-theta}.
\end{proof}

\medskip

{
The groups ${}^\varpi \s{K}_\n{G}^{(\tau, c) + (r, s)}(X)$ have the following periodicity \cite{gomi-17}.
}

\begin{proposition}[Periodicity] \label{prop:freed_moore_K_property}
Let $X$, $\n{G}$, $\varpi$, $c$, $\tau$ be as in Assumption \ref{assumption:freed_moore_K} and ${}^\varpi \s{K}^{(\tau, c) + (r, s)}_\n{G}(X)$ the abelian group introduced in Definition \ref{def_K_group}. Then  ${}^\varpi \s{K}_\n{G}^{(\tau, c) + (r, s)}(X)$ is subject to the following periodicities:
\begin{align*}
{}^\varpi \s{K}_\n{G}^{(\tau, c) + (r, s)}(X)
&\simeq\;
{}^\varpi \s{K}_\n{G}^{(\tau, c) + (r+1, s+1)}(X) \\
&\simeq\;
{}^\varpi \s{K}_\n{G}^{(\tau, c) + (r + 8, s)}(X)\\
&\simeq\;
{}^\varpi \s{K}_\n{G}^{(\tau, c) + (r, s + 8)}(X)\;.
\end{align*}
In the case  $\varpi\equiv+1$ is trivial, the group $\s{K}_\n{G}^{(\tau, c) + (r, s)}(X) := {}^{+1} \s{K}_\n{G}^{(\tau, c) + (r, s)}(X)$ is subject to the extra reduced periodicities:
\begin{align*}
\s{K}_\n{G}^{(\tau, c) + (r, s)}(X)
&\simeq\;
\s{K}_\n{G}^{(\tau, c) + (r + 2, s)}(X)\\
&\simeq\;
\s{K}_\n{G}^{(\tau, c) + (r, s + 2)}(X)\;.
\end{align*} 
\end{proposition}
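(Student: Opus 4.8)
The plan is to reduce all four periodicities to purely Clifford-algebraic statements, exploiting the fact that the data $\rho$, $\varpi$ and $\tau$ play no role in these shifts: the functors I will construct act only on the underlying bundle $\bb{E}$ and on the Clifford action $\gamma$, leaving the $\n{G}$-equivariant twisted structure untouched up to the bookkeeping imposed by the Koszul sign homomorphism $c$. Following \cite[Chap.~III]{karoubi-78} and \cite{gomi-17}, I would show that each periodicity is induced by an equivalence between the respective categories of triples $(\bb{E},\Gamma_0,\Gamma_1)$, which then descends to the monoids ${}^\varpi \s{M}$, respects the submonoids ${}^\varpi \s{Z}$ of homotopically trivial triples (since the functors are continuous in the gradation), and hence yields an isomorphism of the quotient groups ${}^\varpi \s{K}$.

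First I would treat the fundamental $(1,1)$-periodicity. The key input is the graded isomorphism $C\ell^{r+1,s+1}\cong C\ell^{r,s}\,\hat{\otimes}\,C\ell^{1,1}$ together with the Morita triviality $C\ell^{1,1}\cong M_2(\R)$. Given a triple with a $C\ell^{r+1,s+1}$-action, I would isolate the extra pair of anticommuting generators $\gamma_{+}$ (with $\gamma_+^2=+\n{1}$) and $\gamma_{-}$ (with $\gamma_-^2=-\n{1}$) and use the product $\omega:=\gamma_+\gamma_-$, which satisfies $\omega^2=+\n{1}$, to split $\bb{E}$ into its $\pm$ eigenbundles; the restriction of the remaining generators $\gamma_1,\dots,\gamma_{r+s}$ and of the two gradations $\Gamma_0,\Gamma_1$ to one eigenbundle produces a $C\ell^{r,s}$-triple. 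Conversely, tensoring a $C\ell^{r,s}$-triple with the standard graded $C\ell^{1,1}$-module reinstates the two extra generators. I would then check that these two assignments are mutually inverse up to the homotopies already employed in the proof that ${}^\varpi\s{K}$ is a group.

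For the mod-$8$ periodicities I would invoke the Clifford isomorphisms $C\ell^{r+8,s}\cong C\ell^{r,s}\otimes C\ell^{8,0}$ and $C\ell^{r,s+8}\cong C\ell^{r,s}\otimes C\ell^{0,8}$ together with $C\ell^{8,0}\cong C\ell^{0,8}\cong M_{16}(\R)$. A Morita equivalence of Clifford algebras is implemented by tensoring with the irreducible $16$-dimensional spinor module, and this operation defines an equivalence between the category of $C\ell^{r,s}$-triples and that of $C\ell^{r+8,s}$-triples (respectively $C\ell^{r,s+8}$-triples). In the complex case $\varpi\equiv+1$ every $\rho(g)$ is complex-linear, so the relevant Clifford algebras are the complexified ones $\C\ell^{r+s}$, which obey the shorter relation $\C\ell^{n+2}\cong M_2(\C\ell^{n})$; the same spinor-module argument then delivers the reduced periodicities $\s{K}^{(\tau,c)+(r,s)}\simeq\s{K}^{(\tau,c)+(r+2,s)}\simeq\s{K}^{(\tau,c)+(r,s+2)}$.

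The main obstacle I anticipate is not the existence of the Clifford isomorphisms but the compatibility of the spinor-tensoring functors with the full twisted $\n{G}$-equivariant structure. I would have to verify that the bimodule implementing each Morita equivalence can be chosen so that the relations $\gamma(a)\rho(g)=c(g)\rho(g)\gamma(a)$, the oddness $\Gamma\gamma(a)=-\gamma(a)\Gamma$, and the mixed relation $\Gamma\rho(g)=c(g)\rho(g)\Gamma$ are all preserved simultaneously — in other words, that the new generators carry the correct Koszul signs and that $\rho$ and the gradations commute through the tensor factor as required. This sign bookkeeping, rather than any homotopy-theoretic difficulty, is the delicate part; once it is settled, functoriality under direct sums, pullbacks and the homotopy relation is routine, and the quoted isomorphisms follow.
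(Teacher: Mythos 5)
Your approach is correct in substance, but it is a genuinely different route from the one the paper takes. You stay inside the Karoubi picture and prove the periodicities by Clifford--Morita arguments: the $(1,1)$-shift via the eigenbundle splitting along $\omega=\gamma_+\gamma_-$ (equivalently $C\ell^{r+1,s+1}\cong C\ell^{r,s}\,\hat{\otimes}\,C\ell^{1,1}$ with $C\ell^{1,1}\cong M_2(\R)$), and the mod-$8$ shifts by tensoring with the $16$-dimensional spinor module of $C\ell^{8,0}\cong C\ell^{0,8}\cong M_{16}(\R)$. The paper never argues at the level of triples: as indicated below the statement and carried out in Remark \ref{rk:twist_period}, it identifies ${}^\varpi \s{K}_\n{G}^{(\tau, c) + (r, s)}(X)$ with the self-adjoint Fredholm theory ${}^\varpi \acute{\n{K}}_{\n{G}}^{(\tau, c) +(r,s)}(X)$ through Propositions \ref{prop:finite_vs_infinite_Karoubi} and \ref{prop:Karoubi_vs_Fredholm}, converts this to the skew-adjoint theory ${}^\varpi \n{K}_{\n{G}}^{(\hat{\tau}, c) + (s,r)}(X)$ by the recalibration $A\mapsto A\Gamma$ (which swaps $(r,s)\mapsto(s,r)$ and replaces $\tau$ by $\hat{\tau}$), and then quotes the periodicity of that theory from \cite[Lemma 3.5]{gomi-17}. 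Your route is self-contained and finite-dimensional, in the spirit of \cite[Chap.~III]{karoubi-78}; the paper's route is shorter and re-uses machinery it needs anyway (universal bundles, Fredholm families, the $\hat{\tau}$ bookkeeping that reappears in Section \ref{sec:explicit_comp}), at the price of outsourcing the actual periodicity mechanism to \cite{gomi-17}.

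Two points close the sign bookkeeping you rightly single out as the delicate part. First, the twisted $\n{G}$-structure is not literally left ``untouched'': a symmetry with $c(g)=-1$ must anticommute with the new Clifford generators, so under the tensoring functor $\rho(g)$ has to be replaced by $\rho(g)\otimes u_g$, with $u_g$ an even unitary on the spinor factor anticommuting with its generators; the canonical choice is the volume element. Second, the reason the bookkeeping closes precisely for $C\ell^{1,1}$, $C\ell^{8,0}$ and $C\ell^{0,8}$ is representation-theoretic: in these cases the irreducible graded module can be realized over $\R$, so generators and grading are real matrices and any anti-linear $\rho(g)$ (i.e.\ $\varpi(g)=-1$) extends across the factor by the standard conjugation, while the volume element is even, real and squares to $+\n{1}$, so that $\rho'(g_1)\rho'(g_2)=\tau(g_1,g_2)\,\rho'(g_1g_2)$ holds with the unchanged cocycle $\tau$. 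These properties fail in lower degree (for $C\ell^{2,0}\cong M_2(\R)$ the grading of the irreducible graded module is not real and the volume element squares to $-\n{1}$, while $C\ell^{0,2}$ is the quaternion algebra, not a real matrix algebra), which is exactly why no shorter periodicity holds for non-trivial $\varpi$; after complexification ($\varpi\equiv+1$) they are restored already in degree $2$, which is your reduced mod-$2$ statement.
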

\medskip

\noindent
{Some details about the proof of Proposition \ref{prop:freed_moore_K_property} will be provided in Remark \ref{rk:twist_period}.}

\medskip

\begin{remark}[Generalized cohomology]{\upshape 
 Proposition \ref{prop:freed_moore_K_property} shows that the groups ${}^\varpi \s{K}_\n{G}^{(\tau, c) + (r, s )}(X)$ depend only on the difference $s-r$ mod. 8 and this justify the following notation
$$
{}^\varpi \s{K}_\n{G}^{(\tau, c) + n}(X)\;:=\;  {}^\varpi \s{K}_\n{G}^{(\tau, c) + (0, n)}(X)\;\simeq\;{}^\varpi \s{K}_\n{G}^{(\tau, c) + (r, s )}(X)\;,\qquad\quad n=s-r\quad\text{mod.}\quad 8
$$
{The groups ${}^\varpi \s{K}_\n{G}^{(\tau, c) + n}(X)$ form a $\n{G}$-equivariant generalized cohomology theory, so that the homotopy axiom, the excision axiom, etc.\ hold true (see \eg \cite{freed-hopkins-teleman-11,gomi-17}).}
}\hfill $\blacktriangleleft$
\end{remark}


\subsection{Infinite-dimensional Karoubi's formulation}
\label{subsec:infinite_dimensional_Karoubi}
Definition \ref{def_K_group} is modeled on finit rank twisted  vector bundles.
In this Section we will focus on the  case
of twisted  Hilbert bundles $\pi:\bb{E}\to X$ of infinite rank. This means that each fiber of $\bb{E}$ is isomorphic to a separable Hilbert space $\s{H}$ of infinite dimension.
The infinite dimensionality allows the existence of a \emph{universal}  twisted bundle of infinite rank \cite[Section 2.6.]{gomi-17}:

\begin{lemma}[Universal  twisted bundle] \label{lem:locally_universal_bundle}
Let $X$, $\n{G}$, $\varpi$, $c$, $\tau$ be as in Assumption \ref{assumption:freed_moore_K}. There exists a $(\varpi, c,\tau)$-twisted $\n{G}$-equivariant Hilbert bundle $\bb{E}_{\mathrm{univ}}$ on $X$ with $C\ell^{r,s}$-action and a gradation $\Gamma_{\mathrm{univ}}$ which are \emph{locally universal}  in the following sense: For any $\n{G}$-invariant closed subspace $Y \subseteq X$ and any $(\varpi, c, \tau|_Y)$-twisted $\n{G}$-equivariant Hilbert bundle $\pi:\bb{E}\to Y$ with $C\ell^{r,s}$-action and  gradation $\Gamma$ on it, there is an embedding $\bb{E} \to \bb{E}_{\mathrm{univ}}|_Y$ of twisted bundles which preserves the gradations.
\end{lemma}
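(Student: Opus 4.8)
The strategy is to construct $\bb{E}_{\mathrm{univ}}$ by absorbing infinitely many copies of a single sufficiently rich twisted bundle, and then to deduce the local universality from the fact that every twisted bundle over a closed $\n{G}$-invariant subspace $Y$ can be complemented and embedded into a trivial twisted bundle of infinite rank. First I would fix a separable Hilbert space $\s{H}_0$ carrying a $(\varpi,c,\tau)$-twisted $\n{G}$-action $\rho_0$ together with a compatible unitary $C\ell^{r,s}$-action $\gamma_0$ and an odd gradation $\Gamma_0$; such data exist because one may take the regular-type representation of the relevant twisted group algebra (the crossed product of $C\ell^{r,s}$ with the $\tau$-twisted group ring of $\n{G}$) on a separable module, which is possible since $\n{G}$ is finite and $C\ell^{r,s}$ is finite-dimensional. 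The universal bundle is then the trivial bundle
$$
\bb{E}_{\mathrm{univ}}\;:=\;X\times\big(\s{H}_0\otimes\ell^2(\N)\big)
$$
with diagonal action $\rho_{\mathrm{univ}}:=\rho_0\otimes\n{1}$, $\gamma_{\mathrm{univ}}:=\gamma_0\otimes\n{1}$ and $\Gamma_{\mathrm{univ}}:=\Gamma_0\otimes\n{1}$. The point of tensoring with $\ell^2(\N)$ is to guarantee that the universal fibre absorbs countably many copies of any summand, which is the infinite-rank device that makes embeddings always available.

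Next I would establish the embedding property. Given $Y\subseteq X$ closed and $\n{G}$-invariant, and $(\bb{E},\rho,\gamma,\Gamma)$ over $Y$, the key classical input is that any finite-rank (hence any separable) twisted equivariant Hilbert bundle admits, after adding a suitable complement, an isomorphism onto a trivial bundle with fibre a multiple of $\s{H}_0$. Concretely, by the homotopy property and the remark in the excerpt identifying invariant sections of $\mathrm{Hom}(\bb{E},\bb{E}')$ with twisted-bundle homomorphisms, one reduces to producing, over the compact Hausdorff space $Y$, a $\n{G}$-equivariant, $C\ell^{r,s}$-linear, gradation-preserving isometric bundle map $\bb{E}\to Y\times(\s{H}_0\otimes\ell^2(\N))$. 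I would build this by a partition-of-unity argument: cover $Y$ by finitely many $\n{G}$-invariant (or $\n{G}$-orbit-translated) trivializing opens, choose on each a local embedding into a single copy $\s{H}_0$ (available since locally the twisted bundle is a module over the same twisted group algebra), and then patch these local embeddings into distinct $\ell^2(\N)$-slots using a subordinate invariant partition of unity, averaging over $\n{G}$ to enforce equivariance. The infinite multiplicity ensures the patched map is a genuine isometric embedding rather than merely a fibrewise-injective one, and compatibility with $\gamma$ and $\Gamma$ is preserved because all local models carry the same $\gamma_0,\Gamma_0$.

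The main obstacle will be the equivariance and the compatibility with \emph{anti-linear} symmetries simultaneously with the projective ($\tau$-twisted) structure: averaging a local isometry over $\n{G}$ is delicate when some $\rho(g)$ are anti-unitary and when $\rho$ is only projective, so the naive average $\sum_g \rho_{\mathrm{univ}}(g)\circ(\text{local map})\circ\rho(g)^{-1}$ must be normalized carefully so that the cocycle factors $\tau(g_1,g_2)$ cancel and the result remains a bundle map covering the identity. I would handle this by working inside the twisted group algebra where $\tau$ is absorbed into the product, so that the averaged operator is an honest intertwiner, and then correct its failure to be isometric by polar decomposition (taking the unitary part, which exists fibrewise because the averaged map is fibrewise injective by the multiplicity trick). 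A secondary technical point is ensuring continuity in the strong topology on the infinite-rank structure group ${\tt U}(\s{H})$, for which Assumption \ref{Ass:reg_X} (compact Hausdorff base) and the strong-topology discussion of Remark \ref{rk:top_strong} suffice. Once the isometric embedding $\bb{E}\hookrightarrow\bb{E}_{\mathrm{univ}}|_Y$ is produced compatibly with $\rho,\gamma,\Gamma$, the lemma follows.
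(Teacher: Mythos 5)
Your overall strategy (an infinite-multiplicity absorbing bundle plus partition-of-unity patching of local embeddings) is the right one, and it is essentially the approach of the source the paper itself relies on: the paper gives no proof of this lemma, deferring entirely to \cite[Section 2.6]{gomi-17}. However, your construction of $\bb{E}_{\mathrm{univ}}$ has a genuine gap at the very first step. You posit a fixed Hilbert space $\s{H}_0$ carrying ``a $(\varpi,c,\tau)$-twisted $\n{G}$-action $\rho_0$'' and take the constant diagonal action $\rho_0\otimes\n{1}$ on $X\times(\s{H}_0\otimes\ell^2(\N))$. This is not meaningful for a general twist: $\tau$ takes values in $C(X,\n{U}(1))_\varpi$, not in $\n{U}(1)$, so the relation $\rho(g_1)\rho(g_2)=\tau(g_1,g_2)\,\rho(g_1g_2)$ involves multiplication by $x$-dependent functions, whereas with a constant family $u_g(x)=\rho_0(g)$ both sides would be independent of $x$. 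Concretely, take $X=S^1$, $\n{G}=\Z_2=\{e,\sigma\}$ acting trivially on $X$, $\varpi\equiv+1$, $c\equiv+1$, and the normalized cocycle with $\tau(\sigma,\sigma)=f$ where $f:S^1\to\n{U}(1)$ has odd winding number. Twisted bundles for this $\tau$ do exist, e.g.\ $X\times\C^2$ with $\rho(\sigma)(x,(a,b)):=(x,(f(x)b,a))$, which satisfies $\rho(\sigma)^2=f\,\n{1}$; but a constant unitary $u=\rho_0(\sigma)$ would have to satisfy $u^2=f(x)\n{1}$ for every $x$, which is impossible since the left-hand side does not depend on $x$. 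Nor can you escape by replacing $\tau$ with a cohomologous cocycle: modifying $\tau$ by a coboundary changes $\tau(\sigma,\sigma)$ into $f\beta(\sigma)^2$, whose winding number stays odd, so $\tau$ is not cohomologous to any constant cocycle. The same confusion underlies your phrase ``twisted group ring of $\n{G}$'': the relevant algebra is not finite dimensional but the crossed product $C(X)\rtimes_{\tau,\varpi}\n{G}$ (tensored with $C\ell^{r,s}$), an algebra over $C(X)$, and its ``regular module'' is a bundle-like object, not a single Hilbert space.

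The repair is the standard construction (and the one behind the citation): keep the underlying Hilbert bundle trivial, say $X\times\bigl(\ell^2(\n{G})\otimes S\otimes\ell^2(\N)\bigr)$ with $S$ a suitable $\Z_2$-graded $C\ell^{r,s}$-module, but let the $\n{G}$-action be the $\tau$-twisted regular representation (anti-linear on the coefficients when $\varpi(g)=-1$), whose operators over the point $x$ involve the values $\tau(\cdot,\cdot)(x)$ and therefore genuinely depend on $x$. This realizes \emph{every} twist, and after tensoring with $\ell^2(\N)$ it absorbs every twisted bundle. With the model fiber corrected, the rest of your outline goes through essentially as you describe: finitely many invariant trivializing opens, local embeddings placed into pairwise orthogonal $\ell^2(\N)$-slots weighted by an invariant partition of unity, and the observation that $\tau$-factors cancel in equivariance checks because multiplication by a scalar function commutes with any fiberwise-linear map covering the identity. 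One warning on your last step: a plain average $\sum_g\rho_{\mathrm{univ}}(g)\circ f\circ\rho(g)^{-1}$ can fail to be fiberwise injective (cross terms can cancel), so the polar-decomposition fix is not available in general; the orthogonal-slot bookkeeping, which makes the norm of the image a sum of non-negative terms, is what actually guarantees an isometric embedding.
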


\medskip

\noindent 
{The universal bundle $\bb{E}_{\mathrm{univ}}$ in Lemma \ref{lem:locally_universal_bundle} has infinite rank, and also the  eigenspaces of $\Gamma_{\mathrm{univ}}$ associated to the eigenvalues $\pm 1$ are infinite dimensional. This allows the   twisted bundle $\pi:\bb{E}\to Y$ to have infinite rank.  
 It is known \cite[Section 2.6.]{gomi-17} that the space of embeddings $\bb{E} \to \bb{E}_{\mathrm{univ}}|_Y$ is contractible with respect to an appropriate topology, and the pair $(\bb{E}_{\mathrm{univ}},\Gamma_{\mathrm{univ}})$ that implements the local universality  in Lemma \ref{lem:locally_universal_bundle} is essentially unique.}

\begin{definition}\label{def:grad_cop_pert}
Let $X$, $\n{G}$, $\varpi$, $c$, $\tau$ be as in Assumption \ref{assumption:freed_moore_K}, and $(\bb{E}_{\mathrm{univ}},\Gamma_{\mathrm{univ}})$ the locally universal pair of Lemma \ref{lem:locally_universal_bundle}. {Let $\bb{K}_x:=\bb{K}(\bb{E}_{\mathrm{univ},x})$ be the algebra of compact operators on the fiber $\bb{E}_{\mathrm{univ},x}$. We denote with
$$
\bb{G}(\bb{E}_{\mathrm{univ}})
\;:=\; \left\{
\Gamma : \bb{E}_{\mathrm{univ}} \to \bb{E}_{\mathrm{univ}}\ |\
\Gamma\; \text{\upshape is gradation and}\ (\Gamma - \Gamma_{\mathrm{univ}})_x\in \bb{K}_x\;\;\;\; \forall x\in X
\right\}
$$
the set of gradations $\Gamma$ on $\bb{E}_{\mathrm{univ}}$ which differ from $\Gamma_{\mathrm{univ}}$ by a compact operator on each fiber.}
\end{definition}

\medskip

\noindent
{In view of Definition \ref{def:grad_cop_pert} any
$\Gamma\in\bb{G}(\bb{E}_{\mathrm{univ}})$ is a \emph{compact perturbation} of the universal gradation $\Gamma_{\mathrm{univ}}$. This implies that also the  eigenspaces of $\Gamma$ associated to the eigenvalues $\pm 1$ are infinite dimensional.}

\medskip

A homotopy on $\bb{G}(\bb{E}_{\mathrm{univ}})$ is a family of gradations which is continuous with respect to the operator norm on the fibers of $\bb{E}_{\mathrm{univ}}$. {Such homotopies introduce an equivalence relation in $\bb{G}(\bb{E}_{\mathrm{univ}})$ denoted with $\sim$ . The following result is proved in \cite[Lemma 4.16]{gomi-17}.}

\begin{proposition}\label{prop:finite_vs_infinite_Karoubi}
Let $X$, $\n{G}$, $\varpi$, $c$, $\tau$ be as in Assumption \ref{assumption:freed_moore_K}. There is a natural bijection
$$
{}^\varpi \s{K}_\n{G}^{(\tau, c) + (r, s)}(X)\;
\simeq\; \bb{G}(\bb{E}_{\mathrm{univ}})/\sim.
$$
\end{proposition}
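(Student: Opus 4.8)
The plan is to exhibit explicit maps in both directions and verify they are mutually inverse, following the Karoubi difference-bundle argument of \cite[Chap.\ III]{karoubi-78} adapted to the twisted equivariant infinite-rank setting. First I would build a map $\Phi$ from triples to $\bb{G}(\bb{E}_{\mathrm{univ}})/\sim$. Given a triple $(\bb{E},\Gamma_0,\Gamma_1)$, I would invoke the local universality of Lemma \ref{lem:locally_universal_bundle} to pick an embedding $\iota:\bb{E}\to\bb{E}_{\mathrm{univ}}$ of twisted bundles intertwining $\Gamma_1$ with $\Gamma_{\mathrm{univ}}$, i.e. $\iota\circ\Gamma_1=\Gamma_{\mathrm{univ}}\circ\iota$. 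Since $\iota$ respects $\rho$ and $\gamma$ and these act isometrically, $\iota(\bb{E})$ is a twisted subbundle whose orthogonal complement is again $\rho$- and $\gamma$-invariant and is preserved by $\Gamma_{\mathrm{univ}}$. I would then set $\Phi(\bb{E},\Gamma_0,\Gamma_1)$ to equal $\iota\circ\Gamma_0\circ\iota^{-1}$ on $\iota(\bb{E})$ and $\Gamma_{\mathrm{univ}}$ on $\iota(\bb{E})^\perp$; as $\bb{E}$ has finite rank, the difference $\Phi(\bb{E},\Gamma_0,\Gamma_1)-\Gamma_{\mathrm{univ}}=\iota(\Gamma_0-\Gamma_1)\iota^{-1}$ is fiberwise finite rank, so $\Phi(\bb{E},\Gamma_0,\Gamma_1)\in\bb{G}(\bb{E}_{\mathrm{univ}})$.

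Next I would check well-definedness. The space of twisted embeddings intertwining the gradations is contractible (the uniqueness statement accompanying Lemma \ref{lem:locally_universal_bundle}), so a different $\iota$ gives a homotopic gradation and $\Phi$ is well defined modulo $\sim$. It is routine that $\Phi$ respects isomorphisms, is additive under direct sum (place the two embeddings on orthogonal copies), and sends diagonal triples to the class of $\Gamma_{\mathrm{univ}}$. If $\Gamma_0$ and $\Gamma_1$ are homotopic on $\bb{E}$, transporting the homotopy through $\iota$ shows $\Phi(\bb{E},\Gamma_0,\Gamma_1)\sim\Gamma_{\mathrm{univ}}$, so $\Phi$ annihilates ${}^\varpi\s{Z}^{(\tau,c)+(r,s)}_{\n{G}}(X)$ and descends to a well-defined map $\bar\Phi$ on ${}^\varpi\s{K}^{(\tau,c)+(r,s)}_{\n{G}}(X)$.

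For the inverse, given $\Gamma\in\bb{G}(\bb{E}_{\mathrm{univ}})$ I would use the fiberwise compactness of $\Gamma-\Gamma_{\mathrm{univ}}$ to deform $\Gamma$, through gradations in $\bb{G}(\bb{E}_{\mathrm{univ}})$, to a gradation $\Gamma'$ with $\Gamma'-\Gamma_{\mathrm{univ}}$ of \emph{finite} rank on each fiber; concretely one spectrally truncates the compact family $\Gamma-\Gamma_{\mathrm{univ}}$ away from $0$ by functional calculus, averaged so as to stay compatible with $\rho$, $\gamma$ (and with $\eta$ in the Krein setting). The support of $\Gamma'-\Gamma_{\mathrm{univ}}$ is then a finite-rank twisted subbundle $\bb{F}$, and I would define $\Psi(\Gamma):=[(\bb{F},\Gamma'|_{\bb{F}},\Gamma_{\mathrm{univ}}|_{\bb{F}})]$, checking by a homotopy argument that this class is independent of the truncation and of the representative of the $\sim$-class. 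Finally I would verify $\bar\Phi\circ\Psi=\mathrm{id}$ and $\Psi\circ\bar\Phi=\mathrm{id}$: the former because absorbing the embedding of $\bb{F}$ into $\bb{E}_{\mathrm{univ}}$ recovers $\Gamma'$ up to homotopy, the latter because the composite returns the original triple up to a diagonal summand, which vanishes in the $K$-group.

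The hard part will be the construction of $\Psi$: showing that \emph{every} compact perturbation $\Gamma$ of $\Gamma_{\mathrm{univ}}$ can be deformed to a finite-rank perturbation while remaining in $\bb{G}(\bb{E}_{\mathrm{univ}})$ and respecting the entire twisted structure $(\rho,\gamma)$ (and the commutation/anticommutation with $\eta$), together with the naturality of the resulting finite-rank triple under all choices. This compactness-to-finite-rank reduction, and the contractibility of the universal embeddings that guarantees the homotopies can be carried out equivariantly, is precisely where the infinite-dimensional functional calculus and the universal-bundle techniques of \cite{gomi-17} are indispensable.
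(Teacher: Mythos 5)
Your proposal follows essentially the same route as the paper: your forward map $\Phi$ is exactly the paper's map $\jmath$ (embed $(\bb{E},\Gamma_1)$ into $(\bb{E}_{\mathrm{univ}},\Gamma_{\mathrm{univ}})$ via Lemma \ref{lem:locally_universal_bundle} and extend $\Gamma_0$ by the complementary gradation $\Gamma^\perp$, using contractibility of the space of embeddings for well-definedness), and your inverse $\Psi$ is the same finite-dimensional ``approximation'' of the triple $(\bb{E}_{\mathrm{univ}},\Gamma,\Gamma_{\mathrm{univ}})$ that the paper delegates to \cite[Section 4.3]{gomi-17}. You also correctly identify the equivariant compact-to-finite-rank reduction as the genuinely hard step, which is precisely the content of that cited reference.
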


\medskip

\noindent
{
The bijection in Proposition \ref{prop:finite_vs_infinite_Karoubi} is constructed as follows. Let $(\bb{E}, \Gamma_0, \Gamma_1)$ be a triple consisting of a finite rank twisted vector bundle $\bb{E}$ on $X$ and its gradations $\Gamma_0$ and $\Gamma_1$. The nature of the locally universal structure described in Lemma \ref{lem:locally_universal_bundle}
 allows us to embed the pair $(\bb{E},\Gamma_1)$  into $(\bb{E}_{\mathrm{univ}},\Gamma_{\mathrm{univ}})$. This embedding induces an orthogonal decomposition $\bb{E}_{\mathrm{univ}} \simeq \bb{E} \oplus \bb{E}^\perp$ of twisted bundles which respects the gradations in the sense that $\Gamma_{\mathrm{univ}} = \Gamma_1 \oplus \Gamma^\perp$ where $\Gamma^\perp$ is a gradation on the complement  $\bb{E}^\perp$. Now, from the second gradation $\Gamma_0$ on $\bb{E}$ one has a second gradation $\Gamma_0 \oplus \Gamma^\perp$ on $\bb{E}_{\mathrm{univ}}$, which differs from $\Gamma_{\mathrm{univ}}$ by $(\Gamma_0-\Gamma_1)\oplus0$ which is a compact operator on each fiber of $\bb{E}_{\mathrm{univ}}$. In view of the definition of ${}^\varpi \s{K}^{(\tau, c) + (r, s)}_\n{G}(X) = {}^\varpi \s{M}^{(\tau, c) + (r, s)}_{\n{G}}(X)/{}^\varpi \s{Z}^{(\tau, c) + (r, s)}_{\n{G}}(X)$ and the property of $\bb{E}_{\mathrm{univ}}$, the assignment $(\bb{E}, \Gamma_0, \Gamma_1) \mapsto \Gamma_0 \oplus \Gamma^\perp$ induces a well-defined natural map
$$
\jmath : \ 
{}^\varpi \s{K}^{(\tau, c) + (r, s)}_\n{G}(X)\;
\longrightarrow\; \bb{G}(\bb{E}_{\mathrm{univ}})/\sim
$$
which indeed realizes the bijection in Proposition \ref{prop:finite_vs_infinite_Karoubi}. The proof of the bijectivity of $\jmath$ requires the construction of the inverse to $\jmath$ by a finite dimensional ``approximation'' of the triple $(\bb{E}_{\mathrm{univ}}, \Gamma, \Gamma_{\mathrm{univ}})$ formed by an element $\Gamma \in \bb{G}(\bb{E}_{\mathrm{univ}})$. We refer to \cite[Section 4.3]{gomi-17} for the detail.
}

\medskip

In view of Proposition \ref{prop:finite_vs_infinite_Karoubi} we can generalize Definition \ref{def_K_group} for the infinite rank case:
\begin{definition}[$K$-groups - infinite rank case]\label{def_K_group_inf}
Let $X$, $\n{G}$, $\varpi$, $c$, $\tau$ be as in Assumption \ref{assumption:freed_moore_K}. Then:

\begin{itemize}
\item
 ${}^\varpi \s{M}^{(\tau, c) + (r, s)}_{\n{G}}(X)$ is the \emph{abelian monoid} of isomorphism classes of triples $(\bb{E}, \Gamma_0, \Gamma_1)$ where $\bb{E}$ is a 
 $(\varpi, c,\tau)$-twisted $\n{G}$-equivariant Hilbert bundle over $X$ and the two gradations verify the compactness condition
$$
(\Gamma_0 - \Gamma_{1})_x\;\in\; \bb{K}_x \qquad\quad \forall\; x\in X\;.
$$ 
 Isomorphisms and  addition  are defined as in Section \ref{sec:karubi_vs_freed}; 
\vspace{1mm}
\item
${}^\varpi \s{Z}^{(\tau, c) + (r, s)}_{\n{G}}(X)$ and ${}^\varpi \s{K}^{(\tau, c) + (r, s)}_\n{G}(X)$ are defined as in Definition
\ref{def_K_group}.
\end{itemize}
\end{definition}

\medskip

\noindent
The only difference between Definition \ref{def_K_group} and Definition \ref{def_K_group_inf} is the addition of the compactness condition. Clearly, whenever $\bb{E}$ is a finite rank vector bundle Definition \ref{def_K_group_inf} reduces to Definition \ref{def_K_group}.


\subsection{Fredholm operator formulation of Freed-Moore $K$-theory}
\label{sec:Fredholm-K}
Topological $K$-theory has various formulations. The original formulation of the Freed-Moore $K$-theory  is based on finite rank twisted bundles \cite[Definition 7.33]{freed-moore-13}. However,
 a possible formulation based on Fredholm operators is also mentioned \cite[Remark 7.35]{freed-moore-13}. In this formulation the basic objects are Fredholm families.
\begin{definition}[Self-adjoint Fredholm family]
\label{def:self_adjoint_fredholm_family}
Let $(\bb{E}, \rho, \gamma)$ be a $(\varpi, c,\tau)$-twisted $\n{G}$-equivariant  Hilbert bundle on $X$ with $C\ell^{r,s}$-action, and $\Gamma$ a gradation on $(\bb{E}, \rho, \gamma)$. A self-adjoint Fredholm family  
is a self-adjoint element $A\in{\rm End}_{\rm lin}(\bb{E})$. This means that $A$ restricts to a linear bounded operator  $A_x\:=A|_{\s{H}_x}$   on each fiber Hilbert space $\s{H}_x:=\pi^{-1}(x)$ and
 \begin{itemize}
\item[(a)] $A_x$ is
self-adjoint, $A_x= A_x^*$.
\vspace{1mm}
\end{itemize}
In addition one requires that:
 \begin{itemize}
\item[(b)] $A^2_x-\n{1}_x$ is compact;
\vspace{1mm}
\item[(c)] $\sigma(A_x)\subset[-1,+1]$;
\end{itemize}
Finally the following 
 compatibility relations are required:
 \begin{itemize}
\item[(d)] For all $a\in C\ell^{r,s}$ and $g\in\n{G}$ it holds that
\begin{align*}
A \rho(g) &\;=\; c(g)\; \rho(g) A \\
A \gamma(a) &\;=\; -\gamma(a)A
\\
A \Gamma &\;=\; -\Gamma A\;.
\end{align*}
\end{itemize}
\end{definition}
\medskip

\noindent
Property  (a) in Definition \ref{def:self_adjoint_fredholm_family} says that $A_x$ is a (bounded) self-adjoint operator for each $x\in X$ and this justifies the reality of the spectrum $\sigma(A_x)\subset\R$.
Property (b) says that $A_x$ is an involution (hence an invertible) modulo compact operators.
In view of the Atkinson characterization \cite{atkinson-51} this implies that $A_x$ is a self-adjoint Fredholm operator for each $x\in X$. The following result shows the strong link between self-adjoint Fredholm families and gradations.
\begin{lemma}\label{lemma:family_to_grad}
Let $A$ be a self-adjoint Fredholm family of the $(\varpi, c,\tau)$-twisted $\n{G}$-equivariant  Hilbert bundle $(\bb{E}, \rho, \gamma)$ on $X$, and $\Gamma$ a gradation on $(\bb{E}, \rho, \gamma)$. Then the element
$$
\vartheta(A)\;:=\;-\expo{\pi A\Gamma}\Gamma
$$
provides a second gradation of $(\bb{E}, \rho, \gamma)$ and the difference $\Gamma- \vartheta(A)$ is compact in each fiber.
\end{lemma}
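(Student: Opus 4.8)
The plan is to verify directly that $\vartheta(A) := -\expo{\pi A\Gamma}\Gamma$ satisfies the three requirements of a gradation in the sense of Definition \ref{def_grad} — namely that it is a self-adjoint involution, that it commutes with $\rho(g)$ up to the sign $c(g)$, and that it anti-commutes with the Clifford action $\gamma(a)$ — and then to show that $\Gamma - \vartheta(A)$ is fiberwise compact. Since all the structural relations are fiberwise, I would work on a fixed fiber $\s{H}_x$, suppress the subscript $x$, and write $K := A\Gamma$. The key algebraic observation is that, by property (a) of Definition \ref{def:self_adjoint_fredholm_family}, $A=A^*$, and by property (d), $A\Gamma = -\Gamma A$; combined with $\Gamma=\Gamma^*=\Gamma^{-1}$ this gives $K^* = \Gamma A = -A\Gamma = -K$, so $K$ is skew-adjoint and $\expo{\pi K}$ is \emph{unitary}. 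This is the computational heart of the argument.

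First I would check self-adjointness and the involution property. Using $K^*=-K$ one has $\vartheta(A)^* = -\Gamma^*\expo{\pi K^*} = -\Gamma\expo{-\pi K}$; the anti-commutation $K\Gamma = A\Gamma\Gamma = A = \Gamma\Gamma A = -\Gamma A\Gamma = -\Gamma K$ lets me move $\Gamma$ past the exponential, yielding $-\Gamma\expo{-\pi K} = -\expo{\pi K}\Gamma = \vartheta(A)$, so $\vartheta(A)$ is self-adjoint. For $\vartheta(A)^2$ the same commutation rule gives $\vartheta(A)^2 = \expo{\pi K}\Gamma\expo{\pi K}\Gamma = \expo{\pi K}\expo{-\pi K}\Gamma^2 = \n{1}$. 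Next, the compatibility relations: from property (d), $\gamma(a)$ and $\Gamma$ each anti-commute with $A$ and $\Gamma$ is odd for $\gamma(a)$, from which I would deduce that $\gamma(a)$ commutes with $K=A\Gamma$ (two sign flips cancel), hence with $\expo{\pi K}$, and anti-commutes with the trailing $\Gamma$; this produces $\vartheta(A)\gamma(a) = -\gamma(a)\vartheta(A)$. For the group action I would likewise check that $\rho(g)$ conjugates $K$ by the factor $c(g)$ and treat the linear/anti-linear cases separately, the anti-linearity ($\varpi(g)=-1$) entering through complex conjugation in the exponential series; the two sign sources combine to give $\vartheta(A)\rho(g) = c(g)\rho(g)\vartheta(A)$.

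Finally, for compactness of $\Gamma-\vartheta(A)$, I would write $\Gamma - \vartheta(A) = (\n{1}+\expo{\pi K})\Gamma$ and argue that $\n{1}+\expo{\pi K}$ is compact. Here properties (b) and (c) of Definition \ref{def:self_adjoint_fredholm_family} are decisive: since $K^2 = A\Gamma A\Gamma = -A^2\Gamma^2 = -A^2$, and $A^2-\n{1}$ is compact with $\sigma(A)\subseteq[-1,1]$, the skew-adjoint operator $K$ has $K^2+\n{1}$ compact and its spectrum lies in $i[-1,1]$. The function $\lambda \mapsto 1 + \expo{\pi\lambda}$ vanishes exactly at $\lambda = \pm i$, i.e. precisely on the part of the spectrum corresponding to the essential (non-compact) spectrum of $K^2=-\n{1}$; applying the continuous functional calculus to $K$ (or to $A^2$) one sees that $\n{1}+\expo{\pi K}$ is a continuous function of $K$ that vanishes on the essential spectrum, hence is compact. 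I expect this last step to be the main obstacle: one must make the functional-calculus argument rigorous, controlling the accumulation of eigenvalues of $A^2$ toward $1$ and confirming that the only essential spectrum of $K$ sits at $\pm i$, so that $1+\expo{\pi K}$ is genuinely norm-limit of finite-rank operators rather than merely small. The remaining items are routine sign bookkeeping.
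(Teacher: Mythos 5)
Your proposal is correct in substance and, for the gradation part, follows essentially the paper's own route: both arguments rest on the observation that $K:=A\Gamma$ is skew-adjoint (hence $\expo{\pi K}$ is fiberwise unitary), verify $\vartheta(A)=\vartheta(A)^*$ and $\vartheta(A)^2=\n{1}$ by sliding $\Gamma$ through the exponential, and reduce the compactness claim to that of $\n{1}+\expo{\pi K}$ in each fiber. Where you genuinely diverge is in proving that $\n{1}+\expo{\pi K}$ is compact. The paper first shows that $\sigma(A_x)$ is pure point with $\pm 1$ as the only possible infinite-multiplicity eigenvalues, and then establishes the spectral correspondence $\lambda\in\sigma(A_x)\Leftrightarrow\ii\lambda\in\sigma(A_x\Gamma_x)$ (its relation \eqref{eq:spect_relat}) by an explicit block decomposition of the resolvent with respect to the spectral projections of $\Gamma_x$. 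You instead use $K^2=-A^2$, Weyl stability of the essential spectrum under the compact perturbation $A^2-\n{1}$ to get $\sigma_{\rm ess}(K)\subseteq\{\pm\ii\}$, and the standard fact that a continuous function of a (skew-)self-adjoint operator vanishing on its essential spectrum is compact: for each $\epsilon>0$ the set where $|f|\geqslant\epsilon$ meets the spectrum in finitely many isolated eigenvalues of finite multiplicity, so $f(K)$ is a norm limit of finite-rank operators. Your route is shorter and invokes only textbook spectral theory; the paper's route is more self-contained (no appeal to Weyl's theorem) at the price of the resolvent computation. Both are valid, and your remark that $\lambda\mapsto 1+\expo{\pi\lambda}$ vanishes on $\ii[-1,1]$ exactly at $\lambda=\pm\ii$ is precisely the right pivot; note that this vanishing at $\pm\ii$ makes hypothesis (c) of Definition \ref{def:self_adjoint_fredholm_family} inessential for this step.

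One slip in your ``routine sign bookkeeping'' must be fixed: $\rho(g)$ does \emph{not} conjugate $K=A\Gamma$ by the factor $c(g)$. The two Koszul signs cancel, $K\rho(g)=c(g)^2\,\rho(g)K=\rho(g)K$, so $K$ commutes with $\rho(g)$ on the nose, and hence so does $\expo{\pi K}$ (the coefficients of the exponential series are real, so this survives anti-linearity of $\rho(g)$). The sign $c(g)$ in the desired relation comes solely from the trailing $\Gamma$:
$$
\vartheta(A)\rho(g)\;=\;-\expo{\pi K}\Gamma\rho(g)\;=\;-c(g)\,\expo{\pi K}\rho(g)\Gamma\;=\;-c(g)\,\rho(g)\expo{\pi K}\Gamma\;=\;c(g)\,\rho(g)\vartheta(A)\;.
$$
Had your intermediate claim been true, then for $c(g)=-1$ one would obtain $\vartheta(A)\rho(g)=\rho(g)\expo{-\pi K}\Gamma$, which differs from the required $c(g)\rho(g)\vartheta(A)=\rho(g)\expo{\pi K}\Gamma$ unless $\expo{2\pi K}=\n{1}$; so the claim as written is not merely imprecise but incompatible with your (correct) conclusion. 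With the corrected commutation relation the argument closes, and the rest of your proof stands.
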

\proof
Since $A$ and $\Gamma$ are by definition elements in ${\rm End}_{\rm lin}(\bb{E})$ then also the product $A\Gamma$ is a fiber preserving linear endomorphism of $\bb{E}$. Fiberwise one has that $\ii(A\Gamma)_x=\ii A_x\Gamma_x$ is a self-adjoint operator in view of the anti-commutativity of $A$ and $\Gamma$. Therefore
$\expo{\pi (A\Gamma)_x}=\expo{-\ii(\ii\pi A_x\Gamma_x)}$ is a well defined unitary operator in each fiber. The element $\vartheta(A)\in {\rm End}_{\rm lin}(\bb{E})$
is then defined fiberwise by $\vartheta(A)_x:=-\expo{\pi (A\Gamma)_x}\Gamma_x$. 
Direct computations show that $\vartheta(A)$ is a self-adjoint involution. Indeed
$$
\vartheta(A)^*\;=\;-\Gamma^*\expo{\pi (A\Gamma)^*}\;=\;-\Gamma\expo{-\pi A\Gamma}\;=\;-\expo{\pi A\Gamma}\Gamma\;=\;\vartheta(A)
$$
and 
$$
\vartheta(A)^2\;=\;\expo{\pi A\Gamma}\Gamma\expo{\pi A\Gamma}\Gamma\;=\;\Gamma\expo{-\pi A\Gamma}\expo{\pi A\Gamma}\Gamma\;=\;\n{1}\;
$$
where the computations can be  understood fiberwise, if necessary. Since $ A\Gamma$ commutes with $\gamma(a)$  and $\rho(g)$ for all
$a\in C\ell^{r,s}$ and $g\in\n{G}$  it follows that the commutation relations between $\vartheta(A)$ and $\gamma(a)$  or $\rho(g)$ are the same commutation relations of $\Gamma$. This proves that $\vartheta(A)$ is a gradation according to Definition \ref{def_grad}.
To finish the proof one has to prove that 
$\Gamma- \vartheta(A)=(\n{1}+\expo{\pi A\Gamma})\Gamma$ is compact in each fiber. For that it is enough to show that $\n{1}_x+\expo{\pi A_x\Gamma_x}$ is a compact operator for each $x\in X$. Observe that the compactness of 
$\n{1}_x-A_x^2$ implies that the spectrum of $A_x^2$ is pure point with  $1$ as the only possible accumulation point. Moreover, the only eigenvalue with possible infinite multiplicity is indeed $1$. From the spectral mapping theorem one concludes that $\sigma(A_x)$ is pure point and the only eigenvalues with  possible infinite multiplicity are $\pm1$. Let us anticipate the following result:
\begin{equation}\label{eq:spect_relat}
\lambda\;\in\;\sigma(A_x)\;\quad\; \Leftrightarrow\;\quad\;\ii\lambda\;\in\;\sigma(A_x\Gamma_x)\;.
\end{equation}
Condition \eqref{eq:spect_relat} implies that 
also $\sigma(A_x\Gamma_x)$ is pure point with $\pm\ii$ as the only possible eigenvalues of infinite multiplicity. As a consequence 
the spectrum of $\n{1}_x+\expo{\pi A_x\Gamma_x}$
is pure point with $0$ as the only possible eigenvalue of infinite multiplicity. This is equivalent to the compactness of  $\n{1}_x+\expo{\pi A_x\Gamma_x}$. Finally, the proof of 
\eqref{eq:spect_relat}. Let $\Pi_{x}^\pm$ be the spectral projections of $\Gamma_x$ related to the eigenvalues $\pm1$ respectively.  Then $\Gamma_x=\Pi_{x}^+-\Pi_{x}^-$. As a consequence of the anti-commutation between $A$ and $\Gamma$ one has that $A_x=A_x^{-,+}+A_x^{+,-}$ where
 $A_x^{\pm,\mp}:=\Pi_{x}^\pm A_x\Pi_{x}^\mp$.
Said differently $A_x$ is completely off-diagonal with respect to the grading induced by  
 $\Gamma_x$. As a consequence $A_x\Gamma_x=A_x^{-,+}-A_x^{+,-}$. Let $\lambda\in{\rm Res}(A_x)$ be a point of the resolvent set of $A_x$. Then the resolvent $R_x(\lambda):=(A_x-\lambda\n{1}_x)^{-1}$ exists
and can be decomposed along $\Pi_{x}^\pm$ as follows: $R_x(\lambda)=R_x^{+,+}+R_x^{-,-}+R_x^{+,-}+R_x^{-,+}$. The conditions $(A_x-\lambda\n{1}_x)R_x(\lambda)=\n{1}_x=R_x(\lambda)(A_x-\lambda\n{1}_x)$ imply
\begin{equation}\label{consR_A}
\begin{aligned}
A_x^{-,+}R_x^{+,+}\;&=\;\lambda R_x^{-,+}\;=\;R_x^{-,-}A_x^{-,+}\\
A_x^{+,-}R_x^{-,-}\;&=\;\lambda R_x^{+,-}\;=\;R_x^{+,+}A_x^{+,-}\\
A_x^{-,+}R_x^{+,-}\;&=\;\lambda R_x^{-,-}\;+\;\Pi^-_x\;=\;R_x^{-,+}A_x^{+,-}\\
A_x^{+,-}R_x^{-,+}\;&=\;\lambda R_x^{+,+}\;+\;\Pi^+_x\;=\;R_x^{+,-}A_x^{-,+}\;.
\end{aligned}
\end{equation}
Consider now the operator $\tilde{R}_x(\lambda):=-\ii(R_x^{+,+}+R_x^{-,-})+R_x^{+,-}-R_x^{-,+}$. By using the relations \eqref{consR_A} a direct computation shows that $(A_x\Gamma_x-\ii\lambda\n{1}_x)\tilde{R}_x(\lambda)=\n{1}_x=\tilde{R}_x(\lambda)(A_x\Gamma_x-\ii\lambda\n{1}_x)$, hence $\ii\lambda\in {\rm Res}(A_x\Gamma_x)$. On the other hand if $\ii\lambda\in {\rm Res}(A_x\Gamma_x)$ then $\lambda\in {\rm Res}(-\ii A_x\Gamma_x)$ and in view of the same argument above this implies $\ii\lambda\in {\rm Res}(-\ii A_x)$ or equivalently $\lambda\in {\rm Res}(- A_x)={\rm Res}( \Gamma_xA_x\Gamma_x)={\rm Res}(A_x)$. In summary we proved that  $\lambda\in {\rm Res}(A_x)$
if and only if $\ii\lambda\in {\rm Res}(A_x\Gamma_x)$ and therefore \eqref{eq:spect_relat} follows.
\qed

\medskip
{As it has been pointed out in Section \ref{subsec:infinite_dimensional_Karoubi}, any finite rank twisted bundle} $(\bb{E}, \rho, \gamma)$ with a gradation $\Gamma$ can be embedded into the locally universal bundle $(\bb{E}_{\mathrm{univ}}, \rho_{\mathrm{univ}}, \gamma_{\mathrm{univ}})$ with its universal gradation $\Gamma_{\mathrm{univ}}$. A Fredholm family $A$ on $\bb{E}$ can be extended to one on $\bb{E}_{\mathrm{univ}}$ in an essentially unique way. Thus, it turns out that it is sufficient to consider Fredholm families on $\bb{E}_{\mathrm{univ}}$. 
{Let $\bb{F}(\bb{E}_{\mathrm{univ}})$ be the  space of Fredholm families on $\bb{E}_{\mathrm{univ}}$.
A notion of a homotopy between Fredholm families can be defined in a natural way \cite[Section 3.1]{gomi-17} and the space $\bb{F}(\bb{E}_{\mathrm{univ}})$ turns out to be endowed with an induced
equivalence relation (still) denoted with $\sim$.}
\begin{proposition}
\label{prop:Karoubi_vs_Fredholm}
There is a natural bijection between $\bb{G}(\bb{E}_{\mathrm{univ}})/\sim$ and $\bb{F}(\bb{E}_{\mathrm{univ}})/\sim$ {which, in view of Proposition \ref{prop:finite_vs_infinite_Karoubi}, implies
$$
{}^\varpi \s{K}_\n{G}^{(\tau, c) + (r, s)}(X)\;
\simeq\; \bb{F}(\bb{E}_{\mathrm{univ}})/\sim.
$$}
\end{proposition}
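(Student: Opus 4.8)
The plan is to exhibit the bijection explicitly through the map $\vartheta$ of Lemma \ref{lemma:family_to_grad} and to construct a functional-calculus inverse, then to control the single piece of data that $\vartheta$ forgets.

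First I would verify that $\vartheta$ already furnishes the forward map. By Lemma \ref{lemma:family_to_grad}, $\vartheta(A)=-\expo{\pi A\Gamma_{\mathrm{univ}}}\Gamma_{\mathrm{univ}}$ is a gradation differing from $\Gamma_{\mathrm{univ}}$ by a fibrewise compact operator, so $\vartheta(A)\in\bb{G}(\bb{E}_{\mathrm{univ}})$. Writing $B:=A\Gamma_{\mathrm{univ}}$ (skew-adjoint, with $B^2=-A^2$) and using that $|A|$ commutes with $\Gamma_{\mathrm{univ}}$ while $A$ anticommutes with it, one obtains the closed form $\vartheta(A)=-\cos(\pi|A|)\,\Gamma_{\mathrm{univ}}-\tfrac{\sin(\pi|A|)}{|A|}\,A$, which is manifestly norm-continuous in $A$ on each fiber. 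A norm-continuous path of Fredholm families is therefore sent to a norm-continuous path of gradations, so $\vartheta$ descends to a well-defined map $\bar\vartheta:\bb{F}(\bb{E}_{\mathrm{univ}})/\!\sim\ \to\ \bb{G}(\bb{E}_{\mathrm{univ}})/\!\sim$.

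Next I would build a candidate inverse by functional calculus. Given $\Gamma'\in\bb{G}(\bb{E}_{\mathrm{univ}})$ set $U:=-\Gamma'\Gamma_{\mathrm{univ}}$: this is fibrewise unitary, $U+\n{1}$ is compact in each fiber, and the gradation relations give $U\rho(g)=\rho(g)U$, $U\gamma(a)=\gamma(a)U$ and $\Gamma_{\mathrm{univ}}U\Gamma_{\mathrm{univ}}=U^{-1}$. Fix the branch $\ell$ of the logarithm on $\n{S}^1$ with values in $\ii[-\pi,\pi]$, $\expo{\ell(\mu)}=\mu$ and $\ell(\overline{\mu})=\overline{\ell(\mu)}$ (hence $\ell(\mu^{-1})=-\ell(\mu)$), and put $A:=\tfrac1\pi\ell(U)\Gamma_{\mathrm{univ}}$. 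Since $\ell$ is imaginary-valued, $\ell(U)$ is skew-adjoint; since $\Gamma_{\mathrm{univ}}U\Gamma_{\mathrm{univ}}=U^{-1}$ and $\ell$ is odd, $\ell(U)$ anticommutes with $\Gamma_{\mathrm{univ}}$; and because $\ell(\overline\mu)=\overline{\ell(\mu)}$ the operator $\ell(U)$ commutes with the (possibly anti-linear) symmetries $\rho(g)$ and with $\gamma(a)$. These identities yield at once that $A$ is self-adjoint and satisfies the compatibility relations (d) of Definition \ref{def:self_adjoint_fredholm_family}, while $A^2=-\tfrac1{\pi^2}\ell(U)^2$ gives $\sigma(A)\subset[-1,1]$; moreover $A^2-\n{1}=-\tfrac1{\pi^2}\big(\ell(U)^2+\pi^2\big)$ is compact in each fiber because $\mu\mapsto\ell(\mu)^2+\pi^2$ is \emph{continuous} on $\n{S}^1$ (the branch jump cancels under squaring) and vanishes at the essential-spectrum point $\mu=-1$. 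Thus $A\in\bb{F}(\bb{E}_{\mathrm{univ}})$, and from $\expo{\ell(U)}=U$ one computes $\vartheta(A)=-\expo{\ell(U)}\Gamma_{\mathrm{univ}}=-U\Gamma_{\mathrm{univ}}=\Gamma'$. This shows $\bar\vartheta$ is surjective, with $\Psi:\Gamma'\mapsto A$ a section over the locus where $-1\notin\sigma_p(U)$.

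Finally I would establish injectivity by pinning down the only information that $\vartheta$ discards. The closed form shows $\vartheta(A)=\Gamma_{\mathrm{univ}}$ on $\ker(A^2-\n{1})$, so $\vartheta$ remembers $A$ exactly off this subspace but forgets the involution $A|_{\ker(A^2-\n{1})}$, which is a self-adjoint involution anticommuting with $\Gamma_{\mathrm{univ}}$ and commuting with the $\rho,\gamma$-action. The crux — and the main obstacle — is that $\pm1$ is forced into the essential spectrum by condition (b), so this boundary eigenspace may have infinite multiplicity and cannot be deformed away within $\bb{F}(\bb{E}_{\mathrm{univ}})$ by functional calculus; correspondingly the branch point of $\ell$ sits exactly at $-1\in\sigma_{\mathrm{ess}}(U)$. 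The resolution is that, in the infinite-rank universal setting, the space of admissible such involutions is contractible (a Kuiper-type argument), so this datum is homotopically irrelevant and any two preimages of a given gradation are homotopic; this is precisely the delicate analysis carried out in \cite[Section 3]{gomi-17}. Granting it, $\bar\vartheta$ is a bijection, and composing with the bijection of Proposition \ref{prop:finite_vs_infinite_Karoubi} yields ${}^\varpi\s{K}_\n{G}^{(\tau,c)+(r,s)}(X)\simeq\bb{F}(\bb{E}_{\mathrm{univ}})/\!\sim$, as claimed.
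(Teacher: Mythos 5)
Your proposal follows the same route as the paper's own proof: both rest on the map $\vartheta$ of Lemma \ref{lemma:family_to_grad}, both observe that it descends to homotopy classes, and both defer the homotopy-theoretic core of the bijectivity to the Atiyah--Singer-type analysis of \cite{gomi-17} (the paper cites \cite[Section 4.2]{gomi-17} and \cite{atiyah-singer-69}; you cite \cite[Section 3]{gomi-17} and phrase your conclusion as ``granting it''). You go further than the paper's sketch in two genuinely useful respects: the closed form $\vartheta(A)=-\cos(\pi A)\Gamma_{\mathrm{univ}}-\sin(\pi A)$, which makes the descent to homotopy classes immediate, and the logarithm construction $\Gamma'\mapsto A=\tfrac{1}{\pi}\ell(U)\Gamma_{\mathrm{univ}}$ with $U=-\Gamma'\Gamma_{\mathrm{univ}}$, which is precisely the mechanism behind the Atiyah--Singer argument that the paper leaves implicit.

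One step is overstated, however. The sentence ``this shows $\bar\vartheta$ is surjective'' does not follow from your construction, which produces an exact preimage only when $-1\notin\sigma_p(U)$, and the failure off that locus is not marginal. For any Fredholm family $A$ one has $-\vartheta(A)\Gamma_{\mathrm{univ}}=\expo{\pi A\Gamma_{\mathrm{univ}}}$, the subspace on which $\vartheta(A)$ agrees with $\Gamma_{\mathrm{univ}}$ is exactly $\ker\big(\expo{\pi A\Gamma_{\mathrm{univ}}}+\n{1}\big)$, and on this subspace $A\Gamma_{\mathrm{univ}}$ is invertible and anticommutes with $\Gamma_{\mathrm{univ}}$; hence that agreement subspace always carries $\Gamma_{\mathrm{univ}}$-eigenspaces of equal dimension. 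Consequently a gradation $\Gamma'\in\bb{G}(\bb{E}_{\mathrm{univ}})$ whose agreement subspace with $\Gamma_{\mathrm{univ}}$ is, say, one-dimensional lies outside the image of $\vartheta$ altogether, and for $\Gamma'=\Gamma_{\mathrm{univ}}$ itself (where $U=-\n{1}$) your formula returns $A=\ii\Gamma_{\mathrm{univ}}$, which is not even self-adjoint. So surjectivity on homotopy classes requires an additional argument---that every class contains a ``balanced'' representative, or that the missing datum on the agreement subspace can be supplied using the infinite-rank universality---and this is part of the same deferred analysis as injectivity. Since your final conclusion is explicitly conditional and the paper defers the very same points, this is a local overstatement rather than a wrong approach; but the surjectivity claim must sit under the same ``granting'' umbrella as the injectivity claim, not before it.
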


\medskip

\noindent
{Proposition \ref{prop:Karoubi_vs_Fredholm} is  essentially proved in \cite[Section 4.2 \& Section 4.2]{gomi-17} and the proof  boils down to the approach developed in  \cite{atiyah-singer-69}.
The basic result is provided by  Lemma  \ref{lemma:family_to_grad} which assures that 
 for a given self-adjoint Fredholm family $A \in \bb{F}(\bb{E}_{\mathrm{univ}})$ the operator $\vartheta(A):= \expo{\pi A\Gamma_{\mathrm{univ}}}\Gamma_{\mathrm{univ}}$ is a gradation such that $\vartheta(A)-\Gamma_{\mathrm{univ}}$ is fiberwise compact.
Then $\vartheta$ defines a map $\vartheta:\bb{F}(\bb{E}_{\mathrm{univ}}) \to \bb{G}(\bb{E}_{\mathrm{univ}})$ which descends to a map on the set of homotopy classes. By means a suitable generalization of the argument used in \cite{atiyah-singer-69}, and the  
previous isomorphisms proved in Proposition \ref{prop:finite_vs_infinite_Karoubi},
one can  show that the induced map $\vartheta$ realizes indeed the bijection in Proposition \ref{prop:Karoubi_vs_Fredholm}.}

 \medskip

\begin{remark}[Recalibration of the twist]\label{rk:twist_period}
{\upshape 
{ In \cite[Definition 3.1]{gomi-17} the notion of the Fredholm family is given in terms of
 skew-adjoint operators  and the self-adjoint version given in Definition \ref{def:self_adjoint_fredholm_family} is mentioned in \cite[Remark 4.12]{gomi-17}.
  These Fredholm famlies  lead to the $K$-theories ${}^\varpi \n{K}_{\n{G}}^{(\tau, c) +(r,s)}(X)$ and ${}^\varpi \acute{\n{K}}_{\n{G}}^{(\tau, c) +(r,s)}(X)$, respectively. The former is the $K$-theory mainly studied in \cite{gomi-17} while the latter is nothing but $\bb{F}(\bb{E}_{\mathrm{univ}})/\sim$ in Proposition \ref{prop:Karoubi_vs_Fredholm}. Usually, one has a one-to-one correspondence between self-adjoint operators  and skew-adjoint operators   given by the multiplication with the imaginary unit $\ii$. 
However, in the presence of a non-trivial $\varpi$, this correspondence violates the compatibility relations with the $\rho(g)$ (\cf Definition \ref{dfn:twisted_bundle_ungraded}), and consequently
 does not provide an isomorphism between ${}^\varpi \acute{\n{K}}_{\n{G}}^{(\tau, c) +(r,s)}(X)$ and ${}^\varpi \n{K}_{\n{G}}^{(\tau, c) +(r,s)}(X)$. 
 Another correspondence between self-adjoint operators  and skew-adjoint operators is provided by
  $A \mapsto A\Gamma$  whenever $\Gamma$ is a gradation such that $A\Gamma = -\Gamma A$. 
 The correspondence $A \mapsto A\Gamma$ together with a simultaneous modification of the symmetries $\rho\mapsto  \hat{\rho}$ and $\gamma\mapsto  \hat{\gamma}$ where
 $$
 \hat{\gamma}(a)\;:=\;{\gamma}(a)\Gamma\;,\qquad\quad \hat{\rho}(g)\;:=\;
\left\{
\begin{aligned}
&\rho(g), &\qquad&\text{if}\quad c(g) = +1 \\
&\rho(g)\Gamma  &\qquad&\text{if}\quad c(g)- 1
\end{aligned}\;,
\right.
 $$ 
leads to the isomorphism ${}^\varpi \acute{\n{K}}_{\n{G}}^{(\tau, c) +(r,s)}(X) \to {}^\varpi \n{K}_{\n{G}}^{(\hat{\tau}, c) +(s,r)}(X)$, where the new cocycle $\hat{\tau}$ is related to $\tau$ by
$$
\hat{\tau}(g_1, g_2)
\;:=\;
\left\{
\begin{aligned}
&+\tau(g_1, g_2), &\qquad&\text{if}\quad c(g_1) = 1\; \text{or}\; c(g_2) = 1 \\
&- \tau(g_1, g_2)  &\qquad&\text{if}\quad c(g_1) = c(g_2)=- 1
\end{aligned}
\right.
$$
The cocycles $\tau$ and $\hat{\tau}$ are equivalent if $\varpi$ or $c$ are trivial, so that their difference is essential only if both $\varpi$ and $c$ are non-trivial. Notice also that $\hat{\hat{\tau}} = \tau$. 
The isomorphisms
\begin{equation}\label{eq:sio-K_renew}
{}^\varpi  \s{K}_{\n{G}}^{({\tau}, c) + (r,s)}(X)\;\simeq\; {}^\varpi \acute{\n{K}}_{\n{G}}^{(\tau, c) +(r,s)}(X)\;\simeq\;{}^\varpi \n{K}_{\n{G}}^{(\hat{\tau}, c) + (s,r)}(X)
\end{equation}
combined with the periodicity in \cite[Lemma 3.5]{gomi-17} provides a proof of Proposition \ref{prop:freed_moore_K_property}.
In particular one has that
\begin{equation}\label{eq:for_computation}
{}^\varpi  \s{K}_{\n{G}}^{({\tau}, c) + n}(X)\;\simeq\; {}^\varpi \n{K}_{\n{G}}^{(\hat{\tau}, c) +n}(X)
\end{equation}
where $n:=s-r$ mod. 8. The latter isomorphism will be used in the computation of Section \ref{sec:explicit_comp}.
}
}\hfill $\blacktriangleleft$
\end{remark}

\medskip

\begin{remark}\label{rk:twist_period_2}
{\upshape 
Let ${}^\varpi{\rm Vec}^{(\tau, c) + (r, s)}_{\n{G}}(X)$ be the monoid of isomorphism classes of finite rank $(\varpi, c, \tau)$-twisted $\n{G}$-equivariant vector bundles on $X$ with $C\ell^{r,s}$-action and gradations. There is a forgetful functor which induces a homomorphism $f:{}^\varpi\mathrm{Vec}^{(\tau, c) + (r, s+1)}_{\n{G}}(X) \to {}^\varpi\mathrm{Vec}^{(\tau, c) + (r, s)}_{\n{G}}(X)$. We write ${}^\varpi\mathrm{Triv}^{(\tau, c) + (r, s)}_{\n{G}}(X)$ for the submonoid  given by the image of $f$ in  ${}^\varpi\mathrm{Vect}^{(\tau, c) + (r, s)}_{\n{G}}(X)$  and 
$$
{}^\varpi \n{K}^{(\tau, c) + (r, s)}_{\n{G}}(X)_{\mathrm{fin}} 
\;:= \;{}^\varpi\mathrm{Vect}^{(\tau, c) + (r, s)}_{\n{G}}(X)/
{}^\varpi\mathrm{Triv}^{(\tau, c) + (r, s)}_{\n{G}}(X)
$$ 
for the quotient monoid. In the case that $c$ is trivial and $r = s = 0$, one can identify ${}^\varpi \n{K}^{\tau + (0, 0)}_G(X)_{\mathrm{fin}}$ with the Grothendieck construction applied to the monoid of isomorphism classes of finite rank $(\varpi, \tau)$-twisted $\n{G}$-equivariant (ungraded) vector bundles on $X$ without Clifford action. Any $(\varpi, \tau)$-twisted vector bundle $(\bb{E}, \rho)$ on $X$ with a gradation $\Gamma$ admits the trivial (skew-adjoint) Fredholm family $A \equiv 0$. Thus, by the universality, we have a natural homomorphism $\imath : {}^\varpi \n{K}^{\tau + (0, 0)}_{\n{G}}(X)_{\mathrm{fin}} \to {}^\varpi \n{K}^{\tau + (0, 0)}_{\n{G}}(X)$. It is known \cite{freed-moore-13} that the homomorphism $\imath$ is bijective. This fact justifies the original formulation of the Freed-Moore $K$-theoy based on finite rank twisted vector bundles. We refer the reader to \cite[Section 4.4]{gomi-17} for more details about the finite rank realizability of the $K$-theory.
}\hfill $\blacktriangleleft$
\end{remark}


\subsection{$K$-theory for  $\eta$-self adjoint operators}
\label{sec:eta-K-theory}
The framework described in Section \ref{sec:karubi_vs_freed} 
admits a suitable generalization  adapted for the case of 
 gapped $\eta$-self adjoint operators with a 
 $\s{C}$-symmetry. 
Let us recall two fundamental facts. In view of Corollary
\ref{coro_reduc_eta_self_comm} every $\eta$-selfadjoint operator with a $C$-symmetry can be $\eta$-unitarily reduced to a self-adjoint operator $H=H^*$ commuting with the fundamental symmetry $\eta$.
Moreover, Theorem \ref{theo_fund_rep_res} shows that this  reduction transforms $\eta$-quantum symmetries compatible with the $\s{C}$-symmetry
in unitary or anti-unitary operators which {commute} or {anti-commute} with the metric $\eta$ according to the type of the original symmetry. The bundle version of these facts has been summarized in Observation B in Section \ref{sec:hilb_pict}. Te latter justifies the introduction of the following structures.
\begin{assumption}[Framework for $\eta$-self adjoint systems] \label{assumption:freed_moore_K-eta}
Let $X$, $\n{G}$, $\varpi$, $\wp$, $c$, $\tau$ be as follows:
\begin{itemize}
\item[(a)]
$X$ is a compact Hausdorff space;
\vspace{1mm}

\item[(b)]
$\n{G}$ is a finite group acting on $X$ from the left.
\vspace{1mm}

\item[(c)]
$\varpi : \n{G} \to \Z_2$, $\wp : \n{G} \to \Z_2$ and $c : \n{G} \to \Z_2$ are homomorphisms.
\vspace{1mm}

\item[(d)] $\tau\in {Z}^2_{\mathrm{group}}(\n{G}; C(X, \n{U}(1))_\varpi)$
\end{itemize}
\end{assumption}

\begin{definition}[Twisted equivariant Krein  bundle] \label{dfn:twisted_bundle_ungraded_eta}
Let $X$, $\n{G}$, $\varpi$, $\wp$, $c$, $\tau$ be as in Assumption \ref{assumption:freed_moore_K-eta}.
A \emph{$(\varpi, \wp,c,\tau)$-twisted $\n{G}$-equivariant (ungraded) Krein bundle} on $X$ with $C\ell^{r,s}$-action (or a \emph{twisted Krein bundle} for short) is a Krein vector bundle $\pi:\bb{E}\to X$ with fundamental symmetry $\eta$ (\cf Definition \ref{def:krein_bundle})
equipped with the following data:
\begin{itemize}
\item[(a)] A
\emph{twisted $\n{G}$-action} provided by a  representation $\rho$ of $\n{G}$ on the total space  $\bb{E}$
 which covers the left action of $\n{G}$ on $X$ according to the following diagram
$$
\begin{diagram}
\bb{E}&\rTo^{\rho(g)}&  \bb{E}\\
\dTo^{\pi}&&\dTo_{\pi}\\
X&\rTo_{g}&  X\\
\end{diagram} \qquad\quad \forall\;g\in\n{G}
$$
and which
 satisfies
\begin{align*}
\rho(g)^* &\;=\;  \rho(g)^{-1}&\quad&\text{{(isometry)}} \\\vspace{1mm}
\rho(g) \eta &\;=\;  \wp(g)\eta \rho(g)&\quad&\text{{(proper-isometry vs. pseudo-isometry)}} \\\vspace{1mm}
 \rho(g)\; \ii\n{1} &\;=\; \varpi(g)\ii\; \rho(g)&\quad&\text{(linearity vs. anti-linearity)} \\
 \vspace{1mm}
\rho(g_1)\rho(g_2) &\;=\; \tau(g_1, g_2) \rho(g_1g_2)&\quad&\text{(projective representation)}
\end{align*}
for all $g,g_1,g_2\in\n{G}$.\vspace{1mm}
\item[(b)] 
A \emph{(unitary)  action}
$\gamma$ of the Clifford algebra
$C\ell^{r,s}$ 
 (in the sense of Definition \ref{def:cliff_act})
which meets the structural conditions
$$
\begin{aligned}
{\gamma(a) \eta} \;&=\;  {\eta \gamma(a)}&\quad&\text{(compatibility Krein-Clifford structure )}\\
\vspace{1mm}
\gamma(a) \rho(g) \;&=\; c(g) \rho(g) \gamma(a)&\quad&\text{(Koszul sign rule)}
\end{aligned}
$$
for all $a\in C\ell^{r,s}$ and $g\in\n{G}$.
\end{itemize}
A homomorphism $f : (\bb{E},\eta, \rho, \gamma) \to (\bb{E}',\eta', \rho', \gamma')$ of twisted Krein bundles  is  a complex linear map $f : \bb{E} \to \bb{E}'$ which covers the identity of $X$ and satisfies
\begin{align*}
f \circ \eta &\;=\; \eta' \circ f \\
f \circ \rho(g) &\;=\; \rho'(g) \circ f \\
f \circ \gamma(a) &\;=\; \gamma'(a) \circ f
\end{align*}
for all $a\in C\ell^{r,s}$ and $g\in\n{G}$.
When $f$ is bijective $(\bb{E},\eta, \rho, \gamma)$ and $(\bb{E}',\eta', \rho', \gamma')$ are called \emph{isomorphic}. 
\end{definition}

\begin{remark}[$\eta$-PUA-representation]\label{rk:eta_PUA-rep}
{\upshape 
The only difference between   Assumption \ref{assumption:freed_moore_K} and Assumption \ref{assumption:freed_moore_K-eta} is in item (c). In the latter case  the presence of the map $\wp$ allows to  discriminate between (anti)unitarity and pseudo-(anti)unitarity symmetries.
Condition (a) in Definition \ref{dfn:twisted_bundle_ungraded_eta} says that on the bundle $\pi:\bb{E}\to X$ acts the (finite) group $\n{G}$ through
 the representation $\rho:\n{G}\to {\tt QS}_\eta(\rr{H}_{\bb{E}})\cap {\rm Cov}(\bb{E})$ 
 which associates to any $g\in\n{G}$ the covariant $\eta$-quantum symmetry $\rho(g)$. 
 More precisely any $\rho(g)$ is induced by a covariant endomorphism of the bundle $\bb{E}$ which is unitary or anti-unitary and which {commute} or {anti-commute} with the metric $\eta$.
 The combination of these  properties is modeled out of Observation B in Section \ref{sec:hilb_pict}.
The presence of the $2$-cocycle $\tau$ 
suggests that
 the representation $\rho$ is projective, in general. By adapting the jargon of Remark \ref{rk:PUA-rep}, we can refer to $\rho$ as a covarianta \emph{$\eta$-PUA-representation} of the group $\n{G}$.}
\hfill $\blacktriangleleft$
\end{remark}

\medskip

Gapped $\eta$-self-adjoint Hamiltonians with a $\s{C}$-symmetry are reducible to 
gapped self-adjoint Hamiltonians commuting with $\eta$. At this point the \virg{spectral flattening} described in Remark \ref{rk:grad-gap} can be applied as well.
 This observation   suggests that gapped $\eta$-self-adjoint operators with a $\s{C}$-symmetry can be classified by 
\emph{$\eta$-gradations}: 
\begin{definition}[$\eta$-gradation]\label{def_eta_grad}
Let $(\bb{E}, \eta,\rho, \gamma)$ be a $(\varpi,\wp, c,\tau)$-twisted $\n{G}$-equivariant (ungraded) Krein bundle on $X$
 as in Definition \ref{dfn:twisted_bundle_ungraded_eta}. An \emph{$\eta$-gradation}  of $(\bb{E}, \eta,\rho, \gamma)$ is a $\Gamma\in{\rm End}_{\rm lin}(\bb{E})$
such that:
\begin{itemize}
\item[(a1)] $\Gamma$ is a  self-adjoint involution, \ie $\Gamma=\Gamma^*$ and $\Gamma^2=\n{1}$;
\vspace{1mm}
\item[(a2)] $\Gamma\eta=\eta \Gamma$;
\vspace{1mm}
\item[(b)] The relations 
\begin{align*}
\Gamma \rho(g) &\;=\; c(g)\; \rho(g) \Gamma \\
\Gamma \gamma(a) &\;=\; -\gamma(a)\Gamma
\end{align*}
hold for all $a\in C\ell^{r,s}$ and $g\in\n{G}$.
\end{itemize}
\end{definition}

\medskip

By mimicking the construction of the Freed-Moore $K$-theory discussed in Section \ref{sec:karubi_vs_freed}
one can introduce the notions of isomorphism of twisted Krein bundles and of homotopy equivalence of $\eta$-gradations.  By focusing on {triples} $(\bb{E}, \Gamma_0, \Gamma_1)$ given by a  $(\varpi,\wp, c,\tau)$-twisted $\n{G}$-equivariant  Krein bundle $(\bb{E},\eta, \rho, \gamma)$  on $X$  and two $\eta$-gradations $\Gamma_0$ and $\Gamma_1$, along with the notions of equivalence relation and direct sum   introduced in Section \ref{sec:karubi_vs_freed}, one  defines the following objects:
\begin{definition}[$_\eta K$-groups - finite rank case]\label{def_K_group_eta}
Let $X$, $\n{G}$, $\varpi$, $\wp$, $c$, $\tau$ be as in Assumption \ref{assumption:freed_moore_K-eta}. Then:

\begin{itemize}
\item
 ${}^\varpi_\eta \s{M}^{(\tau, c,\wp) + (r, s)}_{\n{G}}(X)$ is the \emph{abelian monoid} of isomorphism classes of triples $(\bb{E}, \Gamma_0, \Gamma_1)$ with addition  given by the direct sum of triples; 
\vspace{1mm}
\item
${}^\varpi_\eta \s{Z}^{(\tau, c,\wp) + (r, s)}_{\n{G}}(X)$ is the submonoid consisting of isomorphism classes of triples $(\bb{E}, \Gamma_0, \Gamma_1)$ such that $\Gamma_0$ and $\Gamma_1$ are \emph{homotopy equivalent};
\vspace{1mm}
\item
 ${}^\varpi_\eta \s{K}^{(\tau, c,\wp) + (r, s)}_\n{G}(X) := {}^\varpi_\eta \s{M}^{(\tau, c,\wp) + (r, s)}_{\n{G}}(X)/{}^\varpi_\eta \s{Z}^{(\tau, c,\wp) + (r, s)}_{\n{G}}(X)$ is  the \emph{quotient monoid}.
\end{itemize}
\end{definition}

\medskip

\noindent 
The properties of the $K$-theory defined above  will be investigated in the last part of this work by a reduction of the monoids  
${}^\varpi _\eta\s{K}^{(\tau, c,\wp) + (r, s)}_\n{G}(X)$ to the Freed-Moore $K$-theory 
${}^\varpi \s{K}^{(\tau, c) + (r, s)}_\n{G}(X)$ described in Section \ref{sec:karubi_vs_freed}.
Once this reduction will be proved, the extension to the infinite rank case and the relation with the
Fredholm operator formulation will follow straightforwardly  as in Section \ref{subsec:infinite_dimensional_Karoubi} and  Section \ref{sec:Fredholm-K} respectively.


\subsection{Reduction to the original Freed-Moore $K$-theory}
\label{sec_reduc_eta-K-to-K}
Let us start by considering the case of a trivial $\wp\equiv+1$. 
which is the case, for instance, 
 when $\n{G}$ reduces to the  trivial group. In this case any $(\varpi,\wp, c,\tau)$-twisted $(\bb{E},\eta, \rho, \gamma)$  bundle admits a decomposition 
$$
(\bb{E},\eta, \rho, \gamma)\;
=\; (\bb{E}^+,\eta^+\equiv+\n{1}, \rho^+, \gamma^+)
\; \oplus\;
(\bb{E}^-,\eta^-\equiv-\n{1}, \rho^-, \gamma^-)
$$
in which the sectors $(\bb{E}^\pm,\eta^\pm, \rho^\pm, \gamma^\pm)$, generated by projecting on the the eigenspaces  of $\eta$ are $(\varpi, \tau, c)$-twisted bundles in the sense of Definition \ref{dfn:twisted_bundle_ungraded}. Also the $\eta$-gradations have a corresponding decomposition 
and one finds that the $_\eta K$-theory in Definition \ref{def_K_group_eta} splits in the sum of two copies of the Freed-Moore $K$-theory of Definition \ref{def_K_group}, namely
\begin{equation}\label{eq:splitting_trivial}
{}^\varpi_\eta \s{K}^{(\tau, c,\wp\equiv+1) + (r, s)}_\n{G}(X)\;
\simeq\; {}^\varpi \s{K}^{(\tau, c) + (r, s)}_\n{G}(X)\; \oplus\;
{}^\varpi \s{K}^{(\tau, c) + (r, s)}_\n{G}(X)\;.
\end{equation}
The splitting phenomenon described by \eqref{eq:splitting_trivial} is illustrated in the concrete (simple) example discussed in Appendix \ref{app:explicit-2}.
However, a reduction similar to \eqref{eq:splitting_trivial} can be systematically extended to the general case.

\medskip

To reduce the $_\eta K$-theory in Definition \ref{def_K_group_eta} to Freed-Moore$K$-theory of Definition \ref{def_K_group}, let us start with a recipe to produce data in Assumption \ref{assumption:freed_moore_K} from those in Assumption \ref{assumption:freed_moore_K-eta}. 
This recipe partly mimics the strategy of the proof of Proposition \ref{prop:reduc_clifford}.  
In particular, let us recall that given the group $\n{G}$ which acts on the left on the space $X$ one can form the extended group $\n{G}':=\n{G}\times \Z_2$. Also this group acts on $X$ in the following way:
Let $p:\n{G}'\to\n{G}$ be the first factor projection given by $p((g,\epsilon))=g$ for all $g\in\n{G}$ and $\epsilon\in\Z_2$, then  $g' x=p(g')x$
for all $g':=(g,\epsilon)\in\n{G}$ and $x\in X$. 
\begin{definition}[Reduction of data] \label{dfn:dictionary_of_data}
Let 
$X$, $\n{G}$, $\varpi$, $\wp$, $c$, $\tau$ be a set of data as in Assumption \ref{assumption:freed_moore_K-eta}. Consider a new set of data $X'$, $\n{G}'$,  $\varpi'$, $c'$, $\tau'$ defined as follows:

\begin{itemize}
\item
The base space is unchanged: ${X'} = X$;\vspace{1mm}

\item
The new group is $\n{G}':=\n{G}\times \Z_2$ and  acts on ${X'} = X$ via the first factor projection $p:\n{G}'\to\n{G}$;\vspace{1mm}

\item
The new homomorphisms ${\varpi'} : \n{G}' \to \Z_2$ and ${c}' : \n{G}' \to \Z_2$ are given by composing $\phi$ and $c$ with the projection $p$, \ie 
\begin{align*}
{\varpi}' &\;=\; \varpi \circ p\;,\qquad\quad
{c}' \;=\; c \circ p\;;
\end{align*}

\item
The new $2$-cocycle ${\tau}' \in {Z}^2_{\mathrm{group}}(\n{G}; C(X', \n{U}(1))_{\varpi'})$ is given by
$$
{\tau}' \big((g_1, \epsilon_1), (g_2, \epsilon_2)\big)
= \wp(g_2)^{\frac{1-\epsilon_1}{2}} \tau(g_1, g_2)\;
$$
for all $g_1, g_2\in \n{G}$ and   $\epsilon_1,\epsilon_2\in\Z_2=\{\pm 1\}$.
\end{itemize}

The new data  $X'$, $\n{G}'$,  $\varpi'$, $c'$, $\tau'$  fulfill all the requirements of
Assumption \ref{assumption:freed_moore_K}.
\end{definition}

\medskip
The reduction of data described in  Definition \ref{dfn:dictionary_of_data}  essentially consists in 	
eliminating  the map $\wp$ which takes into account the  presence of a Krein structure induced by $\eta$.
As will be clarified in the next result, this reduction  is compensated by
the
factor $\Z_2$ in the extended group $\n{G} \times \Z_2$
that will be  responsible for the specification  of an indefinite inner product.
In order to properly enunciate the Lemma below it is necessary to introduce a bit of terminology. Let
$\rho':\n{G} \times \Z_2\to{\tt QS}(\rr{H}_{\bb{E}})\cap {\rm Cov}(\bb{E})$ be a PUA-representation of a Hilbert bundle $\bb{E}$ over $X$.
Observe that in view of the group law one has that 
$$
\rho'\big((g,\epsilon)\big)\;=\;\rho'\big((g,+1)\big)\rho'\big((e,\epsilon)\big)\;,\qquad\quad g\in\n{G}\;,\;\; \epsilon\in\Z_2
$$
where $e$ is the unit of $\n{G}$. One says that $\rho'$ is a \emph{reducible}  PUA-representation   if 
$\rho' ((e,\epsilon) )\;=\;\pm\n{1}$. If this is this case  one has that $\rho'((g,\epsilon))=\pm\rho' ((g,+1))=:\rho(g)$ where  $\rho$ is a PUA-representation of the group $\n{G}$.
It follows that the PUA-representation $\rho'$ is called \emph{non-reducible} when $\eta':=\rho' ((e,-1) )\neq \pm\n{1}$. Assume that the extended group $\n{G} \times \Z_2$ acts on $X$ only via the factor $\n{G}$. In this case $\eta'$ preserves the fibers of $\bb{E}$, namely $\eta'\in {\rm Cov}(\bb{E})$. Moreover, the group law provides $(\eta')^2=\rho' ((e,+1) )=\n{1}$. Then $\eta'$ is a unitary involution, hence self-adjoint. The \emph{non-reducible} PUA-representation $\rho'$ is called \emph{balanced} if in each fiber of $\bb{E}$ the dimensions of the eigenspaces of $\eta'$ related to the eigenvalues $\pm1$ have  same (possibly infinite) dimension.

\begin{lemma} \label{lem:correspondence_key_to_reduction}
Let $X'$, $\n{G}'$,  $\varpi'$, $c'$, $\tau'$  be the set of data obtained from $X$, $\n{G}$, $\varpi$, $\wp$, $c$, $\tau$ according to the recipe in Definition \ref{dfn:dictionary_of_data}. Then, there is a natural one-to-one correspondence between $(\varpi,\wp, c,\tau)$-twisted $\n{G}$-equivariant  Krein bundles on $X$ with $C\ell^{r,s}$-action and  $(\varpi', c',\tau')$-twisted $\n{G}'$-equivariant  Hilbert bundle on $X$ (endowed with a non-reducible balanced PUA-representation $\rho'$) with $C\ell^{r,s}$-action.
Moreover,
under this correspondence 
also gradations and $\eta$-gradations are in 
one-to-one correspondence.

\end{lemma}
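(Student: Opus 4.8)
The plan is to construct the correspondence explicitly in both directions and check that all the structural data match. The key observation is that the reduction in Definition \ref{dfn:dictionary_of_data} trades the extra map $\wp$ for the extra $\Z_2$-factor in $\n{G}'=\n{G}\times\Z_2$: the element $(e,-1)$ is forced to carry precisely the information of the fundamental symmetry $\eta$.

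First I would describe the forward direction. Given a $(\varpi,\wp,c,\tau)$-twisted $\n{G}$-equivariant Krein bundle $(\bb{E},\eta,\rho,\gamma)$ on $X$, I set $\eta':=\rho'((e,-1))$ to be $\eta$ itself, and I define the $\n{G}'$-representation by
$$
\rho'\big((g,+1)\big)\;:=\;\rho(g)\;,\qquad\quad \rho'\big((g,-1)\big)\;:=\;\rho(g)\,\eta\;.
$$
Since $\rho(g)$ is unitary and $\eta$ is a self-adjoint unitary involution, each $\rho'((g,\epsilon))$ is again unitary. The Krein structure is forgotten (we view $\bb{E}$ as a plain Hilbert bundle), and I keep the same Clifford action $\gamma$. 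The $\wp$-twisted relation $\rho(g)\eta=\wp(g)\eta\rho(g)$ from Definition \ref{dfn:twisted_bundle_ungraded_eta}(a) is exactly what turns the projective multiplication of the $\rho'((g,\epsilon))$ into the prescribed cocycle: a direct computation of $\rho'((g_1,\epsilon_1))\rho'((g_2,\epsilon_2))$, pushing the $\eta$ factors past the $\rho(g)$ using $\wp$ and collecting the scalar $\tau(g_1,g_2)$, yields the factor $\wp(g_2)^{\frac{1-\epsilon_1}{2}}\tau(g_1,g_2)$, which is precisely $\tau'((g_1,\epsilon_1),(g_2,\epsilon_2))$. One checks that $\varpi'=\varpi\circ p$ records linearity correctly (since $\eta$ is linear, multiplying by it does not change $\varpi$) and that $c'=c\circ p$ records the Koszul sign (since $\gamma(a)\eta=\eta\gamma(a)$, the sign $c(g)$ is unchanged when passing from $\rho(g)$ to $\rho(g)\eta$). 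The resulting PUA-representation $\rho'$ is non-reducible and balanced precisely because $\eta'=\eta\neq\pm\n{1}$ has eigenspaces of equal dimension — this is guaranteed by the Krein (rather than merely indefinite-metric) hypothesis, \cf Definition \ref{def:krein_bundle}.

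For the reverse direction, starting from a $(\varpi',c',\tau')$-twisted $\n{G}'$-equivariant Hilbert bundle with a non-reducible balanced $\rho'$, I recover $\eta:=\rho'((e,-1))$, which by the balanced assumption is a fundamental symmetry making $\bb{E}$ a Krein bundle, and I set $\rho(g):=\rho'((g,+1))$. The compatibility relations for $\rho$ with $\eta$ (the sign $\wp$) are read back from the cocycle relation $\rho'((g,+1))\rho'((e,-1))=\tau'((g,+1),(e,-1))\,\rho'((g,-1))$ versus $\rho'((e,-1))\rho'((g,+1))$, which together pin down $\rho(g)\eta=\wp(g)\eta\rho(g)$ with the correct $\wp(g)$. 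I then verify that the two constructions are mutually inverse on the nose, and that homomorphisms of twisted Krein bundles (which by definition intertwine $\eta$, $\rho$, $\gamma$) correspond bijectively to homomorphisms of the associated twisted Hilbert bundles (which must intertwine $\rho'$, and hence automatically $\eta'$ and $\rho$). Finally, because condition (a2) $\Gamma\eta=\eta\Gamma$ in Definition \ref{def_eta_grad} is exactly the requirement $\Gamma\rho'((e,-1))=\rho'((e,-1))\Gamma$ — which is the $c'((e,-1))=+1$ instance of the gradation compatibility in Definition \ref{def_grad}(b) — the $\eta$-gradations of the Krein bundle match precisely the ordinary gradations of the Hilbert bundle, so the correspondence respects gradations as claimed.

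I expect the main obstacle to be the cocycle bookkeeping: verifying that the multiplication law of the $\rho'((g,\epsilon))$ reproduces $\tau'$ exactly as stated requires carefully commuting $\eta$ through $\rho(g_2)$ (which introduces the sign $\wp(g_2)$ only when the left factor has $\epsilon_1=-1$, explaining the exponent $\tfrac{1-\epsilon_1}{2}$) and simultaneously tracking whether $\rho(g_1)$ is linear or anti-linear, since in the anti-linear case $\rho(g_1)$ conjugates the scalar $\tau(g_1,g_2)$ — but this is harmless because $\tau$ already transforms in the $\varpi$-twisted module $C(X,\n{U}(1))_\varpi$, so the $\varpi'$-twisting is consistent by construction. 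The second delicate point is confirming that $\tau'$ genuinely satisfies the $2$-cocycle condition \eqref{eq:2cocy-cond} for $(\n{G}',\varpi')$; this follows from the cocycle condition for $\tau$ together with the homomorphism property of $\wp$, but the case analysis over the eight sign patterns of $(\epsilon_1,\epsilon_2,\epsilon_3)$ is where the verification is least automatic.
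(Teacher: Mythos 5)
Your construction is exactly the paper's: in both directions you use $\rho'\big((g,\epsilon)\big)=\rho(g)\,\eta^{\frac{1-\epsilon}{2}}$, recover $\eta=\rho'\big((e,-1)\big)$ from the non-reducible balanced hypothesis, and match $\eta$-gradations with gradations via the $c'\big((e,-1)\big)=+1$ compatibility for the element $(e,-1)$. The only difference is that you spell out the cocycle bookkeeping (the sign $\wp(g_2)^{\frac{1-\epsilon_1}{2}}$ and the verification that $\tau'$ is a $2$-cocycle), which the paper compresses into ``a direct check,'' so your proposal is a correct and slightly more detailed rendering of the same proof.
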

\begin{proof}
($\Rightarrow$)
Let $(\bb{E},\eta, \rho, \gamma)$ be a $(\varpi,\wp, c,\tau)$-twisted $\n{G}$-equivariant  Krein bundles on $X$. Consider the triple $(\bb{E}', \rho', \gamma')$ defined as follows: $\bb{E}':=\bb{E}$, $\gamma':=\gamma$ and
$$
\rho'\big((g, \epsilon)\big)\;:=\;\rho(g)\eta^{\frac{1-\epsilon}{2}}\:=\;
\left\{
\begin{aligned}
\rho(g)&\quad&\text{if}\;\; \epsilon=+1\;\,\\
\rho(g)\eta&\quad&\text{if}\;\; \epsilon=-1\;.
\end{aligned}
\right.
$$ 
A  direct check shows that $(\bb{E}', \rho', \gamma')$ is a $(\varpi', c',\tau')$-twisted $\n{G}'$-equivariant  Hilbert bundle on $X$. Moreover $(\bb{E},\eta, \rho, \gamma)$ and $(\bb{E}', \rho', \gamma')$ have by construction the same Clifford action. If $\Gamma$ is an $\eta$-gradation of $(\bb{E},\eta, \rho, \gamma)$
then the condition $\eta\Gamma=\Gamma\eta$
assures that 
$$
\rho'\big((g, \epsilon)\big)\;=\;\Gamma\rho(g)\eta^{\frac{1-\epsilon}{2}}\;=\;
c(g)\Gamma\eta^{\frac{1-\epsilon}{2}}\;=\;c'((g,\epsilon))\rho'\big((g, \epsilon)\big)\Gamma
$$
and this proves that $\Gamma$ is automatically a gradation of $(\bb{E}', \rho', \gamma')$.\\
($\Leftarrow$) Now the converse implication.
Let $(\bb{E}, \rho', \gamma)$ be a $(\varpi', c',\tau')$-twisted $\n{G}\times \Z_2$-equivariant  Hilbert bundle on $X$ such that 
the action of the factor $\Z_2$ is trivial on $X$ and the maps $(\varpi', c',\tau')$ are related to the maps 
$(\varpi, c,\tau)$ defined on $\n{G}$ according to the prescription in Definition \ref{dfn:dictionary_of_data}. Let $\eta:=\rho'((e, -1))$. Since 
the factor $\Z_2$ acts trivially on $X$ it follows that $\eta$ preserves the fibers of  $\bb{E}$.
Since the PUA-representation  $\rho'$ is non-reducible and balanced the operator $\eta$ endows $\bb{E}$ with a Krein structure.
The PUA-representation  $\rho'$ can be represented as
$$
\rho'\big((g, \epsilon)\big)\;=\;\rho(g)\eta^{\frac{1-\epsilon}{2}}
$$
where $\rho(g):=\rho'((g,+1))$ provides a PUA-representation of $\n{G}$.
 A  direct check shows that $(\bb{E}, \eta, \rho, \gamma)$ is a $(\varpi,\wp, c,\tau)$-twisted $\n{G}$-equivariant  Krein bundles on $X$. If $\Gamma$ is a gradation of $(\bb{E}, \rho', \gamma)$ then 
 $$
 \Gamma \rho'\big((g, \epsilon)\big)\;=\;c'((g,\epsilon))\rho'\big((g, \epsilon)\big)\Gamma\;=\;c(g)\rho(g)\eta^{\frac{1-\epsilon}{2}}\Gamma\;.
 $$
The same computation for the group element $(e, -1)$ provides $\Gamma\eta=\eta\Gamma$ which in turn implies  $\Gamma \rho(g)=c(g)\rho(g)\Gamma$. This proves that $\Gamma$ behaves like an
$\eta$-gradation of $(\bb{E}, \eta, \rho, \gamma)$. 
\end{proof}

\medskip

The correspondence established  in Lemma \ref{lem:correspondence_key_to_reduction} immediately leads to the reduction anticipated  at the beginning of this Section.

\begin{theorem}[$K$-theory reduction]\label{theo:rid_eta_K-K}
For a given data set $\n{G}$, $\varpi$, $\wp$, $c$, $\tau$  as in Assumption \ref{assumption:freed_moore_K-eta} there is  
 a natural isomorphism of $K$-theories
$$
{}^\varpi_\eta \s{K}^{(\tau, c,\wp) + (r, s)}_\n{G}(X) 
\;\simeq\; 
{}^{\varpi'} \s{K}^{(\tau', c') + (r, s)}_\n{G'}(X),
$$
where the group $\n{G}'$, the homomorphisms $ \varpi',c'$ and the 2-cocyle ${\tau}'$  
are obtained from the original data according to the prescription described in 
 Definition \ref{dfn:dictionary_of_data}.
\end{theorem}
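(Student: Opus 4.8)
The plan is to build the isomorphism by promoting the pointwise/bundle-level correspondence of Lemma \ref{lem:correspondence_key_to_reduction} to an isomorphism of the associated monoids and quotient monoids. First I would observe that both sides of the desired isomorphism are constructed by the same three-step recipe (monoid of triples, submonoid of trivial triples, quotient); consequently it suffices to establish a natural bijection on the level of triples $(\bb{E}, \Gamma_0, \Gamma_1)$ that respects isomorphism, direct sum, and homotopy of gradations. The content of Lemma \ref{lem:correspondence_key_to_reduction} already furnishes exactly the pointwise data: a $(\varpi,\wp,c,\tau)$-twisted $\n{G}$-equivariant Krein bundle with an $\eta$-gradation corresponds to a $(\varpi',c',\tau')$-twisted $\n{G}'$-equivariant Hilbert bundle (carrying a non-reducible balanced PUA-representation $\rho'$) with an ordinary gradation, via $\bb{E}'=\bb{E}$, $\gamma'=\gamma$, and $\rho'((g,\epsilon))=\rho(g)\eta^{\frac{1-\epsilon}{2}}$.

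The key steps, in order, are the following. I would first check that the correspondence is \emph{functorial}: a homomorphism $f$ of twisted Krein bundles is, by Definition \ref{dfn:twisted_bundle_ungraded_eta}, a complex-linear map commuting with $\eta$, $\rho(g)$ and $\gamma(a)$; since $\rho'$ is assembled from $\rho$ and $\eta$, such an $f$ commutes with every $\rho'((g,\epsilon))$ and with $\gamma'=\gamma$, hence is a homomorphism of the associated twisted Hilbert bundles, and conversely. This gives a bijection on isomorphism classes. Second, I would verify compatibility with direct sums: the construction $\rho\oplus\rho'\mapsto (\rho\oplus\rho')'$ manifestly agrees with $\rho'\oplus(\rho')'$ because $\eta\oplus\eta'$ implements the new involution componentwise, so the monoid structures match and the correspondence is an isomorphism of abelian monoids ${}^\varpi_\eta \s{M}^{(\tau,c,\wp)+(r,s)}_{\n{G}}(X)\simeq{}^{\varpi'}\s{M}^{(\tau',c')+(r,s)}_{\n{G}'}(X)$, where on the right one restricts to the submonoid of balanced non-reducible representations. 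Third, because the $\eta$-gradation condition $\Gamma\eta=\eta\Gamma$ (item (a2) of Definition \ref{def_eta_grad}) is precisely what the proof of Lemma \ref{lem:correspondence_key_to_reduction} uses to turn an $\eta$-gradation into a gradation and back, the correspondence carries homotopies of $\eta$-gradations to homotopies of gradations (a path of $\eta$-gradations on $\bb{E}$ is literally a path of gradations on $\bb{E}'$, since $\bb{E}'=\bb{E}$ as a Hilbert bundle and the Clifford/equivariance relations transport verbatim). Hence the submonoids ${}^\varpi_\eta \s{Z}$ and ${}^{\varpi'}\s{Z}$ correspond, and passing to quotients yields the claimed isomorphism of $K$-groups.

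The one subtlety I would need to address carefully is that the right-hand monoid ${}^{\varpi'}\s{M}^{(\tau',c')+(r,s)}_{\n{G}'}(X)$ as defined in Definition \ref{def_K_group} ranges over \emph{all} $(\varpi',c',\tau')$-twisted $\n{G}'$-equivariant bundles, whereas Lemma \ref{lem:correspondence_key_to_reduction} produces only those whose PUA-representation $\rho'$ is non-reducible and balanced. I would resolve this by the stabilization/absorption argument: a reducible summand (where $\rho'((e,-1))=\pm\n{1}$) together with a balanced trivial partner can be absorbed so that every class in ${}^{\varpi'}\s{K}$ has a balanced representative, and balancedness is stable under the direct sums used in the Grothendieck-type quotient. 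Concretely, given any representative one may add a fixed balanced trivial triple (a copy of the universal balanced bundle with two homotopic gradations, which is zero in $K$-theory) to render $\eta'$ balanced without changing the class; this uses the infinite-dimensional universality of Lemma \ref{lem:locally_universal_bundle} in the infinite-rank setting, and the evenness hypothesis $\kappa_+=\kappa_-$ built into the Krein (vector) bundle of Definition \ref{def:krein_bundle} in the finite-rank setting.

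The hard part will be precisely this balancedness bookkeeping: one must confirm that the restriction to balanced non-reducible $\rho'$ does not shrink the $K$-group, i.e.\ that the inclusion of balanced classes into ${}^{\varpi'}\s{K}^{(\tau',c')+(r,s)}_{\n{G}'}(X)$ is surjective at the level of $K$-theory. Everything else is a routine transport of the algebraic relations (isometry, linearity-vs-antilinearity, projectivity, Koszul sign, and the compatibility $\gamma\eta=\eta\gamma$) through the substitution $\rho'((g,\epsilon))=\rho(g)\eta^{\frac{1-\epsilon}{2}}$, combined with the direct verification—already carried out in Lemma \ref{lem:correspondence_key_to_reduction}—that $\tau'$ as prescribed in Definition \ref{dfn:dictionary_of_data} is exactly the $2$-cocycle making $\rho'$ projective. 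Once surjectivity of the balanced classes is settled, the isomorphism of $K$-groups follows formally by the naturality established in the three steps above.
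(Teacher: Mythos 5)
Your proposal is correct and follows essentially the same route as the paper: the paper's entire proof consists of invoking Lemma \ref{lem:correspondence_key_to_reduction} to identify the monoids ${}^\varpi_\eta \s{M}_\n{G}^{(\tau, c, \wp) + (r, s)}(X) \simeq {}^{\varpi'}\s{M}^{(\tau', c') + (r, s)}_{\n{G}'}(X)$ and ${}^\varpi_\eta \s{Z}_\n{G}^{(\tau, c, \wp) + (r, s)}(X) \simeq {}^{\varpi'}\s{Z}^{(\tau', c') + (r, s)}_{\n{G}'}(X)$ and then passing to quotients, which is exactly your transport of isomorphisms, direct sums and homotopies through the correspondence.

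The one point where you go beyond the paper is the ``balancedness bookkeeping'' that you leave as the hard part; the paper passes over it in silence, so it is worth settling here. Two observations do it. First, the restriction to non-reducible balanced $\rho'$ is \emph{vacuous whenever $\wp$ is non-trivial}: writing $\eta' := \rho'((e,-1))$, the cocycle $\tau'$ of Definition \ref{dfn:dictionary_of_data} together with the normalization \eqref{eq:constr_cocyc} forces $\eta'\,\rho'((g,+1)) = \wp(g)\,\rho'((g,+1))\,\eta'$; hence any $g$ with $\wp(g)=-1$ provides an invertible (possibly anti-linear) map interchanging the $\pm 1$ eigenbundles of $\eta'$, so that $\eta' \neq \pm\n{1}$ and the two eigenbundles have equal rank automatically. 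In this case the monoid identification claimed by the paper is literally true and no stabilization is needed. Second, when $\wp \equiv +1$ (the only case in which reducible or unbalanced $\rho'$ exist) your absorption argument is the right fix and can be implemented concretely without Lemma \ref{lem:locally_universal_bundle}: since $\eta'$ then commutes with $\rho$, with $\gamma$ and with every gradation, replacing $\eta'$ by $-\eta'$ on the same underlying bundle yields again a $(\varpi',c',\tau')$-twisted bundle $\bb{E}_{\rm flip}$, and adding the degenerate triple $(\bb{E}_{\rm flip},\Gamma_1,\Gamma_1)\in {}^{\varpi'}\s{Z}^{(\tau', c') + (r, s)}_{\n{G}'}(X)$ to a given triple $(\bb{E},\Gamma_0,\Gamma_1)$ produces a balanced non-reducible representative of the same class; applying the same flip to the comparison triples gives injectivity of the balanced classes as well. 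This trivial-$\wp$ case is also independently confirmed by the splitting \eqref{eq:splitting_trivial}. In short, your proof is the paper's proof made honest: the step you anticipated as hard is automatic for non-trivial $\wp$ and is disposed of by your stabilization in the remaining case.
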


\begin{proof}
In view of Lemma \ref{lem:correspondence_key_to_reduction}, one has the identification of monoids
\begin{align*}
{}^\varpi_\eta  \s{M}_\n{G}^{(\tau, c, \wp) + (r, s)}(X)
\;\simeq\; 
{}^{\varpi'} 
\s{M}^{({\tau'}, {c'}) + (r, s)}_{\n{G}'}(X)\;,\qquad\quad {}^\varpi_\eta  \s{Z}_\n{G}^{(\tau, c, \wp) + (r, s)}(X)
\;\simeq\; 
{}^{\varpi'} 
\s{Z}^{({\tau'}, {c'}) + (r, s)}_{\n{G}'}(X)\;.
\end{align*}
This leads to the isomorphism in the theorem.
\end{proof}


\subsection{Explicit computations in some special cases}
\label{sec:explicit_comp}
The aim of this Section is to provide the explicit computation of the groups ${}^\varpi_\eta  \s{K}_\n{G}^{(\tau, c, \wp) + (r, s)}(X)$ in some simple cases making use of the reduction described in Theorem \ref{theo:rid_eta_K-K}. Let us assume that: (i) 
$X=\{\ast\}$ consists of a single point, (ii) $\n{G}=\Z_2$ and (iii) $\varpi \equiv +1$ is the trivial homomorphism. 
Under these three assumptions
 the group cohomology turns out to be trivial, \ie ${H}^2_{\mathrm{group}}(\Z_2; \n{U}(1))= 0$, so one can assume that $\tau\equiv+1$ is also trivial. 
 The three assumptions above describe the situation in which one wants to classify $\s{C}$-symmetric $\eta$-self-adjoint gapped operators defined on a given (possibly finite dimensional) Hilbert space $\s{H}$ on which the group action of $\Z_2$ is implemented   by a single linear operator $\rho(-1)$. 
In the simplest possible case
$\s{H}=\C^2$ and the group-action of $\Z_2$  is trivial, \ie $\rho(-1)=\n{1}$. This situation  is described explicitly in Appendix \ref{app:explicit-2}.
The case of a general Hilbert space $\s{H}$ with a  possible non-trivial group-action but with a  trivial homomorphism $\wp\equiv+1$ has been described through the isomorphism
\eqref{eq:splitting_trivial}. In this Section we focus on the case 
of a non-trivial group-action with $\wp:\Z_2\to\Z_2$ the identity map, \ie $\wp(\pm1)=\pm1$. This means that the action of $\Z_2$ on 
$\s{H}$ is implemented by a linear unitary operator $R:=\rho(-1)$ such that $R^2=\n{1}$ and $R\eta=-\eta R$, namely by an $\eta$-reflecting operator.
 The only remaining choice is about $c$.

\medskip

Under the assumptions above one is left with the computation of the groups
$
{}^{+1}_\eta  \s{K}_{\Z_2}^{(+1, c, {\rm Id}) + (r, s)}(\{\ast\})
$
where the constraints 
$\varpi\equiv+1, \tau\equiv+1$ and $\wp\equiv{\rm Id}$ have been explicitly indicated. By effect of the reduction in Theorem \ref{theo:rid_eta_K-K} one obtains the isomorphisms
$$
{}^{+1}_\eta  \s{K}_{\Z_2}^{(+1, c, {\rm Id}) + (r, s)}(\{\ast\})\;\simeq\;
{}^{+1}  \s{K}_{\Z_2\times\Z_2}^{(\tau', c',) + (r, s)}(\{\ast\})\;\simeq\;
  \s{K}_{\Z_2\times\Z_2}^{(\tau', c') + n}(\{\ast\})\;,\qquad\quad n=s-r\quad\text{mod.}\quad 2
$$
where in the last isomorphism the periodicity and the  notation of Proposition \ref{prop:freed_moore_K_property} have been used. The maps $\tau'$ and $c'$ are defined according to the prescription of Definition \ref{dfn:dictionary_of_data}. More precisely one has that 
$c' : \Z_2 \times \Z_2 \to \Z_2$ is given by 
\begin{equation}\label{eq:def_c'}
{c}'\big((\zeta,\epsilon)\big)\;=\;c(\zeta)\;,\qquad\quad (\zeta,\epsilon)\in\Z_2 \times \Z_2\;.
\end{equation}
The 2-cocycle $\tau':( \Z_2 \times \Z_2)\times( \Z_2 \times \Z_2)\to \n{U}(1)$ is described by the following formula:
$$
{\tau}' \big((\zeta_1, \epsilon_1), (\zeta_2, \epsilon_2)\big)
= \wp(\zeta_2)^{\frac{1-\epsilon_1}{2}} \tau(\zeta_1, \zeta_2)\;=\;\zeta_2^{\frac{1-\epsilon_1}{2}}
\;,\qquad\quad (\zeta_j,\epsilon_j)\in\Z_2 \times \Z_2\;.
$$
This 2-cocycle is non-trivial and its values are summarized in Table \ref{tab:exem-01}.
 \begin{table}[h]
 \centering
 \begin{tabular}{|c||c|c|c|c|c|c|c|c|c|c|}
 \hline
 $\tau'$ & $(+1,+1)$    & $(-1,+1)$ & $(+1,-1)$ & $(-1,-1)$\\
\hline
 \hline
 \rule[-3mm]{0mm}{9mm}
$(+1,+1)$  & $+1$ & $+1$ & $+1$ &$+1$
\\
\hline
\rule[-3mm]{0mm}{9mm}
$(-1,+1)$  & $+1$ & $+1$ & $+1$ &$+1$ \\
\hline
 \rule[-3mm]{0mm}{9mm}
$(+1,-1)$   & $+1$ & $-1$ & $+1$ &$-1$
\\
\hline
 \rule[-3mm]{0mm}{9mm}
$(-1,-1)$ & $+1$ & $-1$ & $+1$ &$-1$\\
\hline
  \end{tabular}\vspace{2mm}
 \caption{\footnotesize Values of the 2-cocyle $\tau'$.
}
\label{tab:exem-01}
 \end{table}

\noindent
Interestingly, the presence of the Krein metric $\eta$ translates in the effects of the non-trivial twist $\tau'$
in the computation of the Freed-Moore $K$-theory $\s{K}_{\Z_2\times\Z_2}^{(\tau', c') + n}(\{\ast\})$.

\medskip

In view of Remark \ref{rk:twist_period} the  Freed-Moore  $K$-groups can be expressed in terms of the twisted  $K$-theory for skew-adjoint Fredholm operators. More precisely one has that
$$
\s{K}_{\Z_2\times\Z_2}^{(\tau', c') + n}(\{\ast\})\;\simeq\;
 \n{K}_{\Z_2\times\Z_2}^{({\tau}', c') + n}(\{\ast\})
$$
where one has  tacitly used that $\hat{\tau'}=\tau'$.
Let us compute the latter groups in the two cases given by the different choices of $c$. Definition \eqref{eq:def_c'} shows that $c'$ is trivial if and only if also $c$ is trivial.
\begin{lemma}\label{lemma:comptation_explic}
Regardless of the choice of $c'$ it holds true that
$$
\n{K}_{\Z_2\times\Z_2}^{({\tau}', c') - n}(\{\ast\})\;\simeq\;K_{\C}^{-n}(\{\ast\})
$$
where 
$$
K_{\C}^{-n}(\{\ast\})\;\simeq\;
\left\{
\begin{aligned}
&\Z &\qquad&\text{if}\quad n\; \text{even} \\
&0  &\qquad&\text{if}\quad n\; \text{odd}
\end{aligned}\;
\right.
$$ 
is the usual complex $K$-theory of the single-point space $\{\ast\}$.
\end{lemma}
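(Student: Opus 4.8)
The plan is to reduce the whole computation to the representation theory of a single twisted group algebra over $\C$. Since $X=\{\ast\}$, the group $\Z_2\times\Z_2$ acts trivially and $\varpi'\equiv+1$, so a class in $\n{K}_{\Z_2\times\Z_2}^{(\tau',c')-n}(\{\ast\})$ is represented by a finite-dimensional complex $\Z_2$-graded module over the graded algebra $\mathcal{A}:=\C^{c'}_{\tau'}[\Z_2\times\Z_2]\,\widehat\otimes\,C\ell^{0,n}_\C$, where $\C^{c'}_{\tau'}[\Z_2\times\Z_2]$ is the $\tau'$-twisted group algebra graded by $c'$ and $C\ell^{0,n}_\C:=C\ell^{0,n}\otimes_\R\C$. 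First I would identify the twisted group algebra. Writing $u:=\rho'((-1,+1))$ and $v:=\rho'((+1,-1))$ for the images of the two generators, the projective relation $\rho'(a)\rho'(b)=\tau'(a,b)\rho'(ab)$ together with the values of $\tau'$ collected in Table \ref{tab:exem-01} gives $u^2=v^2=\n{1}$ and $uv=-vu$. Hence $u,v$ generate two anticommuting involutions, so that $\C_{\tau'}[\Z_2\times\Z_2]\cong M_2(\C)$, the complex Clifford algebra on two generators; concretely one may take $u=\sigma_1$ and $v=\sigma_3$.

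Next I would read off the $\Z_2$-grading induced by $c'$ and remove it by a graded Morita equivalence. By \eqref{eq:def_c'} one has $c'\equiv+1$ precisely when $c$ is trivial; in that case $M_2(\C)$ carries the trivial grading, i.e. it equals $\mathrm{End}(\C^{2|0})$. When $c=\mathrm{Id}$ instead, $c'((\zeta,\epsilon))=\zeta$, so $u$ is odd and $v$ is even: the even part is the span of $\{\n{1},\sigma_3\}$ (the diagonal matrices) and the odd part the span of $\{\sigma_1,\sigma_2\}$ (the off-diagonal matrices), which is exactly the super-algebra $\mathrm{End}(\C^{1|1})$. In both cases $\C^{c'}_{\tau'}[\Z_2\times\Z_2]=\mathrm{End}(V)$ for a super vector space $V$ (namely $V=\C^{2|0}$ or $V=\C^{1|1}$), and the bimodule $V$ realizes a graded Morita equivalence $\mathrm{End}(V)\sim_{\mathrm{gr}}\C$, with $\C$ concentrated in even degree. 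This is the step where the independence of the result from the choice of $c$ becomes transparent: both gradings represent the trivial class of the complex super-Brauer (Brauer--Wall) group $\Z_2$.

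By graded Morita invariance of the $K$-groups, tensoring with the equivalence bimodule identifies $\mathcal{A}$-modules with $C\ell^{0,n}_\C$-modules, whence $\n{K}_{\Z_2\times\Z_2}^{(\tau',c')-n}(\{\ast\})\simeq\n{K}^{-n}(\{\ast\})$. The right-hand side is, through the Atiyah--Bott--Shapiro description underlying the Karoubi construction of Section \ref{sec:karubi_vs_freed}, the complex $K$-theory $K_\C^{-n}(\{\ast\})$; complex Bott periodicity (the period-$2$ statement of Proposition \ref{prop:freed_moore_K_property} for $\varpi\equiv+1$) then yields $\Z$ for $n$ even and $0$ for $n$ odd. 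The passage between the self-adjoint $K$-theory $\s{K}$ and the skew-adjoint version $\n{K}$ used in the statement is free of charge here, since $\hat{\tau'}=\tau'$ whenever $\varpi'$ is trivial, as recorded in Remark \ref{rk:twist_period}.

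I expect the main obstacle to be the careful justification of the graded reduction, namely that the relative ($M/Z$) construction defining the $K$-groups is invariant under graded Morita equivalence and that, over a point, the groups $\n{K}^{(r,s)}$ coincide with the Atiyah--Bott--Shapiro groups of complex Clifford modules. In particular one must check that the grading-shift subtleties of $\mathrm{End}(\C^{1|1})$ do not enlarge the answer: the correct bookkeeping uses the cokernel of the restriction from $C\ell^{0,n+1}_\C$-modules rather than the full Grothendieck group of modules, and it is exactly this relative construction that collapses the naive count to $K_\C^{-n}(\{\ast\})$ and makes the dependence on $c$ disappear.
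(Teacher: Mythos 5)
Your proof is correct, and it takes a genuinely more unified route than the paper's. The paper argues by cases on $c'$, with a different tool in each: for trivial $c'$ it invokes the decomposition $\n{K}_{\Z_2\times\Z_2}^{(\tau',+1)-n}(\{\ast\})\simeq \s{R}_{\tau'}(\Z_2\times\Z_2)\otimes K_\C^{-n}(\{\ast\})$ and computes the twisted representation ring to be $\Z$ (the unique $\tau'$-projective irreducible being the ungraded irreducible module of $\C\ell^2\cong M_2(\C)$); for non-trivial $c'$ it passes to the pair of \emph{odd} generators $\tilde{\gamma}_1=R$, $\tilde{\gamma}_2=R\eta$, identifies $(\tau',c')$-twisted representations with $C\ell^{r,s}$-action with $\Z_2$-graded $\C\ell^{r+s+2}$-modules, and concludes via the Atiyah--Bott--Shapiro description together with Bott periodicity $K_\C^{r+s+2}(\{\ast\})\simeq K_\C^{-n}(\{\ast\})$. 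You replace both arguments with a single mechanism: the graded twisted group algebra is $\mathrm{End}(\C^{2|0})$ or $\mathrm{End}(\C^{1|1})$, in either case trivial in the complex Brauer--Wall group, and graded Morita equivalence strips it off before the Clifford count begins. In the non-trivial case your identification is literally the paper's change of generators in disguise (your $u$ and $uv$ are the paper's $\tilde{\gamma}_1,\tilde{\gamma}_2$); the difference is that the paper absorbs the extra $\C\ell^2$ into the Clifford algebra and removes it afterwards by Bott periodicity, whereas you cancel it beforehand by Morita equivalence --- two packagings of the same fact, since complex Bott periodicity at the ABS level is proved exactly by the Morita triviality of $\C\ell^2$. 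What your route buys is uniformity and a transparent explanation of the $c$-independence; what it requires, as you correctly flag, is the check that the relative (cokernel) construction defining $\n{K}$ is preserved by the Morita functor, which holds because the equivalence bimodule is inert with respect to the Clifford factor, so extendability to one further Clifford generator corresponds across the equivalence.
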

\proof[{Proof} (sketch of)]
$(c'\equiv +1).$
Assume that $c'$ is trivial. As a consequence
$\n{K}_{\Z_2\times\Z_2}^{({\tau}', +1) - n}(\{\ast\})$ is a usual twisted $K$-theory group.
Then one can apply the known {decomposition}
$$
\n{K}_{\Z_2\times\Z_2}^{({\tau}', +1) - n}(\{\ast\})
\;\simeq\; \s{R}_{\tau'}(\Z_2 \times \Z_2)\; \otimes\; K_{\C}^{-n}(\{\ast\})\;
$$
where   $\s{R}_{\tau'}(\Z_2 \times \Z_2)$ is the free abelian group generated by the $\tau'$-projective complex representations of $\Z_2 \times \Z_2$. 
Such projective representations are unique up to isomorphisms, since they are in one-to-one correspondence with ungraded complex representations of the complexified Clifford algebra 
$\C\ell^{2}:=C\ell^{0,2}\otimes\C$.
 Hence $\s{R}_{{\tau'}}(\Z_2 \times \Z_2) \simeq \Z$ and one finally gets
$$
\n{K}_{\Z_2\times\Z_2}^{({\tau}', +1) - n}(\{\ast\})\;\simeq\; K_{\C}^{-n}(\{\ast\})\;.
$$

\noindent
$(c'\not\equiv +1).$
If $c'$ is  non-trivial then the PUA-representation of $\Z_2\times\Z_2$ is generated by  $R:=\rho((-1,+1))$ which is odd and by $\eta:=\rho((+1,-1))$ which is even.  Let us introduce a different pair of generators
$\tilde{\gamma}_1:=R$ and $\tilde{\gamma}_2:=R\eta$ which are both odd. In view of the twisting $\tau'$ one has that $R\eta=-\eta R$ and this implies that
$$
\tilde{\gamma}_1^2\;=\;\n{1}\;,\qquad\tilde{\gamma}_2^2\;=\;-\n{1}\;,\qquad\tilde{\gamma}_1\tilde{\gamma}_2\;=\;-\tilde{\gamma}_2\tilde{\gamma}_1\;.
$$
Therefore a $({\tau'}, {c'})$-twisted representation of $\Z_2 \times \Z_2$ with $C\ell^{r,s}$-action amounts to a $\Z_2$-graded representation of the complexified Clifford algebra 
$\C\ell^{r+s+2}:=C\ell^{r+1,s+1}\otimes\C$.
 The $K$-theory $\n{K}_{\Z_2\times\Z_2}^{({\tau}', c') - (s-r)}(\{\ast\})$ is the quotient of the free abelian monoid generated by $\Z_2$-graded $\C\ell^{r+s+2}$-modules by the submonoid generated by those which can be extended to $\Z_2$-graded $\C\ell^{r+s+3}$-modules. This group is exactly $K_{\C}^{r + s + 2}(\{\ast\})$. Thus, by using the Bott periodicity one gets
$$
\n{K}_{\Z_2\times\Z_2}^{({\tau}', c') - n}(\{\ast\})
\;:=\;\n{K}_{\Z_2\times\Z_2}^{({\tau}', c') - (s-r)}(\{\ast\})
\;\simeq\;K_{\C}^{r + s + 2}(\{\ast\})
\;\simeq\;K_{\C}^{(r-s) + 2(s + 1)}(\{\ast\})
\;\simeq\;K_{\C}^{-n}(\{\ast\})\;.
$$
The proof is completed.
\qed

\medskip

In conclusion, under the simplified hypotheses (i), (ii) and (iii) above, and using the various isomorphisms and the computation in  Lemma \ref{lemma:comptation_explic} one obtains that
\begin{equation}\label{eq:isomerf-01}
{}^{+1}_\eta  \s{K}_{\Z_2}^{(+1, c, {\rm Id}) + (r, s)}(\{\ast\})\;\simeq\;K_{\C}^{-(s-r)}(\{\ast\})\;,\qquad\quad \wp\equiv {\rm Id}\;.
\end{equation}
This result  must be compared with the splitting \eqref{eq:splitting_trivial} which provides
$$
{}^{+1}_\eta  \s{K}_{\Z_2}^{(+1, c, +1) + (r, s)}(\{\ast\})\;\simeq\; \s{K}^{(+1, c) + (r, s)}_{\Z_2}(\{\ast\})\; \oplus\;
 \s{K}^{(+1, c) + (r, s)}_{\Z_2}(\{\ast\})
 $$
in the case of a trivial homomorphism $\wp$.
Since $\s{K}^{(+1, c) + (r, s)}_{\Z_2}(\{\ast\})\simeq \n{K}_{\Z_2}^{(+1, c) - (s-r)}(\{\ast\})$
and
$$
\n{K}_{\Z_2}^{(+1, c) - (s-r)}(\{\ast\})\;\simeq\;
K_{\C}^{-(s-r)-\delta(c)}(\{\ast\})
$$
where
$$
\delta(c)\;:=\;\left\{
\begin{aligned}
&0 &\qquad&\text{if}\quad c\equiv+1 \\
&1  &\qquad&\text{if}\quad c\equiv{\rm Id}\;,
\end{aligned}\;
\right.
$$
one concludes that
$$
{}^{+1}_\eta  \s{K}_{\Z_2}^{(+1, c, +1) + (r, s)}(\{\ast\})\;\simeq\;K_{\C}^{-(s-r)-\delta(c)}(\{\ast\})\;\oplus\; K_{\C}^{-(s-r)-\delta(c)}(\{\ast\})\;,\qquad\quad \wp\equiv +1\;
$$
which is valid when $\wp$ acts trivially.

\appendix

\section{Maxwell-type operators and metamaterials}
\label{sec:maxwell-meta}
This Section is devoted to the presentation of a particular family of $\eta$-self-adjoint models 
that is finding use in the recent physical literature. The toy model $\rr{m}_w$ described in equation \eqref{eq:mod2} is a particularly simple representative of this family.

\medskip

Let $M_0$ be a (possibly unbounded) self-adjoint operator on the complex Hilbert space $\s{H}$ with dense
domain $\s{D}_0$ and spectrum $\sigma(M_0)\subseteq\R$.
 In many physical situations
it can be useful to endow $\s{H}$ with a complex structure given by the anti-unitary involution $C$. In this case it is natural to require that $M_0$ respect the given  complex structure:
\begin{assumption}\label{ass:maxw1}
The complex structure preserves the domain of $M_0$, namely $C[\s{D}_0]\subseteq \s{D}_0$, and 
$M_0$ is \emph{even} ($CM_0C=M_0$) or \emph{odd} ($CM_0C=-M_0$)
with respect to $C$.
\end{assumption}

\medskip

\noindent
Assumption \ref{ass:maxw1} has consequences on the form of the spectrum of $M_0$. Let us focus on the odd case:

\begin{proposition}
Under the odd case of Assumption \ref{ass:maxw1} the spectrum of $M_0$ is symmetric around zero, namely $\omega\in\sigma(M_0)$ if and only if $-\omega\in\sigma(M_0)$.
\end{proposition}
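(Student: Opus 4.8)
The plan is to exploit the anti-unitary symmetry $C$ to construct, from any approximate eigenvector at $\omega$, an approximate eigenvector at $-\omega$, thereby relating the resolvent sets (and hence the spectra) at $\omega$ and $-\omega$. The cleanest route is to work at the level of resolvents rather than eigenvalues, since $M_0$ may be unbounded with purely continuous spectrum, so literal eigenvectors need not exist. The key algebraic fact is that in the odd case $CM_0C=-M_0$, which since $C=C^{-1}$ gives $CM_0=-M_0C$, i.e. $C$ intertwines $M_0$ with $-M_0$.

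First I would fix $\omega\in\R$ and consider $\lambda\in\C$ in the resolvent set $\mathrm{Res}(M_0)$, so that $R(\lambda):=(M_0-\lambda\n{1})^{-1}$ is a bounded operator with domain all of $\s{H}$ and range $\s{D}_0$. The idea is to conjugate by $C$: from $CM_0=-M_0C$ one obtains
$$
C(M_0-\lambda\n{1})C\;=\;-M_0-\overline{\lambda}\n{1}\;=\;-(M_0+\overline{\lambda}\n{1})\;,
$$
where the complex conjugate $\overline{\lambda}$ appears because $C$ is anti-linear, so $C(\lambda\n{1})C=\overline{\lambda}\n{1}$. Here I must take care that $C[\s{D}_0]\subseteq\s{D}_0$ (guaranteed by Assumption \ref{ass:maxw1}), together with $C=C^{-1}$ forcing $C[\s{D}_0]=\s{D}_0$, so that the conjugated operator is genuinely defined on $\s{D}_0$ and the manipulation is legitimate at the level of operators on the proper domain. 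Taking $\lambda\in\R$ real (which suffices to detect real spectrum, since $\sigma(M_0)\subseteq\R$ and $M_0$ is self-adjoint), this reads $C(M_0-\lambda\n{1})C=-(M_0+\lambda\n{1})$.

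Next I would invert: conjugating a bounded inverse by the invertible anti-linear involution $C$ produces a bounded inverse of the conjugated operator, so $\lambda\in\mathrm{Res}(M_0)$ implies $-\lambda\in\mathrm{Res}(-M_0^{\mathrm{conj}})$; concretely, the operator $-CR(\lambda)C$ is a bounded two-sided inverse of $M_0+\lambda\n{1}=M_0-(-\lambda)\n{1}$, which shows $-\lambda\in\mathrm{Res}(M_0)$. Running the same argument with $C$ replaced by its inverse (itself) gives the reverse implication, so $\lambda\in\mathrm{Res}(M_0)\Leftrightarrow-\lambda\in\mathrm{Res}(M_0)$, and passing to complements yields the claimed symmetry $\omega\in\sigma(M_0)\Leftrightarrow-\omega\in\sigma(M_0)$.

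The main obstacle is purely a matter of bookkeeping with the anti-linearity and the domains: one must verify that $C$ maps $\s{D}_0$ onto itself and that the formal identity $C(M_0-\lambda\n{1})C=-(M_0+\overline{\lambda}\n{1})$ holds as an equality of unbounded operators with the correct domain, not merely formally. Once this is pinned down, the inversion step is routine because boundedness and invertibility are manifestly preserved under conjugation by the bounded, boundedly-invertible (anti-linear) operator $C$. I would emphasize the real-$\lambda$ reduction to keep the complex conjugation from cluttering the statement, invoking self-adjointness of $M_0$ so that only real $\lambda$ are relevant for the spectrum.
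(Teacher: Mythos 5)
Your proof is correct and follows essentially the same route as the paper's: conjugating the resolvent by the anti-linear involution $C$ to obtain $CR_0(\omega)C=-R_0(-\omega)$ (for real $\omega$), concluding that the resolvent set, and hence the spectrum, is symmetric about zero. Your added bookkeeping on the domain invariance $C[\s{D}_0]=\s{D}_0$ and on the complex conjugation produced by anti-linearity is a welcome refinement of the paper's terser argument, but it is not a different method.
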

\proof
Let $\omega\in\rho(M_0)=\C\setminus \sigma(M_0)$ be a point in the resolvent set. Then
$$
R_0(\omega)\;:=\;\frac{1}{M_0\;-\;\omega\n{1}}
$$
exists as a bounded operator. Since $CR_0(\omega)C=-R_0(-\omega)$ one concludes that also $-\omega\in\rho(M_0)$. The proof is concluded by observing that $\sigma(M_0)=\C\setminus\rho(M_0)$.
\qed

\medskip

\begin{definition}[Metamaterial weight]\label{def:meta1}
A  \emph{metamaterial weight} on $\s{H}$ is a bounded self-adjoint operator $W$ that satisfies $
0\;\notin\sigma(W)$. The weight is called \emph{real} if $CWC=W$. 
\end{definition}

\begin{proposition}
Let $W$ be a metamaterial weight. Then there is a pair of bounded self-adjoint operators  $(W_+,W_-)$ and non negative constants $0\leqslant a_\pm<+\infty$ such that:
\begin{itemize}
\item[(a)] $W=W_++W_-$ and $W_\pm W_\mp=0$;
\vspace{1mm}
\item[(b)] $
0\;< \;\pm W_\pm\;\leqslant\; a_\pm\n{1}
$.
\end{itemize}
\end{proposition}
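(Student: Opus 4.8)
The plan is to exploit the Borel functional calculus for the bounded self-adjoint operator $W$, whose spectrum $\sigma(W)\subset\R$ is compact. Since by hypothesis $0\notin\sigma(W)$, the distance $m:=\mathrm{dist}(0,\sigma(W))$ is strictly positive, and with $M:=\|W\|$ one has the spectral gap $\sigma(W)\subseteq[-M,-m]\cup[m,M]$. First I would introduce the two spectral projections $P_+:=\chi_{(0,+\infty)}(W)$ and $P_-:=\chi_{(-\infty,0)}(W)$ associated to the positive and negative parts of the spectrum. Because $(0,+\infty)$ and $(-\infty,0)$ are disjoint and together cover $\sigma(W)$ (no spectral mass sits at the origin), these are orthogonal self-adjoint projections satisfying $P_+P_-=0=P_-P_+$ and $P_++P_-=\n{1}$; moreover each $P_\pm$ commutes with $W$, both being functions of $W$.

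Next I would set $W_\pm:=WP_\pm=P_\pm W$. Each $W_\pm$ is self-adjoint, since $(WP_\pm)^*=P_\pm^*W^*=P_\pm W=WP_\pm$. The resolution of the identity $P_++P_-=\n{1}$ immediately yields $W=W(P_++P_-)=W_++W_-$, while the orthogonality $P_+P_-=0$ together with the commutation $P_\pm W=WP_\pm$ gives $W_+W_-=WP_+WP_-=W^2P_+P_-=0$ and likewise $W_-W_+=0$. This establishes part (a).

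For part (b) I would again invoke the functional calculus: writing $|W|=\sqrt{W^2}$, one has $\pm W_\pm=|W|\,P_\pm$, and the spectral gap translates into the operator inequalities $m\,P_\pm\leqslant\pm W_\pm\leqslant M\,P_\pm$. Since $0\leqslant P_\pm\leqslant\n{1}$, the upper bound gives $\pm W_\pm\leqslant M\,\n{1}$, so one may take $a_\pm:=M<\infty$ (or, more sharply, $a_\pm:=\|W_\pm\|$). The lower bound $\pm W_\pm\geqslant m\,P_\pm\geqslant0$ shows that $\pm W_\pm$ is positive and, on the invariant subspace $\Ran P_\pm$, bounded below by $m>0$; this is the precise content of the strict positivity $0<\pm W_\pm$, to be understood (as for $\eta\Xi>0$ elsewhere in the paper) on the reducing subspace $\Ran P_\pm$ on which $W_\pm$ does not vanish.

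There is no serious analytic obstacle here: the whole argument is a routine application of the spectral theorem once the gap $m>0$ has been extracted from $0\notin\sigma(W)$. The only point requiring care is the reading of the condition $0<\pm W_\pm$ in part (b), since $W_+$ necessarily annihilates $\Ran P_-$ (and $W_-$ annihilates $\Ran P_+$) and is therefore not positive definite on all of $\s{H}$; the claim is to be interpreted subspace-wise, as strict positivity of $\pm W_\pm$ restricted to $\Ran P_\pm$, equivalently as the two-sided estimate $m\,P_\pm\leqslant\pm W_\pm\leqslant M\,P_\pm$ derived above.
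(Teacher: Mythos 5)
Your proof is correct and follows essentially the same route as the paper: both extract the spectral gap from $0\notin\sigma(W)$, form the spectral projections onto the positive and negative parts of $\sigma(W)$ via functional calculus, and set $W_\pm$ equal to the corresponding compressions of $W$ (your $WP_\pm$ coincides with the paper's $\Pi_\pm W\Pi_\pm$ since the projections commute with $W$). Your closing remark that the strict inequality $0<\pm W_\pm$ must be read on $\Ran P_\pm$ (equivalently as $b_\pm P_\pm\leqslant \pm W_\pm$) is a legitimate clarification of a point the paper's terse proof leaves implicit.
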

\proof
Since the spectrum of a self-adjoint operator is closed there are positive  constants $0<b_\pm\leqslant a_\pm<+\infty$ such that
$$
\sigma(W)\;\subseteq\; [-a_-,-b_-]\;\cup\;[b_+,a_+]\;.
$$
By spectral calculus one can define the spectral projections $\Pi_-:=\chi_{ [-a_-,-b_-]}(W)$ and
$\Pi_+:=\chi_{ [b_+,a_+]}(W)$. Then the claim follows by setting $W_\pm:=\Pi_\pm W\Pi_\pm$. 
\qed

\medskip

\noindent
From Definition \ref{def:meta1} it follows that $W^{-1}$ exists as a bounded self-adjoint operator. Therefore one can associate with $W$ the operator
\begin{equation}\label{eq:max-typ001}
\eta_W\;:=\;\frac{W}{|W|}\;=\;\frac{W_+}{|W|}+\frac{W_-}{|W|} \;.
\end{equation}
It turns out that $\eta_W$ has the properties of a fundamental symmetry according to Remark \ref{rk:fund_sym}.

\medskip

Let us focus the attention on the family of operators of the following type:
\begin{definition}[Metamaterial Maxwell-type operator]\label{def:meta_max-typ001}
Let $M_0$ be a self-adjoint operator with dense domain $\s{D}_0$ which meets Assumption \ref{ass:maxw1}. Let $W$ be a metamaterial weight in the sense of Definition \ref{def:meta1}.
The operator 
\begin{equation}\label{eq:max-typ1}
M\;:=\;W\; M_0\;=\;\eta_W\; \tilde{M}\qquad\text{with}\qquad \tilde{M}\;:=\;|W|\;M_0
\end{equation}
is called a \emph{metamaterial Maxwell-type operator}.
\end{definition}

\medskip

\noindent
From its very definition  it follows that $M$ is densely defined with domain
$$
\s{D}(M)\;=\;\s{D}(M_0)\;=\;\s{D}_0\;.
$$

\begin{remark}[Maxwell-type operator]\label{rk_max_eqat}{\upshape 
The general notion of  Maxwell-type operator has been discussed in \cite[Section 6]{denittis-lein-18}. The name comes from the fact that the Maxwell's equations for the propagation of the light in a medium can be recast in the form \eqref{eq:max-typ1} \cite{wilcox-66,kato-67,birman-solomyak-87} (see also \cite{denittis-lein-14a,denittis-lein-14b,denittis-lein-18}). In this specific case the \emph{free} operator, acting on the Hilbert space $L^2(\R^3,\C^6)$, is
\begin{equation}\label{eq:max-typ1_M0}
M_0\;:=\;\left(\begin{array}{cc}0 & +\ii\nabla^\times \\ -\ii\nabla^\times & 0\end{array}\right)
\end{equation}
where $\nabla^\times$ denotes the \emph{curl} operator. The material weight is described by the  matrix-valued function
$$
W(x)\;:=\;\left(\begin{array}{cc}\varepsilon^{-1}(x) & 0 \\ 0 & \mu^{-1}(x) \end{array}\right)
$$
where $\varepsilon(x)$ is the electric permittivity tensor and  $\mu(x)$ is the magnetic permeability tensor. In the \virg{standard} definition of Maxwell-type operator the operator $W$ is required to be positive (\cf \cite[Definition 6.1]{denittis-lein-18}). On the contrary, Definition \ref{def:meta1} allows $W$ to have also a negative part. For instance this is the case which models \emph{single-negative metamaterials} \cite{fredkin-ron-02,alu-engheta-03,jiang-chen-li-zhang-zi-zhu-04} in which the electric permittivity tensor is positive $\varepsilon>0$ and the magnetic permeability tensor is negative 
 $\mu<0$ (or viceversa). This justifies the use of the expression \emph{metameterial} in Definition \ref{def:meta1} and Definition \ref{def:meta_max-typ001}. Finally, let us notice that the operator $\rr{m}_w$ in \eqref{eq:mod2} is an example
of metamaterial Maxwell-type operator.
}\hfill $\blacktriangleleft$
\end{remark}

Let $\s{H}_*$ be the Hilbert space obtained  by endowing $\s{H}$ with the weighted scalar product 
\begin{equation}\label{eq:wigh_prod}
\langle\phi,\varphi\rangle_*\;:=\;\langle\phi,|W|^{-1}\varphi\rangle\;.
\end{equation}
The equation
\begin{equation}
\langle\phi,\eta_W\varphi\rangle_*\;=\;\langle\phi,W|W|^{-2}\varphi\rangle\;=\;\langle \eta_W\phi,|W|^{-1}\varphi\rangle\;=\;\langle \eta_W\phi,\varphi\rangle_*
\end{equation}
shows that $\eta_W$ is self-adjoint, hence a fundamental symmetries, also in the weighted Hilbert space $\s{H}_*$. Therefore $\eta_W$ induces a Krein space  structure
$(\s{H}_*,\langle\langle\;,\;\rangle\rangle_{\eta_W})$  by means of the indefinite inner product
$$
\langle\langle \phi, \varphi \rangle\rangle_{\eta_W}\;: =\; \langle \phi, \eta_W \varphi \rangle_*\; =\; \langle \phi, W^{-1} \varphi \rangle\;,\qquad\quad \forall \phi, \varphi\in\s{H}_*\;.
$$

\begin{proposition}
Let $M=WM_0$ be a metamaterial Maxwell-type operator in the sense of Definition \ref{def:meta_max-typ001}. Then:
\begin{itemize}
\item[(1)] The 
operator $\tilde{M}:=|W|M_0$ is self-adjoint on the weighted Hilbert space $\s{H}_*$ with dense domain
$\s{D}_0$.
\vspace{1mm}
\item[(2)] The operator ${M}=\eta_W\tilde{M}$ is $\eta_W$-self-adjoint on the Krein space $(\s{H}_*,\langle\langle\;,\;\rangle\rangle_{\eta_W})$.
\end{itemize}
\end{proposition}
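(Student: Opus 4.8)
The plan is to treat part~(1) as the substantial step and to deduce part~(2) from it by a short algebraic manipulation. Throughout I work in the weighted space $\s{H}_*$ whose inner product is $\langle\phi,\varphi\rangle_*=\langle\phi,|W|^{-1}\varphi\rangle$. Since $W$ is a metamaterial weight, $|W|$ is bounded and boundedly invertible, so $|W|^{-1}$ is bounded above and below; hence the norms of $\s{H}$ and $\s{H}_*$ are equivalent and $\s{D}_0$, being dense in $\s{H}$, is dense in $\s{H}_*$. A one-line calculation already gives symmetry of $\tilde{M}$ on $\s{H}_*$: for $\phi,\varphi\in\s{D}_0$ one has $\langle\tilde{M}\phi,\varphi\rangle_*=\langle|W|M_0\phi,|W|^{-1}\varphi\rangle=\langle M_0\phi,\varphi\rangle=\langle\phi,M_0\varphi\rangle=\langle\phi,\tilde{M}\varphi\rangle_*$, using that $M_0$ and $|W|$ are self-adjoint on $\s{H}$.

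The heart of part~(1) is to upgrade symmetry to self-adjointness and to pin down the domain. First I would observe that the multiplication operator $T:=|W|^{1/2}$, viewed as a map $\s{H}\to\s{H}_*$, is unitary: indeed $\langle T\phi,T\varphi\rangle_*=\langle|W|^{1/2}\phi,|W|^{-1}|W|^{1/2}\varphi\rangle=\langle\phi,\varphi\rangle$. Conjugating $\tilde{M}$ by $T$ gives $T^{-1}\tilde{M}\,T=|W|^{-1/2}|W|M_0|W|^{1/2}=|W|^{1/2}M_0|W|^{1/2}$, an operator on $\s{H}$ of the form $TM_0T^{*}$ with $T=T^{*}=|W|^{1/2}$ bounded and boundedly invertible. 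Here I would invoke the standard fact that if $A$ is self-adjoint and $T$ is bounded with bounded inverse then $TAT^{*}$ is self-adjoint (the inclusion $\s{D}((TAT^{*})^{*})\subseteq\s{D}(TAT^{*})$ is checked by substituting $u=T^{*}\phi$ and using $(T^{*})^{-1}=(T^{-1})^{*}$). Consequently $|W|^{1/2}M_0|W|^{1/2}$ is self-adjoint on $\s{H}$, and transporting back through the unitary $T$ shows that $\tilde{M}$ is self-adjoint on $\s{H}_*$ with domain $T[\,|W|^{-1/2}\s{D}_0\,]=\s{D}_0$. Equivalently, one may compute the $\s{H}_*$-adjoint directly: $\varphi$ lies in its domain iff $\phi\mapsto\langle M_0\phi,\varphi\rangle$ is $\|\cdot\|_*$-bounded, which by the norm equivalence means $\varphi\in\s{D}(M_0^{*})=\s{D}_0$, and then the adjoint equals $|W|M_0=\tilde{M}$.

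For part~(2) I would use that $\eta_W=W/|W|$ is a fundamental symmetry of $\s{H}_*$: it is self-adjoint with respect to $\langle\,,\,\rangle_*$ (as in the computation preceding the statement) and $\eta_W^{2}=W^{2}/|W|^{2}=\n{1}$, so $\eta_W=\eta_W^{*}=\eta_W^{-1}$ in $\s{H}_*$. Writing $\sharp$ for the $\eta_W$-adjoint and $*$ for the $\s{H}_*$-adjoint, and using that $\eta_W$ is bounded and everywhere defined, I compute $M^{*}=(\eta_W\tilde{M})^{*}=\tilde{M}^{*}\eta_W^{*}=\tilde{M}\eta_W$ by part~(1). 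Hence $M^{\sharp}=\eta_W M^{*}\eta_W=\eta_W\tilde{M}\eta_W\eta_W=\eta_W\tilde{M}=M$, where $\eta_W^{2}=\n{1}$ was used. A quick domain bookkeeping, $\s{D}(M^{*})=\eta_W\s{D}_0$ and therefore $\s{D}(M^{\sharp})=\{\psi:\eta_W\psi\in\eta_W\s{D}_0\}=\s{D}_0=\s{D}(M)$, confirms that $M=M^{\sharp}$ holds as operators, so $M$ is $\eta_W$-self-adjoint on $(\s{H}_*,\langle\langle\,,\,\rangle\rangle_{\eta_W})$.

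The only genuine difficulty is the self-adjointness (as opposed to mere symmetry) in part~(1): one must control the adjoint domain of $\tilde{M}$, which is exactly what the unitary equivalence $\tilde{M}\simeq TM_0T^{*}$ — or, equivalently, the norm-equivalence argument identifying the $\s{H}_*$-adjoint domain with $\s{D}(M_0^{*})$ — accomplishes. Everything else, namely density of $\s{D}_0$, symmetry, and the passage from part~(1) to part~(2), is routine given that $|W|$ and $\eta_W$ are bounded and boundedly invertible.
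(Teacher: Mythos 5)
Your proof is correct, and for item (2) it coincides with the paper's argument: both compute the $\s{H}_*$-adjoint $M^{*}=(\eta_W\tilde{M})^{*}=\tilde{M}\eta_W=\eta_W M\eta_W$ and read off $M^{\sharp}=M$ from $\eta_W^2=\n{1}$. The difference is in item (1): the paper does not prove it at all, but simply cites it as a particular case of an external result (\cite[Proposition 6.2]{denittis-lein-18}, on self-adjointness of Maxwell-type operators with positive weights), whereas you supply a self-contained argument. Your mechanism --- the map $T=|W|^{1/2}:\s{H}\to\s{H}_*$ is unitary, $T^{-1}\tilde{M}T=|W|^{1/2}M_0|W|^{1/2}$ is of the form $TAT^{*}$ with $A$ self-adjoint and $T$ boundedly invertible, hence self-adjoint, and unitary conjugation transports this back to $\tilde{M}$ on the domain $T[\,|W|^{-1/2}\s{D}_0\,]=\s{D}_0$ --- is sound; so is your alternative route identifying $\s{D}(\tilde{M}^{*})$ with $\s{D}(M_0^{*})=\s{D}_0$ via norm equivalence, which is arguably the cleanest version. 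What the paper's citation buys is brevity and an appeal to a statement proved in greater generality; what your argument buys is that the proposition becomes independent of the external reference, with the only nontrivial input being the elementary stability of self-adjointness under the congruence $A\mapsto TAT^{*}$, which is essentially the content of the cited result specialized to this setting. All the auxiliary checks you make (density of $\s{D}_0$ in $\s{H}_*$, self-adjointness of $\eta_W$ in $\s{H}_*$, $\eta_W^2=\n{1}$, and the domain bookkeeping $\s{D}(M^{\sharp})=\s{D}_0$) are correct and, if anything, more careful than the paper's sketch.
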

\proof
Item (1) is a particular case 
of \cite[Proposition 6.2]{denittis-lein-18}. Item
(2) follows from the self-adjointness of $\tilde{M}$ and $\eta_W$   with respect to  the  Hilbert structure of $\s{H}_*$. This implies  that the adjoint of ${M}$ is $\tilde{M}\eta_W=\eta_W M \eta_W$. The last relation is equivalent to the $\eta_W$-self-adjointness of  $M$. 
\qed

\medskip

The last result is preparatory for the following natural question: \emph{When a metamaterial Maxwell-type operator admits a $\s{C}$-symmetry $\Xi$?} A set of sufficient conditions,  borrowed from \cite[Proposition]{kuzhel-sudilovskaya-17},  is described in the following: 
\begin{theorem}\label{theo_metamat}
Let $M=WM_0$ be a metamaterial Maxwell-type operator in the sense of Definition \ref{def:meta_max-typ001}. Assume that: (a) $M_0$ is a non negative operator on $\s{H}$ and $0\notin{\rm Ker}(M_0)$; (b)
$\rho(M)\neq\emptyset$;  (c) $0$ and $\infty$ are not singular critical points of $M$.
 Then, the $\eta_W$-self-adjoint  operator   $M$ admits the 
 $\s{C}$-symmetry $\Xi:=\n{E}_M(\R_+)-\n{E}_M(\R_-)$ where $\n{E}_M$ denotes the projection-valued spectral measure of $M$.
\end{theorem}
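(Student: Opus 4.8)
The plan is to verify that the candidate operator $\Xi := \n{E}_M(\R_+)-\n{E}_M(\R_-)$ is indeed a $\s{C}$-symmetry in the sense of Definition \ref{dfn:C_symmetry}, namely that it is a bounded linear involution commuting with $M$ and satisfying $\eta_W\Xi>0$. The hypotheses (a)--(c) are precisely what is needed to make the spectral measure $\n{E}_M$ of the (non-self-adjoint but $\eta_W$-self-adjoint) operator $M$ well-defined and to guarantee that the associated spectral decomposition is regular. First I would recall that, by Proposition \ref{prop:H-Xi} applied to $\tilde{M}=|W|M_0$, the operator $\tilde{M}$ is self-adjoint on $\s{H}_*$ and $M=\eta_W\tilde M$ is $\eta_W$-self-adjoint; the assumption (a) that $M_0\geqslant 0$ with trivial kernel, together with the factorization $M=\eta_W\tilde M$, forces the spectrum of $M$ to split into a strictly positive part and a strictly negative part with no spectrum accumulating at $0$ in a pathological way.

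The key analytic input is the theory of \emph{definitizable} operators in Krein spaces (in the sense of Langer). Under (a) the operator $M$ is definitizable, and the condition (c) that $0$ and $\infty$ are not \emph{singular} critical points is exactly the hypothesis that ensures the spectral projections $\n{E}_M(\R_\pm)$ associated with the positive and negative spectral subspaces are \emph{bounded} operators (singular critical points are where the spectral function fails to have bounded limits). I would invoke the results of \cite{kuzhel-sudilovskaya-17} (and the general framework of \cite{albeverio-kushel-15}) to assert that $M$ admits a bounded spectral function and hence $\Xi=\n{E}_M(\R_+)-\n{E}_M(\R_-)$ is a bounded linear operator with $\Xi^2=\n{1}$. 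The commutation $\Xi M=M\Xi$ is then immediate from the functional-calculus origin of $\Xi$, since $\Xi=\mathrm{sgn}(M)$ is a function of $M$.

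It remains to check the positivity $\eta_W\Xi>0$. Here I would use the Krein-space structure: the subspaces $\s{L}_\pm:=\Ran\,\n{E}_M(\R_\pm)$ are the maximal positive and maximal negative $\eta_W$-invariant subspaces of the definitizable operator $M$, and on a definitizable operator with no singular critical points these subspaces are \emph{uniformly definite}, meaning $\pm\langle\langle\varphi,\varphi\rangle\rangle_{\eta_W}>0$ for nonzero $\varphi\in\s{L}_\pm$. Writing any $\varphi=\varphi_++\varphi_-$ in this decomposition and computing
$$
\langle\varphi,\eta_W\Xi\varphi\rangle_*\;=\;\langle\langle\varphi,\Xi\varphi\rangle\rangle_{\eta_W}\;=\;\langle\langle\varphi_+,\varphi_+\rangle\rangle_{\eta_W}\;-\;\langle\langle\varphi_-,\varphi_-\rangle\rangle_{\eta_W}\;>\;0
$$
for $\varphi\neq 0$ (the $\eta_W$-orthogonality of $\s{L}_+$ and $\s{L}_-$ kills the cross terms), which gives exactly $\eta_W\Xi>0$.

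The main obstacle, and the step requiring the most care, is establishing the \emph{boundedness} of $\Xi$ and the \emph{uniform} definiteness of the spectral subspaces. This is where hypotheses (b) and (c) do the real work: without the non-singularity of the critical points $0$ and $\infty$, the spectral projections can be unbounded and the positive/negative subspaces merely definite rather than uniformly definite, in which case $\eta_W\Xi$ would be positive but not \emph{strictly} positive (bounded below), and $\Xi$ would fail to be a $\s{C}$-symmetry in the bounded sense of Definition \ref{dfn:C_symmetry}. I expect the cleanest route is to cite the definitizability and critical-point machinery directly from \cite{kuzhel-sudilovskaya-17} and \cite{albeverio-kushel-15} rather than reprove it, reducing the argument to the verification that (a)--(c) place $M$ in the scope of those theorems and that the resulting $\mathrm{sgn}(M)$ coincides with the stated $\Xi$.
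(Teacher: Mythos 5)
Your proposal is correct and follows essentially the same route as the paper's proof: both reduce the statement to Langer's theory of definitizable operators in Krein spaces, use hypothesis (c) on the non-singularity of the critical points $0$ and $\infty$ to obtain a bounded spectral function, and then identify $\Xi=\n{E}_M(\R_+)-\n{E}_M(\R_-)$ as a $\s{C}$-symmetry; your explicit verification of $\eta_W\Xi>0$ via the uniform definiteness and $\eta_W$-orthogonality of the spectral subspaces fills in a step the paper leaves implicit. One small correction: definitizability is not a consequence of (a) alone---its very definition requires a non-empty resolvent set, so it is hypotheses (a) and (b) together (as the paper states) that make $M$ definitizable, with (b) also entering the proof that $\sigma(M)\subseteq\R$.
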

\proof
Condition (a) is equivalent to $\langle\langle \varphi, M\varphi \rangle\rangle_{\eta_W}\geqslant 0$ for all 
$\varphi\in\s{H}_*$ meaning  that $M$ is $\eta_W$-non negative. Together with condition (b) this is enough to prove that $\sigma(M)\subseteq\R$ \cite[Chapter 2, Theorem 3.27]{azizov-iokhvidov-67}. 
Moreover, conditions (a) and (b) imply that $M$ is \emph{definitizable} in the sense of \cite[Section I.3]{langer-82}. Moreover the only possible critical points  of $M$ are $0$ and $\infty$ \cite[Section II.6]{langer-82}. The second part of condition (a) along with the invertibility of $W$ assures that $0$ is not an eigenvalue of $M$. If $0$ and $\infty$ are not singular critical points then \cite[Theorem 5.7]{langer-82}
assures the existence of a spectral functional calculus for $M$ and the spectral projections  $\n{E}_M(\R_+)$ and $\n{E}_M(\R_-)$ are well defined. The operator $\Xi:=\n{E}_M(\R_+)-\n{E}_M(\R_-)$ verifies all the properties of a  $\s{C}$-symmetry for $M$.
\qed

\medskip

Theorem \ref{theo_metamat} provides a set of  criterions for
 a metamaterial Maxwell-type operator to describe a dynamically stable system in the sense of Definition \ref{def:dyn_stab}. However condition (a) is quite unpleasant since it excludes the case of the Maxwell equations described in Remark \ref{rk_max_eqat}. Indeed, the operator $M_0$ described by \eqref{eq:max-typ1_M0} has spectrum $\sigma(M_0)=\R$ and does not meet condition (a). However, it is odd in the sense of Assumption \ref{ass:maxw1}. This leads to the following:

\medskip

\noindent 
{\bf Open  problem.}
\emph{Is it possible to prove that a   metamaterial Maxwell-type operator $M=WM_0$ admits a  $\s{C}$-symmetry $\Xi$ under the odd case of Assumption \ref{ass:maxw1}? If so, what other conditions are necessary?}


\section{Some explicit computation in dimension two}
\label{app:explicit-2}
In this Section we focus on the simplest (finite dimensional) Krein space given by the Hilbert space $\C^2$ endowed with the metric operator
\begin{equation}\label{eq:appB_sandard_grad}
\eta\;:=\;\left(\begin{array}{cc}1 & 0 \\0 & -1\end{array}\right)\;.
\end{equation}
The canonical basis $v_1:=(1,0)$ and $v_2:=(0,1)$ diagonalizes $\eta$ and the usual complex conjugation $C$ is compatible with $\eta$ in view of the fact that $Cv_j=v_j$ for $j=1,2$. The $\eta$-reflecting symmetry is given by
$$
R\;:=\;\left(\begin{array}{cc}0 & 1 \\1 & 0\end{array}\right)\;.
$$
The following relations hold true: $C\eta=\eta C$, $R\eta=\eta R$ and $R\eta=-\eta R$. 
In order to determine
$$
 {\tt QS}_\eta(\C^2)\;:=\;{\tt U}_\eta(\C^2)\;\sqcup\; {\tt AU}_\eta(\C^2)\sqcup\; {\tt PU}_\eta(\C^2)\sqcup\; {\tt PAU}_\eta(\C^2)
 $$
it is enough to compute ${\tt U}_\eta(\C^2)$ in view of the relations \eqref{eq:rep-group_eta-ant-unit},
\eqref{eq:rep-group_pseud-eta-unit}  and \eqref{eq:rep-group_pseud-eta-ant-unit}.
\begin{lemma}
\label{lwmma:appB-iso}
Any element of ${\tt U}_\eta(\C^2)$ can be parametrized as
$$
U(r, \alpha,\beta, \delta)\;:=\;\left(\begin{array}{cc}\expo{\ii\alpha}\sqrt{1+r^2} & \expo{\ii(\beta+\delta)}r \\\expo{\ii(\alpha-\delta)}r & \expo{\ii\beta}\sqrt{1+r^2}\end{array}\right)\;,\qquad r\in[0,+\infty)\;,\quad \alpha,\beta,\delta\in[-\pi,+\pi].
$$
Moreover, $\|U(r, \alpha,\beta, \delta)\|=r+\sqrt{1+r^2}$.
\end{lemma}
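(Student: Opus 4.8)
The plan is to translate the defining condition of ${\tt U}_\eta(\C^2)$ into an explicit system of equations on the matrix entries and to solve it directly. Recall from \eqref{eq:rep-group_eta-unit} that $U\in {\tt U}_\eta(\C^2)$ means $U^\sharp=U^{-1}$, i.e.\ $\eta U^*\eta U=\n{1}$, which (since $\eta^2=\n{1}$) is equivalent to the single relation $U^*\eta U=\eta$. Writing $U=\left(\begin{smallmatrix} a & b\\ c & d\end{smallmatrix}\right)$ and carrying out the product with $\eta=\mathrm{diag}(1,-1)$, this relation is equivalent to the three scalar equations $|a|^2-|c|^2=1$, $|d|^2-|b|^2=1$, and $\bar a b=\bar c d$.

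First I would extract the moduli. The first two equations force $|a|=\sqrt{1+r^2}$, $|c|=r$ and $|d|=\sqrt{1+s^2}$, $|b|=s$ for some $r,s\geqslant 0$. The key algebraic step — and the one place where a genuine computation is needed — is to feed the modulus of the cross equation $\bar a b=\bar c d$, namely $|a|\,|b|=|c|\,|d|$, back into these, giving $(1+r^2)s^2=r^2(1+s^2)$ and hence $s=r$. This collapses the two a priori independent radii to a single one and yields exactly the moduli appearing in the claimed parametrization.

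Next I would handle the phases. Setting $a=\expo{\ii\alpha}\sqrt{1+r^2}$, $d=\expo{\ii\beta}\sqrt{1+r^2}$, $b=\expo{\ii\phi}r$ and $c=\expo{\ii\gamma}r$, the phase part of $\bar a b=\bar c d$ reduces to the single constraint $\phi+\gamma=\alpha+\beta\ (\mathrm{mod}\ 2\pi)$. Defining $\delta:=\phi-\beta$ then forces $\gamma=\alpha-\delta$, so that $b=\expo{\ii(\beta+\delta)}r$ and $c=\expo{\ii(\alpha-\delta)}r$; this is precisely $U(r,\alpha,\beta,\delta)$. Conversely every such matrix satisfies the three equations by direct inspection, so the parametrization describes exactly ${\tt U}_\eta(\C^2)$, with $\delta$ irrelevant in the degenerate case $r=0$, which I would flag separately.

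Finally, for the norm I would compute $U^*U$ directly from the moduli above. Its diagonal entries are both $1+2r^2$, while its off-diagonal entry $\bar a b+\bar c d$ has modulus $2r\sqrt{1+r^2}$, so $U^*U$ has the Hermitian form $\left(\begin{smallmatrix} p & q\\ \bar q & p\end{smallmatrix}\right)$ with eigenvalues $p\pm|q|=1+2r^2\pm 2r\sqrt{1+r^2}$. Since the operator norm \eqref{eq:op_norm} equals the square root of the largest eigenvalue of $U^*U$, it then remains only to recognize the identity $1+2r^2+2r\sqrt{1+r^2}=(r+\sqrt{1+r^2})^2$ to conclude $\|U(r,\alpha,\beta,\delta)\|=r+\sqrt{1+r^2}$. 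No step here is a serious obstacle; the only point demanding care is the derivation $s=r$, as it is precisely what guarantees the one-parameter (rather than two-parameter) radial structure of the group and hence the exact form of the parametrization.
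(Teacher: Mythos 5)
Your proposal is correct, and the parametrization argument is essentially the paper's own: both reduce $U^*\eta U=\eta$ to the three scalar equations $|a|^2-|c|^2=1$, $|d|^2-|b|^2=1$, $\bar a b=\bar c d$, derive the single radial parameter from the modulus of the cross equation, and resolve the phase constraint $\phi+\gamma=\alpha+\beta$ by introducing $\delta$. The only divergence is in the norm: you compute the eigenvalues $1+2r^2\pm 2r\sqrt{1+r^2}$ of $U^*U$ directly, whereas the paper factorizes $U(r,\alpha,\beta,\delta)=H(r,\delta)D(\alpha,\beta)$ with $H$ self-adjoint positive and $D$ unitary, so that $\|U\|=\|H\|=r+\sqrt{1+r^2}$ is read off as the top eigenvalue of $H$; both routes are elementary and equally rigorous, yours being more computational and the paper's slightly slicker. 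Your explicit handling of the degenerate case $r=0$ and the converse inclusion are minor refinements the paper leaves implicit.
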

\proof
Given a complex matrix
$$
U\;:=\;\left(\begin{array}{cc}a & b \\c & d\end{array}\right)\;,\qquad\quad a,b,c,d\in\C
$$
the condition $U^*\eta U=\eta$, which ensures $U\in {\tt U}_\eta(\C^2)$, reads as
$$
\left(\begin{array}{cc}|a|^2-|c|^2 & \bar{a}b- \bar{c}d\\
a\bar{b}- c\bar{d} & |b|^2-|d|^2\end{array}\right)\;=\;\left(\begin{array}{cc}1 & 0 \\0 & -1\end{array}\right)\;.
$$
From $\bar{a}b= \bar{c}d$ one gets $|{a}|^2|b|^2=|c|^2|d|^2$. By inserting $|a|^2=|c|^2+1$ and 
$|d|^2=|b|^2+1$ in the previous equation one finally gets $|b|=|c|=:r\geqslant0$. This implies that $|a|=|d|=\sqrt{1+r^2}$ and so there are two angles $\alpha$ and $\beta$ such that
$$
a\;:=\;\expo{\ii\alpha}\sqrt{1+r^2}\;,\quad\text{and}\qquad d\;:=\;\expo{\ii\beta}\sqrt{1+r^2}\;.
$$
Let $b=\expo{\ii\delta_b}r$ and $c=\expo{\ii\delta_c}r$. The condition $\bar{a}b= \bar{c}d$ implies 
that $\delta_b+\delta_c=\alpha+\beta$. The last condition is satisfied by  $\delta_b:=\beta+\delta$ and  $\delta_c:=\alpha-\delta$ for some angle $\delta$. This completes the proof of the parametrization of $U(r, \alpha,\beta, \delta)$. From the parametrization it follows that
$$
U(r, \alpha,\beta, \delta)\;=\;H(r, \delta)\; D(\alpha,\beta)
$$
where
$$
H(r, \delta)\;:=\;\left(\begin{array}{cc}\sqrt{1+r^2} & \expo{+\ii\delta}r \\\expo{-\ii\delta}r & \sqrt{1+r^2}\end{array}\right)\;,\quad\text{and}\qquad 
D(\alpha,\beta)\;:=\;\left(\begin{array}{cc}\expo{\ii\alpha} & 0 \\0 & \expo{\ii\beta}\end{array}\right)\;.
$$
The matrix $H(r, \delta)$ is self-adjoint while $D(\alpha,\beta)$ is unitary. This implies that
$$
\|U(r, \alpha,\beta, \delta)\|\;=\;\|H(r, \delta) D(\alpha,\beta)\|\;=\;\|H(r, \delta)\|\;=\;r+\sqrt{1+r^2}
$$
where $r+\sqrt{1+r^2}$ is the maximum eigenvalue of $H(r, \delta)$.
\qed

\medskip

In the next result the $\R$-Banach space ${\tt H}_\eta(\n{C}^2)$ of the $\eta$-self-adjoint operators is described. 

\begin{lemma}
\label{lwmma:appB-eta-self}
Any element of ${\tt H}_\eta(\n{C}^2)$ can be parametrized as
$$
H(x_1,x_2, y, z)\;:=\;\left(\begin{array}{cc}x_1 & y+\ii z \\-y+\ii z & x_2\end{array}\right)\;,\qquad\quad x_1,x_2,y,z\in\R\;.
$$
\end{lemma}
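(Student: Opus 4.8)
The plan is to determine the $\eta$-self-adjoint matrices directly from the defining equation $H = H^\sharp = \eta H^* \eta$. First I would write a generic complex $2\times 2$ matrix
$$
H\;=\;\left(\begin{array}{cc} a & b \\ c & d \end{array}\right)\;,\qquad\quad a,b,c,d\in\C\;,
$$
and compute both sides of the condition $\eta H \eta = H^*$ using the explicit form \eqref{eq:appB_sandard_grad} of $\eta$. Since $\eta = \mathrm{diag}(1,-1)$, conjugation by $\eta$ flips the sign of the off-diagonal entries, so
$$
\eta H \eta\;=\;\left(\begin{array}{cc} a & -b \\ -c & d \end{array}\right)\;,\qquad\quad H^*\;=\;\left(\begin{array}{cc} \bar a & \bar c \\ \bar b & \bar d \end{array}\right)\;.
$$
Equating these two matrices entry by entry is the heart of the argument.

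The diagonal equations give $a=\bar a$ and $d=\bar d$, so $a$ and $d$ are real; write $a=x_1$ and $d=x_2$ with $x_1,x_2\in\R$. The off-diagonal equations give $-b=\bar c$ and $-c=\bar b$, which are the same single relation $c=-\bar b$. So $b$ is a free complex parameter and $c$ is determined by it. Setting $b = y + \ii z$ with $y,z\in\R$ yields $c = -\bar b = -(y-\ii z) = -y + \ii z$, which reproduces exactly the claimed parametrization $H(x_1,x_2,y,z)$. Conversely, any matrix of that form visibly satisfies $c=-\bar b$ and has real diagonal, hence meets $\eta H\eta = H^*$ and lies in ${\tt H}_\eta(\C^2)$. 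This establishes the bijection between ${\tt H}_\eta(\C^2)$ and $\R^4$ via $(x_1,x_2,y,z)$.

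There is no real obstacle here; the statement is an elementary consequence of unwinding Definition \ref{dfn:eta_self_adjoint} for the bounded $2\times2$ case. The only point requiring a moment of care is the correct handling of the complex conjugation in the adjoint together with the sign flips induced by $\eta$, to be sure the off-diagonal condition collapses to the single relation $c=-\bar b$ (rather than to two independent constraints). I would present the computation compactly as the four scalar equations coming from the matrix identity, solve them, and note that the solution space is parametrized by four real numbers exactly as stated, confirming in passing that ${\tt H}_\eta(\C^2)$ is a four-dimensional real vector space consistent with the general remark that ${\tt H}_\eta(\s{H})$ is an $\R$-Banach space.
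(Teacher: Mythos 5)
Your proposal is correct and follows essentially the same route as the paper: both unwind the defining condition ($H=\eta H^*\eta$, which you equivalently rewrite as $\eta H\eta=H^*$) into an entry-by-entry matrix identity, observe that the diagonal entries must be real and the off-diagonal entries satisfy the single relation $c=-\bar b$, and read off the parametrization. The only cosmetic difference is which side of the equation you conjugate by $\eta$; the resulting scalar equations and their solution are identical to the paper's.
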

\proof
Given a complex matrix
\begin{equation}\label{eq:appB_repH_ETASELF}
H\;:=\;\left(\begin{array}{cc}a & b \\c & d\end{array}\right)\;,\qquad\quad a,b,c,d\in\C
\end{equation}
the condition $\eta H^*\eta =H$, which ensures $H\in {\tt H}_\eta(\n{C}^2)$, reads as
$$
\left(\begin{array}{cc}\bar{a} & -\bar{c} \\-\bar{b} & \bar{d}\end{array}\right)=\left(\begin{array}{cc}a & b \\c & d\end{array}\right)\;.
$$
The three relations $\bar{a}=a,\bar{d}=d$ and $\bar{b}=-c$ can be solved in terms of four real parameters $x_1,x_2,y,z\in\R$ as follows
$$
a\;:=\;x_1\;\quad\; d\;:=\;x_2\;\quad\;\quad\; b\;:=\;y+\ii z\;\quad \; c\;:=\;-y+\ii z\;
$$
This completes the proof.
\qed

\medskip

Also the space $
{\tt C}_\eta(\C^2)$
 of the $\s{C}$-symmetries and the space 
${{\tt R}}_\eta(\C^2)$ of the self-adjoint operators  anti-commuting with the metric $\eta$  can be easily described.

\begin{lemma}
\label{lwmma:appB-Csymm-anticomm}
Any element of ${\tt C}_\eta(\C^2)$ can be parametrized as
\begin{equation}\label{eq:appB_repXi}
\Xi(r, \theta)\;:=\;\left(\begin{array}{cc}\sqrt{1+r^2} & r\expo{\ii\theta} \\-r\expo{-\ii\theta} & -\sqrt{1+r^2}\end{array}\right)\;,\qquad r\in[0,+\infty)\;,\quad \theta\in[-\pi,+\pi]\;.
\end{equation}
Moreover, $\|\Xi(r, \theta)\|=r+\sqrt{1+r^2}$.
Similarly, any element of ${\tt R}_\eta(\C^2)$ can be parametrized as
\begin{equation}\label{eq:appB_repQ}
Q(r, \theta)\;:=\;\left(\begin{array}{cc}0 & r\expo{\ii\theta} \\r\expo{-\ii\theta} & 0\end{array}\right)\;,\qquad r\in[0,+\infty)\;,\quad \theta\in[-\pi,+\pi]\;
\end{equation}
and $\|Q(r, \theta)\|=r$.
\end{lemma}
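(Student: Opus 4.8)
The plan is to establish both parametrizations by a direct computation on generic $2\times 2$ matrices, in the same spirit as the proofs of Lemma \ref{lwmma:appB-iso} and Lemma \ref{lwmma:appB-eta-self}. First, for ${\tt C}_\eta(\C^2)$ I would start from a generic $\Xi=\left(\begin{smallmatrix} a & b \\ c & d\end{smallmatrix}\right)$ and impose the two defining conditions of Definition \ref{dfn:C_symmetry}. The positivity condition $\eta\Xi>0$ is the more restrictive one: since $\eta\Xi=\left(\begin{smallmatrix} a & b \\ -c & -d\end{smallmatrix}\right)$, self-adjointness of this product (which is automatic from \eqref{eq:Xi_eta-self}) forces $a,-d\in\R$ and $b=-\bar c$, while strict positivity of the diagonal forces $a>0$ and $d<0$. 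Bringing in the involution condition $\Xi^2=\n{1}$ and comparing diagonal entries gives $a^2+bc=1=d^2+bc$, whence $a^2=d^2$; the sign information $a>0>d$ then selects $d=-a$. Because $a+d=0$, the two off-diagonal relations $b(a+d)=0=c(a+d)$ are automatically satisfied and impose nothing further. Writing $c=-r\expo{-\ii\theta}$ with $r\geqslant 0$ (so that $b=-\bar c=r\expo{\ii\theta}$) and inserting $bc=-r^2$ into $a^2+bc=1$ yields $a=\sqrt{1+r^2}$, which reproduces exactly \eqref{eq:appB_repXi}; conversely each $\Xi(r,\theta)$ manifestly satisfies both conditions.

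For the norm I would compute $\Xi(r,\theta)^*\Xi(r,\theta)$ explicitly: its diagonal entries equal $1+2r^2$ and its off-diagonal entries have modulus $2r\sqrt{1+r^2}$. Since a Hermitian matrix $\left(\begin{smallmatrix} m & w \\ \bar w & m\end{smallmatrix}\right)$ has eigenvalues $m\pm|w|$, the eigenvalues of $\Xi^*\Xi$ are $1+2r^2\pm 2r\sqrt{1+r^2}=(\sqrt{1+r^2}\pm r)^2$. The operator norm is the square root of the larger one, giving $\|\Xi(r,\theta)\|=r+\sqrt{1+r^2}$.

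The case of ${\tt R}_\eta(\C^2)$ is shorter. Starting from a self-adjoint $Q=\left(\begin{smallmatrix} a & b \\ \bar b & d\end{smallmatrix}\right)$ with $a,d\in\R$, the anticommutation relation $Q\eta+\eta Q=0$ reads $\left(\begin{smallmatrix} 2a & 0 \\ 0 & -2d\end{smallmatrix}\right)=0$, forcing $a=d=0$. Writing the surviving off-diagonal entry as $b=r\expo{\ii\theta}$ with $r\geqslant 0$ reproduces \eqref{eq:appB_repQ}, and since $Q$ is then self-adjoint with eigenvalues $\pm r$ one gets $\|Q(r,\theta)\|=r$.

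Since both claims reduce to elementary two-dimensional linear algebra, there is no genuine obstruction; the only points requiring some care are the correct exploitation of the \emph{positivity} (not merely the Hermiticity) of $\eta\Xi$ to pin down the signs $a>0$ and $d=-a<0$, and the eigenvalue computation of $\Xi^*\Xi$ for the norm. As an independent check one may confront the result with the exponential representation $\Xi=\eta\expo{Q}$ of Lemma \ref{lem:space_of_C_symmetries}: since $Q(\rho,\theta)^2=\rho^2\n{1}$, one finds $\expo{Q(\rho,\theta)}=\cosh\rho\,\n{1}+\tfrac{\sinh\rho}{\rho}Q(\rho,\theta)$ and hence $\eta\,\expo{Q(\rho,\theta)}=\Xi(\sinh\rho,\theta)$, confirming both formulas simultaneously up to the radial reparametrization $r=\sinh\rho$.
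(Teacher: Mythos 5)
Your proof is correct and follows essentially the same route as the paper's: a direct coordinate computation imposing the defining conditions of ${\tt C}_\eta(\C^2)$ and ${\tt R}_\eta(\C^2)$ on a generic $2\times 2$ matrix, arriving at the same parametrizations and norms. The only tactical differences are that you exploit the strict positivity of the diagonal of $\eta\Xi$ \emph{before} the involution condition, which pins $a>0>d$ and hence $d=-a$ at once (avoiding the paper's case split $r=0$ versus $r\neq 0$), and that you compute $\|\Xi\|$ from the eigenvalues of $\Xi^*\Xi$ where the paper uses the shortcut $\|\Xi\|=\|\eta\Xi\|$ coming from the unitarity of $\eta$; both variants are sound, and your closing cross-check via $\eta\,\expo{Q}=\Xi(\sinh\rho,\theta)$ reproduces the observation the paper itself makes right after the lemma.
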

\proof
Let $\Xi\in {\tt C}_\eta(\C^2)$. In view of \eqref{eq:Xi_eta-self} one has that $\Xi$ is $\eta$-self-adjoint and so it can be represented as
$$
\Xi\;:=\;\left(\begin{array}{cc}x_1 & y+\ii z \\-y+\ii z & x_2\end{array}\right)\;,\qquad\quad x_1,x_2,y,z\in\R\;.
$$
according to Lemma \ref{lwmma:appB-eta-self}.
The involutive property $\Xi^2=\n{1}$ 
reads as
$$
\left(\begin{array}{cc}x_1^2-y^2-z^2 & (x_1+x_2)(y+\ii z)\\
(x_1+x_2)(-y+\ii z) & x_2^2-y^2-z^2\end{array}\right)\;=\;\left(\begin{array}{cc}1 & 0 \\0 & 1\end{array}\right)\;.
$$
The relations $x_1^2-y^2-z^2=1=x_2^2-y^2-z^2$ imply that $|x_1|=|x_2|\neq0$. 
Let us set $y+\ii z:=r\expo{\ii\theta}$ and 
$-y+\ii z:=-r\expo{-\ii\theta}$.
The conditions $(x_1+x_2)(y+\ii z)=0=(x_1+x_2)(-y+\ii z)$ imply $|(x_1+x_2)r|=0$. In the case $r=0$ the matrix $\Xi$ is diagonal and $|x_1|=1=|x_2|$. The condition $\eta\Xi>0$ implies $x_1>0$ and $-x_2>0$. As a result the condition $r=0$ implies $\Xi=\eta$. Let us assume now that $r\neq0$. This implies that $x_2=-x_1$ and $x_1=\sqrt{1+r^2}$.  The matrix $\eta\Xi$ reads
$$
\eta\Xi\;:=\;\left(\begin{array}{cc}\sqrt{1+r^2} & r\expo{\ii\theta} \\r\expo{-\ii\theta} & \sqrt{1+r^2}\end{array}\right)\;.
$$
 The eigenvalues of the self-adjoint matrix
$\eta\Xi$ are  $\sqrt{1+r^2}\pm r>0$ showing that 
$\eta\Xi$ is positive. Moreover,
the unitarity of $\eta$ implies $\|\Xi\|=\|\eta\Xi\|=r+\sqrt{1+r^2}$. 

Since the elements of ${\tt R}_\eta(\C^2)$ are self-adjoint one has
$$
Q\;:=\;\left(\begin{array}{cc}x_1 & r\expo{\ii\theta} \\r\expo{-\ii\theta} & x_2\end{array}\right) 
$$
with $x_1,x_2\in\R$. The condition $\eta Q=-Q\eta$ forces $x_1=x_2=0$. The eigenvalues of $Q$ are $\pm r$ and this shows that $\|Q\|=r$.
\qed

\medskip

Starting from \eqref{eq:appB_repQ} one can compute
$$
\eta\expo{Q}\;=\;\left(\begin{array}{cc}\cosh(r) & \expo{\ii\theta}\sinh(r) \\ -\expo{-\ii\theta} \sinh(r)& -\cosh(r)\end{array}\right)\;=\;\left(\begin{array}{cc}\sqrt{1+\sinh(r)^2} & \expo{\ii\theta}\sinh(r) \\ -\expo{-\ii\theta} \sinh(r)& -\sqrt{1+\sinh(r)^2}\end{array}\right)
$$
where $\cosh(r) :=\frac{1}{2}(\expo{r}+\expo{-r})$ and $\sinh(r) :=\frac{1}{2}(\expo{r}-\expo{-r})$
are the  hyperbolic functions that satisfy $\cosh(r)^2-\sinh(r)^2=1$. The change of variable
$r':= \sinh(r)$ and the comparison with \eqref{eq:appB_repXi} shows that $\eta\expo{Q}\in {\tt C}_\eta(\C^2)$ in accordance with the general result proved in Lemma \ref{lem:space_of_C_symmetries}. Moreover, as a consequence of Lemma \ref{lwmma:appB-Csymm-anticomm} one has that ${\tt C}_\eta(\C^2)\simeq\C\simeq {\tt R}_\eta(\C^2)$.

\medskip

We are now in position to study the set of $\eta$-self adjoint operators which admit a 
$\s{C}$-symmetry. 
\begin{lemma}
\label{lemma:appB_H_eta_Xi}
Let $\Xi\in{\tt C}_\eta(\C^2)$ be a $\s{C}$-symmetry and $H\in{\tt H}_\eta(\n{C}^2)$ an $\eta$-self adjoint operator. The condition $H\Xi=\Xi H$ holds true if and only if $H$ is of the form
\begin{equation}\label{eq:appB_repH_Xi_sym}
H\;=\; u\;\n{1}\;+\; v\;\Xi\;,\qquad\quad u,v\in\R\;.
\end{equation}
\end{lemma}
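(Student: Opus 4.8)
The plan is to prove the two implications separately. The forward implication is immediate and the converse reduces to a commutant computation in $M_2(\C)$ combined with the $\eta$-self-adjointness constraint. First I would dispose of the \virg{if} direction: since $\n{1}$ is trivially $\eta$-self-adjoint and $\Xi$ is $\eta$-self-adjoint by \eqref{eq:Xi_eta-self}, the $\R$-linearity of ${\tt H}_\eta(\C^2)$ guarantees $u\n{1}+v\Xi\in{\tt H}_\eta(\C^2)$ for all $u,v\in\R$. Commutation is then a one-line check using $\Xi^2=\n{1}$, namely $(u\n{1}+v\Xi)\Xi=u\Xi+v\n{1}=\Xi(u\n{1}+v\Xi)$.

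For the converse, the first step is to record that $\Xi$ has two \emph{distinct} eigenvalues. The relation $\Xi^2=\n{1}$ confines $\sigma(\Xi)\subseteq\{\pm1\}$, while the positivity $\eta\Xi>0$ rules out $\Xi=\pm\n{1}$, since either choice would force $\eta\Xi=\pm\eta$ to be indefinite. Hence $\Xi$ is a genuine involution on $\C^2$ with simple eigenvalues $+1$ and $-1$.

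Next I would invoke the standard fact that the commutant in $M_2(\C)$ of a matrix with two distinct eigenvalues is exactly the two-dimensional algebra $\mathrm{span}_{\C}\{\n{1},\Xi\}$ of polynomials in $\Xi$. Consequently $H\Xi=\Xi H$ forces $H=a\n{1}+b\Xi$ for some $a,b\in\C$, and it remains only to intersect this family with ${\tt H}_\eta(\C^2)$. Using $\Xi^\sharp=\eta\Xi^*\eta=\Xi$ from \eqref{eq:Xi_eta-self}, a direct computation gives $H^\sharp=\eta H^*\eta=\bar{a}\,\n{1}+\bar{b}\,\Xi$; imposing $H=H^\sharp$ and using the $\C$-linear independence of $\n{1}$ and $\Xi$ yields $a=\bar{a}$ and $b=\bar{b}$, so that $u:=a$ and $v:=b$ are real. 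This is precisely the representation \eqref{eq:appB_repH_Xi_sym}.

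I expect no serious obstacle here; the only point demanding care is the verification that $\Xi\neq\pm\n{1}$, so that its eigenvalues are truly distinct and the commutant is two-dimensional rather than all of $M_2(\C)$. As an alternative route that ties the statement to the general theory, one could conjugate by the $\eta$-unitary, self-adjoint operator $G_\Xi=\sqrt{\eta\Xi}$ of Lemma \ref{lemma_reduc_Xi}, which satisfies $G_\Xi^{-1}\eta\,G_\Xi=\Xi$ and, by Corollary \ref{coro_reduc_eta_self_comm}, sends $H\in{\tt H}_{\eta,\Xi}(\C^2)$ to a self-adjoint $\tilde{H}:=G_\Xi H G_\Xi^{-1}$ commuting with $\eta=\mathrm{diag}(1,-1)$. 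The self-adjoint matrices commuting with $\eta$ are the real diagonal ones, spanned by $\n{1}$ and $\eta$, and transforming back via $G_\Xi^{-1}(\,\cdot\,)G_\Xi$ recovers exactly $H=u\n{1}+v\Xi$ with $u,v\in\R$.
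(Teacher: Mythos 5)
Your proof is correct, and both of your routes differ genuinely from the paper's. The paper proves the converse by brute force: it plugs the explicit parametrizations of $H$ (Lemma \ref{lwmma:appB-eta-self}) and $\Xi$ (Lemma \ref{lwmma:appB-Csymm-anticomm}) into $H\Xi=\Xi H$, equates entries, and solves for the parameters, obtaining $u=\frac{x_1+x_2}{2}$ and $v=\frac{x_1-x_2}{2\sqrt{1+r^2}}$ explicitly. You instead argue structurally: $\eta\Xi>0$ together with the indefiniteness of $\eta$ forces $\Xi\neq\pm\n{1}$, hence $\Xi$ has simple eigenvalues $\pm1$, hence its commutant in $M_2(\C)$ is $\mathrm{span}_{\C}\{\n{1},\Xi\}$, and then $\Xi^\sharp=\Xi$ turns the $\eta$-self-adjointness of $H=a\n{1}+b\Xi$ into reality of $a,b$. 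Your identification of $\Xi\neq\pm\n{1}$ as the one point demanding care is exactly right; without it the commutant would be all of $M_2(\C)$ and the statement would fail. What each approach buys: the paper's computation stays inside the coordinate framework the Appendix has already built and delivers the explicit formulas for $u,v$ that are reused immediately afterwards (the eigenvalues $\lambda_\pm=u\pm v$ and the decomposition of ${\tt H}_{\eta,\Xi}^G(\C^2)$ into four components); your argument is shorter, coordinate-free, and exposes the mechanism, so it would survive essentially unchanged in any dimension as a description of the commutant-plus-reality constraint, even though the two-dimensionality of the answer is special to $\C^2$. Your second route, conjugating by $G_\Xi$ and reducing to self-adjoint matrices commuting with $\eta$, is also valid and is in fact the reduction philosophy the paper itself promotes (Corollary \ref{coro_reduc_eta_self_comm} and Observation A), just never applied to this particular lemma — the paper keeps the Appendix computations deliberately elementary and independent of that machinery.
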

\proof
Clearly operators of the form \eqref{eq:appB_repH_Xi_sym} are $\eta$-self-adjoint since $\n{1}$ and $\Xi$ are  $\eta$-self-adjoint  and commute with $\Xi$. To prove the converse let us use the parametrization 
\eqref{eq:appB_repH_ETASELF} and \eqref{eq:appB_repXi} for $H$ and $\Xi$, respectively.
A straightforward computation shows that
$$
H\Xi\;=\;\left(\begin{array}{cc}x_1\sqrt{1+r^2}-(y+\ii z)r\expo{-\ii\theta} & x_1r\expo{\ii\theta}- (y+\ii z)\sqrt{1+r^2}\\
(-y+\ii z)\sqrt{1+r^2}-x_2r\expo{-\ii\theta} & -x_2\sqrt{1+r^2}+(-y+\ii z)r\expo{\ii\theta}\end{array}\right)
$$
and 
$$
\Xi H\;=\;\left(\begin{array}{cc}x_1\sqrt{1+r^2}+(-y+\ii z)r\expo{\ii\theta} & x_2r\expo{\ii\theta}+ (y+\ii z)\sqrt{1+r^2}\\
-(-y+\ii z)\sqrt{1+r^2}-x_1r\expo{-\ii\theta} & -x_2\sqrt{1+r^2}-(y+\ii z)r\expo{-\ii\theta}\end{array}\right)\;.
$$
The commutation relations between $H$ and $\Xi$ imply the following conditions
$$
(y+\ii z)\;=\;(x_1-x_2)\frac{r}{2\sqrt{1+r^2}}\expo{\ii\theta}\;,\qquad
(-y+\ii z)\;=\;-(y+\ii z)\expo{-\ii2\theta}\;=\;(x_2-x_1)\frac{r}{2\sqrt{1+r^2}}\expo{-\ii\theta}\;.
$$
Let us define $u:=\frac{x_1+x_2}{2}$ and  $v:=\frac{x_1-x_2}{2\sqrt{1+r^2}}$. With this notation one has that $(y+\ii z)=vr\expo{\ii\theta}$, $x_1=u+v\sqrt{1+r^2}$ and $x_2=u-v\sqrt{1+r^2}$. After plugging in these expressions in \eqref{eq:appB_repH_ETASELF} one obtains the representation \eqref{eq:appB_repH_Xi_sym}.
\qed

\medskip

Let us compare the  $\R$-Banach space ${\tt H}_\eta(\n{C}^2)$ of the $\eta$-self-adjoint operators
with its subspace ${\tt H}_{\eta,\Xi}(\n{C}^2)$ of the $\eta$-self-adjoint operators commuting with the 
$\s{C}$-symmetry $\Xi$.
Lemma  \ref{lwmma:appB-eta-self} shows that ${\tt H}_\eta(\n{C}^2)\simeq\R^4$ while Lemma \ref{lemma:appB_H_eta_Xi} proves that ${\tt H}_{\eta,\Xi}(\n{C}^2)\simeq\R^2$.
The (total) space of all $\eta$-self-adjoint  operators which admit a $\s{C}$-symmetry is given by
 $$
{\tt CH}_{\eta}(\n{C}^2)\;:=\;\bigsqcup_{\Xi\in {{\tt C}}_\eta(\n{C}^2)}{\tt H}_{\eta,\Xi}(\n{C}^2)\;.
$$
This is a real vector bundle over ${\tt C}_\eta(\n{C}^2)\simeq\C$ with typical fiber $\R^2$. This bundle is trivial since $\C$ is contractible.

\medskip

The metric operator $\eta$ has the property of a  $\s{C}$-symmetry. It turns out that ${\tt H}_{\eta,0}(\n{C}^2):={\tt H}_{\eta,\eta}(\n{C}^2)$ is the space of the  self-adjoint and  $\eta$-self-adjoint  operators.
As a consequence of Lemma \eqref{lemma:appB_H_eta_Xi} one has that:
\begin{corollary}
Any  $H\in {\tt H}_{\eta,0}(\n{C}^2)$  can be uniquely represented as
\begin{equation}\label{eq:appB_repH_Xi_sym_self_eta_self}
H\;=\; \left(\begin{array}{cc}u+v & 0\\
0 & u-v\end{array}\right)
,\qquad\quad u,v\in\R\;.
\end{equation}
\end{corollary}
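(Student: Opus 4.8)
The plan is to recognize the final Corollary as the special case of Lemma~\ref{lemma:appB_H_eta_Xi} in which the $\s{C}$-symmetry is the metric $\eta$ itself. First I would check that $\eta\in{\tt C}_\eta(\C^2)$: indeed $\eta^2=\n{1}$ and $\eta\,\eta=\n{1}>0$, and in the parametrization \eqref{eq:appB_repXi} the operator $\eta$ is precisely $\Xi(0,\theta)$, i.e.\ the value $r=0$. By the very definition of the space, ${\tt H}_{\eta,0}(\C^2)={\tt H}_{\eta,\eta}(\C^2)$ consists of the $\eta$-self-adjoint operators $H$ that commute with $\eta$, and this coincides with the set of operators which are simultaneously self-adjoint and $\eta$-self-adjoint: from $H=\eta H^*\eta$ together with $H\eta=\eta H$ one gets $H^*=\eta H\eta=H$, and the converse implication is the same computation read backwards.

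Next I would invoke Lemma~\ref{lemma:appB_H_eta_Xi} with $\Xi=\eta$. It asserts that such an $H$ satisfies $H\eta=\eta H$ if and only if $H=u\,\n{1}+v\,\eta$ with $u,v\in\R$, which is formula \eqref{eq:appB_repH_Xi_sym}. Substituting the explicit diagonal form \eqref{eq:appB_sandard_grad} of $\eta$ then gives
\begin{equation*}
H\;=\;u\,\n{1}+v\,\eta\;=\;\left(\begin{array}{cc} u+v & 0 \\ 0 & u-v\end{array}\right),
\end{equation*}
which is exactly the representation \eqref{eq:appB_repH_Xi_sym_self_eta_self} claimed in the statement.

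Finally, for the uniqueness assertion I would note that $\n{1}$ and $\eta$ are linearly independent over $\R$, so the pair $(u,v)$ is determined by $H$; equivalently, reading off the diagonal entries yields $u=\tfrac12(H_{11}+H_{22})$ and $v=\tfrac12(H_{11}-H_{22})$. There is essentially no obstacle in this argument: the entire content has already been established in Lemma~\ref{lemma:appB_H_eta_Xi}, so the Corollary only amounts to specializing $\Xi$ to $\eta$ and writing out the resulting matrix. The single point that deserves an explicit line is the identification ${\tt H}_{\eta,0}={\tt H}_{\eta,\eta}$ noted above, which is what makes Lemma~\ref{lemma:appB_H_eta_Xi} directly applicable.
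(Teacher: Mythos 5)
Your proposal is correct and follows essentially the same route as the paper: the Corollary is presented there precisely as a consequence of Lemma \ref{lemma:appB_H_eta_Xi} applied to the trivial $\s{C}$-symmetry $\Xi_0=\eta$, together with the identification of ${\tt H}_{\eta,0}(\n{C}^2)={\tt H}_{\eta,\eta}(\n{C}^2)$ with the space of simultaneously self-adjoint and $\eta$-self-adjoint operators. Your extra details (verifying $\eta\in{\tt C}_\eta(\C^2)$, proving the equivalence of \virg{$\eta$-self-adjoint and commuting with $\eta$} with \virg{self-adjoint and $\eta$-self-adjoint}, and the uniqueness of $(u,v)$ via the linear independence of $\n{1}$ and $\eta$) merely make explicit what the paper leaves implicit.
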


\medskip

Let $H=H(u,v)$ be the element of ${\tt H}_{\eta,\Xi}(\n{C}^2)$ given by the parametrization \eqref{eq:appB_repH_Xi_sym}. Then the following facts are easily verifiable:
\begin{itemize}
\item[(1)] ${\rm det}\,H(u,v)=u^2-v^2$ and ${\rm Tr}\, H(u,v)=2u$;
\vspace{1mm}
\item[(2)] The  eigenvalues of $H(u,v)$ are $\lambda_\pm:=u\pm v$ and the related eigenvectors are
$$
\psi_+\;:=\;\left(\begin{array}{c}r\expo{\ii\theta} \\1-\sqrt{1+r^2}\end{array}\right)\;,\qquad\quad
\psi_-\;:=\;\left(\begin{array}{c}-r\expo{\ii\theta} \\1+\sqrt{1+r^2}\end{array}\right)\;
$$
where $r$ and $\theta$ are the parameters which parametrize the $\s{C}$-symmetry $\Xi$ according to \eqref{eq:appB_repXi}.
Moreover $\Xi\psi_\pm=\pm\psi_\pm$.
\end{itemize}

\medskip

An element $H\in {\tt H}_{\eta,\Xi}(\n{C}^2)$ is called \emph{gapped} (in zero) if it is invertible. The subset of the gapped element of ${\tt H}_{\eta,\Xi}(\n{C}^2)$ is denoted with
$$
{\tt H}_{\eta,\Xi}^G(\n{C}^2)\;:=\;\left\{H\in{\tt H}_{\eta,\Xi}(\n{C}^2)\;|\; {\rm det}\,H\neq0 \right\}.
$$
Since  $H\in {\tt H}_{\eta,\Xi}^G(\n{C}^2)$ implies that $\lambda_\pm\neq 0$ it follows that  one can decompose ${\tt H}_{\eta,\Xi}^G(\n{C}^2)$ into four disjoint components according to the signs of $\lambda_+$ and $\lambda_-$. Let introduce the subset
$$
\begin{aligned}
{\tt H}_{\eta,\Xi}^{(+,+)}(\n{C}^2)\;:&=\;\left\{H\in{\tt H}_{\eta,\Xi}(\n{C}^2)\;|\; \lambda_+>0\;, \lambda_->0 \right\}\;\simeq\;\left\{(u,v)\in\R^2\;|\; u>|v|\geqslant 0 \right\}\\
{\tt H}_{\eta,\Xi}^{(+,-)}(\n{C}^2)\;:&=\;
\left\{H\in{\tt H}_{\eta,\Xi}(\n{C}^2)\;|\; \lambda_+>0\;, \lambda_-<0 \right\}\;\simeq\;\left\{(u,v)\in\R^2\;|\; v>|u|\geqslant 0 \right\}\\
{\tt H}_{\eta,\Xi}^{(-,+)}(\n{C}^2)\;:&=\;
\left\{H\in{\tt H}_{\eta,\Xi}(\n{C}^2)\;|\; \lambda_+<0\;, \lambda_->0 \right\}\;\simeq\;\left\{(u,v)\in\R^2\;|\; v>-|u|\leqslant 0 \right\}\\
{\tt H}_{\eta,\Xi}^{(-,-)}(\n{C}^2)\;:&=\;
\left\{H\in{\tt H}_{\eta,\Xi}(\n{C}^2)\;|\; \lambda_+<0\;, \lambda_-<0 \right\}\;\simeq\;\left\{(u,v)\in\R^2\;|\; u<-|v|\leqslant 0 \right\}\\
\end{aligned}
$$
where the last isomorphisms are obtained in terms of the parametrization \eqref{eq:appB_repH_Xi_sym}
and the related description of the eigenvalues $\lambda_\pm=u\pm v$. It results that these four components are connected and contractible.  
 Summing up, we proved that:
 \begin{theorem}
 \label{theo:classif_nontriv_gap}
 The space ${\tt H}_{\eta,\Xi}^G(\n{C}^2)$ is the disjoint union of four connected and contractible component
 $$
{\tt H}_{\eta,\Xi}^G(\n{C}^2)\;=\;{\tt H}_{\eta,\Xi}^{(+,+)}(\n{C}^2)\;\sqcup\;{\tt H}_{\eta,\Xi}^{(+,-)}(\n{C}^2)\;\sqcup\;{\tt H}_{\eta,\Xi}^{(-,+)}(\n{C}^2)\;\sqcup\;{\tt H}_{\eta,\Xi}^{(-,-)}(\n{C}^2)\;.
 $$  
 Then ${\tt H}_{\eta,\Xi}^G(\n{C}^2)$ has the topology of a space of four points. The representative of each component can be chosen as follow: 
 $$
 {\tt H}_{\eta,\Xi}^{(\pm,\pm)}(\n{C}^2)\;\simeq\;\{\pm\n{1}\}\;,\qquad\quad {\tt H}_{\eta,\Xi}^{(\pm,\mp)}(\n{C}^2)\;\simeq\;\{\pm\Xi\}\;.
 $$
 \end{theorem}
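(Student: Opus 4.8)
The plan is to read off the classification directly from the affine parametrization established in Lemma \ref{lemma:appB_H_eta_Xi}. That lemma identifies ${\tt H}_{\eta,\Xi}(\n{C}^2)$ with $\R^2$ through the coordinates $(u,v)$ via $H = u\n{1} + v\Xi$, and the two facts recorded just above the theorem supply the eigenvalues $\lambda_\pm = u \pm v$ together with $\det H = \lambda_+\lambda_- = u^2 - v^2$. First I would observe that the gap condition $\det H \neq 0$ is exactly $\lambda_+\lambda_- \neq 0$, so ${\tt H}_{\eta,\Xi}^G(\n{C}^2)$ corresponds to the complement in $\R^2$ of the two diagonal lines $u = \pm v$.

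Next I would pass to the linear coordinates $(\lambda_+,\lambda_-) = (u+v,\,u-v)$, noting that $(u,v)\mapsto(\lambda_+,\lambda_-)$ is a linear isomorphism of $\R^2$ and hence a homeomorphism. In these coordinates the removed lines become the coordinate axes $\lambda_+ = 0$ and $\lambda_- = 0$, and the four sets ${\tt H}_{\eta,\Xi}^{(\pm,\pm)}(\n{C}^2)$ become precisely the four open quadrants, distinguished by the signs of $\lambda_+$ and $\lambda_-$. This renders the disjoint-union decomposition transparent, and it matches the explicit $(u,v)$-descriptions in the statement once the inequalities $u > |v|$, $v > |u|$, and so on are translated through the substitution $\lambda_\pm = u \pm v$.

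Then I would settle the topological claim. Each open quadrant is a convex, hence connected and contractible, subset of $\R^2$; an explicit contraction is provided by the straight-line homotopy toward the chosen representative point. Since the four quadrants are pairwise disjoint open cones partitioning ${\tt H}_{\eta,\Xi}^G(\n{C}^2)$, the whole space is homotopy equivalent to the discrete four-point space. Finally I would exhibit the representatives: setting $v = 0$ gives $H = u\n{1}$ with eigenvalues $\lambda_\pm = u$, so $u = \pm 1$ yields $\pm\n{1}$ lying in ${\tt H}_{\eta,\Xi}^{(\pm,\pm)}(\n{C}^2)$; setting $u = 0$ gives $H = v\Xi$ with $\lambda_\pm = \pm v$, so $v = \pm 1$ yields $\pm\Xi$ lying in ${\tt H}_{\eta,\Xi}^{(\pm,\mp)}(\n{C}^2)$.

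The arithmetic of signs is entirely routine, and contractibility is immediate from convexity in the linear coordinates, so no genuine obstacle arises. The only place demanding care is the bookkeeping for the mixed components $(+,-)$ and $(-,+)$: there one must check that it is the sign of $v$ (not of $u$) that decides membership, which is exactly what reconciles the displayed $(u,v)$-inequalities with the quadrant picture in the $(\lambda_+,\lambda_-)$ coordinates.
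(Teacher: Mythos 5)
Your proposal is correct and follows essentially the same route as the paper: both rest on the parametrization $H=u\n{1}+v\Xi$ from Lemma \ref{lemma:appB_H_eta_Xi}, the eigenvalues $\lambda_\pm=u\pm v$, and the decomposition of the gapped set into the four sign-regions of $(\lambda_+,\lambda_-)$, with the same representatives $\pm\n{1}$ and $\pm\Xi$. The only difference is that you make explicit (via the linear coordinate change to $(\lambda_+,\lambda_-)$ and convexity of the quadrants) the connectedness and contractibility that the paper merely asserts, which is a welcome but minor sharpening rather than a different argument.
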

\medskip

The two components ${\tt H}_{\eta,\Xi}^{(\pm,\pm)}(\n{C}^2)$ are called \emph{trivial} since describe 
gapped operators which can be always deformed to a multiple of the identity without violating the gap condition.  As a consequence the two components ${\tt H}_{\eta,\Xi}^{(\pm,\mp)}(\n{C}^2)$ are called \emph{non trivial}.

\begin{remark}[Comparison with the self-adjoint case]
\label{rk:more_structure}
{\upshape 
Let $H$ be a self-adjoint matrix on $\C^2$. Let $\lambda_1$ and $\lambda_2$ be the eigenvalues of $H$
and assume the \emph{non-trivial} zero gap condition  $\lambda_1 \lambda_2<0$. Let $P_1$ and $P_2$ be the spectral projections associated with the eigenvectors $\psi_1$ and $\psi_2$. Consider the continuous path
$$
H_t\;:=\;\sum_{j=1}^2\left[(1-t)\; \lambda_j+ t\;{\rm sgn}(\lambda_j)\right]\;P_j\;, \qquad \quad t\in[0,1]\;
$$
where ${\rm sgn}(\lambda_j):=\frac{\lambda_j}{|\lambda_j|}$.
Evidently $H_0=H$ and $\Gamma:=H_1$ verifies $\Gamma^2=\n{1}$, namely it is a gradation. With respect the basis $\{\psi_1,\psi_2\}$ the matrix $\Gamma$ reads
\begin{equation}
\Gamma\;=\; \left(\begin{array}{cc}{\rm sgn}(\lambda_1) & 0\\
0 & {\rm sgn}(\lambda_2)\end{array}\right)\;.
\end{equation}
Consider the unitari matrix
\begin{equation}
\label{eq:T-theta}
T_\theta\;:=\; \left(\begin{array}{cc}\cos\theta & \sin\theta\\
\sin\theta & -\cos\theta\end{array}\right)\;.
\end{equation}
A direct computation shows that $T_\theta=T_\theta^*=T_\theta^{-1}$. Moreover, one has that
$$
\Gamma_{\theta}\;:=\; T_\theta \Gamma T_\theta\;:=\;\left(\begin{array}{cc}{\rm sgn}(\lambda_1)\cos^2\theta+ {\rm sgn}(\lambda_2)\sin^2\theta& [{\rm sgn}(\lambda_1)-{\rm sgn}(\lambda_2)]\cos \theta \sin \theta \\
{[{\rm sgn}(\lambda_1)-{\rm sgn}(\lambda_2)]}\cos\theta \sin \theta  & 
{\rm sgn}(\lambda_1)\sin^2\theta+ {\rm sgn}(\lambda_2)\cos^2\theta
\end{array}\right)\;. 
$$
If one consider the  continuous path of $\Gamma_{\theta}$ with $\theta\in[0,\pi/2]$ one obtains that $\Gamma_{0}=\Gamma$ while 
\begin{equation}
\Gamma_{\frac{\pi}{2}}\;=\; \left(\begin{array}{cc}{\rm sgn}(\lambda_2) & 0\\
0 & {\rm sgn}(\lambda_1)\end{array}\right)\;.
\end{equation}
This computation shows that the \emph{non-trivial} component of the self-adjoint matrix gapped in zero is contractible to a single point represented by the standard gradation  $\eta$. On the contrary, Theorem  \eqref{theo:classif_nontriv_gap} proves that the set ${\tt H}_{\eta,0}$ has two non-trivial components ${\tt H}_{\eta,0}^{(\pm,\mp)}(\n{C}^2)$ represented by $\pm\eta$ respectively. In this case it is not possible to use the matrix $T_\theta$ to intertwine between $\eta$ and $-\eta$ since $\eta T_\theta\neq T_\theta \eta$ and the commutation with $\eta$ is a defining property for ${\tt H}_{\eta,0}$. This example shows that the indefinite metric structure induced by $\eta$ induces more structure with respect to the usual self-adjoint case. The difference between the self-adjoint case and the $\eta$-self-adjoint case
is a manifestation of the splitting phenomenon
\eqref{eq:splitting_trivial}.
}\hfill $\blacktriangleleft$
\end{remark}


\medskip
\medskip

\end{document}